\PassOptionsToPackage{unicode}{hyperref}
\PassOptionsToPackage{hyphens}{url}
\PassOptionsToPackage{dvipsnames,svgnames,x11names}{xcolor}
\documentclass[12pt]{article}

\usepackage{setspace}
\usepackage{iftex}
\ifPDFTeX
  \usepackage[T1]{fontenc}
  \usepackage[utf8]{inputenc}
  \usepackage{textcomp} 
\else 
  \usepackage{unicode-math}
  \defaultfontfeatures{Scale=MatchLowercase}
  \defaultfontfeatures[\rmfamily]{Ligatures=TeX,Scale=1}
\fi
\usepackage{lmodern}
\ifPDFTeX\else  
\fi
\IfFileExists{upquote.sty}{\usepackage{upquote}}{}
\IfFileExists{microtype.sty}{
  \usepackage[]{microtype}
  \UseMicrotypeSet[protrusion]{basicmath} 
}{}
\makeatletter
\@ifundefined{KOMAClassName}{
  \IfFileExists{parskip.sty}{%
    \usepackage{parskip}
  }{
    \setlength{\parindent}{0pt}
    \setlength{\parskip}{6pt plus 2pt minus 1pt}}
}{
  \KOMAoptions{parskip=half}}
\makeatother
\usepackage{xcolor}
\setlength{\emergencystretch}{3em} 
\setcounter{secnumdepth}{5}
\makeatletter
\ifx\paragraph\undefined\else
  \let\oldparagraph\paragraph
  \renewcommand{\paragraph}{
    \@ifstar
      \xxxParagraphStar
      \xxxParagraphNoStar
  }
  \newcommand{\xxxParagraphStar}[1]{\oldparagraph*{#1}\mbox{}}
  \newcommand{\xxxParagraphNoStar}[1]{\oldparagraph{#1}\mbox{}}
\fi
\ifx\subparagraph\undefined\else
  \let\oldsubparagraph\subparagraph
  \renewcommand{\subparagraph}{
    \@ifstar
      \xxxSubParagraphStar
      \xxxSubParagraphNoStar
  }
  \newcommand{\xxxSubParagraphStar}[1]{\oldsubparagraph*{#1}\mbox{}}
  \newcommand{\xxxSubParagraphNoStar}[1]{\oldsubparagraph{#1}\mbox{}}
\fi
\makeatother

\usepackage{longtable,booktabs,array}
\usepackage{calc} 
\usepackage{etoolbox}
\makeatletter
\patchcmd\longtable{\par}{\if@noskipsec\mbox{}\fi\par}{}{}
\makeatother
\IfFileExists{footnotehyper.sty}{\usepackage{footnotehyper}}{\usepackage{footnote}}
\makesavenoteenv{longtable}
\usepackage{graphicx}
\makeatletter
\def\maxwidth{\ifdim\Gin@nat@width>\linewidth\linewidth\else\Gin@nat@width\fi}
\def\maxheight{\ifdim\Gin@nat@height>\textheight\textheight\else\Gin@nat@height\fi}
\makeatother
\setkeys{Gin}{width=\maxwidth,height=\maxheight,keepaspectratio}
\makeatletter
\def\fps@figure{htbp}
\makeatother

\addtolength{\oddsidemargin}{-.8in}%
\addtolength{\evensidemargin}{-.1in}%
\addtolength{\textwidth}{1.8in}%
\addtolength{\textheight}{1.9in}%
\addtolength{\topmargin}{-1in}
\makeatletter
\@ifpackageloaded{caption}{}{\usepackage{caption}}
\AtBeginDocument{%
\ifdefined\contentsname
  \renewcommand*\contentsname{Table of contents}
\else
  \newcommand\contentsname{Table of contents}
\fi
\ifdefined\listfigurename
  \renewcommand*\listfigurename{List of Figures}
\else
  \newcommand\listfigurename{List of Figures}
\fi
\ifdefined\listtablename
  \renewcommand*\listtablename{List of Tables}
\else
  \newcommand\listtablename{List of Tables}
\fi
\ifdefined\figurename
  \renewcommand*\figurename{Figure}
\else
  \newcommand\figurename{Figure}
\fi
\ifdefined\tablename
  \renewcommand*\tablename{Table}
\else
  \newcommand\tablename{Table}
\fi
}
\@ifpackageloaded{float}{}{\usepackage{float}}
\floatstyle{ruled}
\@ifundefined{c@chapter}{\newfloat{codelisting}{h}{lop}}{\newfloat{codelisting}{h}{lop}[chapter]}
\floatname{codelisting}{Listing}

\makeatother
\makeatletter
\makeatother
\makeatletter
\@ifpackageloaded{caption}{}{\usepackage{caption}}
\@ifpackageloaded{subcaption}{}{\usepackage{subcaption}}
\makeatother

\ifLuaTeX
  \usepackage{selnolig}  
\fi
\usepackage[]{natbib}
\bibliographystyle{agsm}
\usepackage{bookmark}

\IfFileExists{xurl.sty}{\usepackage{xurl}}{} 
\urlstyle{same} 
\hypersetup{
  pdftitle={Quantifying uncertainty and stability among highly correlated predictors: a subspace perspective},
  pdfauthor={Xiaozhu Zhang; Jacob Bien; Armeen Taeb },
  pdfkeywords={multicollinearity, multiple testing, stability selection, variable selection},
  colorlinks=true,
  linkcolor={blue},
  filecolor={Maroon},
  citecolor={Blue},
  urlcolor={Blue},
  pdfcreator={LaTeX via pandoc}}

\newcommand{\anon}{1}


\newtheorem{theorem}{Theorem}
\newtheorem{lemma}[theorem]{Lemma}
\newtheorem{proposition}[theorem]{Proposition}
\newtheorem{corollary}[theorem]{Corollary}
\newtheorem{definition}{Definition}

\newtheorem{assumption}{Assumption}
\newtheorem{remark}{Remark}

\newenvironment{proof}{\par\noindent\textit{Proof.}}

\usepackage{sectsty}
\sectionfont{\fontsize{14}{14}\selectfont}
\subsectionfont{\fontsize{13}{13}\selectfont}

\newcommand{\tr}{{\rm trace}}
\newcommand{\spa}{{\rm span}}
\newcommand{\st}{{\rm s.t.}\quad}
\newcommand{\supp}{{\rm supp}}
\newcommand{\rank}{{\rm rank}}
\renewcommand{\P}{\mathcal{P}}

\usepackage{multirow}
\usepackage{mathrsfs} 
\usepackage{algorithm}
\usepackage{algorithmic}
\usepackage{amsfonts, amssymb}
\usepackage{svg}
\usepackage{enumitem}
\usepackage{amsmath}
\usepackage{mathabx}
\usepackage{dsfont}
\usepackage{placeins}

\DeclareMathOperator*{\argmax}{argmax}

\usepackage{xr}

\date{}

\begin{document}

\def\spacingset#1{\renewcommand{\baselinestretch}%
{#1}\small\normalsize} \spacingset{1}


\if1\anon
{
  \title{\bf Quantifying uncertainty and stability among highly correlated predictors: a subspace perspective}
  \author{Xiaozhu Zhang\thanks{Corresponding author. Email: \texttt{xzzhang@uw.edu}.}\hspace{.2cm}\\
    Department of Statistics, University of Washington \vspace{.2cm}\\
    Jacob Bien \\
    Department of Data Sciences and Operations, University of Southern California \vspace{.2cm} \\
    Armeen Taeb \\
    Department of Statistics, University of Washington
    }
  \maketitle
} \fi

\if0\anon
{
  \bigskip
  \bigskip
  \bigskip
  \begin{center}
    {\LARGE\bf Title}
\end{center}
  \medskip
} \fi

\bigskip
\begin{abstract}
We study the problem of linear feature selection when features are highly correlated. Such settings pose two fundamental challenges. First, how should model similarity be defined? Simply counting features in common can be misleading: two models may share no features, yet highly correlated features can make the two models very similar in terms of predictive ability. Second, how can feature stability be assessed across runs of a variable selection method? High correlation can yield very different feature sets, so counting how often a feature is selected may label most features as unstable, and selecting stable features would result in models that are too small with poor predictive performance. In essence, these issues arise because existing notions of similarity and stability are ``discrete'' in nature. To overcome these challenges, we propose a novel framework based on feature subspaces---the subspaces spanned by selected columns of the feature matrix. This new perspective leads to ``continuous'' measures of similarity and stability, as well as false positive error, all of which are defined in terms of ``closeness'' of feature subspaces. Our measures naturally account for feature correlation and reduce to existing discrete notions when features are uncorrelated. To obtain stable models, we propose and theoretically analyze a subspace-based generalization of stability selection \citep{meinshausen2010stability, taeb2020false}, which combines a discrete model search with a continuous subspace-based assessment of stability. On synthetic and real gene expression data, our method improves on existing stability-based approaches by (i) producing multiple stable models that capture feature interchangeability, and (ii) generating larger models with better predictive performance. Our method is implemented in the R package \texttt{substab}. 

\end{abstract}

\noindent%
{\it Keywords:} multicollinearity, multiple testing, stability selection, variable selection
\vfill

\newpage
\spacingset{1.8} 

\section{Introduction}
\label{sec:introduction}
Variable selection is a perennial problem in data science. Given a large set of features, the objective is to identify a small subset that is predictive of a response variable. Despite the massive amount of work in this area, variable selection in highly correlated settings remains a significant challenge.

One challenge is assessing the similarity between two feature sets (which we call models) $S_1$ and $S_2$, and in particular, between an estimated model $\widehat{S}$ and the true model $S^\star$. The most commonly used notion of similarity is quantified by the number of shared features $|S_1 \cap S_2|$, which amounts to the number of true positives when comparing $\widehat{S}$ and $S^\star$. While intuitive, this metric can be misleading when features are highly correlated. For example, consider a gene expression dataset (analyzed in Section~\ref{sec:real_data}) containing many genes, from which we highlight four features $(X_1 = \textit{H3F3A}, X_2 = \textit{IL1RN}, X_3 = \textit{EIF5B}, X_4 =\textit{IAPP})$ that are highly associated with breast cancer prognosis~\citep{sotiriou2006gene}. Now consider two models, $S_1 = \{1,2\}$ and $S_2 = \{3,4\}$, which share no common features, so that $|S_1 \cap S_2| = 0$. However, features $X_1$ and $X_3$ are highly correlated (correlation coefficient $0.974$), as are $X_2$ and $X_4$ (correlation coefficient $0.994$). In fact, each pair of features is nearly identical up to a small perturbation. Therefore, despite having zero set overlap, the two models $S_1$ and $S_2$ exhibit very similar predictive behavior. This suggests the need for a more refined measure of similarity that accounts for feature correlation. Ideally, under such a metric, two nearly identical models in terms of predictive ability would attain a high similarity score.

A second challenge is measuring the stability of feature selection procedures. The importance of stability is highlighted in \citet{doi:10.1073/pnas.1901326117} and \citet{yu2013stability}, which uses the term to refer to when ``statistical conclusions are robust or stable
to appropriate perturbations to data.'' A widely used stability-based procedure in regression is stability selection \citep{meinshausen2010stability}. The basic idea is that instead of applying one’s favorite variable selection algorithm such as the Lasso to the whole data set, one applies it several times to random subsamples of the data. The proportion of subsamples where each feature is selected then represents the stability of that feature; the features can then be ranked according to their stability. However, returning to the previous gene expression example, the high correlation between $X_1$ and $X_3$ creates a fundamental difficulty for this procedure. Across subsamples, roughly half may select feature $X_1$ while the other half select $X_3$ \citep{faletto2022cluster}. This phenomenon, known as ``vote splitting'' \citep{shah2013variable}, can result in neither $X_1$ nor $X_3$ receiving a high stability score, despite both being strongly associated with the response. Ideally, a reasonable stability metric should assign large values to both $X_1$ and $X_3$.

Building on stability, a third challenge is aggregating stable features into an interpretable model output. 
Assume that the second challenge has been addressed and all four features in the gene expression example are deemed stable. Even in this case, these features should not be combined into a single model, as including both $X_1$ and $X_3$, or both $X_2$ and $X_4$, would introduce redundancy. Instead, it is more natural to consider multiple stable models, such as $(X_1,X_2)$, $(X_1,X_4)$, $(X_2,X_3)$, and $(X_3,X_4)$.
Outputting all these models provides a more faithful representation of uncertainty over the variable selection procedure. This challenge motivates the following questions: (i) how can we find stable models (rather than merely stable features)? and (ii) how can we efficiently find multiple stable models when the model space is large?

Returning to the gene expression dataset for breast cancer prognosis~\citep{sotiriou2006gene}, we find that stability selection \citep{meinshausen2010stability} (i) produces a single stable model even when high correlation implies significant feature interchangeability and (ii) selects few stable features, resulting in poor predictive performance. Cluster-based stability selection~\citep{faletto2022cluster} again returns a single stable model with somewhat improved power and prediction performance. Crucially, the performance and interpretation of cluster stability selection crucially depend on a pre-specified clustering of features. In contrast, without any prior knowledge of the correlation structure among the features, we aim to identify multiple stable models with good power, while also providing an understanding of the relationships among the resulting models.

In essence, the aforementioned challenges arise because existing notions of similarity and stability are discrete in nature. Continuous measures of similarity and stability may result in better assessment of model selection when features are highly correlated; this is the focus of the present paper.

Throughout, unless otherwise specified, we use the terminology ``highly correlated'' to mean that the predictors $X$ are (nearly) linearly dependent. The dependency structure we consider can be much more general and complex than simply a high pairwise correlation among some predictors.

\subsection{Our Contributions}
As our first contribution, we propose in Section \ref{sec:subspace} a subspace perspective for assessing model selection in highly correlated settings. Suppose $X$ is a data matrix of predictors where columns index features and rows index samples. Instead of quantifying similarity between models $S_1$ and $S_2$ in terms of commonality between the two sets of features, we quantify it in terms of the degree of alignment between \emph{feature subspaces} $\mathrm{col}(X_{S_j})$, $j=1,2$---the subspaces spanned by the columns of the matrix $X$ corresponding to these features. 
Building on this notion of similarity, we measure the amount of true positives and false positives based on the closeness of certain subspaces. Further, given models $\widehat{S}^{(\ell)} (\ell = 1,2,\dots,B)$ obtained from applying one's favorite variable selection procedure on $B$ subsamples of the data, we measure stability of a feature $X_j$ as the degree of alignment between the subspace $\mathrm{col}(X_{j})$ and the subspaces $\mathrm{col}(X_{\widehat{S}^{(\ell)}}), \ell=1,2,\dots,B$; this idea can also be generalized to find the stability of a set of features. These subspace-based metrics reduce to the classical notions of similarity, the numbers of true and false positives, and the standard measure of stability in \citep{meinshausen2010stability} when features are orthogonal, but they provide more meaningful assessments in highly correlated settings.

As our second contribution, we propose in Section \ref{sec:algorithm} the algorithm \emph{feature subspace stability selection} (FSSS), a procedure for enumerating stable models.  Similar to stability selection, as input, FSSS takes selected features $\widehat{S}^{(\ell)} (\ell = 1,2,\dots,B)$ after deploying one's favorite variable selection procedure on subsamples of the data. After mapping each subset to a feature subspace $\mathrm{col}(X_{\widehat{S}^{(\ell)}})$, FSSS uses a sequential procedure to obtain a collection of stable models, where each model maps to a feature subspace that is close to most of the estimated subspaces; the specific measure of distance here is based on the stability formalism described earlier.

In Section \ref{sec:theory}, we analyze the theoretical performance of FSSS. We focus on the setting where the features can be grouped into highly correlated clusters,
and describe generalizations to more complex dependency structures in the Appendix. We provide false positive error guarantees for models selected by FSSS, where the bound on the error depends on a certain notion of the quality of the base procedure. Under some assumptions on the base procedure, we show that FSSS has feature selection consistency: it returns all ``equivalent'' selection sets consisting of one feature from each signal cluster (a cluster containing a signal feature) and no feature from a non-signal cluster is selected.

Finally, in Section~\ref{sec:simulation}, we compare FSSS with several existing stability selection variants using both synthetic data and the gene expression dataset for breast cancer prognosis \citep{sotiriou2006gene}. Across a range of synthetic experiments, FSSS returns larger models with greater predictive power while maintaining a small false positive rate. On the gene expression data, FSSS performs comparably or better in terms of predictive accuracy and tends to select larger models. Furthermore, FSSS identifies multiple stable models among the gene expression features. We analyze the similarity between these models and show how such comparisons can yield additional insights into the interpretability of breast cancer prognosis.

As an illustration, using the synthetic data described in Appendix~\ref{sec:sup-figure1}, Figure~\ref{fig:stability} compares stability values from stability selection \citep{meinshausen2010stability}, its variant cluster stability selection \citep{faletto2022cluster,alexander2011stability}, and our proposed subspace framework. We observe a “vote-splitting” phenomenon in stability selection, in which correlated signals receive low stability values. Although cluster stability selection partially improves upon stability selection, it still fails to distinguish some signal variables from noise variables. In contrast, our method, FSSS, cleanly separates noise from non-noise features and consequently produces larger models with better predictive performance. When the stability threshold is set to $0.8$ for all methods, stability selection yields an empty model with a test mean squared error of 17.38; cluster stability selection yields a size-8 model with a test mean squared error of 9.20; and FSSS yields a size-12 model with a test mean squared error of 0.32. Moreover, whereas previous methods output a single model, our approach returns a collection of stable models that capture feature interchangeability. See Appendix~\ref{sec:sup-figure1} for additional analysis.



We implement our methods in the R package \texttt{substab}. Our package provides two primary functionalities: $(i)$ tools for computing similarity and stability metrics, and $(ii)$ subsampling and FSSS procedures, which take as input a feature matrix and a response variable and output collections of stable models.

\begin{figure}[!ht]
    \centering
    \includegraphics[width=0.8\textwidth]{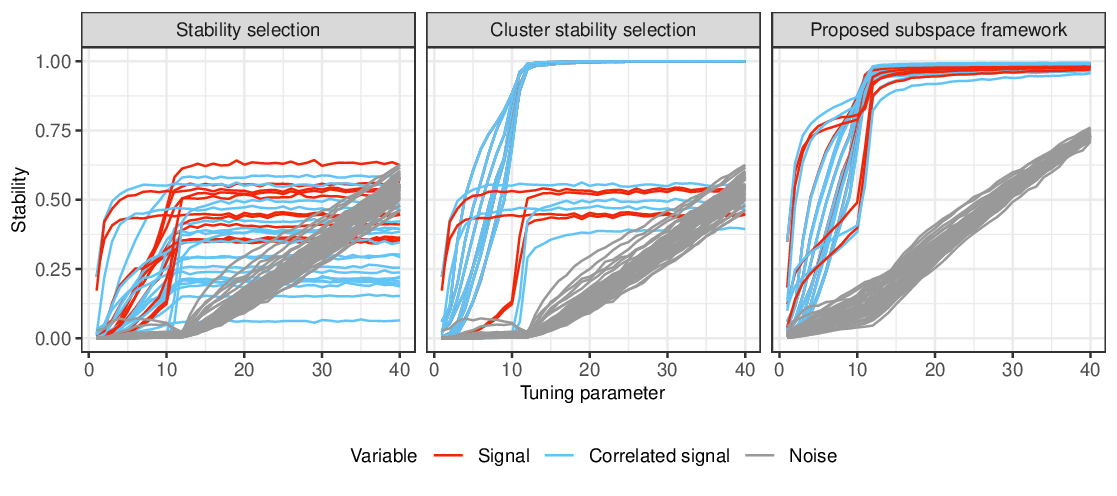}
    \caption{\small  Stability paths of the base procedure $\ell_0$-penalized regression on a synthetic dataset (see Appendix~\ref{sec:sup-figure1} for details). From left to right: stability values using stability selection, cluster stability selection, and our proposed subspace framework for different choice of tuning parameter in the base procedure. The ``signal'' refers to features directly generating the response. The ``correlated signal'' refers to features that are highly correlated with the signals. The ``noise'' refers to features independent of signals.}
    \label{fig:stability}
\end{figure}

\subsection{Related Work}\label{sec:relate-work}
In the last decade there has been some work devoted to tackling different aspects of the variable selection problem under high feature correlation. Regularization approaches such as the elastic net \citep{zou2005regularization} combine $\ell_1$ and $\ell_2$ penalties to handle correlated predictors. The resulting “grouping effect” encourages correlated variables to be selected together, each with shrunk coefficients; however, this often yields dense models that retain multiple redundant variables within a group, failing to induce within-group sparsity and limiting the interpretability of the resulting model.

Another line of work considers explicitly modeling correlation structure. For example, \citet{buhlmann2013correlated} propose a feature selection method designed for the setting in which there are clusters of highly correlated features; however, this work does not address questions of stability or the inadequacy of traditional measures of false positive errors when features are highly correlated.

\citet{g2013false} tackle this latter challenge by quantifying false positives as features in the selected model that can be explained by other features in the selected model. While in some highly correlated settings, this new metric provides more reasonable values than the standard notion, it typically gives smaller false positive errors than one may expect; see Appendix~\ref{sec:sup-g2013false}.

Cluster stability selection (CSS; \citealt{faletto2022cluster}, and \citealt{alexander2011stability}) is a recent method designed to improve stability selection. CSS uses a pre-defined cluster assignment of features and calculates the selection probability for each cluster. A cluster is considered selected by a subsample if any of its features are selected. CSS gathers votes from the features in each cluster, then ranks them based on the cluster's selection probability. When features are well-clustered (either orthogonal or highly correlated), CSS helps avoid the ``vote splitting'' issue in stability selection. However, CSS may perform poorly in more complex settings, as demonstrated in Figure~\ref{fig:stability}.


Approaches based on principal components analysis \citep{Bair01032006,Yu2006SupervisedPP} are widely used in highly correlated settings. These methods identify a few principal components of the feature matrix and use them to predict the response. However, since principal components are typically linear combinations of many features, they can be less interpretable than directly selecting a small number of features from the original set, which is the task considered in this paper. 

High feature correlation undermines the value of finding a unique ``best'' model and leads, as we do here, to thinking about a set of good models.  This connects our work to Breiman's notion of the {\em Rashomon effect} \citep{breiman2001statistical}, which acknowledges that there may be a ``multitude of different descriptions'' of models that all achieve close to the best performance.  A growing literature considers the Rashomon effect \citep{fisher2019all,dong2020exploring,zhong2023exploring}.  In fact, \citet{breiman2001statistical} discusses stability (or rather instability) in this context. Some recent work has explicitly considered stability in the context of this multitude of models \citep{kissel2024forward,adrian2024stabilizing}, although neither focuses on subspaces in the way that our work does.

Our work is largely inspired by \cite{taeb2020false}, who study model selection over subspaces. In particular, they propose a measure for false positive error in subspace selection and a stability-based procedure that controls this error. A key novelty of our paper is to adapt these ideas to formulate a new perspective on variable selection with highly correlated features, blending discrete model selection with continuous, subspace-based assessments of model quality, such as stability. This blending leads to several key differences from \cite{taeb2020false}. First, our approach targets discrete model selection over the feature-set space, whereas the previous work focuses on continuous selection over the space of subspaces. In our framework, the subspaces of interest are explicitly anchored to features in the data matrix, and FSSS searches over discrete models while leveraging these feature-induced subspaces. In contrast, the subspace stability algorithm in \cite{taeb2020false} operates directly over subspaces without reference to specific features. Second, our theoretical analysis substantially extends the results of \cite{taeb2020false}. By restricting subspaces to those arising from the data matrix, we can exploit its structure, leading to stronger theoretical guarantees and more interpretable results.

\subsection{Notation}
\label{sec:notations}
For a subspace $T \subseteq \mathbb{R}^n$, we denote the projection onto $T$ by $\mathcal{P}_T$. The subspace consisting of all vectors that are orthogonal to those in $T$ are denoted by $T^\perp$. For a matrix $M \in \mathbb{R}^{m \times m}$, we denote the span of the columns of $M$ (i.e. its column space) by $\mathrm{col}(M)$. We denote the singular values of $M$, in descending order by $\sigma_1,\sigma_2,\dots,\sigma_m$. For a set $S \subseteq \{1,2,\dots,p\}$, we denote $S^c$ as its complement. We denote $\mathrm{Corr}(a,b) = |a^\top{b}|/(\|a\|_{\ell_2}\|b\|_{\ell_2})$ for $a,b\in \mathbb{R}^n$.

\section{Why are subspaces useful?}\label{sec:subspace}
\emph{Setup}: Let $X \in \mathbb{R}^{n \times p}$ represent the data matrix of predictors where columns index the $p$ features and rows index the $n$ samples. Let $y \in \mathbb{R}^{n}$ encode observations of the response variable. Throughout, we consider linear regression models in which linear combinations of features are used to predict the response; we discuss extensions to generalized linear models in Appendix~\ref{sec:sup-glm}. We assume that the features are centered, so that the sum of entries in every column in $X$ is zero.

This section is organized as follows: Section \ref{sec:representing} gives a brief description of what it means to represent selection sets as subspaces; Section~\ref{sec:similarity} describes how a subspace perspective enables a natural measure of similarity between two models in highly correlated settings; Section~\ref{sec:tp-fp} extends the proposed notion of similarity to measuring the amount of true positives and false positives; and Section~\ref{sec:stability} describes a measure of stability using subspaces.

\subsection{Representing selection sets as subspaces}
\label{sec:representing}

Recall the gene expression example for breast cancer prognosis introduced in Section \ref{sec:introduction}, where we consider two gene expression models $(X_1 =\textit{H3F3A}, X_2 = \textit{IL1RN})$ and $(X_3 = \textit{EIF5B}, X_4 =\textit{IAPP})$. A common perspective represents the first model by the set $S_1 = \{1,2\}$ and the second by $S_2 = \{3,4\}$.  Since $S_1 \cap S_2 =\varnothing$, these two model structures would traditionally be viewed as having nothing in common. However, this conclusion is misleading for two reasons. First, due to high correlations, $X_1$ and $X_3$ are nearly identical up to small perturbation, and likewise for $X_2$ and $X_4$. Consequently, the two model structures share substantial similarity: linear combinations of $X_1$ and $X_2$ are very close to linear combinations of $X_3$ and $X_4$. Second, when similarity is measured by $|S_1 \cap S_2|$, it remains identically zero regardless of the correlation level. Intuitively, however, stronger correlations should imply greater similarity between the two model structures.  

The previous arguments motivate the following question: If not the selection set, what structure accounts for correlation and enables an appropriate comparison between models? Consider a model that uses a set of predictors $S \subseteq \{1,2,\dots,p\}$. Any linear prediction obtained using these predictors lies in the \emph{features subspace} $\mathrm{col}(X_S)$---the subspace spanned by the columns of $X$ corresponding to the features in $S$. Thus, instead of using $S$ to encode the structure of the model, it is arguably more natural to use the feature subspace $\mathrm{col}(X_S)$. Furthermore, to compare models $S_1$ and $S_2$, we can compare their associated subspaces $\mathrm{col}(X_{S_1})$ and $\mathrm{col}(X_{S_2})$.

Subspaces naturally account for feature correlations when comparing models. Consider again the gene expression example. When the correlations between $X_1$ and $X_3$, and between $X_2$ and $X_4$ are strong, the feature subspaces $\mathrm{col}(X_{S_1})$ and $\mathrm{col}(X_{S_2})$ become closely aligned, and the corresponding model structures can therefore be regarded as similar. As the correlation level increases, the alignment of the subspaces---and thus the similarity of the model structures---increases smoothly. In other words, the smooth representation via subspaces enables a smooth measure of proximity between model structures. 

\subsection{Assessing similarity via subspaces}
\label{sec:similarity}
Let $S_1$ and $S_2$ be two sets of features, both subsets of $\{1, 2, \dots,p\}$. We use feature subspaces $\mathrm{col}(X_{S_1})$ and $\mathrm{col}(X_{S_2})$ as a natural encoding of model structures. How do we assess the extent to which the two feature subspaces $\mathrm{col}(X_{S_1})$ and $\mathrm{col}(X_{S_2})$ are similar?

One approach is to measure the amount of similarity as the dimension of the intersection between $\mathrm{col}(X_{S_1})$ and $\mathrm{col}(X_{S_2})$. However, when both $X_{S_1}$ and $X_{S_2}$ are linearly independent, $\mathrm{dim}(\mathrm{col}(X_{S_1}) \cap \mathrm{col}(X_{S_2}))$ reduces to $|S_1 \cap S_2|$, which is, as described earlier, ill-suited for highly correlated settings. For example, two distinct but highly-correlated features are very similar while $|S_1 \cap S_2 |$ encodes zero similarity. 
In some sense, the preceding attempt fails because it is based on a sharp binary choice that declares each dimension as either captured or not. Consequently, we need a measure of similarity that varies smoothly with the underlying subspaces. \citet{taeb2020false} introduce a measure of similarity between two arbitrary subspaces. Restricting these subspaces to feature column spaces yields the following measure of similarity between two sets of features.
\begin{definition}[Similarity]\label{def:similar}
    Let $S_1, S_2 \subseteq\{1,2, \dots, p\}$ be two sets of features. Then their similarity is defined as
    $$
    \tau(S_1, S_2) := \tr(\P_{\mathrm{col}(X_{S_1})} \P_{\mathrm{col}(X_{S_2}) } ).
    $$
\end{definition}

The metric $\tau(S_1, S_2)$ provides a meaningful measure of similarity between the subspaces $\mathrm{col}(X_{S_1})$ and $\mathrm{col}(X_{S_2})$ for several reasons. First, the metric is equal to 
\vspace{-0.1in}
\begin{align}\label{eq:p-angles}
    \tr(\P_{\mathrm{col}(X_{S_1})} \P_{\mathrm{col}(X_{S_2}) } ) = \sum_{j = 1}^{\min\{|S_1|, |S_2|\} } \cos^2 \theta_j,
\vspace{-0.1in}
\end{align}
where $\theta_j$'s are the principal angles between $\mathrm{col}(X_{S_1})$ and $\mathrm{col}(X_{S_2})$ \citep{Bjoerck1971NumericalMF}. As a result, $\tau(S_1, S_2)$ measures the degree of alignment between then subspaces $\mathrm{col}(X_{S_1})$ and $\mathrm{col}(X_{S_2})$, in a manner that varies smoothly via change of principal angles. Second, the correlation structure among $X_{S_1}$ and $X_{S_2}$ is taken into account naturally. To see this, suppose $\{u_j\in\mathbb{R}^n \}_{j=1}^{\min \{|S_1|, |S_2|\} }$ and $\{v_j\in\mathbb{R}^n\}_{j=1}^{\min \{|S_1|, |S_2|\} }$ are canonical correlation analysis (CCA) vectors of $X_{S_1}$ and $X_{S_2}$, then a simple calculation shows that
$$
\cos^2\theta_j = (u_j^\top v_j)^2 = \sigma^2_j(\P_{\mathrm{col}(X_{S_1})} \P_{\mathrm{col}(X_{S_2})}), \quad \forall j \in\{1, 2,\dots, \min\{|S_1|, |S_2|\} \},
$$
where $\sigma_j(\cdot)$ computes the $j$-th singular value. Third, when the features are orthogonal, or when $S_1 \subseteq S_2$, the metric $\tau(S_1, S_2)$ reduces to the traditional measure of similarity $|S_1 \cap S_2|$. Finally, we have that $0 \leq \tau(S_1, S_2) \leq \min\{|S_1|, |S_2|\}$, which is in agreement with the intuition that the amount of similarity should not exceed the size of either model.

As shown in \eqref{eq:p-angles}, the metric $\tau(S_1, S_2)$ aggregates the full spectrum of principal angles. We present two additional variants that also operate on subspaces via principal angles:
(i) Normalized similarity 
\vspace{-0.1in}
\begin{equation}
\bar{\tau}(S_1, S_2):=\tau(S_1, S_2) /  \min\{|S_1|, |S_2|\} \in [0,1],
\label{eqn:normalized_similarity}
\end{equation} 
whose scale is independent of model size, making it more convenient than $\tau(S_1, S_2)$ for pairwise comparisons across multiple models. We adopt the convention $0/0 = 1$.
(ii) Cosine squared of the largest principal angle
\vspace{-0.1in}
\begin{equation}
\tilde{\tau}(S_1, S_2) :=\sigma_{ \max\{|S_1|, |S_2|\} }^2(\P_{\mathrm{col}(X_{S_1})} \P_{\mathrm{col}(X_{S_2})}) 
= \cos^2 \theta_{\max\{ |S_1|, |S_2| \} }
\in [0,1],\vspace{-0.1in}
\label{eqn:conservative_similarity}
\end{equation}
where $\sigma_{ \max\{|S_1|, |S_2|\}}(\cdot)$ computes the $\max\{|S_1|, |S_2|\}$-th singular value. This measure is more conservative than $\bar{\tau}(S_1, S_2)$ in terms of similarity, since
$
\tilde{\tau}(S_1, S_2) \leq \bar{\tau}(S_1, S_2)
$.
It degenerates to zero when $|S_1| \neq |S_2|$. Moreover, when features are perfectly orthogonal, $\tilde{\tau}(S_1, S_2)$ reduces to $\mathds{1}(S_1 = S_2)$. Although its conservative nature dismisses models of unequal sizes, this conservatism has strong implications for predictive similarity between $S_1$ and $S_2$. To see this, we observe that
\vspace{-0.1in}
\begin{align}\label{eq:worst-prediction}
     1 - \tilde{\tau}(S_1, S_2) = \sin^2 \theta_{\max\{|S_1|, |S_2|\}} = \sup_{\|y\|_2 = 1} \left\| \P_{\mathrm{col}(X_{S_1})} y - \P_{\mathrm{col}(X_{S_2})} y \right\|_2^2 =  \sup_{\|y\|_2 = 1} \left\| X_{S_1} \widehat{\beta}^y_{S_1} - X_{S_2} \widehat{\beta}^y_{S_2} \right\|_2^2 ,
\vspace{-0.1in}
\end{align}
where $\widehat{\beta}^y_{S_j}$, $j=1,2$ denotes the ordinary least squares coefficient vector obtained by regressing $y$ on the features in $S_j$; consequently, $\P_{\mathrm{col}(X_{S_j})} y = X_{S_j} \widehat{\beta}^y_{S_j}$ represents the best linear predictor of $y$ using the features in $S_{j}$. Moreover, the right-hand side of \eqref{eq:worst-prediction} quantifies the discrepancy between the linear prediction maps induced by $S_1$ and $S_2$, uniformly over $y$. As a result, a larger value of $\tilde{\tau}$ indicates that $S_1$ and $S_2$ yield similar predictive mappings, even under worst-case perturbation of the response variable. In this sense, $\tilde{\tau}$ captures the robustness of their predictive behavior across substantially different response-generating mechanisms.


\begin{remark}
The response variable $y$ does not enter the similarity metrics defined above for several reasons. First, the metric $\tau(S_1, S_2)$ serves as a subspace-based extension of the commonly used notion of similarity, and it reduces to $|S_1 \cap S_2|$ when the features are perfectly orthogonal. 
    Second, the metric $\tau(S_1, S_2)$ is invariant to perturbations and distributional shifts in the response variable. Its variant $\tilde{\tau}(S_1, S_2)$ further characterizes worst-case predictive similarity, as formalized in \eqref{eq:worst-prediction}.
    A large similarity implies that the models $S_1$ and $S_2$ can be used interchangeably for prediction on future perturbed data.
\end{remark}

\begin{remark}
\label{rem:y-similarity}
    In Appendix~\ref{sec:sup-y-similarity}, we present the similarity measure $\tau^y(S_1, S_2) := \|\P_{\mathrm{col}(X_{S_1})} y -\P_{\mathrm{col}(X_{S_2})}y \|_2^2 / \|y\|_2^2$ that explicitely incorporates $y$. This approach is likewise subspace-based and can be expressed as a linear combination of the squared cosines of the principal angles between $\mathrm{col}(X_{S_1})$ and $\mathrm{col}(X_{S_2})$. Unlike the previous measures, this stability measure does not account for possible future perturbations on the response variable.  
\end{remark}

\subsection{Assessing true and false positives via subspaces}
\label{sec:tp-fp}

Let $S^\star \subseteq \{1,2,\dots,p\}$ be the set of signal features that generate the response variable in the true data-generating mechanism, and $\widehat{S} \subseteq \{1,2,\dots,p\}$ be an estimated set of features. We assume that $X_{S^\star}$ consist of linearly independent columns; otherwise, the true model could have been reduced to a smaller set of features. The feature subspace $\mathrm{col}(X_{S^\star})$ represents the population model structure and the feature subspace $\mathrm{col}(X_{\widehat{S}})$ represents our estimated model structure or our ``discovery''.

\begin{definition}[True and false positives] \label{def:fd-power}
Let $S^\star \subseteq \{1,2,\dots,p\}$ be the true set of features and $\widehat{S} \subseteq \{1,2,\dots,p\}$ be an estimated set. Then, the amount of true and false positives are respectively: 
\vspace{-0.2in}
\begin{eqnarray}
\begin{aligned}
\mathrm{TP}(\widehat{S},S^\star) &:= \tau(\widehat{S}, S^\star)~~~~\& ~~~~~\mathrm{FP}(\widehat{S},S^\star) &:= |\widehat{S}|-\mathrm{TP}(\widehat{S},S^\star).
\label{eq:fd_td_def}
\end{aligned}
\end{eqnarray}
\vspace{-0.5in}
\end{definition}
The conventional ``discrete'' notion of true positives, given by the count $|\widehat{S}\cap S^\star|$, is replaced here by the ``continuous'' subspace-based measure $\tau(\widehat{S}, S^\star)$. This new measure computes the degree of alignment between the true subspace $\mathrm{col}(X_{S^\star})$ and the estimated subspace $\mathrm{col}(X_{\widehat{S}})$ with several desirable properties: $0 \leq \mathrm{TP}(\widehat{S},S^\star) \leq \min\{|\widehat{S}|,|S^\star|\}$, $\mathrm{TP}(\widehat{S},S^\star)= |\widehat{S}|$ if and only if $\widehat{S} \subseteq S^\star$,  $\mathrm{TP}(\widehat{S},S^\star)= |S^\star|$ if and only if $S^\star \subseteq \widehat{S}$, and $\mathrm{TP}(\widehat{S},S^\star) = |\widehat{S}\cap S^\star|$ when features are orthogonal. Similarly, the conventional ``discrete'' definition of false positives $|\widehat{S}\cap {S^\star}^c|$ is replaced by the ``continuous'' measure $\mathrm{FP}(\widehat{S},S^\star)$ which can be equivalently expressed as $\tr(\P_{\mathrm{col}( X_{\widehat{S}})}  \P_{ \mathrm{col}( X_{S^\star})^\perp})$. This measures the degree of alignment between the orthogonal subspace $\mathrm{col}(X_{S^\star})^\perp$ and the estimated subspace $\mathrm{col}(X_{\widehat{S}})$. 





\subsection{Assessing stability via subspaces}
\label{sec:stability}
The stability principle in model selection focuses on finding model features that remain consistent when data is randomly changed. Features that change a lot with small data changes are likely due to noise or artifacts and are considered unreliable. Given this, how can we assess stability when selecting features in regression?

A common method to assess stability is called stability selection \citep{meinshausen2010stability,shah2013variable}. It works by applying a variable selection procedure (e.g., Lasso) to subsamples of the data to obtain estimated sets $\{\widehat{S}^{(\ell)}\}_{\ell=1}^{B} \subseteq \{1,\dots,p\}$, where $B$ represents the number of datasets produced from subsampling. Then, the stability of a feature $X_j$ is measured as $\frac{1}{B}\sum_{\ell=1}^{B} \mathds{1}(j \in \widehat{S}^{(\ell)})$, the proportion of subsamples where the feature $j$ is selected. Features are ranked by stability, and the most stable ones are chosen for the final model. The model's stability is determined by the least stable feature.

The previous method for measuring stability does not work well in highly correlated settings. In the introduction, we illustrated this point with two examples, where we highlighted that: $i)$ High correlation can cause ``vote splitting'', where subsampling may choose a correlated non-signal feature instead of the true signal, reducing the signal feature's stability; $ii)$ Due to the vote-splitting phenomenon, a correlated non-signal feature may also have low stability, even though it can serve as a good substitute for the signal; and $iii)$ Methods like cluster stability selection can help reduce vote-splitting by grouping variables, but they may still give inaccurate stability values if the pre-assigned cluster assignment is wrong or if the relationships between features are more complex than simple correlations.


The failure of previous methods is again mainly due to the sharp binary choice that determines whether a feature is included in the selection set $\widehat{S}^{(\ell)}$, or how variables are grouped together. A subspace approach provides a smoother measure of stability that works better in settings with highly correlated variables and unknown correlation
structures. To this end, we build on the measure of \citet{taeb2020false}, which evaluates the stability of a subspace relative to estimated subspaces obtained through subsampling. By anchoring these subspaces to the feature data matrix, we translate this idea into a similarity measure defined directly for subsets of features.

\begin{definition}[Stability of a set $S$ via subspaces] 
\label{def:stability}
Given estimates $\{\widehat{S}^{(\ell)}\}_{\ell=1}^{B}$ from applying a variable selection procedure to $B$ subsamples of the data, the stability of a set $S \subseteq \{1,2,\dots,p\}$ is given by
\vspace{-0.1in}
\begin{equation*}\pi(S):= \sigma_{|S|}\left( \mathcal{P}_{\mathrm{col}(X_S)}\mathcal{P}_\mathrm{avg}\mathcal{P}_{\mathrm{col}(X_S)} \right), \quad\text{where }
\mathcal{P}_\mathrm{avg} := \frac{1}{B}\sum_{\ell=1}^B\mathcal{P}_{\mathrm{col}({X_{\widehat{S}^{(\ell)}}})} .
\label{eqn:stability}
\end{equation*}
Here, $\sigma_{|S|}(\cdot)$ computes the $|S|$-th singular value. A larger $\pi(S)$ indicates a more stable $S$. 
\label{defn:stability}
\end{definition}
Here, $\mathcal{P}_\mathrm{avg}$ is the average of projection matrices obtained from subsampling, although it is itself not a projection matrix.
The quantity $\pi(S)$ captures how aligned the feature subspace $\mathrm{col}(X_S)$ is with this average projection matrix. A simple calculation shows that $\pi(S) \in [0,1]$.


To better understand the metric $\pi(S)$, first consider size-one sets $S = \{j\}$. In this case, the stability measure can be written as $\pi(\{j\}) =\frac{1}{B} \sum_{\ell=1}^B \tau(\{j\}, \widehat{S}^{(\ell)})$, calculating how aligned the subspace spanned by $X_j$ is with the subspaces obtained from subsampling. It naturally adapts to the correlation structure of features: even if  $X_j$ is not often in the selection sets, $\pi(\{j\})$ will be close to one if $X_j$ is nearly in the span of features in $\widehat{S}^{(\ell)}$ for most $\ell$. This is in contrast to stability selection that would assign a low stability value to a potential signal $X_j$ that is highly correlated with non-signal features, offering more meaningful stability measures. 
For arbitrary-sized sets $S$, if $X_S$ consists of linearly dependent columns, the rank of the matrix $\mathcal{P}_{\mathrm{col}(X_S)}\mathcal{P}_\mathrm{avg}\mathcal{P}_{\mathrm{col}(X_S)}$ is strictly smaller than $|S|$, and thus $\pi(S) = 0$. In other words, if there are redundant features in the set, the stability would be zero. Otherwise, we can show that $\pi(S)$ can be equivalently expressed as 
\vspace{-0.1in}
\begin{align}\label{eq:stability-explain}
    \pi(S) = \min_{z \in \mathrm{col}(X_S)} \frac{1}{B} \sum_{\ell=1}^B \allowbreak\mathrm{trace}\allowbreak(\mathcal{P}_{\mathrm{col}(z)}\mathcal{P}_{\mathrm{col}(X_{\widehat{S}^{(\ell)}})})
\vspace{-0.1in}
\end{align}
(see Appendix~\ref{sec:equivalence} for a proof). This means that $\pi(S)$ measures the lowest alignment of any direction in the subspace $\mathrm{col}(X_S)$ with the subspaces obtained from subsampling. 

The reader may wonder why we did not use $\min_{j \in S}\pi(\{j\})$ to generalize our stability measure from size-one set $S$ to multi-element set $S$. For illustration, suppose $S = \{j,k\}$ where $X_j$ and $X_k$ are highly correlated. Although each variable individually lies nearly in the span of the subsampled models, they are rarely captured simultaneously. In this setting, selecting both $X_j$ and $X_k$ is redundant and undesirable. Although both $\pi(\{j\})$ and $\pi(\{k\})$ can be close to one and thus yield misleading assessments, our measure $\pi(\{j,k\})$ is necessarily no better than $\tr(\P_{\mathrm{col}(r)} \P_{\rm avg} )$, which quantifies the alignment between $\P_{\rm avg}$ and the direction $r = X_j - \P_{\mathrm{col}(X_k)} X_j \in \spa(\{X_k, X_j\})$. This alignment is expected to be small, reflecting the fact that $X_j$ and $X_k$ are seldom captured jointly. Although generally  $\pi(S)$ may not equal $\min_{j \in S}\allowbreak\pi(\{j\})$ for multi-element sets $S$, $\pi(S)$ simplifies to $\min_{j \in S}\pi(\{j\})$ when $X_j$ and $X_k$ are orthogonal.

We remark on some additional points. First, finding stable sets $S$ (i.e., sets with $\pi(S) \approx 1$) is not just about combining stable features with $\pi(\{j\}) \approx 1$. When features are highly correlated, combining individual stable features might be redundant and result in a lower $\pi(S)$. Second, if the features are orthogonal, $\pi(S)$ reduces to the usual measure of model stability used in stability selection \citep{meinshausen2010stability}. That is, for orthogonal features, $\pi(S) = \min_{j \in S}\frac{1}{B}\sum_{\ell=1}^B \mathbb{I}[j \in \widehat{S}^{(\ell)}]$, which computes the minimum proportion of subsample estimates that contain a given feature, taken over all features in $S$. Third, any sub-model $\widetilde{S}$ of $S$ (with $\widetilde{S} \subseteq S$) satisfies $\pi(\widetilde{S}) \geq \pi(S)$. 
This property leads to a definition of maximal stable sets that is more meaningful than small stable sub-models, such as individual features.
\begin{definition}[maximal $\alpha$-stable set]
\label{def:maximal-stable}
    A feature set $S$ is $\alpha$-stable if $\pi(S) \geq \alpha$. It is further a maximal $\alpha$-stable set if
    \vspace{-0.1in}
    $$
    \pi(S \cup \{j\}) < \alpha,\ \forall j\in\{1, \dots, p\} \setminus S.
    \vspace{-0.1in}
    $$
    Such a set is akin to the largest possible discovery subject to stability control.
\end{definition}
For a fixed $\alpha$, maximal $\alpha$-stable sets are not unique in general. In Section~\ref{sec:algorithm}, we describe algorithms to identify all, or a subset of these maximal sets.

\section{FSSS: An algorithm for selecting multiple stable models}
\label{sec:algorithm}
In this section, we propose a model selection algorithm, \textit{Feature Subspace Stability Selection} (FSSS), which efficiently identifies maximal $\alpha$-stable selection sets for a user-specified threshold $\alpha$ (according to the measure in Definition \ref{def:maximal-stable}). 
For notational simplicity, we use $\P_{v}:= \P_{\mathrm{col}(v)}$ for any vector $v\in\mathbb{R}^n$, $\P_{S}:= \P_{\mathrm{col}(X_S)}$ and $\P_{S^\perp}:= \P_{\mathrm{col}(X_S)^\perp}$ for any subset $S\subseteq\{1, \dots, p\}$.

The subsampling mechanism is crucial to the stability measure $\pi(S)$ in Definition \ref{defn:stability}. Our algorithm uses a subsampling mechanism~\citep{shah2013variable} that creates $B/2$ complementary subsamples of size $\lfloor n/2 \rfloor$, where $B$ is an even integer. For each subsample $\ell \in \{1, \dots, B\}$, a base procedure like Lasso or $\ell_0$-regression selects a feature set $\widehat{S}^{(\ell)} \subseteq \{1, \dots, p\}$. The matrix $\mathcal{P}_\mathrm{avg} = \frac{1}{B} \sum_{\ell=1}^B \P_{\widehat{S}^{(\ell)}}$ is then computed from these sets and used to measure the stability $\pi({S})$ of a set ${S}$.

\subsection{Reducing search space}
\label{sec:algorithm_rules}
Let $\alpha \in (1/2,1)$ be a user-specified threshold for the desired stability. We search for maximal $\alpha$-stable models through an iterative model expansion process, in which starting from the null set, features are added one at a time along multiple paths. Whenever $\pi(S)\geq \alpha$, each candidate $S \cup \{j\}$, $j\in\{1, \dots, p\} \setminus S$ creates a new path in the search process. However, searching over the model space $\mathcal{M}:=\{S: S \subseteq \{1, \dots, p\}\}$ can be computationally demanding and become intractable when $p$ is large. Leveraging the properties of the stability metric $\pi(S)$, we develop a set of tricks that substantially reduce the search space.

First, two search paths may lead to the same model $\bar{S}$ when $S_1 \cup \{j_1\} = S_2 \cup \{j_2\} = \bar{S}$. If $\bar{S}$ has been fully examined along the path originating from $S_1$, then expanding $S_2$ into $\bar{S}$ is unnecessary. Second, the monotonicity of stability prevents any unstable set $S$ from further expanding; that is, if $\pi(S) < \alpha$, then $\pi(S \cup \{j\}) \leq\pi(S)<
\alpha$ for all $j \in \{1, \dots, p\} \setminus S$. Third, monotonicity further reduces unnecessary evaluations of the stability metric, considering computing singular values can be computationally expensive. Specifically, if $S$ has a super-set $\widetilde{S}$ that has already been verified as stable, then $\pi(S) \geq \pi(\widetilde{S}) \geq \alpha$ follows immediately, and no additional evaluation of $\pi(S)$ is required.

Finally, a pre-evaluation can be performed before evaluating the stability of expansion $S \cup \{j\}$. Adding a new feature $X_j$ to $S$ amounts to including a new orthogonal direction $v_j = X_j - \P_{X_S} X_j$ into the subspace $\mathrm{col}(X_S)$, resulting in $\mathrm{col}(X_{S\cup \{j\} }) = \mathrm{col}(X_S) \oplus \mathrm{col}(v_j)$. Consequently, appealing to \eqref{eq:stability-explain}, if $\tr(\P_{v_j} \P_{\rm avg}) < \alpha$, then $\pi(S \cup\{j \}) \leq \tr(\P_{v_j} \P_{\rm avg}) < \alpha$. In other words, the quantity $\tr(\P_{v_j} \P_{\rm avg})$ serves as a early rejection criterion for stability; failing to satisfy $\tr(\P_{v_j} \P_{\rm avg}) \geq \alpha$ would immediately rule out the candidate $S\cup \{j\}$. This pre-evaluation substantially reduces the computational burden of evaluating the stability metric across all candidate models and, as discussed in Section~\ref{sec:algorithm_FSSS}, provides a surrogate measure of stability for weighting candidate models.

\subsection{Algorithm description}
\label{sec:algorithm_FSSS}

A search procedure that identifies all maximal $\alpha$-stable models must, in principle, examine all model combinations, even though many candidates can be efficiently eliminated using the proposed search-space reduction rules. Organizing the model space as a tree and performing a depth-first search guarantees completeness. Nevertheless, this approach is computationally prohibitive when $p$ is large.

In contrast to identifying all maximal $\alpha$-stable models, an alternative approach is to sample a sub-collection of maximal $\alpha$-stable models, where we randomly choose a search path to proceed. Suppose we are at an $\alpha$-stable model $S$, then the next model is chosen among $S \cup \{j\}$, $j\in\{1, \dots, p\}\setminus S$ with probability $w_j$ as defined in \eqref{eq:path-weight}. Notably, although a more proper choice for $w_j$ is the normalized $\pi(S \cup \{j\}) \cdot\mathds{1}(\pi(S \cup \{j\}) \geq \alpha)$, we replace $\pi(S \cup \{j\})$ with $\tr(\P_{v_j} \P_{\rm avg})$ for the sake of computational efficiency. This procedure can be viewed as a random walk in the model space $\mathcal{M}$ until a maximal $\alpha$-stable model is found. 

FSSS is summarized in Algorithm~\ref{algo:all-path}, which incorporates the search-space reduction rules introduced in Section~\ref{sec:algorithm_rules}. The sets {MAX\_STABLE} and {VISITED} both store models that have been fully explored, meaning that all of their sub-models have been examined; in particular, {MAX\_STABLE} contains only maximal $\alpha$-stable models. 
Suppose the algorithm is currently at an $\alpha$-stable model $S$. It first constructs a candidate set $\mathcal{F}$ consisting of features that can potentially be added to $S$, excluding those that would lead to models that have already been fully explored. A feature $j \in \mathcal{F}$ is then selected at random with probability $w_j$. If the resulting model $S \cup \{j\}$ is $\alpha$-stable, the algorithm expands $S$ to this new model; otherwise, $j$ is removed from $\mathcal{F}$ and another feature is selected from $\mathcal{F}$.
If the candidate set $\mathcal{F}$ is empty, the model $S$ is declared fully explored. When this occurs, if no super-set of $S$ appears in {MAX\_STABLE}, then all one-step extensions of $S$ must be unstable, implying that $S$ itself is a maximal $\alpha$-stable model. The algorithm then restarts the expansion process from the null set and continues until $K$ maximal $\alpha$-stable models are identified. 
A special case is when the total number of maximal $\alpha$-stable models (denoted as $\bar{K}$) is smaller than the user-specified $K$. In this case, after all maximal $\alpha$-stable models are identified, the candidate set $\mathcal{F}$ for the null model would eventually be empty, and the algorithm terminates automatically.

\begin{algorithm}[!ht] 
    \caption{Feature subspace stability selection (FSSS)}
    \label{algo:all-path}
    {\small 
    \begin{algorithmic}
    \STATE \textbf{Input:} {Number of models $K\in\mathbb{N}$, features $X\in\mathbb{R}^{n\times p}$, response variable $y\in\mathbb{R}^n$, stability threshold $\alpha\in(1/2, 1)$, even integer $B$, and a base variable selection procedure. } 
    \vspace{0.05in}
    \STATE\textbf{Subsampling:} Obtain estimates $\{\widehat{S}^{(2\ell-1)}, \widehat{S}^{(2\ell)}\}_{l=1}^{B/2}$ from applying base procedure to complementary subsamples, and compute $\P_{\rm avg} = \frac{1}{B} \sum_{l=1}^B \P_{\widehat S^{(\ell)}}$. 
    \vspace{0.05in}

    \STATE\textbf{Initialization:} Set $\mathrm{MAX\_STABLE} \gets \{\}$, $\mathrm{VISITED}\gets \{\}$.
    \vspace{0.05in}

    \STATE\textbf{Sequentially add variables:}

    \begin{enumerate}[label=(\arabic*)]

        \item Set $S \gets \varnothing$.
        \item Let 
        $$
        \mathcal{F} \gets \left\{ j\in\{1, \dots, p\}\setminus S : S \cup \{j\} \notin \text{MAX\_STABLE $\cup$ VISITED},\  \tr( \P_{v_j} \P_{\rm avg}) \geq \alpha \text{ for }v_j = \P_{S^\perp}X_j \right\}  .
        $$
        \item If $\mathcal{F} = \varnothing$, go to step (4). Otherwise, for each $j\in \mathcal{F}$,
        \begin{align}\label{eq:path-weight}
            w_j \gets \frac{ \tr(\P_{v_j} \P_{\rm avg}) \cdot \mathds{1}(\tr(\P_{v_j} \P_{\rm avg}) \geq \alpha) }{ \sum_{i\in \mathcal{F} } \tr(\P_{v_i} \P_{\rm avg}) \cdot \mathds{1}(\tr(\P_{v_i} \P_{\rm avg}) \geq \alpha)  },
        \end{align}
        and select $j \in \mathcal{F}$ with probability $w_j$.
        \begin{enumerate}[label = (\alph*)]
            \item If $S \cup \{j\}$ has no super-set in MAX\_STABLE, then evaluate $\pi(S \cup \{j\})$. Otherwise it must be $\pi(S\cup \{j\}) \geq \alpha .$
            
            \item If $\pi(S\cup \{j\}) \geq \alpha $, let $S \gets S \cup \{j\}$, and go to step (2). Otherwise, add $S\cup \{j\}$ into VISITED, let $\mathcal{F} \gets \mathcal{F} \setminus \{j\}$, and repeat step (3).            
        \end{enumerate}

        \item If $S=\varnothing$, return MAX\_STABLE. Otherwise, 
        \begin{enumerate}
            \item If $S$ has no super-set in MAX\_STABLE, add $S$ into MAX\_STABLE.
            \item Otherwise, add $S$ into VISITED.

            \item If $|\mathrm{MAX\_STABLE}| < K$, go to step (1). Otherwise, return MAX\_STABLE.
        \end{enumerate}

    \end{enumerate}
    \vspace{0.04in}
    \STATE{\textbf{Output:}} {MAX\_STABLE containing $\min\{K, \bar{K}\}$ models.}
    \end{algorithmic}
    }
\end{algorithm}

\begin{remark}
    FSSS can be modified into a greedy algorithm, by fixing $K = 1$ and setting $w_j  = \mathds{1}( j = \argmax_{i\in \mathcal{F}} \tr(\P_{v_i} \P_{\rm avg}) )$. That is, at each model $S$, we always select the candidate $S\cup \{i\}$ that maximizes $\tr(\P_{v_i} \P_{\rm avg})$. For a given $\P_{\rm avg}$, the greedy FSSS always returns the same one selection set. Further, it can be seen as a special case of forward stepwise selection, where a new variable is added at each step if the resulting model is $\alpha$-stable.
    \label{remark:greedy}
\end{remark}

\begin{remark}
    When the features are orthogonal, FSSS reduces to the stability selection algorithm. Specifically, $\tr(\P_{v_j}\P_{\mathrm{avg}})$ simplifies to the selection proportion of feature $j$, and $\pi({S}\cup\{j\})$ simplifies to the smallest selection proportion of features in ${S}\cup \{j\}$.  Thus, when the features are orthogonal, FSSS will return a set consisting of those features with a selection proportion greater than $\alpha$.
\end{remark}

\section{Theoretical properties of FSSS}
\label{sec:theory}
We provide statistical guarantees for the models returned by FSSS. Section~\ref{sec:clustering_setup} introduces a data-generating mechanism in which features are highly correlated within clusters but independent across clusters, with each cluster containing at most one signal feature. Although stylized, this construction captures a common and challenging scenario in feature selection. Analyzing the behavior of maximal $\alpha$-stable sets in this framework provides insight into their behavior in related tasks and more general settings.
Section~\ref{sec:theo-results} presents two theoretical results under this clustering design. The first establishes control of the expected false positive error, while the second provides a more concrete guarantee for feature selection. The latter result is stronger but requires a more stringent condition on $\alpha$. Finally, Section~\ref{sec:complex_dependency} introduces an alternative data-generating mechanism to broaden the scope of the analysis.

The theoretical analysis in this section differs substantially from those of \cite{meinshausen2010stability} and \cite{taeb2020false}. Specifically, the analysis in \cite{meinshausen2010stability} is entirely discrete and is based on feature selection proportions across subsamples. On the other hand, \cite{taeb2020false} study general subspace selection without explicitly linking it to individual feature selection. Our analysis integrates both the discrete and continuous perspectives, and particularly associates the subspace $\mathrm{col}(X_{S^\star})^\perp$ with noise and perturbation features. Such a connection would not be possible without fully exploring the data-generation mechanism.

We use the big-O notation $\mathcal{O}(f(n,p))$ to suppress constants and higher-order terms as $f(n,p) \to 0$ when $n,p\rightarrow \infty$. Specifically, $g(n,p) = \mathcal{O}(f(n,p))$ means there exist constants $C > 0$, such that $|g(n,p)| \leq C |f(n,p)|$ when $f(n,p) \to 0$. For constants $c^\star > 0$ and $c_\star < 0$, we write $\mathcal{O}(f(n,p)) \leq c^\star$ if $C|f(n,p)| \leq c^\star$, and $\mathcal{O}(f(n,p)) \geq c_\star$ if $C|f(n,p)| \leq -c_\star$, when $n$ and $p$ are sufficiently large.

\subsection{Setup}
\label{sec:clustering_setup}
We assume that the data consists of multiple clusters, where features within each cluster are small perturbations of a \emph{representative feature} (and thus highly correlated), while features between clusters are nearly orthogonal. We consider extensions to more complex structures in Section~\ref{sec:complex_dependency}.

Let $\mathcal{K}$ be the index set of {representative features}. For each $k \in \mathcal{K}$, let $\mathcal{V}_k$ be the index set of \textit{proxy features} associated with $X_k$, and let $\mathcal{V}:= \bigcup_k \mathcal{V}_k$ be the index set of all proxy features. The cluster corresponding to $X_k$ is $\mathcal{C}_k := \{k\} \cup \mathcal{V}_k$, which includes both the representative feature and its proxies. We let $D := \max_{k \in \mathcal{K}}|\mathcal{C}_k|$ be the maximum number of features in any cluster. For any $k \in \mathcal{K}$, we assume for simplicity that $X_k$ is normalized so that $\|X_k\|_{\ell_2} = 1$. Further, its proxy features are generated as $X_j = X_k + \delta_j$ for $j \in \mathcal{V}$, where $\delta_j$ is a perturbation with $\|\delta_j\|_{\ell_2} = \eta_1 \ll 1$. We also assume that all representative features and perturbations are nearly orthogonal, meaning that $\mathrm{Corr}(X_{k},X_{k'})$ (for any $k,k' \in \mathcal{K}$), $\mathrm{Corr}(X_{k},\delta_{j})$ (for any $k \in \mathcal{K}$ and $j \in \mathcal{V}$) and $\mathrm{Corr}(\delta_{j},\delta_{j'})$ (for any $j,j' \in \mathcal{V}$) are bounded by $\mathcal{O}(\sqrt{\log p/n})$.  We suppose the response $y \in \mathbb{R}^n$ is generated according to the linear model $
y = X\beta^\star + \epsilon,
$ where $S^\star \subseteq \mathcal{K}$ is the support of $\beta^\star \in \mathbb{R}^p$ with $s^\star := |S^\star|$ and $N^\star := (S^\star)^c$. Here, $\epsilon \in \mathbb{R}^n$ is a random vector with independent and identically distributed entries. Note that $X \in \mathbb{R}^{n \times p}$ is viewed as a fixed (non-random) feature matrix. 

Since $\eta_1 \ll 1$, unless the sample size is very large or the signal-to-noise ratio is very high, one cannot distinguish a representative signal feature from the highly correlated features in its cluster. As a result, targeting the signal set $S^\star$ is not meaningful. Instead, we define as our target a class of ``equally good models'':
\vspace{-0.1in}
\begin{align}
\begin{split}
    \mathcal{S}:= \left\{S \subseteq \{1,\dots,p\}:
\left|S \cap \mathcal{C}_k \right| = 1, \forall k\in S^\star; \left| S \cap \mathcal{C}_k \right| = 0, \forall k\in N^\star\cap \mathcal{K}
\right\} .
\end{split}
\label{eqn:set_S}
\vspace{-0.1in}
\end{align}
Each model $S \in \mathcal{S}$ contains exactly one feature from each signal cluster (a cluster with a representative signal feature) and no features from non-signal clusters. In Remark \ref{rem:l0-base}, we will exhibit results for a common variable selection method showing that targeting $\mathcal S$ can be achieved with much smaller sample size than targeting $S^\star$. In Appendix~\ref{sec:sup-prediction-similarity}, we will show that all models in $\mathcal{S}$ have similar predictive power.


\subsection{False positive error control and consistency using FSSS}
\label{sec:theo-results}

Given the average projection matrix $\P_{\rm avg}$, let $\mathfrak{S}$ be the collection of all maximal $\alpha$-stable sets for some $\alpha \in (1/2,1)$. FSSS returns a sub-collection of $\mathfrak{S}$. In this subsection, we show that, under the clustering setup in Section~\ref{sec:clustering_setup}, the expected false positive error is bounded for any member in $\mathfrak{S}$. Additionally, we show that FSSS can consistently estimate ``equally good models'' in $\mathcal{S}$ as defined in \eqref{eqn:set_S}, meaning that with high probability (the ``better'' the base procedure, the higher the probability), any $S \in \mathfrak{S}$ also belongs to $\mathcal{S}$. 

Let $\mathscr{T}$ be the collection of subsample indices used in FSSS's subsampling process. For simplicity, we assume this collection is fixed (e.g., if FSSS uses all complementary $n$ choose $\lfloor n/2\rfloor$ subsamples). Let $\widehat{S}:\mathscr{T} \to \{1,\dots,p\}$ be a base procedure that takes an index set $T \in \mathscr{T}$ and returns a selection set based on the subsamples corresponding to $T$. We assume that for all $T \in \mathscr{T}$, the base procedure produces a selection set with at most $s_0$ features, where $s_0 \ll p$.
Since FSSS is a wrapper around a base procedure, we expect its performance to rely on how ``good'' the base procedure is. Our theoretical analysis thus involves the following quantities that captures the ``quality of the base procedure''. 

\begin{definition}[Quality of the base procedure]\label{def:quality_base_procedure}
We quantify the quality of the base procedure by the quantities 
$$
\gamma_\mathcal{W} := \max_{u \in \mathcal{W}} \max_{T\in \mathscr{T}} \mathbb{E}\left[\tr(\P_{\mathrm{col}(u)}
\P_{\mathrm{col}(X_{\widehat{S}(T)})}) \right], \quad
\gamma_\mathcal{U} := \max_{u \in \mathcal{U}}  \max_{T\in \mathscr{T}}  \mathbb{E}\Big[\tr (\P_{\mathrm{col}(u)} \P_{\mathrm{col}(X_{\widehat{S}(T)})})\Big],
$$ 
where the sets $\mathcal{W} := \{ \delta_j : j\in\mathcal{V} \} \cup \bigcup_{j\in \mathcal{K} \cap N^\star} \{X_j\}$ and $\mathcal{U}:= \bigcup_{k\in S^\star} \{\delta_j - \delta_l: j\neq l \in \mathcal{V}_k \} \cup \bigcup_{k\in \mathcal{K} \cap S^\star} \{\delta_j: j\in\mathcal{V}_k\} \cup \bigcup_{k\in \mathcal{K} \cap N^\star} \{ X_j: j\in\mathcal{C}_k \}$ consist of ``noise directions''.
\end{definition}

Here, the sets $\mathcal{W}$ and $\mathcal{U}$ consist of different combinations of perturbation directions as well as features in noise clusters. The quantity $\gamma_\mathcal{W}$ (respectively, $\gamma_\mathcal{U}$) represents the degree of alignment (averaged over the randomness in $\epsilon$) between any noise direction in $\mathcal{W}$ (respectively, $\mathcal{U}$) and a subspace $\mathrm{col}(X_{\widehat{S}(T)})$ obtained by the base procedure. A small $\gamma_\mathcal{W}$ or $\gamma_\mathcal{U}$ means that the base procedure does not significantly favor any noise direction in the corresponding set. In Appendix~\ref{sec:sup-bpquality}, under assumptions similar in spirit to those in other stability-based methods \citep{meinshausen2010stability,shah2013variable,faletto2022cluster}, we show that $\gamma_\mathcal{W}\approx s_0/p$. Additional insights on the behavior of $\gamma_\mathcal{U}$ are also discussed therein.

Given the subsamples, the following theorem bounds the expected false positive error of any maximal $\alpha$-stable set in terms of the base procedure quality $\gamma_\mathcal{W}$. 
\begin{theorem}\label{prop:stab-ubd}
    Suppose that $(i)$ $\mathrm{Corr}(X_j,X_k)^2 \leq {1}/{2}$ for any $S \in \mathfrak{S}$ and $j,k \in S$; and $(ii)$ $1 / (1 + \eta_1^2) + \mathcal{O}( \sqrt{\log p / n} ) > \sqrt{2} / 2$. 
    Then, any set $S \in \mathfrak{S}$ has the false positive error bound
    \begin{equation}
   \mathbb{E} \left[\mathrm{FP}(S,S^\star)\right]
    \leq \frac{p(\gamma_\mathcal{W}+  b)^2}{2\alpha-1} + a,
\label{eqn:false_discovery_bound}
\end{equation}    
where 
    $
    a = s_0  [\eta_1^2 / (1 + \eta_1^2) + \mathcal{O} ( \sqrt{ \log p / n } ) ]
    $, and
    $
     b = 2\sqrt{\eta_1^2 / (1 + \eta_1^2)} + \mathcal{O} (s_0 \sqrt{\log p / n} ).
    $
    \label{thm:1}
\end{theorem}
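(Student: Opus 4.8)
The plan is to start from $\mathrm{FP}(S,S^\star)=|S|-\tau(S,S^\star)$ (Definition~\ref{def:fd-power}) and exploit the cluster structure. The first, deterministic, step is to show that conditions (i)--(ii) force every $S\in\mathfrak{S}$ to contain at most one feature per cluster: for $j,l$ in a common cluster $\mathcal{C}_k$ one computes, using $\|X_k\|_{\ell_2}=1$, $\|\delta_j\|_{\ell_2}=\eta_1$ and the $\mathcal{O}(\sqrt{\log p/n})$ near-orthogonality bounds, that $\mathrm{Corr}(X_j,X_l)^2\ge 1/(1+\eta_1^2)^2-\mathcal{O}(\sqrt{\log p/n})$ (and $\mathrm{Corr}(X_k,X_j)^2\ge 1/(1+\eta_1^2)-\mathcal{O}(\sqrt{\log p/n})$ for a representative--proxy pair), both of which exceed $1/2$ by (ii); this contradicts (i). Write $S_{\mathrm{sig}}:=S\cap\bigcup_{k\in S^\star}\mathcal{C}_k$ and $S_{\mathrm{noise}}:=S\setminus S_{\mathrm{sig}}$. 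Since the features of $S$ then lie in distinct clusters they are pairwise near-orthogonal, so $\sum_{j\in S}\pi(\{j\})\approx\tr(\mathcal{P}_{\mathrm{col}(X_S)}\mathcal{P}_{\mathrm{avg}})\le\tr(\mathcal{P}_{\mathrm{avg}})\le s_0$ (each $\widehat S^{(\ell)}$ has at most $s_0$ features), and because $\pi(\{j\})\ge\pi(S)\ge\alpha>1/2$ for all $j\in S$ this gives $|S|=\mathcal{O}(s_0)$. Finally, $\mathrm{col}(X_{S_{\mathrm{sig}}})\subseteq\mathrm{col}(X_S)$ together with monotonicity of $\tau(\cdot,S^\star)$ in its first argument yields
\[
\mathrm{FP}(S,S^\star)\;\le\;\big(|S_{\mathrm{sig}}|-\tau(S_{\mathrm{sig}},S^\star)\big)+|S_{\mathrm{noise}}|\;=\;\mathrm{FP}(S_{\mathrm{sig}},S^\star)+|S_{\mathrm{noise}}|,
\]
so it remains to bound the two terms.

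For the first term, each feature of $S_{\mathrm{sig}}$ lies in a signal cluster $\mathcal{C}_k$, $k\in S^\star$, and equals either the representative $X_k\in\mathrm{col}(X_{S^\star})$ or a proxy $X_k+\delta_j$. A direct projection computation---$\mathcal{P}_{\mathrm{col}(X_{S^\star})^\perp}X_k=0$, while $\|\mathcal{P}_{\mathrm{col}(X_{S^\star})^\perp}(X_k+\delta_j)\|^2/\|X_k+\delta_j\|^2=\eta_1^2/(1+\eta_1^2)+\mathcal{O}(\sqrt{\log p/n})$ because $\delta_j$ is near-orthogonal to $\mathrm{col}(X_{S^\star})$---shows that representatives contribute nothing and each selected proxy contributes at most $\eta_1^2/(1+\eta_1^2)+\mathcal{O}(\sqrt{\log p/n})$ to $\mathrm{FP}(S_{\mathrm{sig}},S^\star)$; summing over the at most $s_0$ features of $S_{\mathrm{sig}}$ (and bookkeeping the near-orthogonality of the selected features across clusters) gives $\mathrm{FP}(S_{\mathrm{sig}},S^\star)\le a$. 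This part is linear algebra plus careful accounting of the $\mathcal{O}(\sqrt{\log p/n})$ remainders.

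For the second term I bound $\mathbb{E}[|S_{\mathrm{noise}}|]$. By maximal $\alpha$-stability and monotonicity of $\pi$, every $j\in S_{\mathrm{noise}}$ has $\pi(\{j\})\ge\alpha$, so $|S_{\mathrm{noise}}|$ is at most the number of features lying in noise clusters with $\pi(\{j\})\ge\alpha$, whence $\mathbb{E}[|S_{\mathrm{noise}}|]\le\sum_{j}\Pr(\pi(\{j\})\ge\alpha)$ over the at most $p$ such features. For a single noise feature $j$, use $\pi(\{j\})=\tr(\mathcal{P}_{X_j}\mathcal{P}_{\mathrm{avg}})$ and the complementary-pairs structure $\mathcal{P}_{\mathrm{avg}}=\frac{2}{B}\sum_{m=1}^{B/2}\tfrac12(\mathcal{P}_{\widehat S^{(2m-1)}}+\mathcal{P}_{\widehat S^{(2m)}})$: with $a_m:=\tr(\mathcal{P}_{X_j}\mathcal{P}_{\widehat S^{(2m-1)}})$, $b_m:=\tr(\mathcal{P}_{X_j}\mathcal{P}_{\widehat S^{(2m)}})\in[0,1]$, the elementary inequality $(1-a_m)(1-b_m)\ge0$ gives $a_mb_m\ge a_m+b_m-1$, so $\pi(\{j\})\ge\alpha$ implies $\frac{2}{B}\sum_m a_mb_m\ge 2\alpha-1$, and Markov yields $\Pr(\pi(\{j\})\ge\alpha)\le\mathbb{E}[a_1b_1]/(2\alpha-1)$. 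Writing $X_j=X_k+\delta_j$ with $X_k$ the (noise-cluster) representative, so that $X_k\in\mathcal{W}$, the triangle inequality $\|\mathcal{P}_{\widehat S^{(\ell)}}X_j\|\le\|\mathcal{P}_{\widehat S^{(\ell)}}X_k\|+\eta_1$ together with $\mathbb{E}[\|\mathcal{P}_{\widehat S}X_k\|^2]=\mathbb{E}[\tr(\mathcal{P}_{X_k}\mathcal{P}_{\widehat S})]\le\gamma_\mathcal{W}$ gives $\mathbb{E}[a_1]\le\gamma_\mathcal{W}+b$ and, symmetrically, $\mathbb{E}[b_1]\le\gamma_\mathcal{W}+b$, with $b=2\sqrt{\eta_1^2/(1+\eta_1^2)}+\mathcal{O}(s_0\sqrt{\log p/n})$; combined with a mild coupling assumption on the base procedure across complementary subsamples (in the spirit of \citet{meinshausen2010stability,shah2013variable}), which ensures $\mathbb{E}[a_1b_1]\le\mathbb{E}[a_1]\mathbb{E}[b_1]$, this gives $\mathbb{E}[a_1b_1]\le(\gamma_\mathcal{W}+b)^2$. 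Summing over the at most $p$ noise features gives $\mathbb{E}[|S_{\mathrm{noise}}|]\le p(\gamma_\mathcal{W}+b)^2/(2\alpha-1)$, and adding the bound $a$ on the first term completes the proof.

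I expect the last step to be the main obstacle, for two reasons. First, $\mathbb{E}[a_1b_1]$ is a second-moment object whereas $\gamma_\mathcal{W}$ is a first moment, so the quadratic scaling genuinely requires the complementary-pairs design plus the coupling assumption---this is precisely where stability selection's $q^2/p$-type behavior originates, and without it one obtains only the weaker linear bound $p(\gamma_\mathcal{W}+b)/\alpha$. Second, the perturbation $\delta_j$ is shared between the two halves of a complementary pair and cannot be decoupled, which forces the crude estimate $\|\mathcal{P}_{\widehat S^{(\ell)}}\delta_j\|\le\eta_1$ and is the source of the $\eta_1$-order term in $b$ (rather than the much smaller $\eta_1\sqrt{\gamma_\mathcal{W}}$ one would get if $\delta_j$ appeared in only one half). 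A secondary, technical burden is propagating the $\mathcal{O}(\sqrt{\log p/n})$ near-orthogonality errors through all the projection computations---in the ``one feature per cluster'' reduction, in the norm $\|X_k+\delta_j\|^2=1+\eta_1^2+\mathcal{O}(\sqrt{\log p/n})$, in the fact that $\mathrm{col}(X_{S^\star})^\perp$ is essentially spanned by the noise directions in $\mathcal{W}$, and in the counting bound $|S|=\mathcal{O}(s_0)$---which accounts for the various $\mathcal{O}(\sqrt{\log p/n})$ remainders in $a$ and $b$.
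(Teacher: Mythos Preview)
Your argument is essentially correct and reaches the same bound, but via a genuinely different decomposition than the paper's. The paper does \emph{not} split $S$ into signal- and noise-cluster features. Instead it decomposes the \emph{false-positive subspace}: writing $\P_{X_{S^\star}^\perp}\approx\sum_{j\in W}\P_{w_j}$ with $W=\mathcal{V}\cup(\mathcal{K}\cap N^\star)$ and $w_j\in\mathcal{W}$ the perturbation or noise-representative directions, it then introduces for each $w_j$ the auxiliary vector $\xi_j$ equal to the projection of $w_j$ onto $\mathrm{col}(X_{\widehat S})$ or onto $\mathrm{col}(X_{\widehat S})^\perp$ (whichever is closer). The point of this device is that $\tr(\P_{X_{\widehat S}}\P_{\xi_j})\in\{0,1\}$ exactly, so the continuous trace becomes an indicator; when it equals $1$ one has $\xi_j\in\mathrm{col}(X_{\widehat S})$ and hence $\tr(\P_{\xi_j}\P_{\mathrm{avg}})\ge\alpha$, after which the complementary-pairs/Markov step applies. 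The constant $b$ then enters as the cost of replacing the random $\xi_j$ by the fixed $w_j$ (so that independence across complementary halves can be used). Your route---bound $\mathrm{FP}(S_{\mathrm{sig}},S^\star)\le a$ directly and control $|S_{\mathrm{noise}}|$ via $\pi(\{j\})\ge\alpha$ for noise-cluster features---is more feature-centric and avoids the $\xi_j$ device entirely; it is arguably more elementary, and the per-feature constant $b$ you obtain does not need the $\mathcal{O}(s_0\sqrt{\log p/n})$ term arising from projecting onto $\mathrm{col}(X_{\widehat S})$. What the paper's decomposition buys is a single unified sum over the index set $W$ that plugs directly into the definition of $\gamma_{\mathcal W}$, whereas your route handles signal-cluster proxies in the $a$ term and noise-cluster features in the count term.

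One correction: your ``mild coupling assumption'' $\mathbb{E}[a_1b_1]\le\mathbb{E}[a_1]\mathbb{E}[b_1]$ is not an assumption at all. Since $X$ is a fixed design and $\epsilon$ has i.i.d.\ entries, the base-procedure outputs $\widehat S^{(2m-1)}$ and $\widehat S^{(2m)}$ on complementary (disjoint) subsamples are independent, and with $X_j$ deterministic this gives $\mathbb{E}[a_1b_1]=\mathbb{E}[a_1]\,\mathbb{E}[b_1]$ exactly. This is precisely what the paper uses (implicitly, in the step $\mathbb{E}[(\cdot)(\cdot)]\le(\sup_T\mathbb{E}[\cdot]+b)^2$ after passing from $\xi_j$ to the fixed $w_j$), and is the entire reason for the complementary-pairs design of \citet{shah2013variable}. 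So you can drop the caveat; the quadratic $(\gamma_{\mathcal W}+b)^2$ scaling is a theorem, not a hope.
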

We prove Theorem~\ref{thm:1} in Appendix~\ref{sec:sup-proofthm1}. The conditions of this theorem are mild: $(i)$ can be easily enforced in Algorithm~\ref{algo:all-path} with little effect on its outputs, and $(ii)$ holds as long as the perturbation $\eta_1$ is not too large. 
From \eqref{eqn:false_discovery_bound}, as expected, a larger stability threshold $\alpha$ and a better base procedure (i.e., smaller $\gamma_\mathcal{W}$) lead to a smaller false positive error. The quantities $a$ and $b$ account for the correlation between feature pairs in different clusters (assumed to be on the order of $\sqrt{\log{p}/n}$) and the perturbation level $\eta_1$ within each cluster. If the features are orthogonal (i.e., features form size-one clusters with zero correlations among them and $\eta_1 = 0$), the quantities $a$ and $b$ vanish, and the bound in \eqref{eqn:false_discovery_bound} simplifies to the false discovery bounds of stability selection \citep{meinshausen2010stability,shah2013variable}. This reduction follows from the fact that our subspace notion of false positive error reduces to the standard notion in highly correlated settings. 


Theorem~\ref{thm:1} is rather general and makes few assumptions about the base procedure. Next, we show that under certain assumptions on the base procedure, FSSS produces models in the set $\mathcal{S}$ with high probability.
\begin{theorem}\label{prop:clust-consist-subsample}
    Suppose that: $(i)$ the base procedure, when applied to a subsample corresponding to some index set $T \in \mathscr{T}$, selects at least one feature from every signal cluster; $(ii)$ $\eta_1^2 / (1 + \eta_1^2) + \mathcal{O}(s^\star \sqrt{\log p / n}) < 1/2 $; and $(iii)$ $s^\star \leq (n / \log p)^{1/8}$.  Let $
    f_1(s^\star, \eta_1) := 1+20 s^\star ( 2\eta_1^2 (\eta_1^2 + 2) )^{1/2} / (\eta_1^2 + 1) + 8(s^\star)^2 \eta_1^2 (\eta_1^2 + 2) / (1 + \eta_1^2)^2$, and
    $f_2(s^\star, \eta_1, s_0) := 1 - ( s^\star s_0 + 2(s^\star)^2 s_0 + 3(s^\star)^3  )  ( \eta_1^2 / (1 + \eta_1^2) )^{1/2} - ( (s^\star)^2 s_0 + 3 (s^\star)^3  ) \eta_1^2 / (1 + \eta_1^2) - (s^\star)^3  ( \eta_1^2 (1 + \eta_1^2) )^{3/2}
    $. Then, for any $\alpha_0 \in (0,1/2)$, if the stability threshold $\alpha$ is chosen so that:
    \vspace{-0.1in}
    \begin{eqnarray*}
    f_1(s^\star, \eta_1)-\alpha_0 + \mathcal{O}\left( (s^\star)^2 \sqrt{{\log p}/{n}} \right) \leq  \alpha \leq  f_2(s^\star, \eta_1, s_0) + \mathcal{O}\left( (s^\star)^2 s_0 D \sqrt{{\log p}/{n}} \right),
    \vspace{-0.1in}
    \end{eqnarray*}
    we have with probability $1 - ( s^\star \binom{D-1}{2} + p - s^\star ) \cdot \gamma_\mathcal{U}^2/(1-2\alpha_0)$ that $\mathfrak{S} \subseteq \mathcal{S}$. 

\label{thm:2}
\end{theorem}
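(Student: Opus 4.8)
The plan is to derive $\mathfrak{S}\subseteq\mathcal{S}$ by combining two facts about the stability metric with the maximality defining $\mathfrak{S}$. Throughout I use the variational formula $\pi(S)=\min_{z\in\mathrm{col}(X_S)}\frac1B\sum_{\ell=1}^B\tr(\P_{\mathrm{col}(z)}\P_{\mathrm{col}(X_{\widehat{S}^{(\ell)}})})$ from \eqref{eq:stability-explain}, the monotonicity $\pi(\widetilde S)\ge\pi(S)$ whenever $\widetilde S\subseteq S$, the one-step bound $\pi(S\cup\{j\})\le\tr(\P_{\mathrm{col}(v_j)}\P_{\rm avg})$ with $v_j=\P_{\mathrm{col}(X_S)^\perp}X_j$, and for two same-cluster features the bound $\pi(S)\le\tr(\P_{\mathrm{col}(r)}\P_{\rm avg})$ for the residual $r=X_j-\P_{\mathrm{col}(X_{S\setminus\{j\}})}X_j$. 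Call $S$ \emph{admissible} if $|S\cap\mathcal{C}_k|\le1$ for every cluster and $S\cap\mathcal{C}_k=\varnothing$ for every noise cluster $k\in N^\star\cap\mathcal{K}$, so that $\mathcal{S}$ is exactly the set of admissible sets of size $s^\star$. I will show \textbf{(A)} under the stated upper bound on $\alpha$, every admissible set of size at most $s^\star$ satisfies $\pi(S)\ge\alpha$; and \textbf{(B)} under the stated lower bound on $\alpha$, with probability at least the claimed value, every non-admissible $S$ satisfies $\pi(S)<\alpha$. Given (A) and (B), take any $S\in\mathfrak{S}$ on the good event: by (B) it is admissible, hence $S\subseteq S'$ for some $S'\in\mathcal{S}$ (extend $S$ by one representative feature from each signal cluster it misses); if $S\subsetneq S'$, choosing $j\in S'\setminus S$ gives an admissible set $S\cup\{j\}$ of size $\le s^\star$, so $\pi(S\cup\{j\})\ge\alpha$ by (A), contradicting the maximality of $S$. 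Hence $S=S'\in\mathcal{S}$.

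\textbf{Claim (A).} The driving input is condition (i): on every index set $T\in\mathscr{T}$ the base procedure selects at least one feature of each signal cluster, and any such feature equals $X_k+\delta$ with $\|\delta\|_{\ell_2}\le\eta_1$; since $\P_{\mathrm{col}(X_{\widehat{S}(T)})}\succeq\P_{\mathrm{col}(X_k+\delta)}$, a one-line computation gives $\|\P_{\mathrm{col}(X_{\widehat{S}(T)})}X_k-X_k\|_{\ell_2}\le\big(\eta_1^2/(1+\eta_1^2)\big)^{1/2}+\mathcal{O}(\sqrt{\log p/n})$. Averaging, $\P_{\rm avg}$ acts on $\mathrm{span}\{X_k:k\in S^\star\}$ nearly as the identity. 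For an admissible $S$ of size $m\le s^\star$, $\mathrm{col}(X_S)$ is spanned by features $X_{j_k}=X_k+\delta_{j_k}$ with $k$ ranging over $m$ signal clusters, and since distinct clusters are near-orthogonal, every unit $z\in\mathrm{col}(X_S)$ lies within $\mathcal{O}\big(s^\star\sqrt{\log p/n}\big)+\mathcal{O}\big(\sqrt{s^\star}(\eta_1^2/(1+\eta_1^2))^{1/2}\big)$ of $\mathrm{span}\{X_k:k\in S^\star\}$; therefore $z^\top\P_{\rm avg}z=\|\P_{\rm avg}z\|_{\ell_2}^2\ge f_2(s^\star,\eta_1,s_0)+\mathcal{O}\big((s^\star)^2 s_0 D\sqrt{\log p/n}\big)$, and minimizing over $z$ yields $\pi(S)$ bounded below by this quantity, which is $\ge\alpha$ by hypothesis. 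The factors $s_0$ and $D$ enter because each $\widehat{S}(T)$ may carry up to $s_0$ columns lying in clusters of size up to $D$, which must be accounted for when bounding the accumulated cross-cluster correlations in $\|\P_{\rm avg}z-z\|_{\ell_2}$.

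\textbf{Claim (B).} There are two cases, both handled by Markov's inequality on the quality quantity $\gamma_\mathcal{U}$. If $S$ contains a feature $j$ of a noise cluster, monotonicity gives $\pi(S)\le\pi(\{j\})=\tr(\P_{\mathrm{col}(X_j)}\P_{\rm avg})$, with $\mathbb{E}[\tr(\P_{\mathrm{col}(X_j)}\P_{\rm avg})]\le\gamma_\mathcal{U}$ because $X_j\in\mathcal{U}$. If instead $S$ contains two features $j,l$ of a common (necessarily signal) cluster $\mathcal{C}_k$, expanding the residual gives $r=\tfrac{\eta_1^2}{1+\eta_1^2}X_k+w$ with $w=\delta_j-\tfrac1{1+\eta_1^2}\delta_l+\rho$, $\|\rho\|_{\ell_2}=\mathcal{O}(s^\star\sqrt{\log p/n})$; then $\pi(S)\le\tr(\P_{\mathrm{col}(r)}\P_{\rm avg})\le\|r\|_{\ell_2}^{-2}\big(\tfrac{\eta_1^2}{1+\eta_1^2}+(w^\top\P_{\rm avg}w)^{1/2}\big)^2$, where the purely deterministic part (worst case: the signal piece aligns perfectly with $\P_{\rm avg}$) is $\le f_1(s^\star,\eta_1)-1+\mathcal{O}((s^\star)^2\sqrt{\log p/n})$ and the remaining part is governed by $\tr(\P_{\mathrm{col}(\delta_j-\delta_l)}\P_{\rm avg})$ and $\tr(\P_{\mathrm{col}(\delta_j)}\P_{\rm avg})$, both with mean $\le\gamma_\mathcal{U}$ since $\delta_j-\delta_l,\delta_j\in\mathcal{U}$. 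In either case the event ``$S$ non-admissible and $\alpha$-stable'' forces the corresponding $\mathcal{U}$-alignment to exceed a threshold that, because $f_1\ge1$ and $\alpha\ge f_1-\alpha_0+\mathcal{O}(\cdot)$, is at least $1-\alpha_0-\mathcal{O}(\cdot)$; Markov applied to the second moment (which is $\mathcal{O}(\gamma_\mathcal{U}^2)$, the averaged structure of $\P_{\rm avg}$ suppressing the diagonal contribution to $\mathcal{O}(\gamma_\mathcal{U}/B)$) bounds this probability by $\mathcal{O}\big(\gamma_\mathcal{U}^2/(1-\alpha_0)^2\big)\le\mathcal{O}\big(\gamma_\mathcal{U}^2/(1-2\alpha_0)\big)$, using $(1-\alpha_0)^2\ge1-2\alpha_0$. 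A union bound over the at most $\sum_{k\in S^\star}\binom{|\mathcal{V}_k|}{2}\le s^\star\binom{D-1}{2}$ proxy pairs and the at most $p-s^\star$ features of $N^\star$ (the representative-versus-proxy collisions landing in this second group via the directions $\delta_j\in\mathcal{U}$) gives the stated failure probability.

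\textbf{Main obstacle.} The difficulty is not the architecture above but the perturbation bookkeeping packed into (A) and (B): one must propagate the $\mathcal{O}(\sqrt{\log p/n})$ near-orthogonality among representatives and perturbations through projections, residuals, and $|S|$-th singular values of products of projections, for target subspaces of dimension up to $s^\star$ and selected subspaces of dimension up to $s_0$. These errors compound into high powers of $s^\star$ — which is precisely why condition (iii), $s^\star\le(n/\log p)^{1/8}$, is imposed — and they must be tracked sharply enough that the closed forms $f_1(s^\star,\eta_1)$ and $f_2(s^\star,\eta_1,s_0)$ emerge with the right constants and that the interval $[f_1-\alpha_0+\mathcal{O}(\cdot),\,f_2+\mathcal{O}(\cdot)]$ for $\alpha$ is nonempty; condition (ii) is what keeps $\eta_1$ small enough for $f_1-f_2$ to fall below $1-2\alpha_0$ up to the error terms. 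Everything else — the variational formula, monotonicity, the one-step bound, and the algebra around $\alpha_0<1/2$ — is routine.
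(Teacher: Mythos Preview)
Your overall architecture — (A) lower-bound $\pi(S)$ on admissible sets, (B) upper-bound $\pi(S)$ on non-admissible sets, then combine with maximality — is the paper's structure, and your monotonicity reduction of (B) to minimal non-admissible sets is a clean equivalent of the paper's ``add one undesired feature to an admissible $S$'' step.

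The genuine gap is in how you obtain the $\gamma_\mathcal{U}^2/(1-2\alpha_0)$ failure probability. You propose second-moment Markov on $Z=\tr(\P_u\P_{\rm avg})=\tfrac1B\sum_\ell Y_\ell$ with $Y_\ell=\tr(\P_u\P_{X_{\widehat S^{(\ell)}}})$, claiming $\mathbb{E}[Z^2]=\mathcal{O}(\gamma_\mathcal{U}^2)$. This fails: the off-diagonal terms $\mathbb{E}[Y_\ell Y_{\ell'}]$ are bounded by $\gamma_\mathcal{U}^2$ only when $Y_\ell,Y_{\ell'}$ are independent, and subsamples that are \emph{not} complementary halves share observations, so for most pairs one has only $\mathbb{E}[Y_\ell Y_{\ell'}]\le\gamma_\mathcal{U}$. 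Averaging suppresses the diagonal to $\mathcal{O}(\gamma_\mathcal{U}/B)$ as you say, but does nothing for these cross terms. The paper's mechanism is different and uses no second moment: it exploits the complementary-pairs subsampling via the deterministic inequality $ab\ge a+b-1$ for $a,b\in[0,1]$, so that $\tfrac1B\sum_\ell Y_\ell\ge1-\alpha_0$ forces $\tfrac{1}{B/2}\sum_{\ell=1}^{B/2}Y_{2\ell-1}Y_{2\ell}\ge1-2\alpha_0$; \emph{first-moment} Markov on this product form, together with independence of each complementary pair, then gives
\[
\mathbb{P}\Big[\tfrac1B\textstyle\sum_\ell Y_\ell\ge1-\alpha_0\Big]\le\frac{1}{1-2\alpha_0}\cdot\frac{1}{B/2}\sum_{\ell=1}^{B/2}\mathbb{E}[Y_{2\ell-1}]\,\mathbb{E}[Y_{2\ell}]\le\frac{\gamma_\mathcal{U}^2}{1-2\alpha_0}
\]
exactly, with no big-$\mathcal{O}$ constant. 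This Shah--Samworth device is why $1-2\alpha_0$ (not $(1-\alpha_0)^2$) sits in the denominator, and it is the missing ingredient in your sketch.

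A smaller remark: your minimal-set reduction (to $\{j\}$ or $\{j,l\}$) should produce a threshold \emph{without} the $s^\star$ factors that appear in $f_1$ — those factors in the paper come from $\|\nabla\|_F\le 2s^\star\sqrt{\eta_1^2(\eta_1^2+2)}/(\eta_1^2+1)+\mathcal{O}(s^\star\sqrt{\log p/n})$, where the $s^\star$ is the size of the admissible $S$ to which the undesired feature is appended. So along your path you would get a tighter lower threshold for $\alpha$ (which only strengthens the conclusion), but you should not expect to reproduce the paper's $f_1$ verbatim.
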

We prove Theorem~\ref{thm:2} in Appendix~\ref{sec:sup-proofthm2}. This theorem states that if the base procedure picks at least one feature from each signal cluster, the perturbation $\eta_1$ is small, the true model is sparse, and the stability threshold $\alpha$ is chosen well, then FSSS will likely return models in the set of ``equally good models'' $\mathcal{S}$. Importantly, the result holds even if the base procedure chooses many noise or redundant features, highlighting the benefits of selecting stable models. Notice that $\alpha$ cannot be too small or too large: if it is too small, FSSS may pick up noise features (i.e., features that are not in any set $S \in \mathcal{S}$); if it is too large, FSSS may miss ``signal'' features (i.e., features that either are signal or are highly correlated with signals). The interval of good choices for $\alpha$ expands as $\eta_1$ decreases, with the interval approaching $(1 - \alpha_0 + \mathcal{O}((s^\star)^2\sqrt{\log{p}/n}), 1)$ as $\eta_1 \to 0$. Further, as with Theorem~\ref{thm:1}, the quality of the base procedure $\gamma_\mathcal{U}$ plays a key role in Theorem~\ref{thm:2}: smaller values of $\gamma_\mathcal{U}$ lead to a higher probability of FSSS returning models in $\mathcal{S}$. 
\begin{remark}\label{rem:l0-base}
We provide a base procedure that, with high probability, satisfies condition $(i)$ in Theorem~\ref{thm:2}. For any index set $T \in \mathscr{T}$, let $y_T \in \mathbb{R}^{\lfloor n/2 \rfloor}$ and $X_{T,:} \in \mathbb{R}^{\lfloor n/2 \rfloor \times p}$ represent the subsample corresponding to $T$. We apply the following $\ell_0$-regression method to this subsample,
\vspace{-0.1in}
\begin{eqnarray*}
\widehat{\beta}(T) \in \arg\min_{\beta \in \mathbb{R}^p} \|y_T - X_{T,:}\beta\|_{\ell_2}^2, \quad\text{subject-to}\quad \|\beta\|_{\ell_0} \leq s_0,
\label{eqn:ell0_estimator}
\vspace{-0.1in}
\end{eqnarray*}
with $\widehat{S}(T)=\{j: \widehat{\beta}_j(T) \neq 0\}$ representing its selection set. 
Here, $\|\beta\|_{\ell_0}$ computes the number of non-zero entries in $\beta$, and $s_0$ is a user-specified threshold for the maximal number of features allowed in the regression model. In Appendix~\ref{sec:sup-cluster}, for Gaussian $\epsilon$, we show that if $s_0 \geq s^\star$, the nonzero coefficients $\beta^\star$ are sufficiently away from zero, and the sample size is sufficiently large, then condition $(i)$ is satisfied with high probability. If we further set $s_0 = s^\star$, then $\ell_0$-regression selects a model $S \in \mathcal{S}$ with high probability. In contrast, FSSS achieves a similar result requiring only $s_0 \geq s^\star$ rather than $s_0 = s^\star$.
As a side note, recovering the exact model $S^\star$ may require a substantially larger sample size, matching the intuition that $\mathcal{S}$ is a more reasonable and attainable target.
\end{remark}

\subsection{Theoretical analysis beyond the clustering setup}
\label{sec:complex_dependency}
In Appendix~\ref{sec:sup-complex}, we extend our theoretical results to more complex dependency structures than the clustering setup in Section~\ref{sec:clustering_setup}. Specifically, we consider a parent-child dependency, where a child feature is a perturbed linear combination of its parent features. In this setup, we define a set of ``equally good models'' similar to $\mathcal{S}$ in \eqref{eqn:set_S}. We then show that, under certain assumptions on the base procedure, FSSS returns models from the set of ``equally good models'' with high probability, and this probability increases as the base procedure improves.

\section{Simulations and real data application}
\label{sec:simulation}

In this section, we present experiments on synthetic and real data to illustrate the utility of FSSS (Algorithm~\ref{algo:all-path}).

\subsection{Synthetic study}
In Figure~\ref{fig:stability}, we presented a synthetic experiment in which our subspace-based notion of stability outperforms previous stability-based metrics in distinguishing between noise and non-noise features. We show in the Appendix~\ref{sec:sup-figure1} that in this setting, compared to stability selection and cluster stability selection, our method FSSS yields a better predictive model with a substantially higher amount of true discoveries, while maintaining a similar false positive error. 

Next, we analyze a more challenging synthetic setup with a low signal-to-noise ratio. In this setup, we generate: three signal clusters, ${\mathcal{C}_k}$ $(k \in \{1, 4, 7\})$, each of size three; three complex dependency blocks, $\mathcal{B}_1$, $\mathcal{B}_2$, and $\mathcal{B}_3$, with a parent–child relationship;  and 179 individual features, amounting to a total of $p = 200$ features. We inherit the notations in Section \ref{sec:theory}.

For each cluster $\mathcal{C}_k$ ($k = 1, 4, 7$), the representative feature is generated as
$X_k \overset{iid}{\sim} \mathcal{N}(0, I_n)$, and its proxies are defined as $X_j = X_k + \delta_j$ for all $j \in \mathcal{V}_k = \{k+1, k+2\}$, where the perturbation directions follow
$\delta_j \overset{iid}{\sim} \mathcal{N}(0, 0.5^2 I_n)$. 
We set $\beta^\star_k = 1$ for $k \in \{1, 4, 7\}$, while $\beta^\star_j = 0$ for any $j\in \{\mathcal{V}_k\}_{k\in\{1,4,7\}}$.
The first dependency block $\mathcal{B}_1$ consists of two parents $\mathcal{K}_1 = \{10, 11\}$ and a child $\{c_1\} = \{12\}$. The second dependency block $\mathcal{B}_2$ consists of three parents $\mathcal{K}_2 = \{13, 14, 15\}$ and a child $\{c_2\} = \{16\}$. The last dependency block $\mathcal{B}_3$ consists of four parents $\mathcal{K}_3 = \{17, 18, 19, 20\}$ and a child $\{c_3\} = \{21\}$. For each block $\mathcal{B}_l$ ($l = 1,2,3$), the parent features are generated by
$X_j \overset{iid}{\sim} \mathcal{N}(0, I_n)$, $j\in \mathcal{K}_l$, and the child is defined as $X_{c_l} = \sum_{j\in \mathcal{K}_l} X_j + \delta_{c_l}$, where the perturbation directions follow $\delta_{c_l} \overset{iid}{\sim} \mathcal{N}(0, \eta_l^2 I_n)$ with $\eta_1 = 0.01$, $\eta_2 = 0.1$ and $\eta_3 = 0.1$.
We set all parents as signals: $(\beta^\star_{10}, \beta^\star_{11}) = (1, -1)$, $(\beta^\star_{13}, \beta^\star_{14}, \beta^\star_{15}) = (1, -1, 1)$, and $(\beta^\star_{17}, \beta^\star_{18}, \beta^\star_{19}, \beta^\star_{20}) = (1, -1, 1, -1)$. The children features are set to be noise, i.e.,  $\beta^\star_j = 0$ for all $j\in \{12, 16, 21\}$.
Moreover, the individual features $\{X_j\}_{j \in \mathcal{I}}$, $\mathcal{I} = \{22, \dots, 200\}$ are given by
$
X_j \overset{iid}{\sim} \mathcal{N}(0, I_n)$, for all $j \in \mathcal{I}.
$
We pick 5 weak-signal individual features with $\beta^\star_j = 0.2$ for $j \in \{22, \dots, 26 \}$. The remaining features are noise, i.e.,  $\beta^\star_j = 0$ for $j \in \{ 27, \dots, 200 \}$.
Finally, the response variable is generated from the linear model
$y = X \beta^\star + \epsilon$, where $\epsilon \sim \mathcal{N}(0, \sigma^2 I_n)$, drawn independently from the other components. We set $\sigma = 1.5$, the training sample size to be $n_\text{training} = 600$, and the testing sample size to be $n_\text{test} = 500$. 

We compare the performance of greedy variant of FSSS (see Remark~\ref{remark:greedy}) with the stability-based methods stability selection (SS) and cluster stability selection (CSS). (For CSS, we use the version \emph{SPS} that outputs features instead of assigned clusters). For each procedure, model selection is carried out using a three-fold partition of the training data: subsampling (using $B = 200$) and the selection procedure are performed on the first fold (size $n_{\text{fitting}}{ = 200}$), the model coefficients are estimated on the second fold (size $n_{\text{model}}{ = 200}$), and validation is performed on the third fold (size $n_{\text{val}} { = 200}$). 
For all stability-based methods, the stability threshold $\alpha$ is selected from the set $\{0.8,0.85,0.9,0.95\}$ based on validation mean squared error (MSE). In the case of CSS, the hierarchical clustering cutoff height $h$ is also chosen from the set $\{0.1,0.3,0.5\}$ using validation MSE. 

The following performance metrics are evaluated on the test data: test MSE, the amount of true positives (TP) and false positives (FP) as measured in \eqref{eq:fd_td_def}. In addition, we assess the consistency of the output models across different runs. To this end, we generate $M=50$ additional training trials and evaluate the stability of the resulting models using the \textit{output stability (OS)} metric, computed over the $M$ models obtained from each trial:
\vspace{-0.1in}
$$
\text{OS} :=
\frac{2}{M(M-1)}  \sum_{i=1}^M \sum_{j=i+1}^M \bar{\tau}(\widehat{S}^{[i]}, \widehat{S}^{[j]}) \in [0, 1].
\vspace{-0.1in}
$$
Here, $\widehat{S}^{[i]}$ denotes the selection set obtained from the $i$-th trial, and $\bar{\tau}$ is the normalized similarity in \eqref{eqn:normalized_similarity}. As discussed in Section~\ref{sec:similarity}, the metric $\text{OS}$ quantifies average normalized similarity of the selected models across the $M$ trials, with larger values indicating a more robust variable selection procedure.

We consider two base procedures. The first is $\ell_0$-regression (described in Remark \ref{rem:l0-base}), which is approximately solved using the package \texttt{L0Learn} \citep{hazimeh2023l0learn}. The second is Lasso~\citep{tibshirani1996regression}, solved using the package \texttt{glmnet}~\citep{friedman2010regularization}. Let $s_0$ denote the number of features selected by a base procedure. For each $s_0 \in \{2, 3, \dots, 41\}$, we apply the experiment described above, obtain the performance metrics, and average the results over 500 independent repetitions. Performance metric values for all $s_0$ are reported in Appendix~\ref{sec:sup-synthetic}. While one method may outperform another at specific $s_0$ values, a fair comparison requires evaluating each method at its own optimal $s_0$ that yields the best performance. 

Table \ref{tab:performance} summarizes the results, with $s_0$ chosen to minimize the test MSE for each method. First, we observe that Lasso can be substantially stabilized by SS, CSS, and FSSS, leading to significant reductions in false positive error and notable gains in output stability. In contrast, the gains from applying FSSS to $\ell_0$-regression are less pronounced, demonstrating the inherent strength of $\ell_0$-penalized regression in handling highly correlated features. Notably, $\ell_0$-regression outperforms SS in terms of MSE, true positives, and output stability, as SS is affected by vote splitting, whereas $\ell_0$ tends to select features that span similar subspaces. While CSS mitigates vote splitting within clusters, its performance may still be limited by imperfect cluster assignments that fail to reliably capture the underlying dependency structure. Overall, these results suggest that greedy FSSS, using $\ell_0$-regression as its base procedure, offers the most balanced performance by effectively controlling MSE and false positive error while achieving the highest output stability; it also yields the largest amount of true positives among the stable methods.

\begin{table}[!ht]
\centering
{\small
\begin{tabular}{|l|cccc||l|cccc|}
\hline
\multicolumn{1}{|c|}{}                                       & MSE                                                   & FP                                                   & TP                                                     & OS                                                    &                                                              & MSE                                                   & FP                                                   & TP                                                     & OS                                                    \\ \hline
$\ell_0$                                                     & \begin{tabular}[c]{@{}c@{}}0.19\\ (0.01)\end{tabular} & \begin{tabular}[c]{@{}c@{}}0.77\\ (0.21)\end{tabular} & \begin{tabular}[c]{@{}c@{}}12.10\\ (0.23)\end{tabular} & \begin{tabular}[c]{@{}c@{}}0.88\\ (0.02)\end{tabular} & \begin{tabular}[c]{@{}l@{}}Lasso \\ $\ell_1$\end{tabular}    & \begin{tabular}[c]{@{}c@{}}0.19\\ (0.01)\end{tabular} & \begin{tabular}[c]{@{}c@{}}4.52\\ (0.31)\end{tabular} & \begin{tabular}[c]{@{}c@{}}12.80\\ (0.30)\end{tabular} & \begin{tabular}[c]{@{}c@{}}0.73\\ (0.01)\end{tabular} \\ \hline
\begin{tabular}[c]{@{}l@{}}FSSS\\ $[\ell_0 \text{ BP}]$ \end{tabular} & \begin{tabular}[c]{@{}c@{}}0.19\\ (0.01)\end{tabular} & \begin{tabular}[c]{@{}c@{}}0.08\\ (0.04)\end{tabular} & \begin{tabular}[c]{@{}c@{}}11.51\\ (0.30)\end{tabular} & \begin{tabular}[c]{@{}c@{}}0.95\\ (0.01)\end{tabular} & \begin{tabular}[c]{@{}l@{}}FSSS\\ $[\ell_1 \text{ BP}]$ \end{tabular} & \begin{tabular}[c]{@{}c@{}}0.19\\ (0.01)\end{tabular} & \begin{tabular}[c]{@{}c@{}}0.90\\ (0.60)\end{tabular} & \begin{tabular}[c]{@{}c@{}}12.30\\ (0.30)\end{tabular} & \begin{tabular}[c]{@{}c@{}}0.89\\ (0.06)\end{tabular} \\ \hline
\begin{tabular}[c]{@{}l@{}}CSS \\ $[\ell_0 \text{ BP}]$ \end{tabular} & \begin{tabular}[c]{@{}c@{}}0.26\\ (0.04)\end{tabular} & \begin{tabular}[c]{@{}c@{}}0.14\\ (0.10)\end{tabular} & \begin{tabular}[c]{@{}c@{}}11.01\\ (0.54)\end{tabular} & \begin{tabular}[c]{@{}c@{}}0.89\\ (0.04)\end{tabular} & \begin{tabular}[c]{@{}l@{}}CSS\\ $[\ell_1 \text{ BP}]$ \end{tabular}  & \begin{tabular}[c]{@{}c@{}}0.19\\ (0.02)\end{tabular} & \begin{tabular}[c]{@{}c@{}}0.91\\ (0.41)\end{tabular} & \begin{tabular}[c]{@{}c@{}}12.22\\ (0.26)\end{tabular} & \begin{tabular}[c]{@{}c@{}}0.88\\ (0.04)\end{tabular} \\ \hline
\begin{tabular}[c]{@{}l@{}}SS\\ $[\ell_0 \text{ BP}]$\end{tabular}   & \begin{tabular}[c]{@{}c@{}}0.34\\ (0.04)\end{tabular} & \begin{tabular}[c]{@{}c@{}}0.02\\ (0.02)\end{tabular} & \begin{tabular}[c]{@{}c@{}}9.90\\ (0.50)\end{tabular}  & \begin{tabular}[c]{@{}c@{}}0.85\\ (0.03)\end{tabular} & \begin{tabular}[c]{@{}l@{}}SS\\ $[\ell_1 \text{ BP}]$\end{tabular}   & \begin{tabular}[c]{@{}c@{}}0.19\\ (0.02)\end{tabular} & \begin{tabular}[c]{@{}c@{}}1.07\\ (0.36)\end{tabular} & \begin{tabular}[c]{@{}c@{}}12.24\\ (0.23)\end{tabular} & \begin{tabular}[c]{@{}c@{}}0.86\\ (0.03)\end{tabular} \\ \hline
\end{tabular}
}
\caption{\small  Performance of SS, sparsity CSS (CSS SPS), and greedy FSSS using the base procedures $\ell_0$-regression and Lasso on synthetic datasets. For comparison, we also include results from the base procedures without any stability method. Each value is averaged over 500 independent experiments. Each entry gives the result at the $s_0$ value that gives the lowest test MSE for each method. Square brackets after each method (e.g., $\ell_0$ BP) indicate the corresponding base procedure. Standard deviations are presented inside parentheses.} 
\label{tab:performance}
\end{table}

\begin{figure}[!ht]
    \centering
    \includegraphics[width=0.7\textwidth]{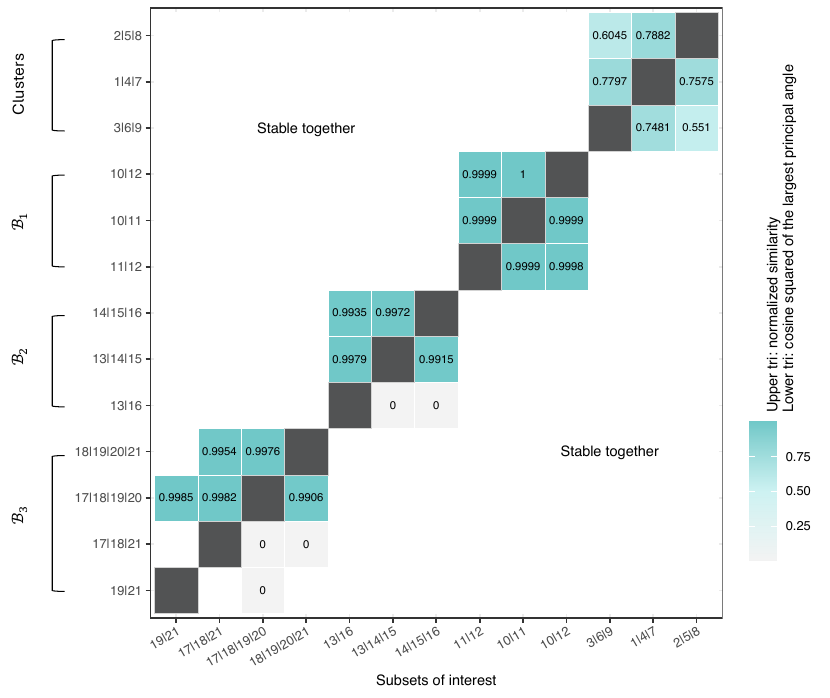}
    \caption{\small A tile plot for several subsets of interest in the synthetic data. 
    For any two subsets $S_1$ and $S_2$, the upper (and lower) triangular matrix in the tile plot displays the normalized similarity $\bar{\tau}(S_1, S_2)$ (and the cosine squared of the largest principal angle $\tilde{\tau}(S_1, S_2)$) for subsets with $\pi(S_1 \cup S_2) < 0.7$. Blank entries indicate subset pairs that is stable together. }
    \label{fig:syn_tile}
\end{figure}

In addition to better performance on the metrics described earlier, in highly correlated settings, FSSS can produce multiple stable models, whereas CSS and SS produce only a single model. 
We apply Algorithm~\ref{algo:all-path} on one training dataset with the parameters $K = 100$, $s_0 = 35$ and $\alpha = 0.7$. We denote the collection of selection sets by $\mathfrak{S} = \{S_i\}_{i=1}^{100}$. Notably, each selection set includes exactly one feature per signal cluster, $|\mathcal{K}_l|$ features from blocks $\mathcal{B}_l$ ($l=1,2,3$), but no weak signals $X_{22}, \dots, X_{26}$ or noise features. In other words, these sets demonstrate a form of highly correlated selection consistency within $\{\mathcal{C}_k\}_{k\in\mathcal{K}}$ and $\{\mathcal{B}_l\}_{l=1}^3$, and span similar subspaces through diverse feature combinations, making them equally valid for both interpretation and prediction. 

Figure~\ref{fig:syn_tile} presents pairwise similarities for several subsets of interest, where highly similar sub-models can be regarded as substitutes for one another. The similarity metrics $\bar{\tau}(S_1, S_2)$ and $\tilde{\tau}(S_1, S_2)$ are defined in Section~\ref{sec:similarity}. We evaluate the similarity between two sub-models $S_1$ and $S_2$ only when they are not jointly stable, that is, when $\pi(S_1 \cup S_2) < \alpha$; otherwise, the two sub-models can be combined into a single selection set and are therefore not substitutable. We observe that sub-models within clusters or the same block $\mathcal{B}_l$ exhibit high normalized similarity values, as shown in the upper triangular portion of the matrix; moreover, $\bar{\tau}(S_1, S_2)$ increases as perturbation direction length decreases. The lower triangular portion, which reports $\tilde{\tau}(S_1, S_2)$, is more conservative and reduces to zero whenever $|S_1| \neq |S_2|$.

\subsection{Real data application: gene expression in breast cancer}
\label{sec:real_data}
We consider a publicly available gene expression dataset from the Gene Expression Omnibus with accession code GSE2990~\citep{sotiriou2006gene}. The dataset contains microarray measurements from $n = 189$ breast cancer tumor samples across 22,283 probes. We filtered out probes with an IQR less than 1.5 and those lacking annotation, resulting in $p = 1111$ probes. These remaining expression levels serve as the explanatory variables after centralization. The aim of this analysis is to explore genes potentially associated with breast cancer prognosis. Following the results of \cite{wu2020estrogen}, we use the centralized ESR1 gene expression as the response variable $y$, given its established relevance to ER+ breast cancer outcomes.

We use $\ell_0$-regression as the base procedure for each stability-based method (greedy FSSS, CSS, and SS). For model selection and assessment, we split the dataset into three folds: subsampling (using $B = 200$), feature selection, and coefficient estimation are performed on the first fold (size $n_{\text{fitting}} = 143$), validation is performed on the second fold (size $n_\text{valid} = 16$), and test MSE is assessed using the third fold (size $n_\text{test} = 30$). Both the stability threshold $\alpha$ and the clustering cutoff height $h$ (for CSS) are chosen based on validation MSE from the sets $\{0.7, 0.75, 0.8, 0.85, 0.9, 0.95\}$ and $\{0.1, 0.3, 0.5\}$, respectively. Performance are evaluated on the test fold. To further evaluate the output stability metric, we bootstrapped $M=50$ samples from the first fold.

\begin{figure}[!ht]
    \centering
    \includegraphics[width=\textwidth]{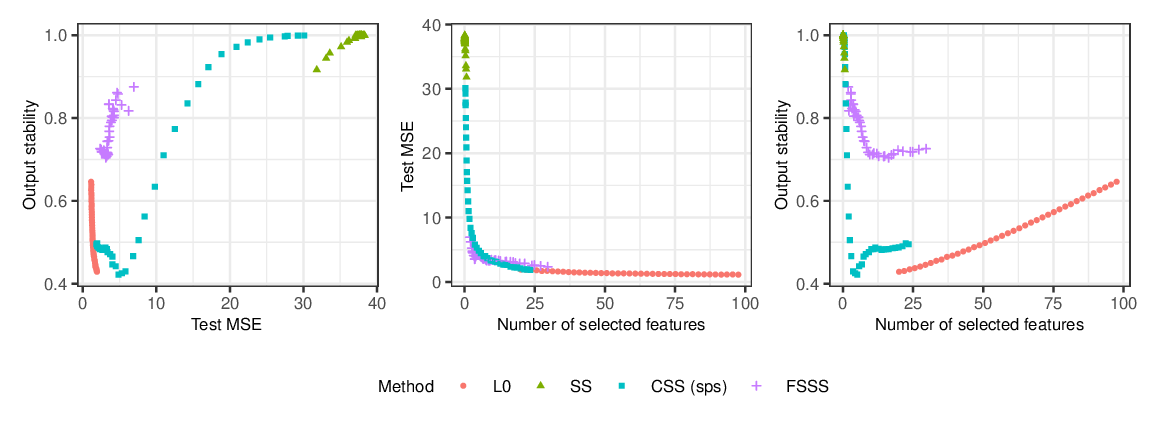}
    \caption{\footnotesize Performance of $\ell_0$-regression, SS, sparsity (sps) CSS, and greedy FSSS on the breast cancer gene expression data. The subsampling uses $\ell_0$-regression as the base procedure. 
    }
    \label{fig:breast_perform}
\end{figure}

Figure~\ref{fig:breast_perform} compares the test MSE, the number of selected features, and output stability for each stability procedure, as well as for the base procedure, across different choices of the tuning parameter $s_0$. We observe that FSSS is the only method that achieves both high output stability and a nontrivial number of selected features, while effectively stabilizing the base procedure and keeping test MSE low. In contrast, SS seldom selects any feature, whereas CSS cannot achieve low MSE and and high output stability simultaneously. 

Next, in order to obtain multiple selection sets, we apply Algorithm~\ref{algo:all-path} on the entire data with parameters $K = 50$, $s_0 = 50$ and $\alpha = 0.7$. 
All selection sets have stability $\pi\in[0.7, 0.747]$, and they appear to be scientifically meaningful. As an example, one selection set we obtain includes the following six features: \textit{ALDH3B2}, \textit{IGK$_\text{3}$}, \textit{CEP55}, \textit{S100A14}, \textit{ADAMTS5} and \textit{SETD8}. Among these, \textit{CEP55} and \textit{SETD8} are strongly associated with breast cancer, with \textit{CEP55} linked to higher tumor grade and poor prognosis, and \textit{SETD8} promoting proliferation via histone methylation \citep{sinha2019strategic, huang2017monomethyltransferase}. Additional selection sets are provided in  \ref{sec:sup-real-data}.

\begin{figure}[!ht]
    \centering
    \includegraphics[width=0.9\textwidth]{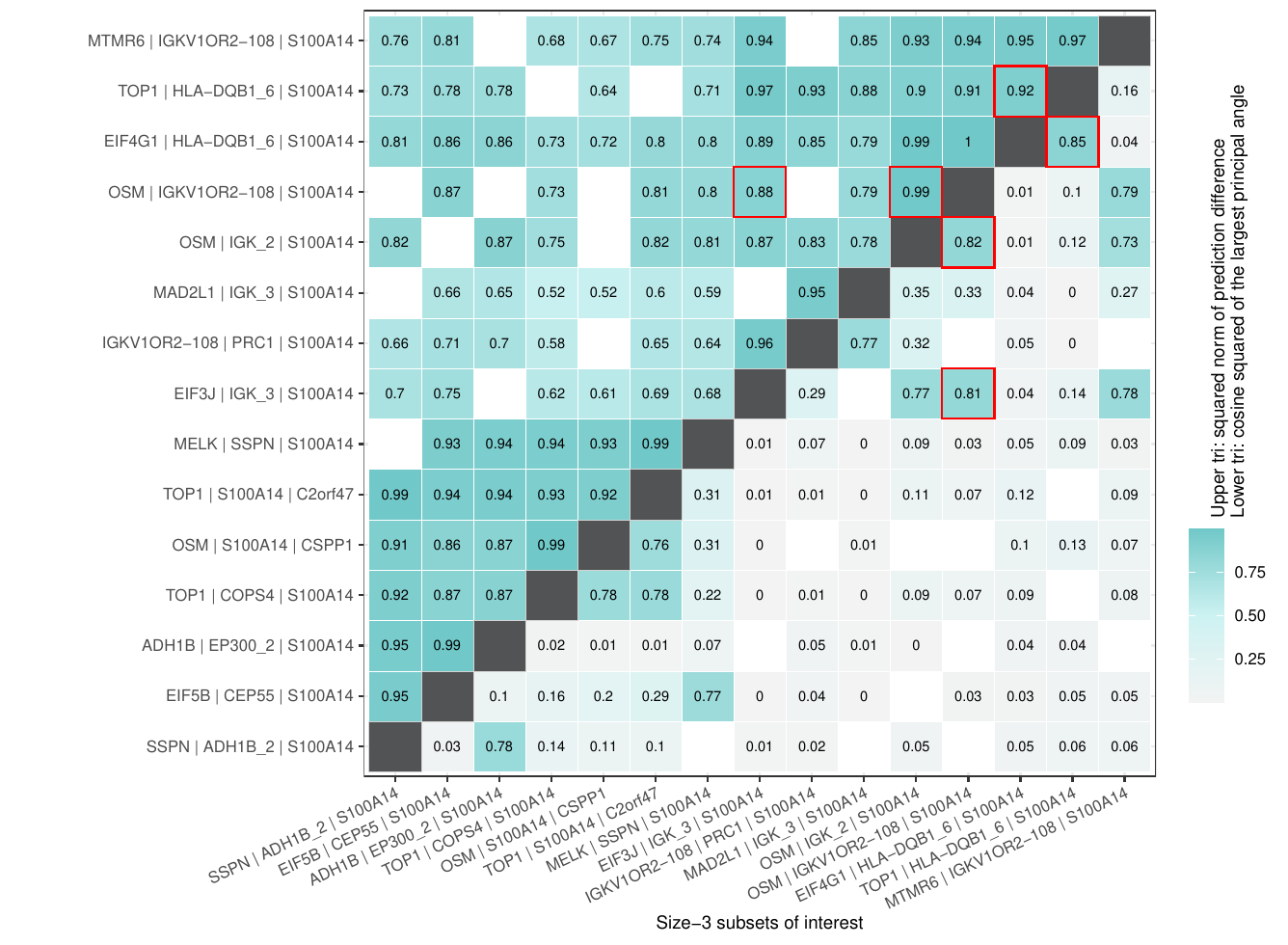}
    \caption{\small A tile plot of several size-3 subsets from the FSSS-selected models in the real dataset.
    For any two sets $S_1$ and $S_2$, the upper  triangular matrix in the tile plot displays $\tau^y(S_1, S_2)$ as introduced in Remark~\ref{rem:y-similarity}, and the lower triangular displays the similarity $\tilde{\tau}(S_1, S_2)$ in \eqref{eqn:conservative_similarity}. Blank entries indicate subset pairs that is stable together. The entries highlighted with red frames correspond to pairs with $\tilde{\tau}(S_1, S_2) \geq 0.8$.}
    \label{fig:real_tile}
\end{figure}

We identify substitutable predictive models among the FSSS-selected models by enumerating all size-3 subsets. Figure~\ref{fig:real_tile} shows pairwise similarities among several highly substitutable subsets using two metrics. The upper triangle reports $\tau^y(S_1,S_2)$ from Remark~\ref{rem:y-similarity}, which compares predictability for the observed response, while the lower triangle shows $\tilde{\tau}(S_1,S_2)$ from \eqref{eqn:conservative_similarity}, which compares worst-case predictability under arbitrary shifts in $y$.

We make several observations based on Figure~\ref{fig:real_tile}. First, although all subsets are $\alpha$-stable, they are not necessarily similar in terms of column spaces. Indeed, substantially different directions can still be strongly aligned with the average projection matrix $\P_{\rm avg}$. As a result, for some pairs, we see some small values for both similarity metrics. Second, $\tilde{\tau}(S_1,S_2)$ is consistently more conservative than $\tau^y(S_1,S_2)$, suggesting that models can agree in prediction for the observed response (high $\tau^y$) yet diverge when the response changes (low $\tilde{\tau}$). Finally, sub-models with large $\tilde{\tau}$ values (red frames in Figure~\ref{fig:real_tile}) show similar predictive for all possible future response data. For example, {\textit{EIF4G1}, \textit{HLA-DQB1$_6$}, \textit{S100A14}} and {\textit{TOP1}, \textit{HLA-DQB1$_6$}, \textit{S100A14}} have $\tilde{\tau}=0.85$, indicating comparable utility for breast cancer prognosis; they differ only by \textit{EIF4G1} and \textit{TOP1}, whose correlation is 0.926. Models differing by two feature pairs, such as {\textit{EIF3J}, \textit{IGK$3$}, \textit{S100A14}} and {\textit{OSM}, \textit{IGKV1OR2${108}$}, \textit{S100A14}} ($\tilde{\tau}=0.81$), also involve highly correlated features (with correlation coefficients 0.91 and 0.90, respectively). Overall, substitutable models in this breast cancer dataset consist of highly correlated gene pairs, suggesting that gene expression features exhibit a clustering structure rather than a more complex dependency pattern.

\section{Discussion}
We proposed a subspace framework to quantify similarity and stability of models in highly correlated settings. We also presented an algorithm that leverages our subspace framework to return multiple stable feature sets and described theoretical control on false positive error for this procedure. 

Our work opens several avenues for future research. First, while repeated runs of Algorithm~\ref{algo:all-path} can, in principle, yield all stable models, this process can be inefficient when the number of features is large. It would be interesting to develop more efficient approaches to identify the number of stable models, and to enumerate them. Second, extending our methodology to generalized additive models with splines---where the response variable depends linearly on unknown spline functions of the predictors---would be of practical interest. Finally, building on the previous point, a general nonlinear version of our framework may be possible by measuring similarity via non-parametric canonical correlation analysis.

\section{Funding}
This work was partially supported by funding from NSF under grant DMS-2413074.


\addtolength{\textheight}{-.2in}%

\singlespacing
\small
\bibliography{refs.bib}

\newpage
\singlespacing
\appendix
\small

\renewcommand{\thesection}{\Alph{section}}

\setcounter{section}{0}
\section{Our subspace framework and comparisons to other perspectives}

\subsection{Connection to model selection over partially ordered sets}
\label{sec:sup-poset}
There is a strong connection between our proposed subspace framework and the model selection framework proposed in \cite{Taeb2023ModelSO}. We first present the perspective in \cite{Taeb2023ModelSO}, and then describe how our proposed model selection framework is a special case. \\

\textbf{Framework in \cite{Taeb2023ModelSO}}: Variable selection models are organized as a partially ordered set (poset). The poset is structured hierarchically as shown in Figure \ref{fig:poset}, starting from its least element (the empty set $\varnothing$), and each model on it is formed by including a new feature to a model at the preceding level. This poset inherits a rank function (formally known as a graded poset) that captures the complexity of a model. In the variable selection poset, the rank of a model $S$ is its cardinality $|S|$, which is also the length of each path from $\varnothing$ to $S$. 

\begin{figure}[!ht]
    \centering
    \includegraphics[width=0.4\textwidth]{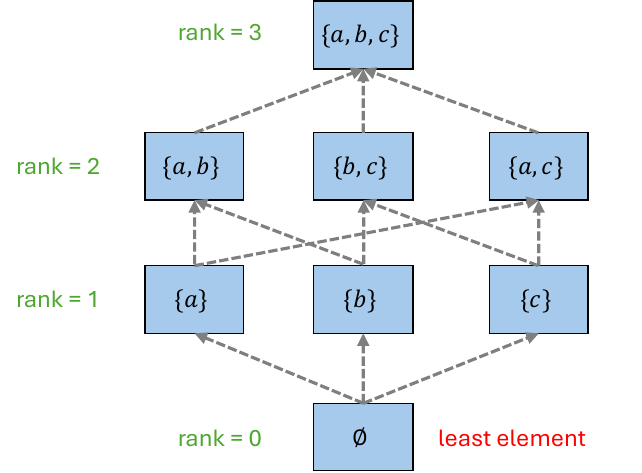}
    \caption{\small A variable selection poset over features $\{a,b,c\}$.} 
    \label{fig:poset}
\end{figure}

To assess the similarity between two models $S$ and $\widetilde{S}$, a generic ``similarity valuation'' metric $\rho(S, \widetilde{S})$ was proposed in \cite{Taeb2023ModelSO}. This metric can take different forms depending on the specific problem domain. Based on this similarity measure, the amount of true positives, false positive error, and false negative error are defined as:
\begin{align}
\begin{split}
    \mathrm{TP}(\widehat{S}, S^\star) &:= \rho(\widehat{S}, S^\star), \\
    \mathrm{FPE}(\widehat{S}, S^\star) &:= \rank(\widehat{S}) - \rho(\widehat{S}, S^\star),\\
        \mathrm{FNE}(\widehat{S}, S^\star) &:= \rank({S}^\star) - \rho(\widehat{S}, S^\star).
\end{split}
\label{eq:posets_fpe}
\end{align}
To obtain a model with small false positive error, the poset framework views any selected model $\widehat{S}$ as one that is obtained by following a path that begins at the empty set, and sequentially includes one variable at each level until $\widehat{S}$ is reached. Suppose that the path forming $\widehat{S}$ is given by $(S_0 = \varnothing, S_1, \dots, S_{t-1}, S_t = \widehat{S})$, then the $\mathrm{FPE}(\widehat{S}, S^\star)$ can be written as the telescoping sum
$$
\mathrm{FPE}(\widehat{S}, S^\star) = \sum_{i=0}^{t-1} 1 - \left[ \rho(S_{i+1}, S^\star) - \rho(S_{i}, S^\star) \right],
$$
where the term $1 - [\rho(S_i, S^\star) - \rho(S_{i-1}, S^\star)]$ represents the ``additional false discovery'' incurred by moving from the model $S_{i-1}$ to the model $S_i$ on the poset. Based on the telescoping sum decomposition, a greedy algorithm based on subsampling was proposed in \cite{Taeb2023ModelSO} to control this additional FD term, by guaranteeing the following at each step from $S_i$ to $S_{i+1}$:
\begin{align}\label{eq:poset-cond}
    \frac{1}{B} \sum_{\ell=1}^B {\rho(S_{i+1}, \widehat{S}^{(\ell)}) - \rho(S_i, \widehat{S}^{(\ell)}) } \geq \alpha ,
\end{align}
where $\{\widehat{S}^{(\ell)}\}_{\ell = 1}^B$ are selection sets obtained from subsampling. The left-hand-side of \eqref{eq:poset-cond} represent the additional similarity that $S_{i+1}$ exhibits with $\{\widehat{S}^{(\ell)}\}_{\ell = 1}^B$ relative to $S_i$, and is thus a measure of stability associated with the move from model $S_{i}$ to $S_{i+1}$. The quantity $\alpha \in (0,1)$ represents a stability threshold. The algorithm described above is intuitively approximating $1 - \left[ \rho(S_{i+1}, S^\star) - \rho(S_{i}, S^\star) \right]$ with $\frac{1}{B} \sum_{\ell=1}^B {\rho(S_{i+1}, \widehat{S}^{(\ell)}) - \rho(S_i, \widehat{S}^{(\ell)}) }$. \\

\textbf{Our proposed framework:} We take $\rho({S}, \widetilde{S}) = \tr(\P_{{S}} \P_{\widetilde{S}})$, where we use the notation $\P_{S} := \P_{\mathrm{col}(X_S)}$. As we describe in Section 2.2 of the main paper, this choice of $\rho$ is motivated by: $(i)$ the subspace $\mathrm{col}(X_S)$ is a natural and useful representation of $S$ in highly correlated settings; and $(ii)$ motivated by \cite{taeb2020false}, the trace inner product of projection matrices onto subspaces is a natural measure of similarity between two subspaces. Note that $\tr(\P_{{S}} \P_{\widetilde{S}})$ formally qualifies as similarity valuation (as defined in \cite{Taeb2023ModelSO}) as long as the matrices $X_S$ and $X_{\widetilde{S}}$ both consist of linearly independent columns. 

With our choice of similarity valuation $\rho$, Definition~\ref{def:fd-power} is a special case of \eqref{eq:posets_fpe}. Furthermore, our proposed algorithm FSSS is similar to the algorithm in \cite{Taeb2023ModelSO} in that it searches for a selection set by growing a path in this poset, starting from the empty set $S_0 = \varnothing$, and adding one element at a time to reach $S_i$. At each step, we require $\tr((\P_{S_i} - \P_{S_{i-1}}) \P_{\rm avg}) \geq \alpha$, which is a special variate of (\ref{eq:poset-cond}). The final model is denoted by $S_t$. As a result, the FPE for $S_t$ can be controlled as
\begin{align*}
\begin{split}
    \mathrm{FPE}(\widehat{S}, S^\star) 
    &= \sum_{i=1}^{t-1} 1 - \left[ \tr(\P_{S_{i+1}} \P_{S^\star}) - \tr( \P_{S_i} P_{S^\star} ) \right] \\
    &\approx \sum_{i=1}^{t-1} 1 - \left[ \tr(\P_{S_{i+1}} \P_{\rm avg}) - \tr( \P_{S_i} P_{\rm avg} ) \right] \leq |S_t| (1 - \alpha),
\end{split}
\end{align*}
where the approximation uses $\P_{\rm avg}$ as an estimate for $\P_{S^\star}$. This suggests that $\alpha$-stable models have small false positive error, especially when $\alpha \approx 1$. We provide rigorous false positive error guarantees in Section \ref{sec:theory}.

\vspace{0.2in}
\subsection{Extension to generalized linear models}
\label{sec:sup-glm}
Let $X\in\mathbb{R}^{n\times p}$ denote the design matrix, $\beta\in\mathbb{R}^p$ denote the coefficient, and $y\in\mathbb{R}^n$ denote the response variable. For generalized linear models (GLMs), the conditional expectation of the response is given by $\mathbb{E}[y \mid X] = \mu = g^{-1}(X\beta)$, where $g$ is a link function. Due to the linear component $X\beta$ in GLMs, we can define $\mathrm{TP}, \mathrm{FPE}, \mathrm{FNE}$ exactly like Definition~\ref{def:fd-power} and stability $\pi$ exactly like Definition~\ref{defn:stability}. Additionally, Algorithm 1 (FSSS) can be applied without any modifications, provided that the base procedure is compatible with the response type--for example, using $\ell_0$-penalized logistic regression for binary response $y \in \{0,1\}^n$.

However, when assessing substitutability via subspaces, the choice of link function and the type of response variable becomes relevant. In particular, when the model is not linear, $\P_{\mathrm{col}(X_S)} (y)$ does not represent the component of the response captured by the features in $S$. One can instead estimate $g(\mu)$---the actual linear combination of $X\beta$ in the GLM framework---and replace all terms involving $\P_{\mathrm{col}(X_S)} (y)$ in the substitutability metric with $\P_{\mathrm{col}(X_S)} (g(\widehat{\mu}))$. A possible strategy for estimating $g(\mu_i) = g(\mathbb{E}[y_i \mid X^{(i)}])$ is to use a non-parametric $k$-nearest neighbors (KNN) approach. For the $i$-th observation $X^{(i)}$, we identify its $k$ nearest neighbors, denoted by $N(i)$, and estimate $\widehat{\mu}_i$ as $\frac{1}{ |N(i)| } \sum_{j\in N(i)} y_j  $. When the link function is identity, we set $k = 1$ so that the estimate of $\mu_i$ is simply each $y_i$ itself. In contrast, when the link function is the logit function, $k$ must be necessarily greater than 1 in order for $\widehat{\mu}\in(0,1)$, as values of 0 or 1 are not valid under the logistic transformation.

\vspace{0.2in}
\subsection{Why the proposal in \texorpdfstring{\citet{g2013false}}{G'Sell et al. (2013)} is not suitable for assessing false positive error}
\label{sec:sup-g2013false}
First, we define how \citet{g2013false} measure the false positive error. Letting $\widehat{S}$ be a selected set of features and supposing the linear mechanism $y = X\beta^\star + \epsilon$ with $X \in \mathbb{R}^{n \times p}$ and $y \in \mathbb{R}^n$, \citet{g2013false} project $X\beta^\star$ onto the span of the columns of $X$ in $\widehat{S}$ (denoted by $X_{\widehat{S}}$) and obtain a projected mean $X_{\widehat{S}}\widehat{\beta}_{\widehat{S}}$. Then, the false positive error (dubbed false variable selection) associated with the selection set $\widehat{S}$ is measured to be the number of zeros in $\widehat{\beta}_{\widehat{S}}$. 

While in some highly correlated settings, this new metric provides more reasonable values than the standard notion, it can yield smaller false positive errors than one might expect. For example, suppose $S^\star = \{1\}$, and $\widehat{S} = \{2\}$, where $X_2$ is a null feature that is nearly, but not completely, orthogonal to $X_1$. Then, a simple calculation shows that $\widehat{\beta}_{\widehat{S}}$ is a scalar that will be nonzero. Hence, \citet{g2013false} would claim that there is no false positive error with this selection set. However, including a null feature instead of a nearly orthogonal signal feature should incur a false positive close to one, which is what our subspace definition outputs.

The shortcoming of \citet{g2013false}'s measure in the previous example arises from its sharp binary decision: namely, that if $\widehat{\beta}_{\widehat{S}} = 0$, then we incur a false positive error of one; otherwise, if $\widehat{\beta}_{\widehat{S}} \neq 0$, we incur no false positive errors. As a result, as long as there is any nonzero correlation between $X_1$ and $X_2$, \citet{g2013false} measures no error. On the other hand, our subspace definition of false positive error varies smoothly depending on the amount of correlation among the features: $\mathrm{FPE}(\widehat{S},S^\star) = 1$ when $X_1$ and $X_2$ are orthogonal, and $\mathrm{FPE}(\widehat{S},S^\star)$ decreases gradually as the correlation level increases, with $\mathrm{FPE}(\widehat{S},S^\star) = 0$ when $X_1$ and $X_2$ are perfectly correlated.

\vspace{0.2in}
\subsection{A potential formulation for similarity incorporating \texorpdfstring{$y$}{y}}
\label{sec:sup-y-similarity}
We propose a response-aware definition of similarity that involves $y$.
\begin{definition}
    Let $S_1, S_2 \subseteq \{1, 2, \dots, p\}$ be two sets of features, and let $\mathcal{B}$ be a perturbation set for the normalized response variable $y$. Then the response-aware similarity between $S_1$ and $S_2$ is defined as:
    $$
    \tau_{\mathcal{B}}(S_1, S_2) := 1 - \sup_{y'\in \mathcal{B}, \ \|y'\|_2 = 1 } \left\|\P_{X_{S_1}} y' - \P_{X_{S_2}} y' \right\|_2^2.
    $$
\end{definition}

We note that $\tau_\mathcal{B}(S_1, S_2) \in [0,1]$.
Here are some choices of the perturbation set $\mathcal{B}$ and the resulting $\tau_{\mathcal{B}}$ metric:
\begin{enumerate}[label = (\arabic*)]
    \item Take $\mathcal{B}_1 = \{y': y' = y / \|y\| \}$. The resulting metric is given by:
    $$
    \tau_{\mathcal{B}_1}(S_1, S_2) = 1 - \frac{ \left\| P_{X_{S_1}} y - P_{X_{S_2}} y \right\|^2_2 }{\|y\|_2^2}.
    $$
    The perturbation set $\mathcal{B}_1$ contains the singleton normalized response, yielding a metric that quantifies the distance between the best linear maps from $y$ to $X_{S_1}$ and $X_{S_2}$, respectively. One can show that
    $$
    \tau_{\mathcal{B}_1}(S_1, S_2) = 1 - \sum_{j=1}^{\max\{|S_1|, |S_2|\}} \mathrm{corr^2}(y, v_j) \cdot \sin^2\theta_j  . 
    $$
    where $v_j$'s are the eigenvectors of $P_{X_{S_1}} - P_{X_{S_2}}$, and $\theta_j$'s are the corresponding principal angles between $\mathrm{col}(X_{S_1})$ and $\mathrm{col}(X_{S_2})$. Specifically, we order $(v_j, \theta_j)$ such that $\theta_1 \leq \theta_2 \leq \dots \leq \theta_{\max\{|S_1|, |S_2|\} }$. In other words, $1 - \tau_{\mathcal{B}_1}(S_1, S_2)$ can be expressed as a linear combination of squared sines of the principal angles, where the weights correspond to the correlations between $y$ and the principal vectors $v_j$. Consequently, the response-aware similarity $\tau_{\mathcal{B}_1}(S_1, S_2)$ can remain large even when some principal angles are large, as long as the corresponding directions are not predictive of $y$. Note that $\tau_{\mathcal{B}_1}$ is exactly $\tau^y$ defined in Remark \ref{rem:y-similarity}.
    
    \item Take $\mathcal{B}_2 = \{y' : \|y'\|_2 = 1\} $. The resulting metric is given by:
    $$
    \tau_{\mathcal{B}_2}(S_1, S_2) = 1 - \sin^2 \theta_{\max\{|S_1|, |S_2|\} } = \cos^2\theta_{\max\{|S_1|, |S_2|\}}.
    $$
    The perturbation set $\mathcal{B}_2$ contains all directions, yielding the most conservative metric that ignores $y$. The cosine squared of the largest principal angle becomes zero whenever $|S_1| \neq |S_2|$. Note that $\tau_{\mathcal{B}_2}$ reduces to $\tilde{\tau}$ defined in Section~\ref{sec:similarity}.
    
    \item Take $\mathcal{B}_3 = \{y': \angle\langle y / \|y\|_2, y'/\|y'\|_2 \rangle \leq \eta \}$. Suppose $v_j$'s are the eigenvectors of $P_{X_{S_1}} - P_{X_{S_2}}$. Let $L = \max \{|S_1|, |S_2|\}$. The resulting metric is given by:
    \begin{enumerate}[label = (\roman*)]
            \item If $\mathrm{corr}(v_L, y/\|y\|_2) \geq \cos\eta$,
            then $\tau_{\mathcal{B}_3}(S_1, S_2) = 1 - \sin^2 \theta_{L}$.

            \item If $\mathrm{corr}(v_L, y/\|y\|_2) < \cos\eta$, then
            \begin{align*}
            \begin{split}
               & \tau_{\mathcal{B}_3}(S_1, S_2)  \\
            ={}& 1- \cos^2(\angle\langle v_L, y/\|y\|_2 \rangle -\eta ) \cdot \sin^2\theta_{L} - \sin^2(\angle\langle v_1, y/\|y\|_2 \rangle -\eta ) \cdot \sum_{j=1}^{L-1}\frac{  \mathrm{corr}^2(y, v_j) }{\sum_{j'=1}^{L-1} \mathrm{corr^2}(y, v_{j'}) } \cdot \sin^2 \theta_j.
            \end{split}
            \end{align*}         
    \end{enumerate}
    The perturbation set $\mathcal{B}_3$ consists of directions whose deviation from the normalized response $y$ is at most $\eta$, forming a neighborhood around the normalized $y$. The metric $1-\tau_{\mathcal{B}_3}$ can be expressed as a linear combination of the squared sines of the principal angles, with weights depending on both the correlations between $y$ and the principal vectors $v_j$ and the perturbation angle $\eta$. When $\eta = 0$, corresponding to no perturbation, $\tau_{\mathcal{B}_3}$ reduces to $\tau_{\mathcal{B}_1}$. When $\eta = \pi/2$, so that the perturbation spans the entire space, $\tau_{\mathcal{B}_3}$ reduces to $\tau_{\mathcal{B}_2}$.
    
\end{enumerate}

\vspace{0.2in}
\section{Additional theoretical properties of FSSS}

\emph{Notation}: For simplicity, we write the norm $\|x\|_{\ell_q}$ as $\|x\|_q$. We adopt the notation in Section \ref{sec:algorithm} of the paper, namely that for a vector $w \in \mathbb{R}^n$, $\P_{w} = \P_{\mathrm{span}(w)}$ and for any subset $S \subseteq \{1,\dots,p\}$, $\P_{X_S} = \P_{\mathrm{col}(X_S)}$.

This section is organized as follows: Section \ref{sec:sup-cluster} presents theoretical properties of $\ell_0$-penalized regression as a base procedure under the clustering setup; Section \ref{sec:sup-bpquality}, also under the clustering setup, provides further discussion on the quality terms $\gamma_\mathcal{W}$ and $\gamma_\mathcal{U}$ introduced in Section \ref{sec:theory}; Section \ref{sec:sup-prediction-similarity} discusses the similarity in predictability among the ``equally good models'' in $\mathcal{S}$; and finally, Section \ref{sec:sup-complex} extends our theoretical analysis to setups with complex dependency structures.

\vspace{0.2in}
\subsection{\texorpdfstring{$\ell_0$}{l\_0}-penalized base procedure}
\label{sec:sup-cluster}
While any feature selection algorithm can serve as the base procedure for FSSS, an $\ell_0$-penalized regression is particularly recommended for selecting highly correlated variables. Given a design matrix $X \in \mathbb{R}^{n \times p}$ and a response variable $y \in \mathbb{R}^n$, the solution is given by
\begin{align}\label{eq:lasso_s0}
    \widehat{\beta} \in \arg\min_{\beta\in\mathbb{R}} \|y - X\beta\|_2^2, \quad
\st \|\beta\|_0 \leq s_0.
\end{align}

The following theorem is built on the clustering setup described in Section \ref{sec:clustering_setup}.

\vspace{0.1in}
\begin{theorem} [Base procedure] \label{prop:clust-consist-s0}
    Suppose that $\max_{j\in \mathcal{K}}\mathrm{Corr}(X_j, \epsilon) \leq \mathcal{O}(\sqrt{\log p / n})$ and $\max_{j\in \mathcal{V}}\mathrm{Corr}(\delta_j, \epsilon) \leq \mathcal{O}(\sqrt{\log p / n})$. Furthermore, assume that $|\mathcal{C}_k| = D$ for all $k\in\mathcal{K}$, and
    \begin{align}\label{eq:beta-min-cluster}
    \begin{split}
        \min_{j\in S^\star} \left|\beta^\star_j\right|^2 > 
        \frac{D\eta_1^2}{1 + \eta_1^2}  \|\beta^\star\|_2^2 + \mathcal{O}\left( s_0 D^2 \sqrt{\log p /n} \right) (\|\beta^\star\|_1 + \|\epsilon\|_2)^2.
    \end{split}
    \end{align}
    Then when $s_0 \geq s^\star$,
    \begin{enumerate}[label = \normalfont(\arabic*)]
        \item The solution $\widehat{\beta}$ to (\ref{eq:lasso_s0}) does not miss any signal groups: there must exist $\widetilde{S} \in \mathcal{S}$ such that $\widetilde{S} \subseteq \supp(\widehat{\beta})$.

        \item When $s_0 = s^\star$, the solution $\widehat{\beta}$ achieves cluster feature selection consistency: $\supp(\widehat{\beta}) \in\mathcal{S}$.

        \item When $s_0 = s^\star$ and furthermore
        \begin{align}\label{eq:beta-min-cluster2}
        \frac{\eta_1^2}{1 + \eta_1^2}  \|\beta^\star\|_2^2 
        > 
        \mathcal{O}\left( s^\star \sqrt{ \log p /n} \right) (\|\beta^\star\|_1 + \|\epsilon\|_2)^2,
        \end{align}
        the solution $\widehat{\beta}$ achieves exact feature selection consistency: $\supp(\widehat{\beta}) = S^\star$.
    \end{enumerate}
\end{theorem}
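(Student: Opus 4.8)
\noindent\emph{Proof sketch.}
The strategy is to recast problem~\eqref{eq:lasso_s0} as the minimization of the projection residual $R(S) := \|\P_{\mathrm{col}(X_S)^\perp}\,y\|_2^2$ over supports with $|S|\le s_0$: since $\widehat\beta$ is the least-squares coefficient on $\widehat S:=\supp(\widehat\beta)$, optimality gives $R(\widehat S)\le R(S)$ for every feasible $S$. The first step is to pin down a budget for $R(\widehat S)$. Because $S^\star\subseteq\mathcal K$ and $s^\star\le s_0$, the choice $S=S^\star$ with $\beta=\beta^\star$ is feasible, so $R(\widehat S)\le R(S^\star)\le\|y-X\beta^\star\|_2^2=\|\epsilon\|_2^2$. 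More finely, for any $S\in\mathcal S$ obtained from $S^\star$ by replacing a representative $X_k$ with a proxy $X_{j_k}=X_k+\delta_{j_k}$ on a subset $J\subseteq S^\star$, writing $X_k=X_{j_k}-\delta_{j_k}$ and using $\|\delta_j\|_2=\eta_1$ together with the near-orthogonality assumptions gives $R(S)\le\|\epsilon\|_2^2+\tfrac{\eta_1^2}{1+\eta_1^2}\sum_{k\in J}(\beta^\star_k)^2+\mathcal O(\sqrt{\log p/n})\,(\|\beta^\star\|_1+\|\epsilon\|_2)^2$; the same computation, run as a lower bound, shows that \emph{using} a proxy in place of its representative \emph{increases} $R$ by essentially $\tfrac{\eta_1^2}{1+\eta_1^2}$ times that coefficient squared, up to the same error order. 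This is the only place the perturbation level $\eta_1$ enters.

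For part~(1) I would argue by contradiction: suppose $\widehat S\cap\mathcal C_{k_0}=\varnothing$ for some $k_0\in S^\star$. Then every column of $X_{\widehat S}$ is $X_k$ or $X_k+\delta_j$ with $k\neq k_0$, hence has inner product $\mathcal O(\sqrt{\log p/n})$ with the unit vector $X_{k_0}$. Writing $R(\widehat S)=\|y\|_2^2-\|\P_{\mathrm{col}(X_{\widehat S})}\,y\|_2^2$ and expanding $\P_{\mathrm{col}(X_{\widehat S})}(X\beta^\star+\epsilon)$, I would show that the component $\beta^\star_{k_0}X_{k_0}$ of $X\beta^\star$ is almost entirely missed by $\mathrm{col}(X_{\widehat S})$: one bounds $\|\P_{\mathrm{col}(X_{\widehat S})}X_{k_0}\|_2$, and, using that every \emph{other} representative appearing in $X\beta^\star$ is either also missed (hence nearly orthogonal to $\mathrm{col}(X_{\widehat S})$) or well captured (hence its residual has norm $\le\eta_1$), all off-diagonal cross terms in $\|\P_{\mathrm{col}(X_{\widehat S})^\perp}X\beta^\star\|_2^2$ are $\mathcal O(\sqrt{\log p/n})\|\beta^\star\|_1^2$. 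Collecting terms and the noise cross terms yields $R(\widehat S)\ge\|\epsilon\|_2^2+(\beta^\star_{k_0})^2\bigl(1-o(1)\bigr)-\mathcal O(s_0D^2\sqrt{\log p/n})(\|\beta^\star\|_1+\|\epsilon\|_2)^2$, and the $\beta$-min hypothesis~\eqref{eq:beta-min-cluster} is calibrated exactly so that $(\beta^\star_{k_0})^2$ dominates the error, forcing $R(\widehat S)>\|\epsilon\|_2^2$ and contradicting the budget. Since ``there exists $\widetilde S\in\mathcal S$ with $\widetilde S\subseteq\widehat S$'' is equivalent to ``$\widehat S$ meets every signal cluster,'' this is part~(1).

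Parts~(2) and~(3) build on this. When $s_0=s^\star$, part~(1) says $\widehat S$ meets all $s^\star$ signal clusters using at most $s^\star$ features, which forces exactly one feature per signal cluster and none from a noise cluster, i.e.\ $\widehat S\in\mathcal S$; this is~(2). For~(3) I would then compare residuals \emph{within} $\mathcal S$: by the lower-bound version of the fine estimate in the first paragraph, any $S\in\mathcal S$ that uses a proxy on a nonempty set $J$ satisfies $R(S)-R(S^\star)\ge\tfrac{\eta_1^2}{1+\eta_1^2}\sum_{k\in J}(\beta^\star_k)^2-\mathcal O(s^\star\sqrt{\log p/n})(\|\beta^\star\|_1+\|\epsilon\|_2)^2$, where the subtracted term bounds the difference in noise fitting $\|\P_{\mathrm{col}(X_{S^\star})}\epsilon\|_2^2-\|\P_{\mathrm{col}(X_S)}\epsilon\|_2^2$ and the cross term with $\epsilon$, both controlled by the near-orthogonality of $\epsilon$ with the $X_k$'s and the $\delta_j$'s. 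Hypothesis~\eqref{eq:beta-min-cluster2}, combined with~\eqref{eq:beta-min-cluster}, makes this difference strictly positive for every $S\neq S^\star$, so $S^\star$ uniquely minimizes $R$ over $\mathcal S$ and hence $\widehat S=S^\star$.

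The main obstacle is the uniform conditioning control implicit in the second and third paragraphs: one must bound $\|\P_{\mathrm{col}(X_{\widehat S})}X_{k_0}\|_2$, $\|\P_{\mathrm{col}(X_{\widehat S})}\epsilon\|_2$ and related projections simultaneously over \emph{all} feasible supports $\widehat S$, including those that contain several features from one cluster. For such supports $X_{\widehat S}$ is ill-conditioned, and the naive bound $\|\P_{\mathrm{col}(X_{\widehat S})}v\|_2^2\le\|X_{\widehat S}^\top v\|_2^2/\sigma_{|\widehat S|}(X_{\widehat S}^\top X_{\widehat S})$ loses a factor of $\eta_1^{-2}$. The fix I would use is to decompose $\mathrm{col}(X_{\widehat S})$ into a ``representative part'' --- one near-unit direction per cluster present in $\widehat S$, whose Gram matrix is $I+\mathcal O(\sqrt{\log p/n})$ --- and a ``perturbation-difference part'' spanned by the vectors $\delta_j-\delta_{j'}$ within each cluster; after normalizing the latter the $\eta_1$ scaling cancels, so every residual cross-correlation re-enters at the benign rate $\mathcal O(\sqrt{\log p/n})$, with only a $D$-dependent constant coming from the (at most $(D-1)$-dimensional, well-conditioned) geometry of the differences. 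Carrying out this decomposition carefully and tracking the resulting powers of $D$, $s_0$ and $\sqrt{\log p/n}$ is what produces the precise form of the hypotheses; the remaining cross-term bookkeeping is routine given the stated correlation bounds.
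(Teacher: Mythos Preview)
Your proposal is correct and follows essentially the same route as the paper: recast \eqref{eq:lasso_s0} as minimizing $\mathrm{pe}(S)=\|\P_{X_S^\perp}y\|_2^2$, derive an upper bound on $\mathrm{pe}$ for ``good'' supports and a lower bound for supports that miss a signal cluster, and then compare. Your identification of the ill-conditioning obstacle and the fix (replace the raw columns of $X_{\widehat S}$ by one near-unit direction per cluster plus within-cluster perturbation directions) is exactly what the paper does --- its change of basis $\{Z_j\}$ takes $[X_k,\delta_{j_1},\dots]$ when the representative is present and Gram--Schmidt $[X_{j_1},\P_{X_{j_1}^\perp}X_{j_2},\dots]$ otherwise, which to leading order is precisely your ``representative part'' plus spans of $\delta_j-\delta_{j'}$.

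The only noteworthy difference is the choice of budget: you use $R(\widehat S)\le R(S^\star)\le\|\epsilon\|_2^2$, whereas the paper bounds $\mathrm{pe}(S)\le U_1$ uniformly over \emph{all} supports that hit every signal cluster (this is where the $\tfrac{D\eta_1^2}{1+\eta_1^2}\|\beta^\star\|_2^2$ term in \eqref{eq:beta-min-cluster} originates). Your budget is tighter and would in fact prove Part~(1) under a weaker $\beta$-min condition than stated; since the stated hypothesis is stronger, your argument certainly suffices for the theorem as written. One small omission: in Part~(2) you should make explicit that $|\widehat S|=s_0=s^\star$ because $\mathrm{pe}(S)$ is strictly decreasing in $S$, which is what forces exactly one feature per signal cluster rather than merely at most one.
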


We prove Theorem~\ref{prop:clust-consist-s0} in Section \ref{sec:sup-proofthm3}. Theorem \ref{prop:clust-consist-s0} demonstrates the effectiveness of $\ell_0$-regression in the clustering setup. To be specific, under the beta-min condition (\ref{eq:beta-min-cluster}), when $s_0 \geq s^\star$, $\ell_0$-regularization manages to select at least one feature per signal group. Under the more restrictive condition $s_0 = s^\star$, exactly one feature is selected per signal group, and no noise clusters can be selected. However, exact feature selection consistency may require a much larger sample size, as indicated by the beta-min condition (\ref{eq:beta-min-cluster2}).

Note that Theorem \ref{prop:clust-consist-s0} relies on the conditions $\max_{j\in \mathcal{K}}\mathrm{Corr}(X_j, \epsilon) \leq \mathcal{O}(\sqrt{\log p / n})$ and $\max_{j\in \mathcal{V}}\mathrm{Corr}(\delta_j, \epsilon) \leq \mathcal{O}(\sqrt{\log p / n})$. Under the linear model $y = X\beta^\star + \epsilon$ with $\epsilon \sim \mathcal{N}(0, \sigma^2 / nI_n)$, these conditions hold with high probability, as suggested by the concentration inequality $\mathbb{P}\left[ \max_{j\in \mathcal{K}}\mathrm{Corr}(X_j, \epsilon) \leq t, \max_{j\in \mathcal{V}}\mathrm{Corr}(\delta_j, \epsilon) \leq t  \right] \geq 1 - ap \exp\{-bnt^2\}$ for some $a,b>0$. As a result, and as noted in Remark \ref{rem:l0-base}, the $\ell_0$-regularized regression is very likely to satisfy the condition (i) in Theorem \ref{thm:1}, and hence serves as a good candidate of base procedure for FSSS to achieve promising results.

\vspace{0.2in}
\subsection{Quality of base procedure}
\label{sec:sup-bpquality}

Definition \ref{def:quality_base_procedure} characterizes the quality of a base procedure via the terms $\gamma_\mathcal{W}$ and $\gamma_\mathcal{U}$. In this section, we analyze the behavior of these quantities. \\

\textbf{Regarding the term $\gamma_\mathcal{W}$}: The quality term appeared in Theorem \ref{thm:1}. Recall that the collection of all ``noise directions'' is defined as $\mathcal{W} := \{ \delta_j : j\in\mathcal{V} \} \cup \bigcup_{j\in \mathcal{K} \cap N^\star} \{X_j\}$. We also define the index set $W := \mathcal{V} \cup (\mathcal{K} \cap N^\star)$, and associate each $j \in W$ with a direction $w_j$, where $w_j = \delta_j$ for $j \in \mathcal{V}$ and $w_j = X_j$ for $j \in \mathcal{K} \cap N^\star$. That is, we write $\mathcal{W} = \{ w_j \}_{j \in W}$. We further denote $w_j = X_j$ for any $j \in [p]\setminus W$.
With these notations in place, we introduce two standard assumptions that are commonly used in the study of stability-based methods, to enable a more refined analysis of $\gamma_\mathcal{W}$.

\begin{assumption}[Better than random guessing]\label{assum:better}
    For all subsamples $\ell\in \{1, \dots,B\}$,
    $$
    \frac{ \sum_{j\in W} \mathbb{E}\left[\tr(\P_{w_j} \P_{X_{\widehat{S}^{(\ell)}}} ) \right]  }{p - s^\star}
    \leq
    \frac{ \sum_{j\in [p]\setminus W} \mathbb{E}\left[\tr(\P_{w_j} \P_{X_{\widehat{S}^{(\ell)}}} ) \right]  }{s^\star}.
    $$
\end{assumption}

\begin{assumption}[Exchangeability]\label{assum:exchange}
    The noise directions are exchangeable in the following sense: for all $j\in W$,
    $
    \inf_{T\in\mathscr{T}} \mathbb{E}\left[ \tr(\P_{w_j} \P_{X_{\widehat{S}(T)}}) \right] \equiv l ,
    $
    where $\widehat{S}(T)$ is the selection set of the base procedure on subsample $T$.
\end{assumption}

Informally, Assumption \ref{assum:better} states that the normalized power of the base procedure on signal directions exceeds its normalized false discovery rate on noise directions. Assumption \ref{assum:exchange}, on the other hand, assumes that the minimum false discoveries made by the base procedure is uniformly distributed across all the noise directions. Remark \ref{rem:assump_better} provides insights into scenarios in which these assumptions are likely to hold.

Under the two assumptions above, Proposition \ref{prop:assum-inter} below establishes an upper bound on the quality term $\sum_{j\in W} \sup_{T\in\mathscr{T}} \mathbb{E}\left[ \tr(\P_{w_j} \P_{X_{\widehat{S}(T)}}) \right]$, which serves as a proxy for $p\gamma_\mathcal{W}$. By comparing these two expressions, we find that $\gamma_\mathcal{W}$ is approximately $s_0 / p + 1/p \sum_{j\in[p] } r_j + \mathcal{O}(s_0 \sqrt{\log p / n})$. In particular, under the special case described in Remark \ref{rem:assump_better}, this quantity simplifies exactly to $s_0 / p$. 

\vspace{0.1in}
\begin{proposition}\label{prop:assum-inter}
    Let 
    $
    r_j = \sup_{T\in\mathscr{T}} \mathbb{E}\left[ \tr(\P_{w_j} \P_{X_{\widehat{S}(T)}}) \right] - l
    $ for any $j\in [p]$. Under the Assumptions \ref{assum:better}--\ref{assum:exchange}, we have
    \begin{align*}
    \begin{split}
        \sum_{j\in W} \sup_{T\in\mathscr{T}} \mathbb{E}\left[ \tr(\P_{w_j} \P_{X_{\widehat{S}(T)}}) \right] 
        \leq s_0 + \sum_{j\in [p]} r_j + \mathcal{O}\left(s_0 p \sqrt{\log p / n} \right),
    \end{split}
    \end{align*}
    and
    \begin{align*}
    \begin{split}
     &   \sum_{j\in W} \sup_{T\in\mathscr{T}} \mathbb{E}\left[ \tr(\P_{w_j} \P_{X_{\widehat{S}(T)}}) \right]^2 
     \leq \frac{s_0^2}{p} + \sum_{j\in[p]} \left[ r_j^2 + \frac{2s_0}p{ r_j} \right] + \mathcal{O}\left(s_0^2 p \sqrt{\log p / n} \right).
    \end{split}
    \end{align*}
\end{proposition}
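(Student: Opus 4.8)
\textit{Proof proposal for Proposition~\ref{prop:assum-inter}.}
The plan is to first establish a purely deterministic ``near-orthogonality'' estimate that controls the \emph{total} alignment of all $p$ directions $\{w_j\}_{j\in[p]}$ with any subspace of dimension at most $s_0$, and then use Assumptions~\ref{assum:better}--\ref{assum:exchange} to localize that estimate to the noise index set $W$ and, in particular, to the constant $l$.

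\textbf{Step 1 (near-orthogonality estimate).} For $j\in[p]$ set $u_j:=w_j/\|w_j\|_2$. By the clustering setup, $\{w_j\}_{j\in[p]}=\{X_k:k\in\mathcal K\}\cup\{\delta_j:j\in\mathcal V\}$, so scale-invariance of $\mathrm{Corr}(\cdot,\cdot)$ gives $|u_i^\top u_j|=\mathcal O(\sqrt{\log p/n})$ for all $i\neq j$. Let $U\in\mathbb R^{n\times p}$ be the matrix with columns $u_j$; then $U^\top U=I_p+E$ with $\|E\|_{\max}=\mathcal O(\sqrt{\log p/n})$, hence $\|E\|_{\mathrm{op}}\le (p-1)\|E\|_{\max}=\mathcal O(p\sqrt{\log p/n})$ and $\lambda_{\max}(UU^\top)=\lambda_{\max}(U^\top U)\le 1+\mathcal O(p\sqrt{\log p/n})$. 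For any projection $\P=VV^\top$ onto a subspace of dimension at most $s_0$ (with $V$ having orthonormal columns),
$$\sum_{j\in[p]}\tr(\P_{w_j}\P)=\sum_{j\in[p]}u_j^\top\P u_j=\tr\!\big(V^\top (UU^\top) V\big)\le s_0\,\lambda_{\max}(UU^\top)\le s_0+\mathcal O(s_0 p\sqrt{\log p/n}).$$
Since the feature matrix $X$ is fixed and $\rank(\P_{X_{\widehat S(T)}})\le s_0$, this bound is deterministic and passes through the expectation: $\sum_{j\in[p]}\mathbb E[\tr(\P_{w_j}\P_{X_{\widehat S(T)}})]\le s_0+\mathcal O(s_0 p\sqrt{\log p/n})$ for every $T\in\mathscr T$.

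\textbf{Step 2 (localizing to $W$ and bounding $l$).} Fix $T\in\mathscr T$ and write $S_T:=\sum_{j\in W}\mathbb E[\tr(\P_{w_j}\P_{X_{\widehat S(T)}})]$ and $R_T:=\sum_{j\in S^\star}\mathbb E[\tr(\P_{w_j}\P_{X_{\widehat S(T)}})]$, both nonnegative, with $S_T+R_T\le s_0+\mathcal O(s_0 p\sqrt{\log p/n})$ by Step 1. Since $|W|=p-s^\star$ and $|S^\star|=s^\star$, Assumption~\ref{assum:better} reads $s^\star S_T\le(p-s^\star)R_T$; eliminating $R_T$ gives $p\,S_T\le(p-s^\star)\big(s_0+\mathcal O(s_0 p\sqrt{\log p/n})\big)$, hence $S_T\le s_0+\mathcal O(s_0 p\sqrt{\log p/n})$ uniformly over $T$. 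By Assumption~\ref{assum:exchange}, $l=\inf_{T}\mathbb E[\tr(\P_{w_j}\P_{X_{\widehat S(T)}})]$ is the same for every $j\in W$, so $(p-s^\star)\,l=|W|\,l\le S_T$ for every $T$; minimizing over $T$ yields the key estimate $(p-s^\star)\,l\le s_0+\mathcal O(s_0 p\sqrt{\log p/n})$.

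\textbf{Step 3 (assembling both bounds).} For $j\in W$, Assumption~\ref{assum:exchange} and the definition of $r_j$ give $\sup_{T}\mathbb E[\tr(\P_{w_j}\P_{X_{\widehat S(T)}})]=l+r_j$ with $r_j\ge 0$; all $r_j$ are nonnegative (immediate for $j\in W$, and for signal directions $j\in S^\star$ in the better-than-random-guessing regime), so $\sum_{j\in W}r_j\le\sum_{j\in[p]}r_j$ and $\sum_{j\in W}r_j^2\le\sum_{j\in[p]}r_j^2$. Summing over $W$, $\sum_{j\in W}\sup_T\mathbb E[\,\cdot\,]=(p-s^\star)l+\sum_{j\in W}r_j\le s_0+\sum_{j\in[p]}r_j+\mathcal O(s_0 p\sqrt{\log p/n})$, which is the first claim. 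For the second, expand $(l+r_j)^2$ to obtain $\sum_{j\in W}\sup_T\mathbb E[\,\cdot\,]^2=(p-s^\star)l^2+2l\sum_{j\in W}r_j+\sum_{j\in W}r_j^2$, and estimate each term using Step 2: $(p-s^\star)l^2=\big[(p-s^\star)l\big]\cdot l\le\big(s_0+\mathcal O(s_0 p\sqrt{\log p/n})\big)\cdot\frac{s_0+\mathcal O(s_0 p\sqrt{\log p/n})}{p-s^\star}=\frac{s_0^2}{p}+\mathcal O(s_0^2 p\sqrt{\log p/n})$; $2l\sum_{j\in W}r_j\le\frac{2(s_0+\mathcal O(s_0 p\sqrt{\log p/n}))}{p-s^\star}\sum_{j\in[p]}r_j=\frac{2s_0}{p}\sum_{j\in[p]}r_j+\mathcal O(s_0^2 p\sqrt{\log p/n})$; and $\sum_{j\in W}r_j^2\le\sum_{j\in[p]}r_j^2$. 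Here the lower-order corrections arising from replacing $1/(p-s^\star)$ by $1/p$ and from $\sum_j r_j\le p-s^\star$ are all absorbed into $\mathcal O(s_0^2 p\sqrt{\log p/n})$ (which is permissible since $s^\star/p$ and $\sqrt{\log p/n}$ are both small in the stated asymptotic regime). Adding the three pieces gives the second claim.

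\textbf{Main obstacle.} The only genuinely delicate point is Step 2: converting the global deterministic bound of Step 1 (a sum over all $p$ directions) into a bound on the restricted sum $S_T$ over $W$ and thence on $l$ requires combining the $\sup_T$/$\inf_T$ operators, the expectation, and the rebalancing inequality of Assumption~\ref{assum:better} in precisely this order. Once the estimate $(p-s^\star)l\le s_0+\mathcal O(s_0 p\sqrt{\log p/n})$ is in hand, the remainder is routine algebra, with the residual care being the bookkeeping of which $\mathcal O(s^\star/p)$-scale terms get swept into the stated $\mathcal O(s_0^2 p\sqrt{\log p/n})$ remainder in the second bound.
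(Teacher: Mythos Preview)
Your proof is correct and follows essentially the same route as the paper: bound the total alignment $\sum_{j\in[p]}\mathbb E[\tr(\P_{w_j}\P_{X_{\widehat S(T)}})]$ by $s_0$ up to a near-orthogonality remainder, use Assumption~\ref{assum:better} to localize to $W$, use Assumption~\ref{assum:exchange} to extract a bound on $l$, and then expand $(l+r_j)$ and $(l+r_j)^2$. The only substantive difference is cosmetic: the paper writes the total as $\mathbb E|\widehat S^{(\ell)}|-\nabla$ with $\nabla=\mathbb E[\tr((\P_X-\sum_j\P_{w_j})\P_{X_{\widehat S^{(\ell)}}})]$ and bounds $|\nabla|\le s_0\|\P_X-\sum_j\P_{w_j}\|_2$, whereas you bound $\lambda_{\max}(UU^\top)$ directly---these are equivalent.

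One small inefficiency worth noting: in your Step~2 you pass from $pS_T\le(p-s^\star)(s_0+\mathcal O(\cdot))$ to $S_T\le s_0+\mathcal O(\cdot)$, discarding the factor $(p-s^\star)/p$ before combining with $(p-s^\star)l\le S_T$. If you retain it, the $(p-s^\star)$ cancels and you get the paper's cleaner estimate $l\le (s_0+\mathcal O(\cdot))/p$ directly. This avoids the $1/(p-s^\star)\to 1/p$ conversion in your Step~3, which as written leans on asymptotic assumptions about $s^\star/p$ that are not part of the proposition's hypotheses.
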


Before moving on to the second perspective, we present an immediate corollary of Proposition \ref{prop:assum-inter}. As shown in Remark \ref{rem:assump_better}, the resulting FPE upper bound simplifies to that of stability selection, up to an additive constant factor related to the length of the perturbation directions.

\vspace{0.1in}
\begin{corollary}\label{cor:pfer}
    Let 
    $
    r_j = \sup_{T\in\mathscr{T}} \mathbb{E}\left[ \tr(\P_{w_j} \P_{X_{\widehat{S}(T)}}) \right] - l
    $ for any $j\in [p]$. Under the conditions of Theorem \ref{thm:1} and Assumptions \ref{assum:better}--\ref{assum:exchange}, for every $S$ returned by Algorithm \ref{algo:all-path}, 
    \begin{align*}
    \begin{split}
       \mathbb{E} \left[ \mathrm{FPE}(S,S^\star) \right] \leq \frac{1}{2\alpha-1} \left[ \frac{s_0^2}{p} + \sum_{j\in[p]} \left(r_j^2 + \frac{2s_0}{p} r_j \right) \right] + f(\eta_1) + \mathcal{O}\left( s_0^2 p \sqrt{\log p / n}  \right),
    \end{split}
    \end{align*}
    where 
    $
    f(\eta_1) =  [ 2(s_0 + \sum_{j\in[p]} r_j) \sqrt{\eta_1^2 / (1 + \eta_1^2) } + 4p\eta_1^2 / (1 + \eta_1^2)  ] / (2\alpha - 1) +  s_0 \eta_1^2 / (1 + \eta_1^2).
    $
\end{corollary}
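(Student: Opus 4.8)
The plan is to combine the false positive bound of Theorem~\ref{thm:1} with the sharper base-procedure quality estimates of Proposition~\ref{prop:assum-inter}. Two preliminary remarks. First, for the clustering setup the quantity $\mathrm{FPE}(S,S^\star)$ is exactly $\mathrm{FP}(S,S^\star)=|S|-\tau(S,S^\star)$ of Definition~\ref{def:fd-power}. Second, since FSSS outputs a sub-collection of the family $\mathfrak{S}$ of maximal $\alpha$-stable sets, it suffices to bound $\mathbb{E}[\mathrm{FP}(S,S^\star)]$ for an arbitrary $S\in\mathfrak{S}$, which is precisely the setting of Theorem~\ref{thm:1}. The key point is that I would \emph{not} invoke the stated bound $p(\gamma_\mathcal{W}+b)^2/(2\alpha-1)+a$ of Theorem~\ref{thm:1} as a black box: that bound arises from a sharper intermediate inequality by replacing every per-direction term $\sup_{T\in\mathscr{T}}\mathbb{E}[\tr(\P_{w_j}\P_{X_{\widehat{S}(T)}})]$ by its maximum $\gamma_\mathcal{W}$ over the $|W|\le p$ noise directions and folding the perturbation contributions into $b$, and this loses a factor of order $p$ exactly where Proposition~\ref{prop:assum-inter} is most informative.

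Accordingly, I would revisit the proof of Theorem~\ref{thm:1} and stop just before that coarsening, where it yields a bound of essentially the form
\begin{align*}
\mathbb{E}\!\left[\mathrm{FP}(S,S^\star)\right]
\le{}&\frac{1}{2\alpha-1}\sum_{j\in W}\Big(\sup_{T\in\mathscr{T}}\mathbb{E}\big[\tr(\P_{w_j}\P_{X_{\widehat{S}(T)}})\big]\Big)^{2}\\
&{}+\frac{2}{2\alpha-1}\sqrt{\tfrac{\eta_1^2}{1+\eta_1^2}}\sum_{j\in W}\sup_{T\in\mathscr{T}}\mathbb{E}\big[\tr(\P_{w_j}\P_{X_{\widehat{S}(T)}})\big]\\
&{}+\frac{4p\,\eta_1^2}{(2\alpha-1)(1+\eta_1^2)}+\frac{s_0\,\eta_1^2}{1+\eta_1^2}+\mathcal{O}\!\left(s_0 p\sqrt{\log p/n}\right),
\end{align*}
where the quadratic sum is the ``clean'' contribution of the noise directions, the linear sum is the cross term incurred when each perturbation direction $\delta_j$ (for $j\in\mathcal{V}$) or non-signal feature $X_j$ (for $j\in\mathcal{K}\cap N^\star$) is compared against its noise-free proxy, and the two standalone terms absorb the within-cluster perturbation and cross-cluster near-orthogonality slack. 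Coarsening the two sums through $\sum_{j\in W}(\cdot)^2\le p\gamma_\mathcal{W}^2$ and $\sum_{j\in W}(\cdot)\le p\gamma_\mathcal{W}$ and collapsing the residual slack into $b=2\sqrt{\eta_1^2/(1+\eta_1^2)}+\mathcal{O}(s_0\sqrt{\log p/n})$ recovers the statement of Theorem~\ref{thm:1}; it is this per-direction form, not the collapsed one, that I keep.

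Given this inequality, the rest is substitution and bookkeeping. Under Assumptions~\ref{assum:better}--\ref{assum:exchange} I would plug in Proposition~\ref{prop:assum-inter}: its first estimate $\sum_{j\in W}\sup_{T}\mathbb{E}[\cdot]\le s_0+\sum_{j\in[p]}r_j+\mathcal{O}(s_0 p\sqrt{\log p/n})$ into the linear cross term, and its second estimate $\sum_{j\in W}(\sup_{T}\mathbb{E}[\cdot])^{2}\le s_0^2/p+\sum_{j\in[p]}(r_j^2+\tfrac{2s_0}{p}r_j)+\mathcal{O}(s_0^2 p\sqrt{\log p/n})$ into the quadratic term, using $|W|\le p$ where needed. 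The $\eta_1$-dependent main terms then assemble into
$$f(\eta_1)=\frac{2\big(s_0+\sum_{j\in[p]}r_j\big)\sqrt{\eta_1^2/(1+\eta_1^2)}+4p\,\eta_1^2/(1+\eta_1^2)}{2\alpha-1}+\frac{s_0\,\eta_1^2}{1+\eta_1^2},$$
the surviving non-perturbation terms give $\tfrac{1}{2\alpha-1}\big[s_0^2/p+\sum_{j\in[p]}(r_j^2+\tfrac{2s_0}{p}r_j)\big]$, and the various error pieces --- the $\mathcal{O}(s_0 p\sqrt{\log p/n})$ already present, the one carried over from Proposition~\ref{prop:assum-inter}, and the $\mathcal{O}(s_0\sqrt{\log p/n})$-type factors multiplying bounded quantities --- all merge into a single $\mathcal{O}(s_0^2 p\sqrt{\log p/n})$, which is the claimed bound, valid for every $S$ returned by Algorithm~\ref{algo:all-path}.

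I expect the real obstacle to be the first step rather than the substitution: one must re-open the proof of Theorem~\ref{thm:1} and isolate the per-direction form of its bound, and in doing so carefully decompose the composite slack $b$ into the part that multiplies the linear cross sum (producing the $2(s_0+\sum_{j}r_j)\sqrt{\eta_1^2/(1+\eta_1^2)}$ term) and the part that becomes the standalone $4p\eta_1^2/(1+\eta_1^2)$ term, so that the constants in $f(\eta_1)$ emerge exactly and no spurious factor of $p$ or $s_0$ is introduced; the passage from cross-cluster near-orthogonality to the $\mathcal{O}(\sqrt{\log p/n})$ remainders must likewise be tracked with care. Once that accounting is in place, applying Proposition~\ref{prop:assum-inter} and absorbing the lower-order terms is routine.
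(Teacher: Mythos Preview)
Your proposal is correct and matches the paper's approach exactly: the paper's proof is the single sentence ``Plugging these terms into the result of Theorem~\ref{thm:1}, the result of Corollary~\ref{cor:pfer} follows,'' and you have correctly unpacked what this means---use the per-direction intermediate bound $\tfrac{1}{2\alpha-1}\sum_{j\in W}\{\sup_T\mathbb{E}[\tr(\P_{w_j}\P_{X_{\widehat{S}(T)}})]+b\}^2+a$ from the proof of Theorem~\ref{thm:1} (before the coarsening to $\gamma_\mathcal{W}$), expand the square, and substitute the two estimates of Proposition~\ref{prop:assum-inter} into the resulting quadratic and linear sums. Your caution about tracking the constants in $f(\eta_1)$ is well-placed: expanding $(x_j+b)^2$ with the paper's $b=2\sqrt{\eta_1^2/(1+\eta_1^2)}+\mathcal{O}(\cdot)$ naturally produces a cross-term coefficient of $4$ rather than the $2$ appearing in the stated $f(\eta_1)$, so this is a place where the paper's own constants may not be fully consistent, not a gap in your argument.
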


Both Proposition \ref{prop:assum-inter} and Corollary \ref{cor:pfer} are proved in Section \ref{sec:sup-proofpropcor}.

\vspace{0.2in}

\textbf{Regarding the term $\gamma_\mathcal{U}$}: This term appears in Theorem \ref{thm:2}. Recall that $\mathcal{U}:= \bigcup_{k\in S^\star} \{\delta_j - \delta_l: j\neq l \in \mathcal{V}_k \} \cup \bigcup_{k\in \mathcal{K} \cap S^\star} \{\delta_j: j\in\mathcal{V}_k\} \cup \bigcup_{k\in \mathcal{K} \cap N^\star} \{ X_j: j\in\mathcal{C}_k \}$. Our heuristic approximation on $\gamma_\mathcal{U}$ is based on the following idea: when the sample size $n$ is sufficiently large, both the full data set and any subsample would approximate an idealized setting, in which the directions of representative features, perturbations, and noise are exactly orthogonal. 
Therefore, analyzing the behavior of $\gamma_\mathcal{U}$ under this idealized setup provides valuable insight into its asymptotic behavior.

In order to construct such an idealized setup, we distinguish between two key components of the FSSS procedure: \textit{index estimation} and \textit{subspace estimation}. For index estimation, we apply the base procedure to disjoint pairs of subsamples $\mathscr{T} :=\{T^{(2\ell-1)}, T^{(2\ell)}\}_{\ell=1}^{B/2}$, where $T^{(2\ell-1)} \cap T^{(2\ell)} = \varnothing$, yielding $B$ estimates $\{\widehat{S}^{(2\ell-1)}, \widehat{S}^{(2\ell)}\}_{\ell=1}^{B/2}$. For subspace estimation, we compute the average projection matrix $\P_{\rm avg} = \frac{1}{B} \sum_{\ell=1}^B \P_{\widehat{S}^{(\ell)}}$ based on a design matrix restricted to a subset of observations $\mathscr{D} \subseteq\{1, \dots, n\}$. Importantly, the sets $\{T^{(2\ell-1)}, T^{(2\ell)}\}_{\ell=1}^{B/2}$ need not be complementary subsamples of equal size, and the evaluation set $\mathscr{D}$ need not coincide with the full dataset. No intrinsic relationship is required between the subsampling sets used for index estimation and the data used for subspace estimation.

\begin{figure}[!ht]
    \centering
    \includegraphics[width=\textwidth]{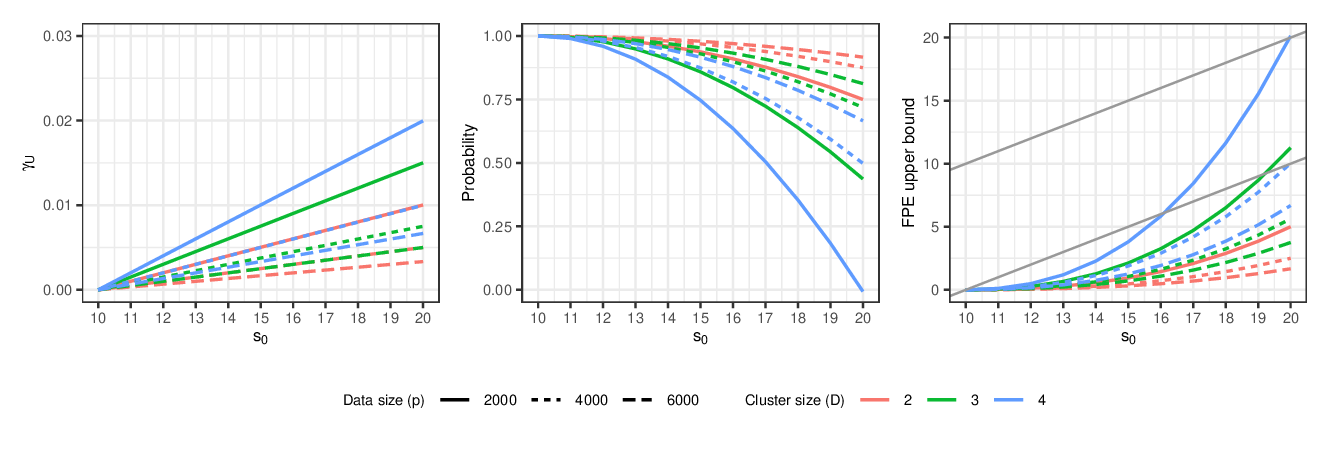}
    \caption{\small A numerical experiment illustrating the behavior of $\gamma_\mathcal{U}$ and its related quantities, with $\alpha_0 = 0.1$, $\eta_1 = 0.01$, $s^\star=10$, and $p = 2000$. Left panel: $\gamma_\mathcal{U}$ values for different $s_0$. Middle panel: the probability $1 - ( s^\star \binom{D-1}{2} + p-s^\star) \cdot \gamma_\mathcal{U}^2 / (1 - 2\alpha_0)$ for different $s_0$. Right panel: FPE upper bound for different $s_0$, where the gray lines indicate the upper bound and lower bound for $\ell_0$-base procedure respectively.} 
    \label{fig:assump_clust}
\end{figure}

Building on this generalization, we now assume that the subspace estimation indices $\mathscr{D}$ and the index estimation indices $\mathscr{T}$ form a partition of $\{1, \dots, n\}$. Within each partition, we further assume that the components $X_\mathcal{K}$, $\delta_\mathcal{V}$ and $\epsilon$ are perfectly orthogonal; that is, $Z_i \perp Z_j$ for all $Z_i\neq Z_j \in X_\mathcal{K} \cup \delta_\mathcal{V}\cup \{\epsilon\}$. Additionally, we assume that each cluster has size $D > 1$. In this setting,
\begin{align}\label{eq:assump_cluster}
    \gamma_\mathcal{U} = \max\left\{ \frac{s_0 - s^\star}{p - s^\star}, \quad
    1 - \prod_{i=1}^{D-1} \frac{p-s_0 - i}{ p-s^\star-i }
    \right\},
\end{align}
which is about $s_0/p$ when $s_0$ and $p$ are sufficiently large (see Section \ref{sec:sup-proofeq} for a proof). Figure \ref{fig:assump_clust} shows a numerical experiment examining the behavior of $\gamma_\mathcal{U}$, the probability of $\mathfrak{S} \subseteq \mathcal{S}$, and the FPE upper bound. We observe a meaningful region where FSSS achieves uniformly lower FPE upper bounds than the base procedure, highlighting the benefits of subsampling. \\

As a final remark, we point out that this idealized setting can also be applied to $\gamma_\mathcal{W}$ analysis. Remark \ref{rem:assump_better} presents a special scenario, in the same spirit of the perfectly orthogonal setting, under which the Assumptions \ref{assum:better}--\ref{assum:exchange} hold.

\vspace{0.1in}
\begin{remark}\label{rem:assump_better}
    Consider such a special example: (i) the subspace estimation indices $\mathscr{D}$ and the index estimation indices $\mathscr{T}$ are a partition of $\{1, \dots, n\}$; (ii) For each partition, $X_\mathcal{K}$, $\delta_\mathcal{V}$ and $\epsilon$ are perfectly orthogonal to each other. In this case, the RHS of Assumption \ref{assum:better} attains 1, which is a trivial upper bound of the LHS.

    On top of the two conditions above, we further require that all non-singleton clusters are signal clusters (meaning that $\left\{ k\in\mathcal{K}: |\mathcal{C}_k| \neq 1 \right\} \subseteq S^\star$). In this case, for $\ell_0$-base procedure, $\sup_{T\in\mathscr{T}} \mathbb{E}\left[ \tr(\P_{w_j} \P_{X_{\widehat{S}(T)}}) \right] = \inf_{T\in\mathscr{T}} \mathbb{E}\left[ \tr(\P_{w_j} \P_{X_{\widehat{S}(T)}}) \right] \equiv (s_0 - s^\star) / (p - s^\star)$ for any $j\in W$. Hence,
    $$
    \mathbb{E}\left[ \mathrm{FPE}(S, S^\star) \right] \leq \frac{1}{2\alpha - 1} \left[\frac{s_0^2}{p} + 2 s_0 \sqrt{\frac{\eta_1^2}{1 + \eta_1^2}} + \frac{4p \eta_1^2}{1 + \eta_1^2} \right] + \frac{s_0 \eta_1^2}{1 + \eta_1^2} .
    $$
    Figure \ref{fig:assump_clust2} presents FPE upper bounds for this special example, which are uniformly lower than $s_0 - s^\star$, the lower bound for FPE incurred by $\ell_0$-base procedure.

    \begin{figure}[!ht]
    \centering
    \includegraphics[width=0.6\textwidth]{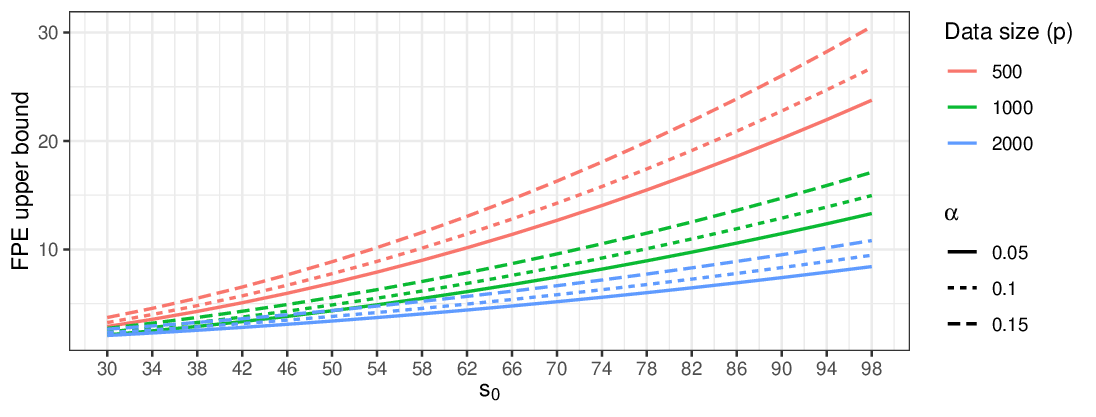}
    \caption{\small A numerical experiment for the FPE upper bound under Assumptions \ref{assum:better} and \ref{assum:exchange}, with $\alpha = 0.9$, $\eta_1 = 0.01$, and $s^\star=10$. } 
    \label{fig:assump_clust2}
    \end{figure}
    
\end{remark}

\vspace{0.2in}
\subsection{Similarity in prediction}
\label{sec:sup-prediction-similarity}

In Theorem~\ref{thm:2}, we showed that under certain conditions, FSSS will likely return
models in the set of ``equally good models" $\mathcal{S}$. Particularly, if $\mathfrak{S}$ contains all $\alpha$-stable sets (which it will with high probability if it is run for a large number
of iterations), then, with high probability, $\mathfrak
{S} = \mathcal{S}$.

It is noteworthy to point out that all models in $\mathcal{S}$ have similar in-sample prediction error. Specifically, assuming $\epsilon \sim \mathcal{N}(0,\sigma^2/nI_{n})$ for some $\sigma > 0$, we show in Appendix~\ref{sec:sup-proof-prediction} that, with high probability, 
\begin{align}\label{eq:prediction-similarity}
\begin{split}
    \max_{S_1 \neq S_2 \in \mathcal{S} } \left| \mathrm{pe}(S_1) - \mathrm{pe}(S_2) \right| 
\leq \frac{\eta_1^2}{1 + \eta_1^2} \|\beta^\star\|_2^2  + \mathcal{O}\left( s^\star \sqrt{\log p / n} \right) (\|\beta^\star\|_1 + \sigma)^2,
\end{split}
\end{align} 
where $\mathrm{pe}(S) = \left\| \P_{\mathrm{col}(X_S)^\perp}(y) \right\|_2^2 $ is the in-sample prediction error for the model $S$. As expected, smaller perturbations $\eta_1$ lead to more similar in-sample predictions errors among models in $\mathcal{S}$.

\vspace{0.2in}
\subsection{Beyond clustering: the complex dependency setup}
\label{sec:sup-complex}

We consider a design matrix composed of block structures. Let $\mathcal{K}$ be the index set of \textit{block parents}. For any $k\in\mathcal{K}$, we assume
for simplicity that $X_k$ is normalized so that $\|X_k\|_2 = 1$. Suppose that the \textit{child} of $\mathcal{K}$ is given by $X_c = \sum_{j\in\mathcal{K}} X_j + \delta$, where $\delta$ is a perturbation with $\|\delta_j\|_2 = \eta_1 \ll 1$. The entire block is denoted by $\mathcal{B}:= \mathcal{K} \cup \{c\}$. Additionally, there are \textit{individual features} indexed by $\mathcal{I}$, where for any $k\in\mathcal{I}$, $X_k$ is normalized so that $\|X_k\|_2 = 1$. We further assume that all $X_k$, $k\in\mathcal{K} \cup \mathcal{I}$, and the perturbation $\delta$ are nearly orthogonal; that is, for all $k\neq k' \in \mathcal{K} \cup \mathcal{I}$, the correlation $|\cos\langle X_k, X_{k'} \rangle|$ and $\cos\langle X_k, \delta \rangle|$ are upper bounded by $\mathcal{O}(\sqrt{\log p/ n})$.

The response variable is generated from block parents and individual features:
$
y = \sum_{k\in\mathcal{K} \cup \mathcal{I} } X_k \beta^\star_k + \epsilon.
$
Here $\epsilon \in \mathbb{R}^n$ is a random vector with independent and identically distributed entries.
We define the \textit{signal block parents} as $S^\star_\mathcal{B}: = \{k\in\mathcal{K}: \beta^\star_k \neq 0\}$, the \textit{noise block parents} as $N^\star_\mathcal{B}: = \{k\in\mathcal{K}: \beta^\star_k = 0\}$, the \textit{signal individual parents} as $S^\star_\mathcal{I}: = \{k\in\mathcal{I}: \beta^\star_k \neq 0\}$, and the \textit{noise individual parents} as $N^\star_\mathcal{I}: = \{k\in\mathcal{I}: \beta^\star_k = 0\}$. Moreover, let $K:= |\mathcal{K}|$, $S^\star := S^\star_\mathcal{B} \cup S^\star_\mathcal{I}$, and $s^\star:= \left|S^\star \right|$.

Finally, note that since $\eta_1 \ll 1$, when $N^\star_\mathcal{B} = \varnothing$, models that select $K$ features from $\mathcal{K}$ are nearly indistinguishable. In other words, models selecting all parents in $\mathcal{B}$, and models selecting any $K-1$ parents and the child in $\mathcal{B}$ span similar subspaces. As a result, targeting the signal set $S^\star$ is not meaningful. Instead, we define a class of ``block consistency models'':
$$
\mathcal{S}_0:= \left\{ \supp(\beta): \beta\in\mathbb{R}^p;
\ \left| \supp(\beta) \cap \mathcal{B} \right| = K;
\ \beta_j \neq 0 \ \forall j\in S^\star_\mathcal{I};
\ \beta_j = 0 \ \forall j\in N^\star_\mathcal{I}
\right\}.
$$
Note that all models in $\mathcal{S}_0$ span similar subspaces when $N^\star_\mathcal{B} = \varnothing$. 
We further denote
$
\mathcal{S}_1:= \left\{ S^\star \right\},
$
and
$$
\mathcal{S} := \left\{ S: S\in\mathcal{S}_0 \text{ if $N^\star_\mathcal{B} = \varnothing$}; 
\ S\in\mathcal{S}_1 \text{ if $N^\star_\mathcal{B} \neq \varnothing$}
\right\} .
$$

\vspace{0.1in}

We first present the properties of $\ell_0$-penalized base procedure (\ref{eq:lasso_s0}) in the complex dependency setup. Specifically, under a certain beta-min condition, when $s_0 \geq s^\star$, the $\ell_0$-base procedure does not miss any signal directions. When $s_0 = s^\star$, the ``block selection consistency'' is achieved. We proof Theorem \ref{prop:block-consist-s0} in Section \ref{sec:sup-proof-blockbase}.

\vspace{0.1in}
\begin{theorem}[Base procedure] \label{prop:block-consist-s0}
    Suppose $\max_{j\in \mathcal{K \cup \mathcal{I}} } \mathrm{Corr}(X_j, \epsilon) \leq \mathcal{O}(\sqrt{\log p / n})$ and $\mathrm{Corr}(\delta, \epsilon) \leq \mathcal{O}(\sqrt{\log p / n})$.
    Furthermore, assume that 
    \begin{align}\label{eq:complex-beta-min1}
    \begin{split}
     &   \min\left\{ \frac{\min_{j\neq k \in S^\star_\mathcal{B}} |\beta^\star_j - \beta^\star_k|^2   + 2 \eta_1^2 \min_{j\in S^\star_\mathcal{B} } |\beta^\star_j|^2}
    { 1 + K + (1-K) \mathds{1}_{\{N^\star_\mathcal{B} = \varnothing \}}  + \eta_1^2},\ 
    \frac{  (\eta_1^2 + 1) \min_{j\in S^\star_\mathcal{B} } |\beta^\star_j|^2 \mathds{1}_{\{N^\star_\mathcal{B} \neq \varnothing\}} }{K + \eta_1^2},\ 
    \min_{j\in S^\star_\mathcal{I}} |\beta_j^\star|^2 \right\} \\
    >{}& \frac{\eta_1^2}{1 + \eta_1^2} \|\beta^\star\|_2^2 + \mathcal{O}\left( (s_0 + K)\sqrt{\log p / n} \right) (\|\beta^\star\|_1 + \|\epsilon\|_2)^2,
    \end{split}
    \end{align}
    and additionally
    $$
    \min_{j\in S^\star_\mathcal{B}} |\beta^\star_j|^2 > \mathcal{O}\left( s_0 \sqrt{\log p / n} \right)\left( \|\beta^\star\|_1 + \|\epsilon\|_2 \right)^2 . $$
    Then when $s_0 \geq s^\star$,
    \begin{enumerate}[label = \normalfont(\arabic*)]
        \item The solution $\widehat{\beta}$ to (\ref{eq:lasso_s0}) does not miss any signals. That is, there must exist $\widetilde{S} \in \mathcal{S}_0 \cup \mathcal{S}_1 $ such that $\widetilde{S} \subseteq \supp(\widehat{\beta})$.

        \item When $s_0 = s^\star$, the solution $\widehat{\beta}$ achieves block feature selection consistency: $\supp(\widehat{\beta}) \in\mathcal{S}$.
    \end{enumerate}    
\end{theorem}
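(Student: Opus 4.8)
The plan is to follow the blueprint of the proof of Theorem~\ref{prop:clust-consist-s0} (the clustered analogue), exploiting that an $\ell_0$-constrained least-squares solution $\widehat\beta$ to \eqref{eq:lasso_s0}, with $\widehat S := \supp(\widehat\beta)$ and hence $|\widehat S|\le s_0$, is exactly a minimizer of $\|\P_{S^\perp}\,y\|_2^2$ over all $S$ with $|S|\le s_0$, where $\P_S := \P_{\mathrm{col}(X_S)}$. Write $y = \mu + \epsilon$ with $\mu := X\beta^\star = \sum_{k\in S^\star}\beta^\star_k X_k \in \mathrm{col}(X_{S^\star})$, and for any $S$ expand $\|\P_{S^\perp}y\|_2^2 = \|\P_{S^\perp}\mu\|_2^2 + 2\langle\P_{S^\perp}\mu,\P_{S^\perp}\epsilon\rangle + \|\P_{S^\perp}\epsilon\|_2^2$.

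The first step is to establish a master bound on $\|\P_{\widehat S^\perp}\mu\|_2^2$. Fix a reference set $R\in\mathcal{S}$ with $|R|\le s_0$: take $R = S^\star$ when $N^\star_\mathcal{B}\neq\varnothing$ and $R = \mathcal{K}\cup S^\star_\mathcal{I}$ when $N^\star_\mathcal{B}=\varnothing$. In either case $\mathrm{col}(X_R)$ captures $\mu$ up to an error at most $\tfrac{\eta_1^2}{1+\eta_1^2}\|\beta^\star\|_2^2 + \mathcal{O}(\sqrt{\log p/n})(\|\beta^\star\|_1+\|\epsilon\|_2)^2$ (it captures $\mu$ exactly unless the child stands in for a parent). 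Using the near-orthogonality of $\{X_k: k\in\mathcal{K}\cup\mathcal{I}\}\cup\{\delta\}$ together with the hypotheses $\mathrm{Corr}(X_j,\epsilon),\mathrm{Corr}(\delta,\epsilon)\le\mathcal{O}(\sqrt{\log p/n})$, the cross term and the difference $\|\P_{\widehat S^\perp}\epsilon\|_2^2-\|\P_{R^\perp}\epsilon\|_2^2$ are controlled uniformly over $|S|\le s_0$ by $\mathcal{O}\big((s_0+K)\sqrt{\log p/n}\big)(\|\beta^\star\|_1+\|\epsilon\|_2)^2$ (the $\binom{p}{s_0}$ union bound and the $K$-dimensional block being absorbed into $\mathcal{O}(\cdot)$). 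Optimality of $\widehat S$, namely $\|\P_{\widehat S^\perp}y\|_2^2\le\|\P_{R^\perp}y\|_2^2$, then gives
\[
\|\P_{\widehat S^\perp}\mu\|_2^2 \ \leq\ \frac{\eta_1^2}{1+\eta_1^2}\|\beta^\star\|_2^2 + \mathcal{O}\big((s_0+K)\sqrt{\log p/n}\big)(\|\beta^\star\|_1+\|\epsilon\|_2)^2,
\]
which is precisely the right-hand side of \eqref{eq:complex-beta-min1}.

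The crux is the converse: this bound must force $\widehat S$ to contain some $\widetilde S\in\mathcal{S}_0\cup\mathcal{S}_1$, which I would prove contrapositively. If no $\widetilde S\in\mathcal{S}_0\cup\mathcal{S}_1$ satisfies $\widetilde S\subseteq\widehat S$, then either (a) some $j^\star\in S^\star_\mathcal{I}$ is absent from $\widehat S$, or (b) $S^\star_\mathcal{I}\subseteq\widehat S$ but $|\widehat S\cap\mathcal{B}|<K$ and $S^\star_\mathcal{B}\not\subseteq\widehat S$. In case (a), near-orthogonality gives $\|\P_{\widehat S^\perp}\mu\|_2^2\ge(\beta^\star_{j^\star})^2-\mathcal{O}(\sqrt{\log p/n})(\cdot)\ge\min_{j\in S^\star_\mathcal{I}}|\beta^\star_j|^2-\mathcal{O}(\cdot)$, matching the third term in the $\min$ of \eqref{eq:complex-beta-min1}. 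In case (b) I would split further on whether the child $X_c$ lies in $\widehat S$ and on $|\widehat S\cap\mathcal{K}|$, and in each sub-case compute the squared length of the part of $\mu$ unreachable from $\mathrm{col}(X_{\widehat S})$, using $X_c=\sum_{k\in\mathcal{K}}X_k+\delta$ and $\|X_c\|_2^2=K+\eta_1^2+\mathcal{O}(\sqrt{\log p/n})$; the worst sub-cases reproduce the first term (the uncaptured block direction being a difference $X_j-X_k$ of parent directions the child absorbs only partially, whence the denominator $1+K+(1-K)\mathds{1}_{\{N^\star_\mathcal{B}=\varnothing\}}+\eta_1^2$) and, when $N^\star_\mathcal{B}\neq\varnothing$, the second term $\tfrac{(\eta_1^2+1)\min_{j\in S^\star_\mathcal{B}}|\beta^\star_j|^2}{K+\eta_1^2}$ (configurations in which $\widehat S$ uses the child together with noise parents; here the supplementary hypothesis $\min_{j\in S^\star_\mathcal{B}}|\beta^\star_j|^2>\mathcal{O}(s_0\sqrt{\log p/n})(\|\beta^\star\|_1+\|\epsilon\|_2)^2$ is invoked). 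In each sub-case this lower bound strictly exceeds the upper bound of the previous step, a contradiction; this proves claim (1).

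Finally, claim (2) follows by a counting argument. If $N^\star_\mathcal{B}\neq\varnothing$, then $s^\star = |S^\star_\mathcal{B}|+|S^\star_\mathcal{I}|<K+|S^\star_\mathcal{I}|$, so a set of size $\le s_0=s^\star$ cannot contain any member of $\mathcal{S}_0$ (each of size $K+|S^\star_\mathcal{I}|$); by claim (1) it must contain $S^\star\in\mathcal{S}_1$, and $S^\star\subseteq\widehat S$ with $|\widehat S|\le|S^\star|$ forces $\widehat S=S^\star\in\mathcal{S}$. If $N^\star_\mathcal{B}=\varnothing$, then every member of $\mathcal{S}_0\cup\mathcal{S}_1=\mathcal{S}_0$ has size $K+|S^\star_\mathcal{I}|=s^\star=s_0$, so the $\widetilde S\subseteq\widehat S$ provided by claim (1) has $|\widetilde S|=s_0\ge|\widehat S|$, whence $\widehat S=\widetilde S\in\mathcal{S}$. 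I expect the geometric case analysis in the previous paragraph to be the main obstacle: enumerating correctly the ways $\widehat S$ can omit a signal, getting the exact denominators so they line up with the three terms of \eqref{eq:complex-beta-min1}, and carefully accounting for how the child $X_c$ couples the whole block and can substitute for an omitted parent at only $\mathcal{O}(\eta_1^2)$ cost; this coupling is absent from the purely clustered setting of Theorem~\ref{prop:clust-consist-s0}.
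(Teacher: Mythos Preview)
Your proposal is correct and follows essentially the same route as the paper: compare the prediction error $\|\P_{S^\perp}y\|_2^2$ of a ``good'' reference $R$ (containing some $\widetilde S\in\mathcal{S}_0\cup\mathcal{S}_1$) against any ``bad'' $S$, via a case analysis on whether the child $c$ lies in $S$ and on $N^\star_\mathcal{B}$, yielding exactly the three terms of \eqref{eq:complex-beta-min1}. The paper organizes the casework as Parts~(1a)--(1c) (splitting on $c\in S$ first) whereas you lead with which signal type is missed, and your counting argument for Part~(2) is a touch cleaner than the paper's re-derivation of a sharper upper bound; one small remark is that no $\binom{p}{s_0}$ union bound is needed, since the correlation hypotheses are stated deterministically and the cross terms are bounded termwise.
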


Analogously to the clustering setup, the following theorem demonstrates how FSSS, coupled with an appropriate base procedure, returns models in $\mathcal{S}$ even if the base procedure chooses many redundant or noise features.
Denote the quality of base procedure by
    $$
    \gamma := \max_{u \in \mathcal{U}} \max_{T\in\mathscr{T}} \mathbb{E}\left[\tr(\P_u \P_{X_{\widehat{S}(T)}}) \right] ,
    $$
    where $\mathscr{T}$ is the collection of subsample indices; additionally, the ``undesired direction'' is given by $\mathcal{U}:= \{\delta \} \cup \bigcup_{j\in N^\star_\mathcal{I}} \{ X_j \}$ if $N^\star_\mathcal{B} = \varnothing$, and $\mathcal{U}:= \{X_c \} \cup \bigcup_{ N^\star_\mathcal{B} \cup N^\star_\mathcal{I} } \{X_j\} $ otherwise.

\vspace{0.1in}
\begin{theorem}[Subsampling consistency] \label{prop:block-consist-subsample}
    Suppose that
    \begin{enumerate} [label = \normalfont(\roman*)]
        \item The base procedure does not miss any signal: for any $\ell \in \{1, \dots, B\}$, there must exist $S \in\mathcal{S}_0 \cup \mathcal{S}_1$ such that $S \subseteq \widehat{S}^{(\ell)}$.

        \item $ \max\left\{ \frac{\eta_1^2}{1 + \eta_1^2},\ \frac{|S^\star_\mathcal{B}|}{K + \eta_1^2} \right\} + \mathcal{O}(s^\star (K+s^\star) \sqrt{\log p / n}) < \frac{1}{2} $, and $s^\star \leq (n / \log p)^{1/8}$.

        \item $K^2 \sqrt{\log p /n} \rightarrow 0$ as $K,p,n \rightarrow  \infty$. 
    \end{enumerate}
    Then for any $\alpha_0 \in (0, 1/2)$, with the threshold
    $$
    f_1(\eta_1) - \alpha_0 + \mathcal{O}\left( s^\star \sqrt{\log p / n} \right) \leq  \alpha \leq  f_2(s^\star, \eta_1, s_0) + \mathcal{O}\left( s^\star (s_0 + B) \sqrt{\log p / n} \right),
    $$
    where 
    $
    f_1(\eta_1) := 1 +20  \sqrt{ \eta_1^2 / (1 + \eta_1^2) }  + 8 \eta_1^2 / (1 + \eta_1^2),
    $
    and 
    $
    f_2(s^\star, \eta_1, s_0) := 1- (3(s^\star)^2 + 2s^\star s_0) \sqrt{\eta_1^2 / (1 + \eta_1^2)} + (3 s^\star + s_0) \eta_1^2 / (1 + \eta_1^2) + \left(\eta_1^2 / (1 + \eta_1^2) \right)^{3/2},
    $
    we have $\mathfrak{S} \subseteq \mathcal{S}$ with probability $1 - \left( p - s^\star \right) \cdot \gamma^2 / (1 - 2 \alpha_0) $.

\end{theorem}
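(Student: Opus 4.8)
My plan is to mirror the proof of Theorem~\ref{thm:2} (Appendix~\ref{sec:sup-proofthm2}), adapting it to the block geometry. The objective is to show that every maximal $\alpha$-stable set $S\in\mathfrak{S}$ lies in $\mathcal{S}$, and I would decompose this into two tasks. \textbf{Task (a):} with the stated probability, no $\alpha$-stable set contains a ``noise'' feature, meaning a feature $j$ whose presence forces a model outside $\mathcal{S}$ --- namely any $j\in N^\star_\mathcal{I}$, or (when $N^\star_\mathcal{B}=\varnothing$) the child $c$ when all of $\mathcal{K}$ is also present, or (when $N^\star_\mathcal{B}\neq\varnothing$) any index in $N^\star_\mathcal{B}\cup\{c\}$. \textbf{Task (b):} deterministically, any maximal $\alpha$-stable set built only from ``good'' features must be \emph{complete}, containing exactly $K$ features from $\mathcal{B}$ and all of $S^\star_\mathcal{I}$; by task (a) it then contains no element of $N^\star_\mathcal{I}$, so it belongs to $\mathcal{S}$. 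The argument splits on whether $N^\star_\mathcal{B}=\varnothing$ (target $\mathcal{S}_0$, the more involved case) or $N^\star_\mathcal{B}\neq\varnothing$ (target $\{S^\star\}$); I would present the former and indicate the routine modifications for the latter. Condition (i) is met, e.g., by the $\ell_0$ base procedure of Theorem~\ref{prop:block-consist-s0}.

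The common toolkit is a handful of geometric estimates from near-orthogonality of $\{X_k:k\in\mathcal{K}\cup\mathcal{I}\}\cup\{\delta\}$ (pairwise correlations $\mathcal{O}(\sqrt{\log p/n})$). First, when $N^\star_\mathcal{B}=\varnothing$, any ``correct'' feature set $S'$ --- right block structure together with all of $S^\star_\mathcal{I}$ --- has $\mathrm{col}(X_{S'})$ within principal angles $\mathcal{O}(\sqrt{\eta_1^2/(1+\eta_1^2)})+\mathcal{O}(\mathrm{poly}(K,s^\star)\sqrt{\log p/n})$ of $\mathrm{col}(X_{S^\star})$, since replacing one parent by the child changes the span only through $\delta$. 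Second, if a noise feature $j$ is adjoined to any good set $S$, the resulting orthogonal direction $v_j=\P_{S^\perp}X_j$ is $\mathcal{O}(\mathrm{poly}(K,s^\star,s_0)\sqrt{\log p/n})$-close in correlation to a \emph{fixed} element $u\in\mathcal{U}$ --- $X_j$ itself, or $\delta$, or the relevant parent --- because the features already in $S$ are nearly orthogonal to $X_j$; hence $|\tr(\P_{v_j}\P_{\rm avg})-\tr(\P_u\P_{\rm avg})|$ is of the same small order. Forcing all of these residuals, compounded over as many as $K$ block parents and $s_0$ base-procedure selections, to fit inside the $\mathcal{O}(\cdot\sqrt{\log p/n})$ slack of $f_1$ and $f_2$ is precisely what dictates conditions (ii)--(iii) ($s^\star\leq(n/\log p)^{1/8}$, $K^2\sqrt{\log p/n}\to0$).

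For task (b), I would use the equivalent form $\pi(S')=\min_{z\in\mathrm{col}(X_{S'})}\frac1B\sum_{\ell=1}^B\tr(\P_z\P_{X_{\widehat S^{(\ell)}}})$ from \eqref{eq:stability-explain} together with assumption (i): on each subsample $\ell$ the base-procedure set $\widehat S^{(\ell)}$ contains a member of $\mathcal{S}_0\cup\mathcal{S}_1$, whose feature span nearly coincides with that of any correct $S'$. This makes $\tr(\P_z\P_{X_{\widehat S^{(\ell)}}})\geq 1-(\text{error in }\eta_1,K,s^\star,s_0,\sqrt{\log p/n})$ uniformly over unit $z\in\mathrm{col}(X_{S'})$, giving $\pi(S')\geq f_2(s^\star,\eta_1,s_0)\geq\alpha$ --- this is where the upper bound $\alpha\leq f_2$ is spent. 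Monotonicity ($\pi(\widetilde S)\geq\pi(S')$ for $\widetilde S\subseteq S'$) then shows any good-but-incomplete $S$ embeds in some correct $S'$, so $S\cup\{j\}\subseteq S'$ is $\alpha$-stable for a suitable $j\in S'\setminus S$, contradicting maximality of $S$. Hence a maximal $\alpha$-stable set of good features has $|S\cap\mathcal{B}|\geq K$ and $S\supseteq S^\star_\mathcal{I}$; combined with task (a) ($|S\cap\mathcal{B}|\leq K$ and $S\cap N^\star_\mathcal{I}=\varnothing$) this forces $S\in\mathcal{S}$.

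For task (a) --- the probabilistic core --- suppose some $S\in\mathfrak{S}$ contained a noise feature $j$. Then $S$ is $\alpha$-stable, so by the pre-evaluation bound $\pi(S)\leq\tr(\P_{v_j}\P_{\rm avg})$ (Section~\ref{sec:algorithm_rules}) with $v_j=\P_{(S\setminus\{j\})^\perp}X_j$, and by the second geometric estimate, $\tr(\P_u\P_{\rm avg})\geq\alpha-\mathcal{O}(\mathrm{poly}(\cdot)\sqrt{\log p/n})\geq f_1(\eta_1)-\alpha_0-\mathcal{O}(\mathrm{poly}(\cdot)\sqrt{\log p/n})$ for some fixed $u\in\mathcal{U}$. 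Since $\mathbb{E}[\tr(\P_u\P_{\rm avg})]\leq\gamma$ by definition of $\gamma$ and $\tr(\P_u\P_{\rm avg})\in[0,1]$, a Markov-type inequality --- with the extra slack supplied by $\alpha_0$ converting the first-moment bound into a $\gamma^2/(1-2\alpha_0)$ tail --- controls this event for each fixed $u$, and a union bound over the $\lesssim p-s^\star$ elements of $\mathcal{U}$ (absorbing the fact that $v_j$ only approximates $u$) yields the claimed failure probability $(p-s^\star)\gamma^2/(1-2\alpha_0)$. The hard part will be the error bookkeeping in the two geometric estimates --- keeping every near-orthogonality residual, amplified by $K$ parents and $s_0$ selections, inside the explicit $f_1,f_2$ windows --- and, on the probabilistic side, extracting the $\gamma^2$ rather than the naive $\gamma$, which is delicate because $v_j$ depends on the \emph{random} stable set $S$ (hence on $\P_{\rm avg}$ itself) and only approximately equals the fixed $u$ whose mean alignment $\gamma$ bounds it.
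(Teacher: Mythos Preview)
Your two-task decomposition (good sets are $\alpha$-stable; noise directions are not) is exactly the paper's strategy, and your use of \eqref{eq:stability-explain}, monotonicity, and Lemma~\ref{lem:U-set-property2} to relate $v_j$ to a fixed $u\in\mathcal{U}$ is on the right track. But there is a genuine gap in your probabilistic core, and a secondary mis-statement in your geometric estimate.

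\textbf{The $\gamma^2$ mechanism.} You say a ``Markov-type inequality with the extra slack supplied by $\alpha_0$'' converts the first-moment bound $\gamma$ into a $\gamma^2/(1-2\alpha_0)$ tail, and attribute the delicacy to $v_j$ being random. Neither is right. After passing to a \emph{fixed} $u\in\mathcal{U}$ (which you do correctly via the geometric estimate), there is no randomness-in-$S$ issue left; but Markov on $\tr(\P_u\P_{\rm avg})$ alone can only deliver $\gamma/(1-\alpha_0)$, never $\gamma^2$. The paper's trick is to exploit the \emph{complementary-pair} subsampling: since $\widehat S^{(2\ell-1)}$ and $\widehat S^{(2\ell)}$ come from disjoint halves of the data, they are independent, and the elementary bound $ab\geq a+b-1$ for $a,b\in[0,1]$ gives
\[
\tfrac{1}{B}\textstyle\sum_{\ell}\tr(\P_u\P_{\widehat S^{(\ell)}})\geq 1-\alpha_0
\ \Longrightarrow\
\tfrac{2}{B}\textstyle\sum_{\ell=1}^{B/2}\tr(\P_u\P_{\widehat S^{(2\ell-1)}})\tr(\P_u\P_{\widehat S^{(2\ell)}})\geq 1-2\alpha_0.
\]
Markov on the right-hand average, together with independence to factor the expectation of each product, yields the bound $\gamma^2/(1-2\alpha_0)$ per $u$, and a union bound over $|\mathcal{U}|\leq p-s^\star$ finishes. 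Without this pairing step your argument stalls at a $\gamma$ bound.

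\textbf{The second geometric estimate.} You claim $|\tr(\P_{v_j}\P_{\rm avg})-\tr(\P_u\P_{\rm avg})|=\mathcal{O}(\mathrm{poly}(\cdot)\sqrt{\log p/n})$, but this is too optimistic: the discrepancy also carries an $\eta_1$-term of size roughly $20\sqrt{\eta_1^2/(1+\eta_1^2)}+8\,\eta_1^2/(1+\eta_1^2)$, which is precisely the excess $f_1(\eta_1)-1$. The paper does not compare $\P_{v_j}$ with $\P_u$ directly; it works \emph{per subsample}, writing $r_j=u-\P_{X_S}u$ and using that $\mathrm{col}(X_S)$ is within $\|\nabla\|_F\leq 2\sqrt{\eta_1^2/(1+\eta_1^2)}+\mathcal{O}(s^\star\sqrt{\log p/n})$ of a subspace $\mathcal{F}^{(\ell)}_S\subseteq\mathrm{col}(X_{\widehat S^{(\ell)}})$. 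This gives $\tr(\P_{r_j}\P_{\widehat S^{(\ell)}})\leq \tr(\P_u\P_{\widehat S^{(\ell)}})+10\|\nabla\|_F+2\|\nabla\|_F^2$, and averaging over $\ell$ yields the $f_1$ expression. Your $\sqrt{\log p/n}$-only error would force $f_1=1$, which is inconsistent with the theorem statement. Relatedly, note that the paper applies this step for $S\in\mathcal{S}$ (Lemma~\ref{lem:U-set-property2} is stated only for such $S$), arguing forward that noise is never added to a good set; your backward framing (``if $S\in\mathfrak{S}$ contains noise $j$ then look at $S\setminus\{j\}$'') needs $S\setminus\{j\}$ to be good, which is not automatic when $S$ could contain several noise features.
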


We proof Theorem \ref{prop:block-consist-subsample} in Section \ref{sec:sup-proof-blockfsss}. Theorem \ref{prop:block-consist-subsample} reveals that, under certain conditions, the subspaces spanned by FSSS selection sets closely resemble the oracle subspace $\spa(X_{S^\star})$ with high probability. Consequently, false discovery can be immediately controlled, as established in Corollary \ref{cor:pfer2}.

\vspace{0.1in}
\begin{corollary}\label{cor:pfer2}
    Under the assumptions of Theorem \ref{prop:block-consist-subsample}, we further assume that the FSSS algorithm stops before it selects $s_0$ features. Then, every FSSS selection set $S$ satisfies the FPE bound: 
    \begin{align*}
    \begin{split}
        \mathbb{E} \left[\mathrm{FPE}(S,S^\star) \right]
        &\leq  \frac{\eta_1^2}{1 + \eta_1^2} \left[ 1 - \left( p - s^\star \right) \cdot \gamma^2 / (1 - 2 \alpha_0) \right]  + \mathcal{O}\left( (s^\star)^2 \sqrt{\log p / n} \right) \\
        & \quad  + 
    s_0 \left(  p - s^\star \right) \cdot \gamma^2 / (1 - 2 \alpha_0).
    \end{split}
    \end{align*}
\end{corollary}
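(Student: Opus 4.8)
The plan is to split on the high-probability event supplied by Theorem~\ref{prop:block-consist-subsample} and to bound $\mathrm{FPE}(S,S^\star)$ separately on that event and on its complement. Set $\mathcal{E}:=\{\mathfrak{S}\subseteq\mathcal{S}\}$, so that Theorem~\ref{prop:block-consist-subsample} gives $\mathbb{P}(\mathcal{E})\ge q:=1-(p-s^\star)\gamma^{2}/(1-2\alpha_0)$. Every set $S$ returned by FSSS belongs to $\mathfrak{S}$, so on $\mathcal{E}$ we have $S\in\mathcal{S}$. Writing
\[
\mathbb{E}\!\left[\mathrm{FPE}(S,S^\star)\right]=\mathbb{E}\!\left[\mathrm{FPE}(S,S^\star)\,\mathds{1}_{\mathcal{E}}\right]+\mathbb{E}\!\left[\mathrm{FPE}(S,S^\star)\,\mathds{1}_{\mathcal{E}^{c}}\right],
\]
it suffices to control each term.

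\emph{First term.} On $\mathcal{E}$ we bound $\mathrm{FPE}(S,S^\star)=\tr\!\big(\P_{\mathrm{col}(X_{S})}\P_{\mathrm{col}(X_{S^\star})^{\perp}}\big)$ uniformly over $S\in\mathcal{S}$. If $N^\star_{\mathcal{B}}\neq\varnothing$ then $\mathcal{S}=\{S^\star\}$ and this quantity equals $0$. If $N^\star_{\mathcal{B}}=\varnothing$ then $S\in\mathcal{S}_0$; since $S$ and $S^\star$ share all the individual signals in $S^\star_{\mathcal{I}}$ and all but at most one block parent, their column spaces both contain a common subspace $V$ of dimension $|S|-1$, and each equals $V$ plus a single extra direction, namely $\P_{V^{\perp}}X_{c}$ for $S$ and $\P_{V^{\perp}}X_{k}$ for $S^\star$, where $k$ is the swapped-out parent. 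Using $\P_{V^{\perp}}X_{c}=\P_{V^{\perp}}X_{k}+\P_{V^{\perp}}\delta$ together with $\|\delta\|_{2}=\eta_1$, the near-orthogonality of the features in $X_{\mathcal{K}\cup\mathcal{I}}$ and of $\delta$ (pairwise correlations of order $\sqrt{\log p/n}$), and $\dim V\le s^\star$, a one-dimensional principal-angle computation yields $\mathrm{FPE}(S,S^\star)=\sin^{2}\theta\le \eta_1^{2}/(1+\eta_1^{2})+\mathcal{O}\!\big((s^\star)^{2}\sqrt{\log p/n}\big)=:c_0$. This is exactly the type of subspace-closeness estimate already established inside the proof of Theorem~\ref{prop:block-consist-subsample} in Section~\ref{sec:sup-proof-blockfsss}, so I would invoke it there rather than redo the computation. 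Hence $\mathbb{E}\big[\mathrm{FPE}(S,S^\star)\,\mathds{1}_{\mathcal{E}}\big]\le c_0\,\mathbb{P}(\mathcal{E})$.

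\emph{Second term.} On $\mathcal{E}^{c}$ I use only the trivial bound $\mathrm{FPE}(S,S^\star)=|S|-\tau(S,S^\star)\le|S|$ (as $\tau\ge 0$), together with the stopping hypothesis that FSSS terminates before selecting $s_0$ features, so that $|S|\le s_0$. Hence $\mathbb{E}\big[\mathrm{FPE}(S,S^\star)\,\mathds{1}_{\mathcal{E}^{c}}\big]\le s_0\,\mathbb{P}(\mathcal{E}^{c})$.

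\emph{Combining.} Adding the two estimates, $\mathbb{E}[\mathrm{FPE}(S,S^\star)]\le c_0\,\mathbb{P}(\mathcal{E})+s_0\,\mathbb{P}(\mathcal{E}^{c})$. Since $c_0\le s_0$ (indeed $c_0<1$ for $n,p$ large, using $s^\star\le(n/\log p)^{1/8}$), the map $x\mapsto c_0 x+s_0(1-x)$ is nonincreasing on $[q,1]$, so replacing $\mathbb{P}(\mathcal{E})$ by its lower bound $q$ gives $\mathbb{E}[\mathrm{FPE}(S,S^\star)]\le c_0 q+s_0(1-q)$. Substituting $c_0=\eta_1^{2}/(1+\eta_1^{2})+\mathcal{O}((s^\star)^{2}\sqrt{\log p/n})$, absorbing $q\cdot\mathcal{O}((s^\star)^{2}\sqrt{\log p/n})$ into $\mathcal{O}((s^\star)^{2}\sqrt{\log p/n})$ by $q\le 1$, and using $1-q=(p-s^\star)\gamma^{2}/(1-2\alpha_0)$, we obtain exactly the claimed bound. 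The only genuine obstacle is the principal-angle estimate in the first term, i.e.\ controlling the discrepancy between $\mathrm{col}(X_S)$ and $\mathrm{col}(X_{S^\star})$ across a single parent/child swap; this reduces to a one-dimensional perturbation argument that is already part of the proof of Theorem~\ref{prop:block-consist-subsample}, and everything else is bookkeeping, mirroring the way Corollary~\ref{cor:pfer} follows from its accompanying results.
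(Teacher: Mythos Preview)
Your proposal is correct and matches the paper's approach: both split on the event from Theorem~\ref{prop:block-consist-subsample}, bound $\mathrm{FPE}$ by $\eta_1^{2}/(1+\eta_1^{2})+\mathcal{O}\big((s^\star)^{2}\sqrt{\log p/n}\big)$ on the good event and by $s_0$ on its complement, then take expectations. One small correction: the single-swap subspace estimate you need is \emph{not} established inside the proof of Theorem~\ref{prop:block-consist-subsample} (that proof bounds $\sigma_{|S|}(\P_{X_S}\P_{\rm avg}\P_{X_S})$ and $\tr(\P_{r_j}\P_{\rm avg})$, not $\tr(\P_{X_S}\P_{X_{S^\star}^\perp})$). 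The paper does this calculation directly in the proof of Corollary~\ref{cor:pfer2}, via the decomposition $\P_{X_S}=\sum_{j\in S\setminus\{c\}}\P_{X_j}+\P_{X_{j_0}+\delta}\mathds{1}_{\{X_{j_0}+\delta\in\spa(X_S)\}}+\Delta$ with $\|\Delta\|_2=\mathcal{O}(s^\star\sqrt{\log p/n})$, then uses $\tr(\P_{X_{j_0}+\delta}\P_{X_{S^\star}})\ge 1/(1+\eta_1^2)+\mathcal{O}(\sqrt{\log p/n})$. Your principal-angle argument (shared $V$ of dimension $s^\star-1$, one nonzero angle $\theta$ with $\sin^2\theta\le\eta_1^2/(1+\eta_1^2)$ up to lower-order terms) reaches the same bound and is equally valid; just present it as its own computation rather than as a citation.
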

We proof Corollary \ref{cor:pfer2} in Section \ref{sec:sup-proof-blockcor}.

\vspace{0.1in}
\begin{remark}\label{rem:theory-block}
    We study the behavior of $\gamma$ in Theorem \ref{prop:block-consist-subsample} throught a very special case. Suppose that the subspace estimation indices $\mathscr{D}$ and the index estimation indices $\mathscr{T}$ are a partition of $\{1, \dots, n\}$. For each partition, $\mathcal{K}$, $\{\delta\}$, $\mathcal{I}$ and $\epsilon$ are perfectly orthogonal: $Z_i \perp Z_j$ for all $Z_i\neq Z_j \in X_\mathcal{K} \cup \{\delta\} \cup \mathcal{I}\cup \{\epsilon\}$. In this case,
    $$
    \gamma = \frac{s_0 - s^\star}{p - s^\star} + 
    \max\left\{ 
    \sum_{m=0}^{|N^\star_\mathcal{B}|} \frac{( |S^\star_\mathcal{B}| +m )  }{ K + \eta_1^2 } \frac{\binom{|N^\star_\mathcal{B}| }{m} \binom{p - s^\star -|N^\star_\mathcal{B}| -1 }{s_0 - s^\star - m}  }{ \binom{p-s^\star}{s_0 - s^\star} } , \quad
    \sum_{m=0}^{|N^\star_\mathcal{B}| - 1} \frac{ 1 }{ |N^\star_\mathcal{B}| - m + \eta_1^2 } \frac{\binom{|N^\star_\mathcal{B}|-1 }{m} \binom{p - s^\star -|N^\star_\mathcal{B}| -1 }{s_0 - s^\star - m}  }{ \binom{p-s^\star}{s_0 - s^\star} }
    \right\}.
    $$
    See Section \ref{sec:sup-proof-blockrem} for a proof.
    Figure \ref{fig:assump_block} shows a numerical experiment of $\gamma$, the probability of $\mathfrak{S} \subseteq \mathcal{S}$, and the FPE upper bound. We observe a meaningful region where FSSS achieves uniformly lower FPE upper bounds than the base procedure, attributable to the benefits of subsampling.

    \begin{figure}[!ht]
    \centering
    \includegraphics[width=\textwidth]{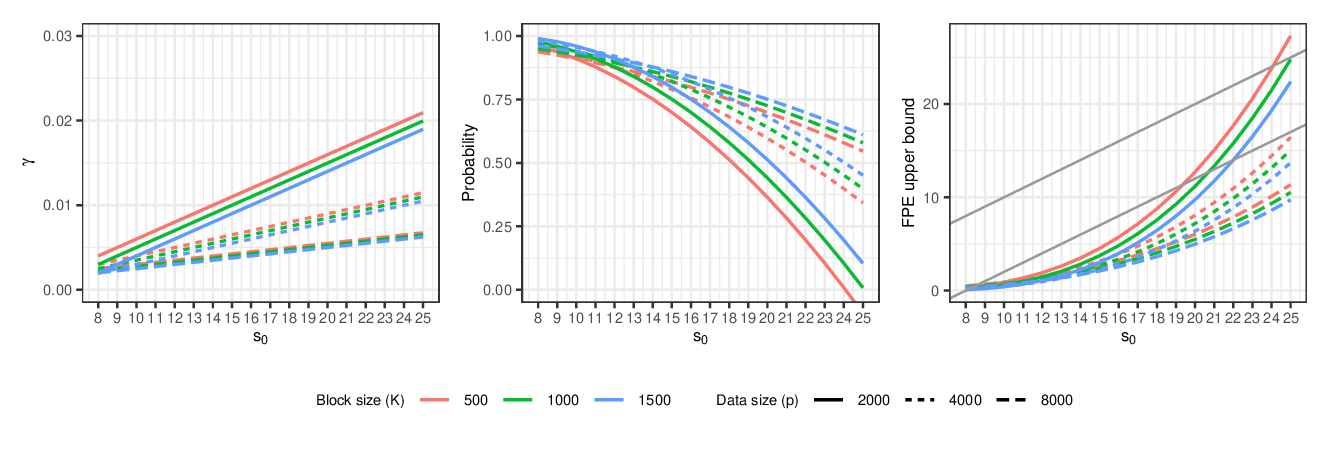}
    \caption{\small A numerical experiment illustrating the behavior of $\gamma$ and its related quantities in Theorem \ref{prop:block-consist-subsample}, with $\alpha_0 = 0.1$, $\eta_1 = 0.01$, $|S^\star_\mathcal{I} |=5$, and $|N^\star_{\mathcal{B}}| = \lfloor 0.998 K \rfloor$. Left panel: $\gamma$ values for different $s_0$. Middle panel: the probability $1 - (p-s^\star) \cdot \gamma^2 / (1 - 2\alpha_0)$ for different $s_0$. Right panel: the FPE upper bounds for different $s_0$, where the gray lines indicate the upper bound and lower bound for $\ell_0$-base procedure respectively.} 
    \label{fig:assump_block}
    \end{figure}

\end{remark}

\vspace{0.2in}
\section{Simulation and data application details}

\subsection{Setup of Figure \ref{fig:stability}}
\label{sec:sup-figure1}

Figure \ref{fig:stability} presents stability paths of each individual feature on a synthetic dataset $(X\in\mathbb{R}^{n\times p}, y\in\mathbb{R}^n)$. This dataset contains $n = 100$ observations and $p = 82$ features, organized into 8 cluster blocks, 2 dependency blocks, and 50 individual features. Specifically, for each cluster $\mathcal{C}_k = \{k, k+1, k+2\}$ where $k\in\{1, 4, 7, 10, 13, 16, 19, 22\}$, the representative feature is generated as $X_k \overset{iid}{\sim} \mathcal{N}(0, I_n)$. The proxies are defined as $X_j = X_k + \delta_j$ for $j\in\{k+1, j+2\}$, where the perturbation directions follow $\delta_j \overset{iid}{\sim} \mathcal{N}(0, 0.2^2 I_n)$. For the first dependency block $\mathcal{B}_1 = \{25,26,27,28\}$, the parents $X_{25}$ and $X_{26}$ are iid drawn from $\mathcal{N}(0, I_n)$. Their first child is defined as $X_{27} = X_{25} + X_{26} + \delta_{27}$, and the second as $X_{28} = X_{25} - X_{26} + \delta_{28}$, where the perturbations follows $\delta_j \overset{iid}{\sim} \mathcal{N}(0, 0.2^2)$, $j\in\{27,28\}$. The second dependency block $\mathcal{B}_2 = \{29, 30, 31, 32\}$ is generated in the same way as $\mathcal{B}_1$, with $X_{29}$ and $X_{30}$ as the parents, and $X_{31}$ and $X_{32}$ as the children. The remaining 50 individual features are iid drawn from $\mathcal{N}(0, I_n)$. All representative features, perturbation directions, and individual features are independently generated.

We assign coefficients one to all representative features, while setting the coefficients of their proxies as zero. For each dependency block, the two parent features are assigned coefficients of 1.5 and 1, respectively, whereas the child features receive zero coefficients. All remaining individual features are treated as noise with zero coefficients. Letting $\beta^\star$ denote the coefficient vector, the response variable is generated as $y = X\beta^\star + \epsilon$ where $\epsilon \overset{iid}{\sim} \mathcal{N}(0, 0.2^2 I_n)$. Under this setup, the ``signal'' features in the legend of Figure \ref{fig:stability} refer to all representative features in clusters, and all parents features from dependency blocks. The ``correlated signal'' refers to the cluster proxies and the children in dependency blocks. Finally, the ``noise'' features refer to those individual features with zero coefficients.

We clarify the quantities computed for each panel in Figure \ref{fig:stability}. We perform subsampling using $\ell_0$-penalized regression as the base procedure, with a total of 100 subsamples. The tuning parameter is the number of features selected in each subsample.
In the left panel, corresponding to standard stability selection, the stability value for each feature refers to its selection proportion across the subsamples. In the middle panel, corresponding to cluster stability selection, the stability value refers to the selection proportion of the cluster to which each feature belongs. Clusters are determined using hierarchical clustering with a distance metric defined as $1 - \mathrm{cor}(X_j, X_k)$ for any two features $X_j$ and $X_k$, and a cutoff height of 0.2.
In the right panel, under our proposed subspace framework, the stability of each feature is measured by the subspace-based quantity $\pi(X_j)$.

\begin{figure}[!ht]
    \centering
    \includegraphics[width=\textwidth]{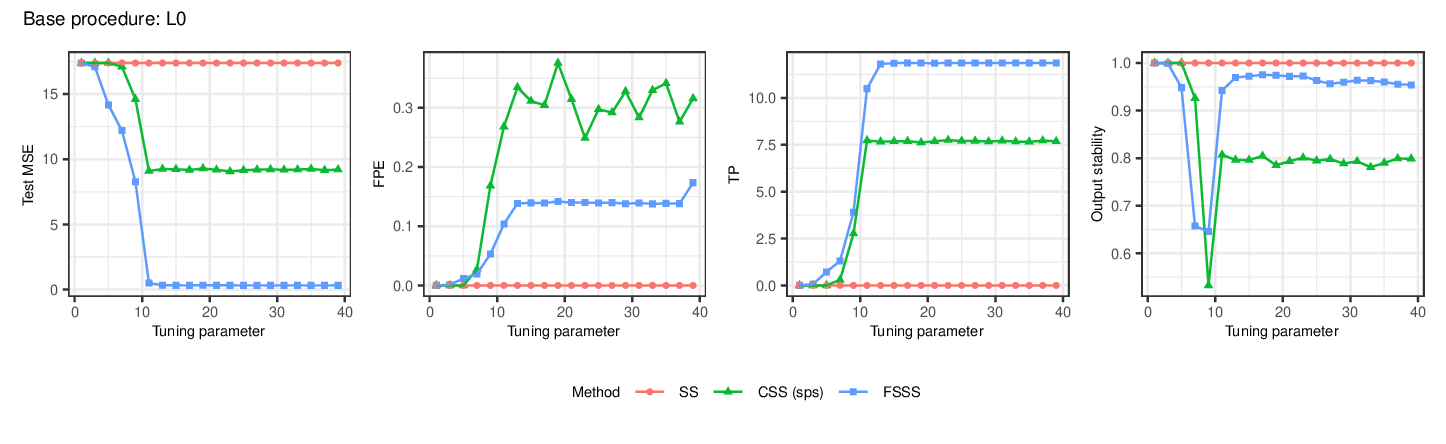}
    \caption{\small Performance of SS, sparsity (sps) CSS, and FSSS on the dataset of Figure \ref{fig:stability} with $\alpha = 0.8$. The subsampling uses $\ell_0$-regression as the base procedure, and takes $B=100$ subsamples. The tuning parameter is the number of selected features for each subsample. Points on the plot represent averages across 2 model sizes to smooth the plot.} 
    \label{fig:teaser_perform}
\end{figure}

Lastly, we evaluate the performance of stability selection, sparse cluster stability selection, and FSSS on this dataset with $\alpha = 0.8$. Specifically, we apply feature selection and coefficient estimation on this dataset, and report the test MSE, FPE, TP, and output stability using a separate test set drawn from the same distribution. We repeat this procedure $M = 50$ times. As shown in Figure \ref{fig:teaser_perform}, stability selection hardly makes any selection, and FSSS uniformly outperforms CSS in terms of test MSE, FPE, TP, and output stability.

\vspace{0.2in}
\subsection{More on the simulations}
\label{sec:sup-synthetic}

In Section \ref{sec:simulation}, we compared the performance of different methods across tuning parameters $s_0 \in \{2, 3, \dots, 41\}$. Figure \ref{fig:synthetic_perform_l0} and \ref{fig:synthetic_perform_lasso} display the evaluation metrics for all $s_0$.

\begin{figure}[!ht]
    \centering
    \includegraphics[width=\textwidth]{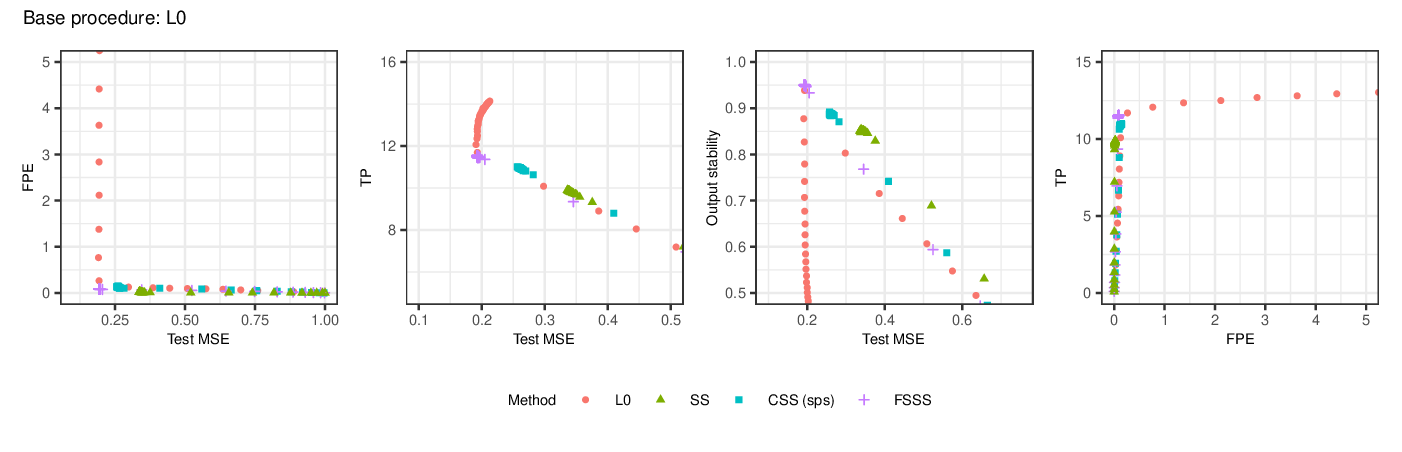}
    \caption{\small Performance of SS, sparsity (sps) CSS, and FSSS on synthetic datasets. The subsampling uses $\ell_0$-regression as the base procedure, and takes $B=200$ subsamples. The tuning parameter is the number of selected features for base procedure.} 
    \label{fig:synthetic_perform_l0}
\end{figure}

\begin{figure}[!ht]
    \centering
    \includegraphics[width=\textwidth]{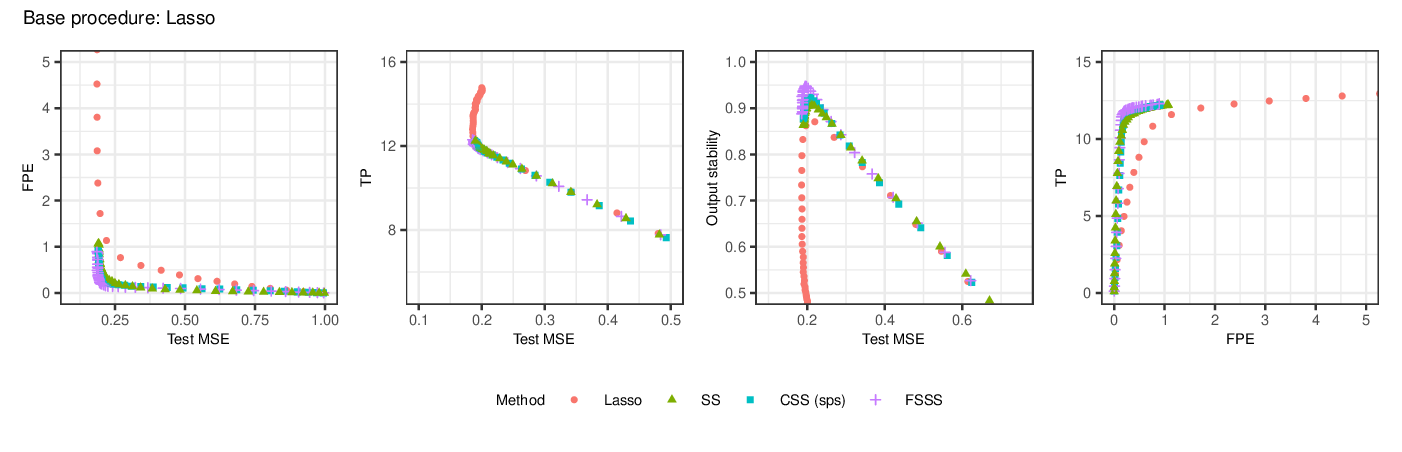}
    \caption{\small Performance of SS, sparsity (sps) CSS, and FSSS on synthetic datasets. The subsampling uses Lasso as the base procedure, and takes $B=200$ subsamples. The tuning parameter is the number of selected features for base procedure.} 
    \label{fig:synthetic_perform_lasso}
\end{figure}
\FloatBarrier

\vspace{0.2in}
\subsection{More on the breast cancer dataset}
\label{sec:sup-real-data}

All 50 selection sets in $\mathfrak{S}$ are listed in Table \ref{tab:breast-selection-set}.

\vspace{0.1in}
\begin{table}[!ht]
\small
\centering
\begin{tabular}{|l|l|}
\hline
$HEPH, ACSL1, TBC1D9$                           & $ABI1, TFAP2B, DNAJC12$                       \\ \hline
$PRKAR1A, EIF5B, PTPRC_2, CEP55, S100A14$       & $ALOX5AP, TRBC1_2, H3F3A, CEP55$              \\ \hline
$CA12_2, WASL, H2BFS, ASPN$                     & $COL10A1, G3BP2, IGHM, ADH1B_2$               \\ \hline
$AMD1, TOP2A, ELOVL2, IGK_3$                    & $DDX24, CKS2, GABRP, ATP8B1, HLA\text{-}DQB1_6$      \\ \hline
$ALDH3B2, COL18A1, RRM2_2, ATP5G2P1$            & $ADIPOQ, CA12_3, KDM4B, GREM1$                \\ \hline
$KCNK1, TFF1, ADIPOQ, ARF1$                     & $HES1, MTMR6, IGKV1OR2\text{-}108, PRC1, S100A14$    \\ \hline 
$SSPN, ADH1B_2, OSM, IGKV1OR2\text{-}108, S100A14$    & $NFIB_2, RRM2_2, MYCN, PLA2G12A$        \\ \hline
$GABRP, BLNK, CCNB1, ABHD2$                     & $FOXA1, MYCN, MFAP4, KIF4A$                   \\ \hline
$ALCAM, NEK2, IGLC1, UBE2G1$                    & $ESR1, DSC3, ADH1B, EP300_2, S100A14$         \\ \hline
$FHL1, LY96, TOP1, COPS4, S100A14$              & $SFRP1, GULP1, TRBC1_2, RHOB$                 \\ \hline
$SNRPD1, DUSP4_2, PTPRC$                        & \begin{tabular}[c]{@{}l@{}}$SPARC, ANXA3, IL6ST_3, HLA\text{-}DQB1_6, CEP55,$ \\ $S100A14$ \end{tabular} \\ \hline
$THBS4, CA12_3, TAF9B$                          & $AEBP1, SFRP1_2, BCL2, IFI44L, EIF4G1$        \\ \hline
$INSIG1, CA12, CHRDL1, RBM25, HLA\text{-}DQB1_6$      & $SLC7A5, MSH3, MUC1, ID4$               \\ \hline
$HSPA8, IGLC1_3, S100A14, PDGFD$                & $SF1, IGK, EXOC5$                             \\ \hline
$GBP1, GABRP, MXRA8, GM2A$                      & $LPL, ISLR, G3BP2, TBC1D9$                    \\ \hline
$IAPP, BUB1, HLA\text{-}DQB1_6, RRN3$           & $MMP11_2, NFIB_2, DCN_2, HLA\text{-}DQB1_6, CKLF$    \\ \hline
$UBE2S, MTUS1, NOTCH2NL, CA12_5$                & $WASL, CD36, NAT1, S100A14$                   \\ \hline
$MELK, SSPN, HLA\text{-}DQB1_6, FBXO3, S100A14$ & $HTRA1, NR3C1, HLA\text{-}DQB1_4, N4BP2L2_2$  \\ \hline
$CCL5, LOC101928189, S100A14, EAF2"$            & $MAD2L1, EIF3J, GOLGA8A_2, IGK_3, S100A14$    \\ \hline
$RRM2, ERBB4, IGK_3$                            & $HTRA1, OSM, IGK_2, S100A14, CSPP1$           \\ \hline
$CD55, CRIP1, PLIN1, UBXN4$                     & $FOSB, C1QB, COPS4, S100A14$                  \\ \hline
$TOP1, DKK3, HLA\text{-}DQB1_6, S100A14, C2orf47$     & $CCL5, H2AFZ, SFRP1, VCAN_3, KDM4B$               \\ \hline
$ACAP1, KRT7, CALM1, ASPN$                      & $CA12_2, ADH1B_2, SLC35A3, MTUS1, HLA\text{-}DQB1_6$    \\ \hline
$CAV1, COL11A1, HLA\text{-}DQB1, CA12_3, VENTXP1$     & $EIF4G1, PTGER3, HLA\text{-}DQB1_6, S100A14, CKLF$       \\ \hline
\begin{tabular}[c]{@{}l@{}}$ALDH3B2, IGK_3, CEP55, S100A14, ADAMTS5,$ \\ $SETD8$ \end{tabular}        & $SOX9, CALU, DNAJC12$          \\ \hline
\end{tabular}
\caption{\small The 50 selection sets obtained by FSSS from the breast cancer dataset.}
\label{tab:breast-selection-set}
\end{table}

\clearpage
\section{Proofs}\label{sec:sup-proof}
\subsection{Proof of equation \texorpdfstring{\eqref{eq:stability-explain}}{(6)}}
\label{sec:equivalence}
We prove the result that
$$\pi(S) = \min_{z \in \mathrm{col}(X_S)} \frac{1}{B} \sum_{\ell=1}^B \allowbreak\mathrm{trace}\allowbreak(\mathcal{P}_{\mathrm{col}(z)}\mathcal{P}_{\mathrm{col}(\widehat{S}^{(\ell)})}),
$$
when $X_S$ are linearly independent. This relation follows from the next lemma.
\begin{lemma}
    For a subspace $T\subseteq \mathbb{R}^n$, and a positive semi-definite matrix $M$, we have $\sigma_{\mathrm{dim}(T)}(\P_T M \P_T) = \min_{z\in T} \tr(\P_{\mathrm{col}(z)} M ).$ By plugging in $M = \mathcal{P}_{\mathrm{avg}}$, we have $\sigma_{\mathrm{dim}(T)}(\P_T \mathcal{P}_\mathrm{avg} \P_T) = \min_{z\in T} \tr(\P_{\mathrm{col}(z)}\mathcal{P}_\mathrm{avg}  ) = \min_{z\in T} \frac{1}{B}\sum_{\ell=1}^B\tr(\P_{\mathrm{col}(z)}\P_{\mathrm{col}(X_{\widehat{S}^{(\ell)}})})$. 
\end{lemma}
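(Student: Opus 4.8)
The plan is to reduce the identity to the Courant--Fischer (Rayleigh quotient) characterization of the smallest eigenvalue of a symmetric PSD matrix, after compressing the operator $\P_T M \P_T$ to the subspace $T$. Throughout we read $\min_{z \in T}$ as a minimum over \emph{nonzero} $z \in T$ (equivalently, over unit vectors of $T$), which is the only sensible reading since $\P_{\mathrm{col}(0)} = 0$; and if $T = \{0\}$ both sides are $0$ by convention, so assume $d := \mathrm{dim}(T) \geq 1$.

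First I would fix an orthonormal basis of $T$, collected as the columns of a matrix $Q \in \mathbb{R}^{n \times d}$, so that $\P_T = QQ^\top$ and $Q^\top Q = I_d$. Then $\P_T M \P_T = Q\,(Q^\top M Q)\,Q^\top$, and $A := Q^\top M Q \in \mathbb{R}^{d\times d}$ is PSD since $M$ is. The key bookkeeping step is that the nonzero eigenvalues of $\P_T M \P_T$ (which, being PSD, are exactly its nonzero singular values) coincide with the eigenvalues of $A$: if $Aw = \lambda w$ with $\lambda \neq 0$ then $Qw$ is an eigenvector of $\P_T M \P_T$ for the same eigenvalue, and conversely any eigenvector of $\P_T M \P_T$ with nonzero eigenvalue lies in $\mathrm{col}(Q) = T$ and pulls back, via $Q^\top$, to an eigenvector of $A$. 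Since $\rank(\P_T M \P_T) \leq d$, the $d$-th largest singular value of $\P_T M \P_T$ picks out precisely the smallest eigenvalue of the $d$-dimensional compression (padding with zeros when $A$ is singular), i.e. $\sigma_{\mathrm{dim}(T)}(\P_T M \P_T) = \lambda_{\min}(A)$.

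Then I would apply Courant--Fischer to $A$: $\lambda_{\min}(A) = \min_{w \neq 0} \tfrac{w^\top A w}{w^\top w} = \min_{w \neq 0} \tfrac{(Qw)^\top M (Qw)}{\|Qw\|_2^2}$, where the last equality uses $\|Qw\|_2^2 = w^\top Q^\top Q w = w^\top w$. As $w$ ranges over $\mathbb{R}^d \setminus \{0\}$, the vector $z = Qw$ ranges over $T \setminus \{0\}$, so this equals $\min_{z \in T} \tfrac{z^\top M z}{\|z\|_2^2}$. Finally $\tfrac{z^\top M z}{\|z\|_2^2} = \tr\!\big(\tfrac{z z^\top}{z^\top z} M\big) = \tr(\P_{\mathrm{col}(z)} M)$, which gives $\sigma_{\mathrm{dim}(T)}(\P_T M \P_T) = \min_{z \in T} \tr(\P_{\mathrm{col}(z)} M)$. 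The second assertion follows immediately by taking $M = \mathcal{P}_{\mathrm{avg}} = \tfrac1B \sum_{\ell=1}^B \P_{\mathrm{col}(X_{\widehat S^{(\ell)}})}$, which is PSD as an average of orthogonal projections, and using linearity of the trace to move the sum outside. The argument is essentially routine linear algebra; the only point demanding care is the spectral correspondence between the $n\times n$ operator $\P_T M \P_T$ and its $d\times d$ compression $Q^\top M Q$, together with the correct indexing of $\sigma_{\mathrm{dim}(T)}$ in the (rank-deficient) case where $Q^\top M Q$ is itself singular.
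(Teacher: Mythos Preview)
Your proof is correct and follows essentially the same approach as the paper: both reduce the claim to the Rayleigh--quotient characterization of the smallest eigenvalue of $\P_T M \P_T$ restricted to $T$, then identify $\tfrac{z^\top M z}{\|z\|_2^2}$ with $\tr(\P_{\mathrm{col}(z)} M)$. The only cosmetic difference is that you compress to the $d\times d$ matrix $A = Q^\top M Q$ via an orthonormal basis and invoke Courant--Fischer, whereas the paper works directly in $\mathbb{R}^n$ by expanding $z \in T$ in the eigenbasis of $\P_T M \P_T$ and handling the singular case separately.
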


\begin{proof}
    Notice that
\begin{align*}
\begin{split}
    \sigma_{\mathrm{dim}(T)}(\P_T M \P_T) 
&= \min_{z\in T, \|z\|_2 = 1} z^\top \P_{T}M\P_T{z}\\
&= \min_{z \in T,\|z\|_2=1} z^\top{M}z \\
&= \min_{z \in T, \|z\|_2=1} \tr(zz^\top{M})\\
&= \min_{z \in T} \tr(\P_{\mathrm{col}(z)}{M}).
\end{split}
\end{align*}
To see why the first equality is true, let $\lambda_1\geq \lambda_2\geq \dots \geq \lambda_{\mathrm{dim}(T)} \geq 0$ be the first $\mathrm{dim}(T)$ singular values of $\P_T{M}\P_T$. Suppose $\lambda_{\mathrm{dim}(T)}>0$. Let $u_1,\dots,u_{\mathrm{dim}(T)}$ be the corresponding eigenvectors. Then, any $z \in T$ with $\|z\|_2 = 1$ can be expressed as the linear combination $z = \sum_{i= 1}^{\mathrm{dim}(T)}\alpha_i{u}_i$ with $\sum_{i}\alpha_i^2 = 1$. Then, a straightforward calculation yields $z^\top \P_{T}M\P_T{z} = \sum_{i}\alpha_i^2\lambda_i$. Thus, $\min_{z\in T, \|z\|_2 = 1} z^\top \P_{T}{M}\P_T{z}= \lambda_{\mathrm{dim}(T)}$, as desired. Now suppose $\lambda_{\mathrm{dim}(T)} = 0$. This shows that there exists $z \in T$ such that $\P_{T}{M}\P_T{z} = 0$. Since $\mathcal{P}_T{M}\P_{T}$ is positive semi-definite, we have that in this setting, $\min_{z\in T, \|z\|_2 = 1} z^\top \P_{T}{M}\P_T{z} = 0 = \lambda_{\mathrm{dim}(T)}$.
\end{proof}

\vspace{0.2in}
\subsection{The clustering setup}
\label{sec:sup-proof-cluster}

\emph{Notation:} We additionally define an operator that identifies the representative features associated with a given set $S$, defined as $\mathrm{RF}(S):= \{k \in \mathcal{K}: S \cap \mathcal{C}_k \neq \varnothing \}$. 

\subsubsection{Preliminaries}
\begin{lemma}\label{lem:cluster-angle}
    Suppose that the following correlation terms are upper bounded by $\eta_0$: $\max_{k\neq k' \in \mathcal{K} } \mathrm{Corr}(X_k, X_{k'})$,
    $\max_{k\in\mathcal{K}, j\in\mathcal{V}} \mathrm{Corr}(X_k, \delta_j)$,
    $\max_{j\neq j' \in\mathcal{V}} \mathrm{Corr}(\delta_j, \delta_{j'})$, 
    $\max_{k\in\mathcal{K} } \mathrm{Corr}(X_k, \epsilon)$, and
    $\max_{ j\in\mathcal{V}} \mathrm{Corr}(\delta_j, \epsilon)$.
    We assume $\eta_0 = \mathcal{O}(\sqrt{\log p / n})$.
    Then we have
    \begin{enumerate}[label = \normalfont(\arabic*)]
        \item For $k \in \mathcal{K}$, $j \in \mathcal{V}_{k'}$, $k\neq k'$, $\tr(\P_{X_k} \P_{X_j}) \in \left[ 0, \ 
        \frac{ \eta_0^2 (1 + \eta_1)^2 }{ 1 - 2\eta_1\eta_0 } \right] $. Additionally, $\tr(\P_{X_k}\P_{X_j}) \leq \mathcal{O}(\log p / n)$.
        
        \item For $j\in \mathcal{V}_{k_1}$, $l\in\mathcal{V}_{k_2}$, $k_1\neq k_2$, $\tr(\P_{X_j} \P_{X_l}) \in \left[ 0,\ \frac{ \eta_0^2 (1 + \eta_1)^4 }{ (1 - 2\eta_1 \eta_0)^2 } \right]$. Additionally, $\tr(\P_{X_j}\P_{X_l}) \leq \mathcal{O}(\log p / n)$.

        \item For $j\in \mathcal{V}_{k_1}$, $l\in\mathcal{V}_{k_2}$, $k_1\neq k_2$, $\tr(\P_{X_j} \P_{\delta_l}) \in \left[ 0,\ \frac{ \eta_0^2 (1 + \eta_1)^2 }{ 1 - 2\eta_1 \eta_0 } \right]$. Additionally, $\tr(\P_{X_j} \P_{\delta_l}) \leq \mathcal{O}(\log p / n)$.

        \item For $k\in\mathcal{K}$, $j \in \mathcal{V}_k$, $\tr(\P_{X_k} \P_{X_j}) \in \left[ \frac{(1-\eta_0\eta_1)^2}{ 1 + \eta_1^2 + 2\eta_0\eta_1},\ 1 \right]$. Additionally, $\tr(\P_{X_k} \P_{X_j}) \geq \frac{1}{1+\eta_1^2} - \mathcal{O}(\sqrt{\log p / n}) $.

        \item For $j\neq l \in \mathcal{V}_k$, $\tr(\P_{X_j} \P_{X_l}) \in \left[ \frac{(1 - 2\eta_0\eta_1 - \eta_0\eta_1^2)^2}{ (1 + \eta_1^2 + 2\eta_0\eta_1)^2 }, \ 1  \right]$. Additionally, $\tr(\P_{X_j} \P_{X_l}) \geq \frac{1}{(1 + \eta_1^2)^2} - \mathcal{O}(\sqrt{\log p / n}) $.

        \item For $j\neq l \in \mathcal{V}_k$, $\tr(\P_{X_j} \P_{\delta_l}) \in \left[ 0, \ 
        \frac{ \eta_0^2 (1 + \eta_1)^2 }{ 1 - 2\eta_1\eta_0 } \right]$. Additionally, $\tr(\P_{X_j} \P_{\delta_k}) \leq \mathcal{O}(\log p / n)$.

        \item For $j \in \mathcal{V}_k$, $\tr(\P_{X_j} \P_{\delta_j}) \in \left[ 0, \  \frac{ (\eta_1 + \eta_0)^2 }{1 + \eta_1^2 - 2\eta_1 \eta_0} \right]$. Additionally, $\tr(\P_{X_j} \P_{\delta_j}) \leq \frac{\eta_1^2}{1 + \eta_1^2} + \mathcal{O}(\sqrt{\log p / n})$.

        \item For $j\in \mathcal{V}_k$, $\tr(\P_{X_j} \P_\epsilon) \in \left[ 0,\ \frac{ \eta_0^2 (1 + \eta_1)^2 }{ (1 - 2\eta_1 \eta_0)^2 } \right]$. Additionally, $\tr(\P_{X_j} \P_\epsilon) \leq \mathcal{O}(\log p / n)$.

    \end{enumerate}
\end{lemma}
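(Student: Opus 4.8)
The starting point is that every subspace appearing in the statement is one--dimensional: it is the column space of a single vector drawn from $\{X_k,\,X_j,\,\delta_j,\,\epsilon\}$. Hence for any two such vectors $a,b$ we have $\tr(\P_{\mathrm{col}(a)}\P_{\mathrm{col}(b)})=\tr\!\big(\tfrac{aa^\top}{\|a\|_2^2}\tfrac{bb^\top}{\|b\|_2^2}\big)=\frac{(a^\top b)^2}{\|a\|_2^2\|b\|_2^2}=\mathrm{Corr}(a,b)^2\in[0,1]$. This immediately gives the lower bound $0$ in every item and the upper bound $1$ used in items (4) and (5), so the entire lemma reduces to estimating a single squared correlation in each case.

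For each item I would carry out the same three steps. First, expand $a^\top b$ by substituting $X_j=X_k+\delta_j$ (resp. $X_l=X_{k'}+\delta_l$) wherever a proxy feature occurs, then apply the triangle inequality together with the hypothesized correlation bounds: $|X_k^\top X_{k'}|\le\eta_0$, $|X_k^\top\delta_j|\le\eta_0\eta_1$, $|\delta_j^\top\delta_l|\le\eta_0\eta_1^2$, $|X_k^\top\epsilon|\le\eta_0\|\epsilon\|_2$, $|\delta_j^\top\epsilon|\le\eta_0\eta_1\|\epsilon\|_2$, using $\|X_k\|_2=1$ and $\|\delta_j\|_2=\eta_1$. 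For instance in item (1), $|X_k^\top X_j|\le\eta_0+\eta_0\eta_1=\eta_0(1+\eta_1)$; in item (7), $X_j^\top\delta_j=X_k^\top\delta_j+\|\delta_j\|_2^2$ so $|X_j^\top\delta_j|\le\eta_1(\eta_0+\eta_1)$. Second, lower-bound the denominator norms by the same device, e.g. $\|X_j\|_2^2=1+2X_k^\top\delta_j+\eta_1^2\ge 1-2\eta_0\eta_1+\eta_1^2\ge 1-2\eta_0\eta_1$, and $\|\delta_l\|_2^2=\eta_1^2$ exactly. Third, divide. Two points deserve attention. (a) In items (3), (6), (7) the exact denominator factor $\|\delta_l\|_2^2=\eta_1^2$ cancels an $\eta_1^2$ arising in the numerator bound, which is precisely what yields the advertised form $\eta_0^2(1+\eta_1)^2/(1-2\eta_0\eta_1)$ (resp.\ $(\eta_1+\eta_0)^2/(1+\eta_1^2-2\eta_0\eta_1)$) rather than something smaller; one should keep or drop the $+\eta_1^2$ in the norm bounds consistently so as to land on exactly the stated constants. (b) In items (4) and (5) one must additionally note that the relevant inner product ($1+X_k^\top\delta_j$, resp.\ $1+X_k^\top\delta_l+\delta_j^\top X_k+\delta_j^\top\delta_l$) is \emph{positive}, which holds since $\eta_0\eta_1\ll 1$, so that squaring preserves the lower bound $1-\eta_0\eta_1$ (resp.\ $1-2\eta_0\eta_1-\eta_0\eta_1^2$).

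Finally, each ``additionally'' clause follows by substituting $\eta_0=\mathcal{O}(\sqrt{\log p/n})$ and expanding to first order. In the ``small-correlation'' items (1), (2), (3), (6), (8) the bound is $\mathcal{O}(\eta_0^2)=\mathcal{O}(\log p/n)$, since the remaining rational factor in $\eta_1$ is bounded ($\eta_1\ll 1$). In items (4), (5), (7) one writes the bound as a rational function of $\eta_0$ and Taylor-expands at $\eta_0=0$, e.g.\ $\tfrac{(1-\eta_0\eta_1)^2}{1+\eta_1^2+2\eta_0\eta_1}=\tfrac{1}{1+\eta_1^2}-\mathcal{O}(\eta_0)$, $\tfrac{(1-2\eta_0\eta_1-\eta_0\eta_1^2)^2}{(1+\eta_1^2+2\eta_0\eta_1)^2}=\tfrac{1}{(1+\eta_1^2)^2}-\mathcal{O}(\eta_0)$, $\tfrac{(\eta_0+\eta_1)^2}{1+\eta_1^2-2\eta_0\eta_1}=\tfrac{\eta_1^2}{1+\eta_1^2}+\mathcal{O}(\eta_0)$, and $\mathcal{O}(\eta_0)=\mathcal{O}(\sqrt{\log p/n})$. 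I do not expect a genuine obstacle here: the lemma is elementary linear algebra plus careful bookkeeping, and the only delicate point is matching the exact constants in the denominators, which is purely a matter of deciding in each case whether to retain the small positive term $\eta_1^2$ (or $\eta_0^2\eta_1^2$) when passing to a one-sided bound.
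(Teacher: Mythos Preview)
Your proposal is correct and follows essentially the same route as the paper's proof: both reduce every item to bounding $\mathrm{Corr}(a,b)^2=(a^\top b)^2/(\|a\|_2^2\|b\|_2^2)$ by substituting $X_j=X_k+\delta_j$, applying the triangle inequality to the numerator, and one-sided bounding the denominator norms. The paper's write-up uses the normalization $\xi_j=\delta_j/\|\delta_j\|_2$ as a bookkeeping device but is otherwise identical, and your explicit remarks about the $\eta_1^2$ cancellations in (3), (6), (7) and the positivity needed for the lower bounds in (4), (5) are exactly the care points the paper handles implicitly.
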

\begin{proof}

    \noindent\textbf{Part (1).} Let $\xi_j = \delta_j / \|\delta_j\|_2$, then
    $$
    \tr(\P_{X_k} \P_{X_j}) 
    = \frac{|X_k^\top X_j|^2}{\|X_k\|^2_2 \cdot \|X_j\|^2_2} 
    = \frac{|X_k^\top (X_{k'} + \delta_j)|^2}{\|X_{k'} + \delta_j\|_2^2}
    \leq \frac{ (|X_k^\top X_{k'}| + |X^\top_k \xi_j| \cdot \|\delta_j\|_2 )^2 }{ 1 - 2 |X_{k'}^\top \xi_j| \cdot \|\delta_j\|_2  }
    \leq \frac{ \eta_0^2 (1 + \eta_1)^2 }{1 - 2 \eta_0\eta_1}.
    $$

    \noindent\textbf{Part (2).}
    Let $\xi_j = \delta_j / \|\delta_j\|_2$, and $\xi_l = \delta_l / \|\delta_l\|_2$. Then
    \begin{align*}
    \begin{split}
        \tr(\P_{X_j}\P_{X_l}) 
    &= \frac{|X_j^\top X_l|^2 }{ \|X_j\|^2_2 \cdot \|X_l\|_2^2 }
    = \frac{ |(X_{k_1} + \delta_j)^\top (X_{k_2} + \delta_l)|^2  }{ \|X_{k_1} + \delta_j\|^2_2 \cdot \|X_{k_2} + \delta_l\|^2_2 }  \\
    &\leq \frac{( |X_{k_1}^\top X_{k_2}| + |X_{k_1}^\top \xi_l| \cdot \|\delta_l\|_2 + |\xi_j^\top X_{k_2} | \cdot \|\delta_j\|_2 + |\xi_j^\top \xi_l| \cdot \|\delta_j\|_2 \|\delta_l\|_2 )^2  }{ (1 - 2 |X_{k_1}^\top \xi_j| \cdot \|\delta_j\|_2) (1 - 2 |X_{k_2}^\top \xi_l| \cdot \|\delta_l\|_2) }
    \leq \frac{ \eta_0^2 (1 +\eta_1)^4 }{(1 - 2\eta_1 \eta_0)^2}.
    \end{split}
    \end{align*}

    \noindent\textbf{Part (3).}
    Let $\xi_j = \delta_j / \|\delta_j\|_2$, and $\xi_l = \delta_l / \|\delta_l\|_2$. Then
    \begin{align*}
    \begin{split}
        \tr(\P_{X_j}\P_{\delta_l}) 
    &= \frac{|X_j^\top \delta_l|^2 }{ \|X_j\|^2_2 \cdot \|\delta_l\|_2^2 }
    = \frac{ |(X_{k_1} + \delta_j)^\top \delta_l|^2  }{ \|X_{k_1} + \delta_j\|^2_2 \cdot \|\delta_l\|^2_2 }  
    \leq \frac{( |X_{k_1}^\top \xi_{l}| \cdot \|\delta_l\|_2 + |\xi_{j}^\top \xi_l| \cdot \|\delta_j\|_2 \cdot \|\delta_l\|_2  )^2  }{ (1 - 2 |X_{k_1}^\top \xi_j| \cdot \|\delta_j\|_2) \cdot \eta_1^2 } \\
    &\leq \frac{ \eta_0^2  (1 +\eta_1)^2 }{1 - 2\eta_1 \eta_0}.
    \end{split}
    \end{align*}

    \noindent\textbf{Part (4).}
    Let $\xi_j = \delta_j / \|\delta_j\|_2$. Then
    \begin{align*}
    \begin{split}
        \tr(\P_{X_k}\P_{X_j}) 
    &= \frac{|X_k^\top X_j|^2 }{ \|X_k\|^2_2 \cdot \|X_j\|_2^2 }
    = \frac{ |X_{k}^\top (X_k + \delta_j)|^2  }{ \|X_{k} \|^2_2 \cdot \|X_k + \delta_j\|^2_2 }  
    \geq \frac{( 1 - |X_{k}^\top \xi_j| \cdot \|\delta_j\|_2  )^2  }{ 1 + \eta_1^2 + 2 |X_{k}^\top \xi_j| \cdot \|\delta_j\|_2  }
    \geq \frac{ (1-\eta_1\eta_0)^2 }{ 1 + \eta_1^2 + 2\eta_1 \eta_0}.
    \end{split}
    \end{align*}

    \noindent\textbf{Part (5).}
    Let $\xi_j = \delta_j / \|\delta_j\|_2$, and $\xi_l = \delta_l / \|\delta_l\|_2$. Then
    \begin{align*}
    \begin{split}
        \tr(\P_{X_j}\P_{X_l}) 
    &= \frac{|X_j^\top X_l|^2 }{ \|X_j\|^2_2 \cdot \|X_l\|_2^2 }
    = \frac{ |(X_{k} + \delta_j)^\top (X_k + \delta_l)|^2  }{ \|X_{k} + \delta_j \|^2_2 \cdot \|X_k + \delta_l\|^2_2 }  \\
    &\geq \frac{( 1 - |X_{k}^\top \xi_l| \cdot \|\delta_l\|_2 - |X_{k}^\top \xi_j| \cdot \|\delta_j\|_2 - |\xi_j^\top \xi_l| \cdot \|\delta_j\|_2 \cdot \|\delta_l\|_2  )^2  }{ (1 + \eta_1^2 + 2 |X_{k}^\top \xi_j| \cdot \|\delta_j\|_2) (1 + \eta_1^2 + 2 |X_{k}^\top \xi_l| \cdot \|\delta_l\|_2)  }
    \geq \frac{ (1-2\eta_0\eta_1 - \eta_0\eta_1^2)^2 }{ (1 + \eta_1^2 + 2\eta_0\eta_1)^2 }.
    \end{split}
    \end{align*}

    \noindent\textbf{Part (6).}
    Let $\xi_j = \delta_j / \|\delta_j\|_2$, and $\xi_l = \delta_l / \|\delta_l\|_2$. Then
    \begin{align*}
    \begin{split}
        \tr(\P_{X_j}\P_{\delta_l}) 
    &= \frac{|X_j^\top \delta_l|^2 }{ \|X_j\|^2_2 \cdot \|\delta_l\|_2^2 }
    = \frac{ |(X_{k} + \delta_j)^\top \delta_l|^2  }{ \|X_{k} + \delta_j \|^2_2 \cdot \eta_1^2 } 
    \leq \frac{(|X_{k}^\top \xi_l| \cdot \|\delta_l\|_2 + |\xi_j^\top \xi_l| \cdot \|\delta_j\|_2 \cdot \|\delta_l\|_2   )^2  }{ (1 - 2 |X_{k}^\top \xi_j| \cdot \|\delta_j\|_2) \cdot \eta_1^2  }
    \leq \frac{ \eta_0^2 (1 + \eta_1)^2 }{ 1 -  2\eta_1 \eta_0}.
    \end{split}
    \end{align*}

    \noindent\textbf{Part (7).}
    Let $\xi_j = \delta_j / \|\delta_j\|_2$. Then
    \begin{align*}
    \begin{split}
        \tr(\P_{X_j}\P_{\delta_j}) 
    &= \frac{|X_j^\top \delta_j|^2 }{ \|X_j\|^2_2 \cdot \|\delta_j\|_2^2 }
    = \frac{ |(X_{k} + \delta_j)^\top \delta_j|^2  }{ \|X_{k} + \delta_j \|^2_2 \cdot \eta_1^2 }  
    \leq \frac{(|X_{k}^\top \xi_j| \cdot \|\delta_j\|_2 + \eta_1^2   )^2  }{ (1 + \eta_1^2 - 2 |X_{k}^\top \xi_j| \cdot \|\delta_j\|_2) \cdot \eta_1^2  }
    \leq \frac{ (\eta_0 + \eta_1)^2 }{ 1 + \eta_1^2 -  2\eta_1 \eta_0}.
    \end{split}
    \end{align*}

    \noindent\textbf{Part (8).}
    Let $\xi_j = \delta_j / \|\delta_j\|_2$, and $\varepsilon = \epsilon / \|\epsilon\|_2$. Then
    \begin{align*}
    \begin{split}
        \tr(\P_{X_j}\P_{\epsilon}) 
    &= \frac{|X_j^\top \epsilon|^2 }{ \|X_j\|^2_2 \cdot \|\epsilon\|_2^2 }
    = \frac{ |(X_{k} + \delta_j)^\top \epsilon|^2  }{ \|X_{k} + \delta_j \|^2_2 \cdot \|\epsilon\|^2 }  
    \leq \frac{(|X_{k}^\top \varepsilon| \cdot \|\epsilon\|_2 + |\xi_j^\top \varepsilon| \cdot \|\delta_j\|_2 \cdot \|\epsilon\|_2   )^2  }{ (1 - 2 |X_{k}^\top \xi_j| \cdot \|\delta_j\|_2) \cdot \|\epsilon\|_2^2  }
    \leq \frac{ \eta_0^2 (1 + \eta_1)^2 }{ 1 -  2\eta_1 \eta_0}.
    \end{split}
    \end{align*}

\end{proof}

\vspace{0.1in}
\begin{lemma}\label{lem:proj-decomp}
    Let $\Xi = \spa(\{Z_j\}_{j\in S})$, and $\Delta = \P_{\Xi} - \sum_{j\in S} \P_{Z_j}$. Then 
    $$
    \|\Delta\|_F^2 \leq 2 \sum_{j < l \in S} \tr(\P_{Z_j} \P_{Z_l}).
    $$
\end{lemma}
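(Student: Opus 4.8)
The plan is to compute $\|\Delta\|_F^2$ directly as $\tr(\Delta^2)$, exploiting the fact that every $\P_{Z_j}$ interacts trivially with $\P_\Xi$ because $\mathrm{col}(Z_j)\subseteq\Xi$. First I would note that $\Delta$ is a difference of orthogonal projection matrices, hence symmetric, so $\|\Delta\|_F^2=\tr(\Delta^\top\Delta)=\tr(\Delta^2)$.

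Next, I would expand the square. Writing $Q:=\sum_{j\in S}\P_{Z_j}$, we have $\Delta^2=\P_\Xi^2-\P_\Xi Q-Q\P_\Xi+Q^2$. Since each $\mathrm{col}(Z_j)\subseteq\Xi$, one has $\P_\Xi\P_{Z_j}=\P_{Z_j}\P_\Xi=\P_{Z_j}$, so $\P_\Xi Q=Q\P_\Xi=Q$; together with $\P_\Xi^2=\P_\Xi$ and $\P_{Z_j}^2=\P_{Z_j}$ this gives $Q^2=\sum_{j}\P_{Z_j}+\sum_{j\neq l}\P_{Z_j}\P_{Z_l}=Q+\sum_{j\neq l\in S}\P_{Z_j}\P_{Z_l}$, and therefore
$$
\Delta^2=\P_\Xi-2Q+Q^2=\P_\Xi-Q+\sum_{j\neq l\in S}\P_{Z_j}\P_{Z_l}=\Delta+\sum_{j\neq l\in S}\P_{Z_j}\P_{Z_l}.
$$

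Taking traces, $\tr(\Delta)=\dim(\Xi)-\sum_{j\in S}\tr(\P_{Z_j})=\dim(\Xi)-|S|\le 0$, because $\Xi=\spa(\{Z_j\}_{j\in S})$ is spanned by $|S|$ vectors so $\dim(\Xi)\le|S|$, while each $\P_{Z_j}$ is a rank-one (hence trace-one) projection. Moreover, by symmetry of the trace, $\tr(\P_{Z_j}\P_{Z_l})=\tr(\P_{Z_l}\P_{Z_j})$, so $\tr\big(\sum_{j\neq l\in S}\P_{Z_j}\P_{Z_l}\big)=2\sum_{j<l\in S}\tr(\P_{Z_j}\P_{Z_l})$. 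Combining, $\|\Delta\|_F^2=\tr(\Delta)+2\sum_{j<l\in S}\tr(\P_{Z_j}\P_{Z_l})\le 2\sum_{j<l\in S}\tr(\P_{Z_j}\P_{Z_l})$, as claimed.

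This argument is almost entirely a bookkeeping computation; the only two points that need a moment of care are the cancellation of the cross terms $\P_\Xi Q$ and $Q\P_\Xi$ (which rests on $\mathrm{col}(Z_j)\subseteq\Xi$) and the sign of $\tr(\Delta)$ (which rests on $\dim(\spa\{Z_j\})\le|S|$). Neither is a genuine obstacle, so I would expect the whole proof to be three or four lines.
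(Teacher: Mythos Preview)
Your proposal is correct and follows essentially the same approach as the paper: both expand $\|\Delta\|_F^2=\tr(\Delta^2)$ and use $\P_\Xi\P_{Z_j}=\P_{Z_j}$ together with $\P_{Z_j}^2=\P_{Z_j}$ to reduce to $\tr(\P_\Xi)-\sum_{j\in S}\tr(\P_{Z_j})+2\sum_{j<l}\tr(\P_{Z_j}\P_{Z_l})$. The paper writes the last step as an equality (implicitly taking $\dim(\Xi)=|S|$), whereas you correctly note $\dim(\Xi)\le|S|$ to get the stated inequality; otherwise the arguments are identical.
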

\begin{proof}
    Note that
    \begin{align*}
    \begin{split}
        \|\Delta\|_F^2 
    &=\tr\left( \P_{\Xi} - \sum_{j\in S} \P_{Z_j} \right)\left( \P_{\Xi} - \sum_{j\in S} \P_{Z_j} \right) \\
    &= \tr(\P_{\Xi}) - 2 \sum_{j\in S} \tr(\P_{\Xi} \P_{Z_j}) + \sum_{j\in S} \tr(\P_{Z_j}) + 2 \sum_{j < l \in S} \tr(\P_{Z_j} \P_{Z_l}) 
    = 2 \sum_{j < l \in S} \tr(\P_{Z_j} \P_{Z_l}).
    \end{split}
    \end{align*}
\end{proof}

\vspace{0.1in}
\begin{lemma}\label{lem:between-cluster}
    Suppose that $Z_1 \in \spa( \{ X_{\mathcal{C}_{k_1}} \})$ and $Z_2 \in \spa( \{ X_{\mathcal{C}_{k_2} } \})$ with $k_1 \neq k_2$ and $|\mathcal{C}_{k_1}| = |\mathcal{C}_{k_1}| = D$. Under the condition of Lemma \ref{lem:cluster-angle}, 
    \begin{enumerate}[label = \normalfont(\arabic*)]
        \item $\left\|\P_{Z_1} \P_{X_{k_2}} \right\|_2 \leq 2D \eta_0$.

        \item $\left\|\P_{Z_1} \P_{\epsilon} \right\|_2 \leq 2D \eta_0$.

        \item $\left\| \P_{Z_1} \P_{Z_2} \right\|_2 \leq 2D^2 (\eta_0^2 + \eta_0) $.
    \end{enumerate}
\end{lemma}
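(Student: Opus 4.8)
The plan is to pass from the individual vectors $Z_1,Z_2$ to the full cluster subspaces $\Xi_i := \spa(X_{\mathcal{C}_{k_i}})$ and then exploit near-orthogonality of a convenient spanning set of each $\Xi_i$. Since $Z_i\in\Xi_i$, we have $\P_{Z_i} = \P_{Z_i}\P_{\Xi_i}$, so submultiplicativity of the operator norm gives $\|\P_{Z_1}\P_{X_{k_2}}\|_2\le\|\P_{\Xi_1}\P_{X_{k_2}}\|_2$, $\|\P_{Z_1}\P_\epsilon\|_2\le\|\P_{\Xi_1}\P_\epsilon\|_2$, and $\|\P_{Z_1}\P_{Z_2}\|_2\le\|\P_{\Xi_1}\P_{\Xi_2}\|_2$; it therefore suffices to prove all three bounds with each $Z_i$ replaced by $\Xi_i$. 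Next I would write $\Xi_k = \spa(\{X_k\}\cup\{\delta_j:j\in\mathcal{V}_k\})$, a subspace spanned by $|\mathcal{C}_k| = D$ vectors whose pairwise correlations are all at most $\eta_0$ (by the correlation hypotheses imported from Lemma~\ref{lem:cluster-angle}), and apply Lemma~\ref{lem:proj-decomp} to that spanning set to obtain $\P_{\Xi_k} = A_k + \Delta_k$, where $A_k := \P_{X_k} + \sum_{j\in\mathcal{V}_k}\P_{\delta_j}$ is a sum of $D$ rank-one projections and, because $\tr(\P_a\P_b) = \mathrm{Corr}(a,b)^2\le\eta_0^2$ for each of the $\binom{D}{2}$ distinct pairs, $\|\Delta_k\|_2\le\|\Delta_k\|_F\le(2\binom{D}{2}\eta_0^2)^{1/2} = \sqrt{D(D-1)}\,\eta_0\le D\eta_0$.

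For parts (1) and (2), I would handle both cases uniformly: let $v$ be a unit vector that is ``cross-cluster'' with respect to $\mathcal{C}_{k_1}$, meaning $\mathrm{Corr}(X_{k_1},v)\le\eta_0$ and $\mathrm{Corr}(\delta_j,v)\le\eta_0$ for every $j\in\mathcal{V}_{k_1}$; this holds for $v = X_{k_2}$ (giving part 1) and for $v = \epsilon/\|\epsilon\|_2$ (giving part 2), again by the hypotheses of Lemma~\ref{lem:cluster-angle}. Substituting the decomposition of $\P_{\Xi_1}$ into $\|\P_{\Xi_1}\P_v\|_2$, using the triangle inequality, and invoking $\|\P_a\P_b\|_2 = \mathrm{Corr}(a,b)$ for nonzero $a,b$, each of the $|\mathcal{C}_{k_1}| = D$ rank-one products is bounded by $\eta_0$, so $\|\P_{\Xi_1}\P_v\|_2\le D\eta_0 + \|\Delta_1\|_2\le D\eta_0 + D\eta_0 = 2D\eta_0$, which is exactly the claimed bound.

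For part (3), I would expand the bilinear form $\P_{\Xi_1}\P_{\Xi_2} = A_1A_2 + A_1\Delta_2 + \Delta_1A_2 + \Delta_1\Delta_2$. The term $A_1A_2$ is a sum of $D^2$ rank-one products $\P_a\P_b$ with $a$ ranging over the spanning vectors of $\mathcal{C}_{k_1}$ and $b$ over those of $\mathcal{C}_{k_2}$, each of norm $\mathrm{Corr}(a,b)\le\eta_0$, so $\|A_1A_2\|_2\le D^2\eta_0$; using $\|A_k\|_2\le\|\P_{\Xi_k}\|_2 + \|\Delta_k\|_2\le 1 + D\eta_0$ gives $\|A_1\Delta_2\|_2,\|\Delta_1A_2\|_2\le D\eta_0 + D^2\eta_0^2$, while $\|\Delta_1\Delta_2\|_2\le D^2\eta_0^2$, and summing these contributions yields $\|\P_{\Xi_1}\P_{\Xi_2}\|_2\le 2D^2(\eta_0^2+\eta_0)$ (a leftover $\mathcal{O}(D\eta_0)$ is absorbed into $\mathcal{O}(D^2\eta_0)$ once $D\ge 3$, with $D\in\{1,2\}$ checked directly). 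The main obstacle is purely organizational bookkeeping: one must count the cross-cluster rank-one products correctly ($D$ of them in parts (1)--(2), $D^2$ in part (3)) and confirm that the bound $\|\Delta_k\|_F\le D\eta_0$ together with submultiplicativity of the operator norm suffices to close each inequality; there is no deeper difficulty, since everything reduces to the reduction $\|\P_{Z_i}M\|_2\le\|\P_{\Xi_i}M\|_2$ and the cross-cluster correlation bound $\eta_0 = \mathcal{O}(\sqrt{\log p/n})$ guaranteed by the hypotheses of Lemma~\ref{lem:cluster-angle}.
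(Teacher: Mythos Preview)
Your proposal is correct and follows essentially the same approach as the paper: reduce from $Z_i$ to the full cluster subspace $\Xi_i$, rewrite $\Xi_i=\spa(\{X_{k_i}\}\cup\{\delta_j:j\in\mathcal{V}_{k_i}\})$, apply Lemma~\ref{lem:proj-decomp} to get $\P_{\Xi_i}=A_i+\Delta_i$ with $\|\Delta_i\|_2\le D\eta_0$, and then expand and bound term by term via the triangle inequality. The only cosmetic differences are in the part~(3) bookkeeping---you control the cross terms via $\|A_k\|_2\le 1+D\eta_0$ whereas the paper uses the cruder $\|A_k\|_2\le D$---but the decomposition, the key lemma, and the final bound are identical.
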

\begin{proof}
    Let $\mathcal{C}_{k_1} = \{{k_1}, {j_1}, \dots, {j_{k_1}}\}$ and $\mathcal{C}_{k_2} = \{{k_2}, {l_1}, \dots, {l_{k_2}}\}$. Denote $\Delta_1 = \P_{\{ X_{k_1}, \delta_{j_1}, \dots, \delta_{j_{k_1}} \}} - \P_{X_{k_1}} - \sum_{i\in\{j_1, \dots, j_{k_1}\} } \P_{\delta_{i}}$, and $\Delta_2 = \P_{\{ X_{k_2}, \delta_{l_1}, \dots, \delta_{l_{k_2}} \}} - \P_{X_{k_2}} - \sum_{i\in\{l_1, \dots, l_{k_2}\} } \P_{\delta_{i}}$. By Lemma \ref{lem:proj-decomp}, we have
    $$
    \|\Delta_1\|_2^2 \leq \|\Delta_1\|_F^2 \leq D^2 \eta_0^2, \quad
    \|\Delta_2\|_2^2 \leq \|\Delta_2\|_F^2 \leq D^2 \eta_0^2.
    $$
    Therefore,
    \begin{align*}
    \begin{split}
        \left\| \P_{Z_1} \P_{X_{k_2}} \right\|_2 
    &\leq \left\| \P_{\{ X_{k_1}, \delta_{j_1}, \dots, \delta_{j_{k_1}} \}} \P_{ X_{k_2} } \right\|_2 
    = \left\| ( \P_{X_{k_1}} + \P_{\delta_{j_1}} + \dots + \P_{\delta_{j_{k_1}}} + \Delta_1 )  \P_{X_{k_2}}  \right\|_2 \\
    &\leq D \eta_0 +  \|\Delta_1\|_2 
    \leq 2D \eta_0 .
    \end{split}
    \end{align*}
    Similarly, $\left\|\P_{Z_1} \P_{\epsilon} \right\|_2 \leq 2D \eta_0$. Moreover,
    \begin{align*}
    \begin{split}
        \left\| \P_{Z_1} \P_{Z_2} \right\|_2
    &\leq \left\| \P_{\{ X_{k_1}, \delta_{j_1}, \dots, \delta_{j_{k_1}} \}} \P_{\{ X_{k_2}, \delta_{l_1}, \dots, \delta_{l_{k_2}} \}} \right\|_2 \\
    &= \left\| ( \P_{X_{k_1}} + \P_{\delta_{j_1}} + \dots + \P_{\delta_{j_{k_1}}} + \Delta_1 ) ( \P_{X_{k_2}} + \P_{\delta_{l_1} } + \dots + \P_{\delta_{l_{k_2}}} + \Delta_2 ) \right\|_2 \\
    &\leq D^2 \eta_0^2 + D ( \|\Delta_1\|_2 + \|\Delta_2\|_2 ) + \|\Delta_1\|_2 \|\Delta_2\|_2 
    \leq 2D^2 (\eta_0^2 + \eta_0).
    \end{split}
    \end{align*}
    
\end{proof}

\vspace{0.1in}
\begin{lemma}\label{lem:eta1-gap}
    Under the condition of Lemma \ref{lem:cluster-angle}, for $j\in \mathcal{V}_k$, $\left\| \P_{X_j^\perp} X_k \right\|_2^2 \geq \frac{\eta_1^2 (1 - \eta_0^2)}{1 + \eta_1^2 + 2\eta_1\eta_0}$.
\end{lemma}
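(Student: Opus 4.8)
The plan is to obtain an exact closed-form expression for $\|\P_{X_j^\perp}X_k\|_2^2$ in terms of the single scalar $\delta_j^\top X_k$, and then bound that expression using the near-orthogonality hypothesis of Lemma~\ref{lem:cluster-angle}.

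First, since $j \in \mathcal{V}_k$ the proxy is $X_j = X_k + \delta_j$ with $\|X_k\|_2 = 1$ and $\|\delta_j\|_2 = \eta_1$. Writing $t := \delta_j^\top X_k$, the Pythagorean decomposition $\|\P_{X_j^\perp}X_k\|_2^2 = \|X_k\|_2^2 - (X_j^\top X_k)^2/\|X_j\|_2^2$ together with $X_j^\top X_k = 1 + t$ and $\|X_j\|_2^2 = 1 + 2t + \eta_1^2$ gives, after combining over a common denominator,
\[
\|\P_{X_j^\perp}X_k\|_2^2 = \frac{(1 + 2t + \eta_1^2) - (1+t)^2}{1 + 2t + \eta_1^2} = \frac{\eta_1^2 - t^2}{1 + 2t + \eta_1^2}.
\]

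Second, I would control $t$: the hypothesis of Lemma~\ref{lem:cluster-angle} bounds $\mathrm{Corr}(X_k,\delta_j)$ by $\eta_0$ for $j \in \mathcal{V}_k$, so $|t| = |\delta_j^\top X_k| \le \eta_0 \|X_k\|_2 \|\delta_j\|_2 = \eta_0\eta_1$. Since $\eta_0 = \mathcal{O}(\sqrt{\log p/n}) < 1$ for $n$ large, the numerator obeys $\eta_1^2 - t^2 \ge \eta_1^2(1 - \eta_0^2) \ge 0$, and the denominator, which equals $\|X_j\|_2^2 > 0$, obeys $1 + 2t + \eta_1^2 \le 1 + 2\eta_0\eta_1 + \eta_1^2$. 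Dividing the smaller nonnegative numerator by the larger positive denominator yields
\[
\|\P_{X_j^\perp}X_k\|_2^2 \ge \frac{\eta_1^2(1-\eta_0^2)}{1 + \eta_1^2 + 2\eta_1\eta_0},
\]
which is exactly the asserted bound.

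There is no genuine obstacle here: the argument is one line of algebra followed by two elementary estimates. The only point deserving care is the monotonicity step at the end---one must verify that both the numerator and the denominator of $\frac{\eta_1^2 - t^2}{1 + 2t + \eta_1^2}$ are nonnegative (which follows from $\eta_0 < 1$) before claiming that enlarging the denominator and shrinking the numerator can only decrease the ratio.
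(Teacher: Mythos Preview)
Your proof is correct and follows essentially the same approach as the paper: both compute the exact value of $\|\P_{X_j^\perp}X_k\|_2^2$ as a function of the single unknown $\delta_j^\top X_k$ (equivalently, of the angle between $\delta_j$ and $X_k$), and then bound it via $|\mathrm{Corr}(X_k,\delta_j)|\le\eta_0$. The paper derives the identical formula $\frac{\eta_1^2\cos^2\theta}{1+\eta_1^2+2\eta_1\sin\theta}$ via a trigonometric argument with a picture (where $\sin\theta=\mathrm{Corr}(X_k,\delta_j)$, so $t=\eta_1\sin\theta$), whereas you obtain it by a one-line Pythagorean expansion; your route is arguably cleaner and avoids the case split on the angle, but the two arguments are substantively the same.
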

\begin{proof}    
    When $\langle\delta_j, X_k \rangle \leq \pi/2$, let $\langle\delta_j, X_k \rangle = \pi/2 -\theta$ as shown in Figure \ref{fig:project}. Denote $h = \|X_k -\P_{X_j} X_k\|_2$. Then we have $\|\delta_j\|_2^2 \cdot (\cos\theta)^2 \cdot \|X_k\|^2_2 = h^2 \cdot [ \|\delta_j\|_2^2 \cos^2\theta + (\|X_k\|_2 + \|\delta_j\|_2 \sin\theta)^2]$, which implies
    $
    h^2 = \frac{\eta_1^2 \cos^2\theta}{1 + \eta_1^2 + 2\eta_1 \sin\theta}.
    $
    Moreover, note that $\sin^2\theta = \cos^2\langle \delta_j, X_k \rangle \leq \eta_0^2$, we have $h^2 \geq \frac{\eta_1^2 (1 - \eta_0^2)}{1 + \eta_1^2 2 \eta_1 \eta_0}$.
    
    \begin{figure}[!ht]
    \centering
    \includesvg[width=0.45\textwidth]{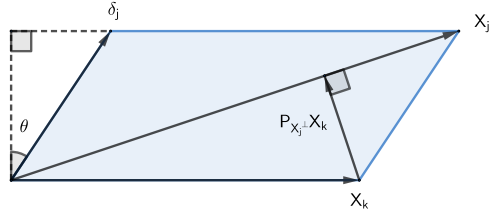}
    \caption{\small The projection of $X_k$ to $X_j$.}
    \label{fig:project}
    \end{figure}
    
    A similar argument can be shown when $\langle\delta_j, X_k \rangle \in (\pi, \pi/2]$.
\end{proof}

\vspace{0.1in}
\begin{lemma}\label{lem:sigmamin-quasi-proj}
    Suppose that $A = \sum_{i=1}^k X_i X_i^\top$ where $X_i\in\mathbb{R}^n$ with $n > k$ and $\|X_i\|_2 = 1$. In addition, $|X_i^\top X_j| \leq\eta_0 $ is small enough for any $i,j = 1, \dots, k$. Then when $\eta_0 < \frac{1}{k^3(k-1) }$,
    $$
    \lambda_k(A) \geq \frac{1 + \sqrt{1 - 4k(k-1)\eta_0 }}{2},\quad
    \text{and} \quad
    \lambda_{j}(A) = 0, \ \forall j > k.
    $$
\end{lemma}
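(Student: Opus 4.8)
\emph{Proof plan.} The plan is to transfer the problem from the $n\times n$ matrix $A$ to the $k\times k$ Gram matrix, where it reduces to an elementary spectral estimate. Collect the $X_i$ as columns of $X=[X_1\mid\cdots\mid X_k]\in\mathbb{R}^{n\times k}$, so that $A=XX^\top$. Then $\rank(A)\le k<n$, which already yields $\lambda_j(A)=0$ for every $j>k$; moreover the nonzero eigenvalues of $XX^\top$ coincide, with multiplicity, with those of $G:=X^\top X\in\mathbb{R}^{k\times k}$. Consequently, once $G$ is shown to be positive definite, $A$ has exactly $k$ positive eigenvalues and $\lambda_k(A)=\lambda_{\min}(G)$.

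Next I would lower bound $\lambda_{\min}(G)$ directly from its quadratic form. Write $G=I+E$ with $E_{ii}=0$ and $|E_{ij}|=|X_i^\top X_j|\le\eta_0$ for $i\ne j$. For any unit vector $v\in\mathbb{R}^k$,
\[
v^\top G v = 1 + 2\sum_{i<j} v_iv_j(X_i^\top X_j)\ \ge\ 1-\eta_0\bigl(\|v\|_1^2-\|v\|_2^2\bigr)\ \ge\ 1-(k-1)\eta_0,
\]
using $\|v\|_1^2\le k\|v\|_2^2=k$. Since the hypothesis $\eta_0<\tfrac{1}{k^3(k-1)}$ forces $(k-1)\eta_0<1$, this gives $\lambda_{\min}(G)\ge 1-(k-1)\eta_0>0$; hence $G\succ 0$, the reduction above applies, and $\lambda_k(A)=\lambda_{\min}(G)\ge 1-(k-1)\eta_0$.

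It then remains to verify the purely elementary inequality $1-(k-1)\eta_0\ge\tfrac{1+\sqrt{1-4k(k-1)\eta_0}}{2}$ and that the radicand is nonnegative. For $k=1$ both sides equal $1$ (and the hypothesis is vacuous), so assume $k\ge 2$; then $k^3(k-1)\ge 4k(k-1)$, so $\eta_0\le\tfrac{1}{4k(k-1)}$ and $1-4k(k-1)\eta_0\ge 0$. The target inequality is equivalent, both sides being nonnegative, to $1-2(k-1)\eta_0\ge\sqrt{1-4k(k-1)\eta_0}$, and squaring reduces it to $4(k-1)^2\eta_0^2\ge -4(k-1)\eta_0$, which is trivial. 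Chaining the bounds yields $\lambda_k(A)\ge\tfrac{1+\sqrt{1-4k(k-1)\eta_0}}{2}$, as claimed.

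I do not anticipate a substantive obstacle: the stated bound is considerably weaker than what the Gram-matrix estimate delivers, so the only care needed is bookkeeping to ensure $G$ is invertible and the radicand nonnegative, both of which follow comfortably from the smallness of $\eta_0$. If one prefers to argue in the style of the surrounding lemmas, the same conclusion follows from $\|A-\P_{\spa(\{X_i\})}\|_F^2\le 2\sum_{i<j}(X_i^\top X_j)^2\le k(k-1)\eta_0^2$ (Lemma~\ref{lem:proj-decomp}, using $\P_{X_i}=X_iX_i^\top$ since $\|X_i\|_2=1$) together with Weyl's inequality, once linear independence of the $X_i$ has been secured as above.
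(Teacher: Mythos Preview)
Your proof is correct and takes a genuinely different route from the paper. The paper argues on the $n\times n$ matrix directly: it writes $A^\top A = A + \Delta$ with $\|\Delta\|_2\le k(k-1)\eta_0$, deduces that every eigenvalue of $A$ satisfies $|\lambda(\lambda-1)|\le\|\Delta\|_2$, hence lies in one of two short intervals near $0$ or near $1$, and then uses the trace constraint $\sum_i\lambda_i=k$ together with the strong hypothesis $\eta_0<1/(k^3(k-1))$ in a contradiction argument to force $\lambda_k(A)$ into the upper interval. Your approach is more elementary and sharper: passing to the $k\times k$ Gram matrix $G=X^\top X=I+E$ and bounding its quadratic form gives $\lambda_k(A)=\lambda_{\min}(G)\ge 1-(k-1)\eta_0$ immediately, which already dominates the stated bound and in fact shows that the lemma holds under the much weaker assumption $\eta_0\le 1/(4k(k-1))$. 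Two minor remarks: the identity $\lambda_k(A)=\lambda_{\min}(G)$ holds unconditionally (both vanish when $G$ is singular), so the positive-definiteness check, while harmless, is not strictly needed for the reduction; and in your squaring step the right-hand side should be $-4(k-1)^2\eta_0$ rather than $-4(k-1)\eta_0$, though either way the inequality is trivially satisfied.
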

\begin{proof}
    Consider the equality 
    $
    A^\top A = A + \Delta
    $
    where
    \begin{align*}
    \begin{split}
        \| \Delta \|_2 
    &= \left\| A^\top A - A \right\|_2
    = \left\| \sum_{i=1}^k \sum_{j=1}^k X_i X_i^\top X_j X_j^\top - \sum_{i=1}^k X_i X_i^\top \right\|_2
    = \left\| \sum_{i \neq j} X_i X_i^\top X_j X_j^\top \right\|_2 \\
    &\leq 2 \sum_{i<j} \left\|X_i X_i^\top X_j X_j^\top \right\|_2
    \leq 2 \sum_{i<j} \eta_0 = k(k-1)\eta_0.
    \end{split}
    \end{align*}
    Suppose that $A v_i = \lambda_i v_i$, then
    $$
    v_i^\top A^\top A v_i = v_i^\top A v_i + v_i^\top \Delta v_i
    \quad\Longrightarrow\quad
    \lambda_i^2 = \lambda_i + v_i^\top \Delta v_i
    \quad\Longrightarrow\quad
    -\|\Delta\|_2 \leq \lambda_i(\lambda_i - 1) \leq \|\Delta\|_2.
    $$
    This implies that either
    $
    \frac{ 1 + \sqrt{1 - 4\|\Delta\|_2} }{2} \leq \lambda_i \leq \frac{ 1 + \sqrt{1 + 4\|\Delta\|_2} }{2}
    $
    or
    $
    \frac{ 1 - \sqrt{1 + 4\|\Delta\|_2} }{2} \leq \lambda_i \leq \frac{ 1 - \sqrt{1 - 4\|\Delta\|_2} }{2}.
    $
    Note that $\sum_{i=1}^n \lambda_i = \tr(A) = \sum_{i=1}^k \tr(X_i X_i^\top) = k$. Now we proof the range for $\lambda_k(A)$ by contradiction. Suppose that 
    $\lambda_k(A) \leq  \frac{ 1 - \sqrt{1 - 4\|\Delta\|_2} }{2}$, then 
    \begin{align*}
    \begin{split}
        \sum_{i=1}^k \lambda_i 
    &\leq \sum_{i=1}^{k-1} \frac{1 + \sqrt{1 + 4 \|\Delta\|_2}}{2} +  \frac{ 1 - \sqrt{1 - 4 \|\Delta\|_2} }{2} 
    = \frac{k-1}{2} (1 + \sqrt{1 + 4\|\Delta\|_2}) + \frac{1}{2} (1 - \sqrt{1 - 4\|\Delta\|_2}) \\
    &\leq \frac{k}{2} + \frac{k-1}{2} (1 + \sqrt{4\|\Delta\|_2}) - \frac{1}{2} (1 - \sqrt{4\|\Delta\|_2}) 
    \leq k-1 + \frac{k}{2} \sqrt{4 k(k-1)\eta_0 } \\
    &< (k-1) + 1 = k,
    \end{split}
    \end{align*}
    which leads to a contradiction. Hence we must have $\lambda_k(A) \geq \frac{1 + \sqrt{1 - 4\|\Delta\|_2}}{2}$.
    In addition, we know $\rank(A) \leq k$, which implies that $\lambda_j(A) = 0$ for all $j > k$.

\end{proof}

\vspace{0.1in}
\begin{lemma}\label{lem:U-set-property}
    Given any $S \subseteq \widetilde{S} \in \mathcal{S}$ and the set of ``undesired'' directions $\mathcal{U}$ in Definition \ref{def:quality_base_procedure}:
    \begin{enumerate}[label = \normalfont(\arabic*)]
        \item For any features $X_j$ where $j\in \bigcup_{k\in \mathcal{K} \cap N^\star} \mathcal{C}_k$ or $j\in \bigcup_{k\in \mathrm{RF}(S) } \mathcal{C}_k \setminus S$, denote $r_j :=X_j - \P_{X_S} X_j$. Then there must exist $u \in \mathcal{U}$ such that $r_j = u - \P_{X_S} u$ or $-r_j = u - \P_{X_S} u$.

        \item The $\mathcal{U}$ set is far away from $\spa(X_S)$: $\max_{u\in \mathcal{U},\ S\in\mathcal{S}} \tr(\P_u \P_{X_S}) \leq \frac{\eta_1^2}{1 + \eta_1^2} + \mathcal{O}(s^\star \sqrt{\log p / n})$.
    \end{enumerate}
\end{lemma}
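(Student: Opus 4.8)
The plan is to prove the two parts separately: Part~(1) by a finite case analysis on the cluster structure, and Part~(2) by reducing, through a near-orthogonality decomposition of $\P_{X_S}$, to the pairwise alignment bounds of Lemma~\ref{lem:cluster-angle}. For Part~(1), the key observation is that for every index $j$ considered in the statement one can write $X_j = v + u$ or $X_j = v - u$ with $v\in\mathrm{col}(X_S)$ and $u\in\mathcal{U}$; since $\P_{X_S}$ is linear and fixes $v$, this gives $r_j = X_j - \P_{X_S}X_j = \pm(u - \P_{X_S}u)$, exactly the claimed identity. To produce such a decomposition I would first record that $S\subseteq\widetilde{S}\in\mathcal{S}$ forces $\mathrm{RF}(S)\subseteq\mathrm{RF}(\widetilde{S})\subseteq S^\star$ and $|S\cap\mathcal{C}_k|=1$ for every $k\in\mathrm{RF}(S)$. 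Then I split cases: if $j$ lies in a noise cluster $\mathcal{C}_k$ with $k\in\mathcal{K}\cap N^\star$, the feature $X_j$ itself belongs to $\mathcal{U}$ (third block of Definition~\ref{def:quality_base_procedure}), so take $v=0$, $u=X_j$; if instead $j\in\mathcal{C}_k\setminus S$ with $k\in\mathrm{RF}(S)\subseteq S^\star$, let $i$ be the unique index in $S\cap\mathcal{C}_k$ and note that $X_j-X_i$ equals $\delta_j$ (when $i=k$), $-\delta_i$ (when $j=k$), or $\delta_j-\delta_i$ (when $i,j\in\mathcal{V}_k$); in each subcase this vector or its negative lies in $\mathcal{U}$ while $X_i\in\mathrm{col}(X_S)$. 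This exhausts the cases and proves Part~(1).

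For Part~(2), write $\hat u = u/\|u\|_2$, so $\tr(\P_u\P_{X_S}) = \hat u^\top\P_{X_S}\hat u$, and write $S = \{i_k : k\in S^\star\}$ with $i_k\in\mathcal{C}_k$. Applying Lemma~\ref{lem:proj-decomp} to $\{X_{i_k}\}_{k\in S^\star}$ gives $\P_{X_S} = \sum_{k\in S^\star}\P_{X_{i_k}} + \Delta$ with $\|\Delta\|_2^2 \le \|\Delta\|_F^2 \le 2\sum_{k<k'}\tr(\P_{X_{i_k}}\P_{X_{i_{k'}}})$. Every pair $(i_k,i_{k'})$, $k\ne k'$, is cross-cluster, so parts~(1)--(2) of Lemma~\ref{lem:cluster-angle} (together with the near-orthogonality of representative features assumed in the setup) bound each summand by $\mathcal{O}(\log p/n)$, whence $\|\Delta\|_2 \le \mathcal{O}(s^\star\sqrt{\log p/n})$. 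Consequently $\tr(\P_u\P_{X_S}) \le \sum_{k\in S^\star}\tr(\P_u\P_{X_{i_k}}) + \mathcal{O}(s^\star\sqrt{\log p/n})$, and it remains to bound the sum.

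To bound $\sum_k\tr(\P_u\P_{X_{i_k}})$, note that each $u\in\mathcal{U}$ lies in the span of $X_{\mathcal{C}_{k_0}}$ for a single cluster $k_0$ (the signal cluster containing the relevant proxy indices when $u=\delta_j$ or $u=\delta_j-\delta_l$; a noise cluster when $u=X_j$). For $k\ne k_0$ the vector $u$ is cross-cluster to $X_{i_k}$, so $\tr(\P_u\P_{X_{i_k}}) \le \mathcal{O}(\log p/n)$ --- directly from the setup's correlation bounds when $u=\delta_j$ or $u=X_j$, and by a short computation using $\|\delta_j-\delta_l\|_2^2 = 2\eta_1^2(1+\mathcal{O}(\sqrt{\log p/n}))$ when $u=\delta_j-\delta_l$. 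The only possibly non-negligible term is $\tr(\P_u\P_{X_{i_{k_0}}})$: it is again $\mathcal{O}(\log p/n)$ when $u=X_j$ ($k_0$ a noise cluster) and when $X_{i_{k_0}}$ is the representative or a proxy whose perturbation does not appear in $u$ (using Lemma~\ref{lem:cluster-angle}(6) and the setup bounds); in the single remaining configuration, $i_{k_0}=j$ with $u=\delta_j$ (resp.\ $i_{k_0}\in\{j,l\}$ with $u=\delta_j-\delta_l$), Lemma~\ref{lem:cluster-angle}(7) (resp.\ a direct estimate bounding the quantity by $\tfrac{\eta_1^2}{2(1+\eta_1^2)}+\mathcal{O}(\sqrt{\log p/n})$) gives $\tr(\P_u\P_{X_{i_{k_0}}}) \le \tfrac{\eta_1^2}{1+\eta_1^2}+\mathcal{O}(\sqrt{\log p/n})$. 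Summing the at most $s^\star-1$ negligible terms with this one yields $\sum_k\tr(\P_u\P_{X_{i_k}}) \le \tfrac{\eta_1^2}{1+\eta_1^2}+\mathcal{O}(s^\star\sqrt{\log p/n})$; adding the $\|\Delta\|_2$ term and taking the maximum over $u\in\mathcal{U}$ and $S\in\mathcal{S}$ completes Part~(2).

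The step I expect to demand the most care is the treatment of the directions $u=\delta_j-\delta_l$ in Part~(2): their norm is of order $\eta_1$, not of order $1$, so the cross-cluster estimates must be normalized correctly, and one must verify that the worst case over which proxy $S$ selects from the cluster containing $j$ and $l$ still yields a bound at most $\eta_1^2/(1+\eta_1^2)$. The remaining work --- confirming that the assorted $\mathcal{O}(\log p/n)$ and $\mathcal{O}(s^\star\log p/n)$ residuals collapse into the stated $\mathcal{O}(s^\star\sqrt{\log p/n})$ term --- is routine once $\log p/n\to 0$ is invoked.
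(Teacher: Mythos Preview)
Your proposal is correct and follows essentially the same route as the paper: Part~(1) is handled by the identical four-way case split on where $j$ and the unique selected index $i\in S\cap\mathcal{C}_k$ sit inside the cluster, and Part~(2) proceeds by the same decomposition $\P_{X_S}=\sum_{k}\P_{X_{i_k}}+\Delta$ via Lemma~\ref{lem:proj-decomp} followed by termwise application of Lemma~\ref{lem:cluster-angle}, with the same worst-case term $\tr(\P_{\delta_j}\P_{X_j})\le\eta_1^2/(1+\eta_1^2)+\mathcal{O}(\sqrt{\log p/n})$ and the sharper $\eta_1^2/(2(1+\eta_1^2))$ bound for $u=\delta_j-\delta_l$. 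Your explicit observation that $\mathrm{RF}(S)\subseteq S^\star$ and $|S\cap\mathcal{C}_k|=1$ is a helpful clarification the paper leaves implicit.
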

\begin{proof}
    \textbf{Part (1).}
    First, we argue that for any ``undesired'' features $X_j$, there must exist $u \in \mathcal{U}$ such that $r_j = u - \P_{X_S} u$ or $-r_j = u - \P_{X_S} u$. (1) If the feature $X_j$ is from a noise cluster, meaning that $j\in\bigcup_{k\in \mathcal{K} \cap N^\star} \mathcal{C}_k$, then it is trivial that $u = X_j \in \bigcup_{k\in \mathcal{K} \cap N^\star} \{ X_j: j\in\mathcal{C}_k \} \subseteq \mathcal{U}$. (2) If the feature $X_j$ is a proxy feature of a selected cluster by $S$ and its representative feature is in $S$, meaning that $S\cap \mathcal{C}_k = \{k\}$ for some $k\in S^\star$ while $j\in \mathcal{V}_k$, then we have $r_j = X_j - \P_{X_S} X_j = (X_k + \delta_j) - \P_{X_S} (X_k + \delta_j) = \delta_j - \P_{X_S} \delta_j $, and hence $u = \delta_j \in \bigcup_{k\in \mathcal{K} \cap S^\star} \{\delta_j: j\in\mathcal{V}_k\} \subseteq \mathcal{U}$. (3) If the feature $X_j$ is the representative feature of a cluster selected by $S$, meaning that $j\in\mathcal{K}$ while $S\cap \mathcal{C}_j = \{l\}\subseteq \mathcal{V}_j$, then we have $r_j = X_j - \P_{X_S} X_j = (X_l - \delta_l) - \P_{X_S} (X_l - \delta_l) = -(\delta_l - \P_{X_S}\delta_l)$, and hence $u = \delta_l \in \bigcup_{k\in \mathcal{K} \cap S^\star} \{\delta_j: j\in\mathcal{V}_k\} \subseteq \mathcal{U}$. (4) If the feature $X_j$ is a proxy feature of a selected cluster by $S$ and its representative feature is not in $S$, meaning that $S\cap \mathcal{C}_k = \{l\} \subseteq\mathcal{V}_k$ for some $k\in S^\star$, $l\neq k$, and $j\in \mathcal{V}_k$ as well, then we have $r_j = X_j - \P_{X_S} X_j = (X_l - \delta_l + \delta_j) - \P_{X_S} (X_l - \delta_l + \delta_j) = (\delta_j - \delta_l) - \P_{X_S} (\delta_j - \delta_l)$, and hence $u = \delta_j - \delta_l \in \bigcup_{k\in S^\star} \{\delta_j - \delta_l: j\neq l \in \mathcal{V}_k \} \subseteq \mathcal{U}$.

    \vspace{0.1in}
    \noindent\textbf{Part (2).}
    We used Lemma \ref{lem:cluster-angle} repeatedly in this proof. Let $\Delta = \P_{X_S} - \sum_{j\in S} \P_{X_j}$ with $\|\Delta\|_F \leq \mathcal{O}(s^\star \sqrt{\log p / n})$ by Lemma \ref{lem:proj-decomp}.
    (1) When $u = \delta_j \in \bigcup_{k\in \mathcal{K} \cap S^\star} \{ \delta_j: j \in \mathcal{V}_k \}$, we have $\tr(\P_u \P_{X_S}) = \sum_{j\in S}\tr(\P_{\delta_j} \P_{X_j}) + \tr(\P_{\delta_j} \Delta) \leq \tr(\P_{\delta_j} \P_{X_j}) + \mathcal{O}(s^\star \sqrt{\log p/ n}) \leq \eta_1^2 / (1 + \eta_1^2) + \mathcal{O}(s^\star \sqrt{\log p / n}) $. (2) When $u = \delta_j - \delta_l  \in \{\delta_j - \delta_l: j\neq l \in \mathcal{V}_k, k\in S^\star\} $, we have $\tr(\P_u \P_{X_S}) \leq \sum_{k\in S} \tr(\P_{\delta_j - \delta_l } \P_{X_k}) + \tr(\P_{\delta_j - \delta_k} \Delta) \leq \eta_1^2 / (2 + 2\eta_1^2) + \mathcal{O}(s^\star \sqrt{\log p / n}) $. (3) When $u = X_j \in \bigcup_{k\in \mathcal{K} \cap N^\star} \{X_j: j \in \mathcal{C}_k\} $, we have $\tr(\P_u \P_{X_S}) = \sum_{k\in S}\tr(\P_{X_j} \P_{X_k}) + \tr(\P_{X_j} \Delta) \leq \mathcal{O}(s^\star \sqrt{\log p/ n})$. The conclusion then follows.
    
\end{proof}

\vspace{0.2in}
\subsubsection{Proof of Theorem \ref{thm:1}}
\label{sec:sup-proofthm1}

        
        



\begin{proof}
    In this proof, let $\widehat{S}$ be the selection set returned by FSSS.
    We partition the feature set into four disjoint subsets: $\{1, \dots, p\} =W_\mathcal{V} \cup W_1 \cup W_2 \cup S^\star $, each defined below. For each feature, let $w_j$ denote its associated “direction”, and define $\xi_j$ as the projection of $w_j$ onto either $\spa(X_{\widehat{S}})$ or its orthogonal complement, whichever is closer to $w_j$.
    Finally, note that the notations in Table \ref{tab:notation-W} are aligned with $W$ and $\mathcal{W}$ defined in Section \ref{sec:sup-bpquality}, with the following relationship: $W= W_{\mathcal{V}} \cup W_1 \cup W_2 $, and $\mathcal{W} = \{w_j\}_{j\in W}$.
    
\begin{table}[!ht]
\small
\centering
\begin{tabular}{c|c|c|c}
\hline
                                & Set notation                                                                   & $w_j$      & $\xi_j$                     \\ \hline
Perturbation directions         & $W_\mathcal{V}= \mathcal{V}$                                                   & $\delta_j$ & $\P_{X_{\widehat{S}}^\perp} w_j$ \\
Selected representative features in noise groups   & $W_1:= \{ k\in\mathcal{K} \cap N^\star: \mathcal{C}_k \cap \widehat{S} \neq \varnothing \}$ & $X_j$      & $\P_{X_{\widehat{S}}} w_j$       \\
Unselected representative features in noise groups & $W_2:= \{ k\in\mathcal{K} \cap N^\star: \mathcal{C}_k \cap \widehat{S} = \varnothing \}$    & $X_j$      & $\P_{X_{\widehat{S}}^\perp} w_j$ \\
Signals                         & $S^\star$                                                                      & $X_j$      & Undefined                   \\ \hline
\end{tabular}
\caption{\small Notation illustration for proof of Theorem \ref{thm:1}.}
\label{tab:notation-W}
\end{table}

    \underline{Step 1: show that $| \mathcal{C}_k \cap \widehat{S} | \in \{0,1\}$ for all $k\in\mathcal{K}$}. Note that for any $j,k \in \mathcal{C}_k$, by condition (ii), we have $\tr(\P_{X_j} \P_{X_k}) \geq \frac{1}{ (1 + \eta_1^2)^2 } + \mathcal{O} (\sqrt{\log p / n}) > \frac{1}{2}  $ by Lemma \ref{lem:cluster-angle} part (5). 
    Therefore, under condition (i), at most one feature per cluster can be selected by $\widehat{S}$.

    \vspace{0.1in}
    \underline{Step 2: find the FPE upper bound}. 
    We start by bounding a term associated with the slackness arising from decomposing $\P_{{X_S^\star}^\perp}$ into the sum of projections onto individual vectors.
    Let $\Delta^\star := \P_{X_{S^\star}^\perp} - \sum_{j\in W} \P_{w_j}$. Then 
    \begin{align*}
    \begin{split}
        &\sum_{j\in W_2 \cup W_\mathcal{V}} \tr(\P_{X_{\widehat{S}}} \P_{w_j} ) + \tr(\P_{X_{\widehat{S}}} \Delta^\star)
    =\tr( \P_{X_{\widehat{S}}} ( \P_X - \P_{X_S^\star} - \sum_{j\in W_1} \P_{w_j} ) ) \\
    \leq{}& \tr(\P_{X_{\widehat{S}}}) - \sum_{j\in W_1} \tr(\P_{X_{\widehat{S}}} \P_{w_j} ) \leq |\widehat{S} | \left[1 - \min_{j\in W_1} \tr(\P_{X_{\widehat{S}}} \P_{w_j}) \right]  
    \leq s_0  \left[\frac{\eta_1^2}{1 + \eta_1^2} + \mathcal{O}\left( \sqrt{ \frac{\log p}{n} } \right) \right] = a.
    \end{split}
    \end{align*}
    Next, we bound a term that characterizes the distance between $\tr(\P_{w_j} \P_{X_{\widehat{S}^{(\ell)}}})$ and $\tr(\P_{\xi_j} \P_{X_{\widehat{S}^{(\ell)}}})$. Note that
    $
    \min_{j\in W_1} \tr(\P_{\xi_j} \P_{w_j}) \geq \frac{1}{1+\eta_1^2} - \mathcal{O} (s_0\sqrt{\log p / n} )
    $,
    $
    \min_{j\in W_2} \tr(\P_{\xi_j} \P_{w_j}) \geq 1 - \mathcal{O} (\sqrt{\log p / n} )
    $, and
    $
    \min_{j\in W_\mathcal{V}} \tr(\P_{\xi_j} \P_{w_j}) \geq \frac{1}{1 + \eta_1^2} - \mathcal{O} (s_0 \sqrt{\log p / n} )
    $.
    Therefore, 
    for any $j\in W$ and $\ell\in\{1, \dots, B\}$, we have 
    \begin{align*}
    \begin{split}
       & \tr(\P_{\xi_j} \P_{X_{\widehat{S}^{(\ell)}}}) - \tr(\P_{w_j} \P_{X_{\widehat{S}^{(\ell)}}} )
    = \tr( (\P_{\xi_j} - \P_{w_j}) \P_{X_{\widehat{S}^{(\ell)}}} )
    \leq 2 \|\P_{\xi_j} - \P_{w_j}\|_2 \\
    \leq{}& 2 \sqrt{1 - \min_{j\in W} \tr(\P_{\xi_j } \P_{w_j})} 
    \leq 2 \sqrt{1 - \frac{1}{1 + \eta_1^2} + \mathcal{O}\left(s_0 \sqrt{\frac{\log p}{n}} \right) }  
    = 2  \sqrt{\frac{\eta_1^2}{1 + \eta_1^2}} + \mathcal{O}\left(s_0 \sqrt{\frac{\log p}{n}} \right) = b.
    \end{split}
    \end{align*} 
    As a result, for $\alpha \in (1/2,1)$,
    \begin{align*}
    \begin{split}
    &\mathbb{E}\left[ \tr(\P_{X_{\widehat{S}}} \P_{X_{S^\star}^\perp} ) \right] 
    = \mathbb{E}\left[ \sum_{j\in W} \tr( \P_{X_{\widehat{S}}} \P_{w_j} ) + \tr(\P_{X_{\widehat{S}}} \Delta^\star) \right]  \\
    \leq{}& \mathbb{E}\left[ \sum_{j\in W_1} \tr( \P_{X_{\widehat{S}}} \P_{\xi_j} ) + \sum_{j\in W_2 \cup W_\mathcal{V}} \tr( \P_{X_{\widehat{S}}} \P_{w_j} )  + \tr(\P_{X_{\widehat{S}}} \Delta^\star) \right] \\
    \leq{}& \sum_{j\in W}  \mathbb{E}\left[ \tr( \P_{X_{\widehat{S}}} \P_{\xi_j} ) \right] + a 
    = \sum_{j\in W} \mathbb{E}\left[ \mathds{1}( \xi_j \in \spa(X_{\widehat{S}}) ) \right] + a \\
    \leq{}& \sum_{j\in W} \mathbb{E}\left[ \mathds{1}( \tr(\P_{\xi_j} \P_{\rm avg}) \geq \alpha ) \right] + a \\
    \leq{}& \sum_{j\in W} \mathbb{P}\left[ \frac{1}{B/2} \sum_{\ell=1}^{B/2} \prod_{i\in\{0,1\}} \tr(\P_{\xi_j} \P_{X_{\widehat{S}^{(2\ell-i)}} }) \geq 2\alpha-1  \right] + a\\
    \leq{}& \sum_{j\in W} \frac{1}{2\alpha - 1}  \frac{1}{B/2} \sum_{\ell=1}^{B/2}  \mathbb{E}\left[ \tr(\P_{\xi_j} \P_{X_{\widehat{S}^{(2\ell)}}})  \tr(\P_{\xi_j} \P_{X_{\widehat{S}^{(2\ell-1)}}}) \right] + a \\
    \leq{}& \sum_{j\in W}\frac{1}{2\alpha - 1}  \frac{1}{B/2} \sum_{\ell=1}^{B/2}  \mathbb{E}\left[ \left( \tr(\P_{w_j} \P_{X_{\widehat{S}^{(2\ell)}}}) + b \right) \left( \tr(\P_{w_j} \P_{X_{\widehat{S}^{(2\ell-1)}}}) + b \right) \right] + a \\
    \leq{}& \frac{1}{2\alpha - 1} \sum_{j\in W} \left\{  \sup_{T\in\mathscr{T}} \mathbb{E}\left[  \tr(\P_{w_j} \P_{X_{\widehat{S}(T)}} )  \right] + b \right\}^2 + a.
    \end{split}
    \end{align*}
The first inequality follows from $\tr(\P_{X_{\widehat{S}}}\P_{w_j} ) \leq \tr(\P_{X_{\widehat{S}}}\P_{\xi_j} ) = 1$ for any $j\in W_1$. The third inequality follows from the fact that, for any $j\in W$, $\tr(\P_{X_{\widehat{S}}}\P_{\xi_j} )$ takes value only 0 or 1; if $\tr(\P_{X_{\widehat{S}}}\P_{\xi_j} ) = 1$, then we must have $\xi_j \in \spa(X_{\widehat{S}})$, implying that $\tr(\P_{\xi_j} \P_{\rm avg} ) \geq \alpha$. The fourth inequality follows from the fact that, if $\tr(\P_{\xi_j} \P_{\rm avg}) \geq \alpha$, then
\begin{align*}
\begin{split}
    \frac{1}{B/2} \sum_{\ell=1}^{B/2} \prod_{i\in\{0,1\}} \tr (\P_{\xi_j} \P_{\widehat S^{(2\ell-i)}}) 
    &\geq \frac{1}{B/2} \sum_{\ell=1}^{B/2} \sum_{i\in\{0,1\} } \tr (\P_{\xi_j} \P_{\widehat S^{(2\ell-i)}}) - 1 
    = \frac{2}{B} \sum_{l=1}^B \tr (P_{\xi_j} P_{\widehat S^{(\ell)}}) - 1 
    \geq 2\alpha - 1.
\end{split}
\end{align*}
The fifth inequality follows from the Markov inequality. 

Finally, we conclude that
\begin{align*}
\begin{split}
    \mathbb{E}\left[ \tr(\P_{X_{\widehat{S}}} \P_{X_{S^\star}^\perp} ) \right]
&\leq \frac{1}{2\alpha - 1} \sum_{j\in W} \left\{  \sup_{T\in\mathscr{T}} \mathbb{E}\left[  \tr(\P_{w_j} \P_{X_{\widehat{S}(T)}} )  \right] + b \right\}^2 + a \\
&\leq \frac{1}{2\alpha - 1} |W| (\gamma_\mathcal{\mathcal{W}} + b)^2 + a
\leq \frac{p (\gamma_\mathcal{\mathcal{W}} + b)^2 }{2\alpha - 1} + a.
\end{split}
\end{align*}

\end{proof}

\vspace{0.2in}
\subsubsection{Proof of Theorem \ref{thm:2}}
\label{sec:sup-proofthm2}

The notation $\mathrm{RF}(S)$ used in this proof is defined at the beginning of Section \ref{sec:sup-proof-cluster}.

\begin{proof}
By condition (i), we know that for all $\ell\in \{1, \dots, B\}$ and $k\in S^\star$, $\widehat{S}^{(\ell)} \cap \mathcal{C}_k \neq \varnothing$; that is, each base procedure $\widehat{S}^{(\ell)}$ selects at least one feature per signal cluster. Now consider a set of new basis vectors $\{Z_j^{(\ell)}\}_{j\in \widehat{S}^{(\ell)}} $ for $\spa(X_{\widehat{S}^{(\ell)}} )$ and construct its ``quasi-representative features'' $\mathfrak{K}^{(\ell)}$ as follows: (1) If $|\widehat{S}^{(\ell)} \cap \mathcal{C}_k| = 1 $ for some $k$ and $j = \widehat{S}^{(\ell)} \cap \mathcal{C}_k$, then take $Z^{(\ell)}_j = X_j$ and $\mathfrak{K}^{(\ell)} \cap \mathcal{C}_k = \{j\}$; (2) If $|\widehat{S}^{(\ell)} \cap \mathcal{C}_k| > 1$ for some $k$ and $\widehat{S}^{(\ell)} \cap \mathcal{C}_k = \{ k, j_1, \dots j_k \}$, then take $\left[ Z_k^{(\ell)}, Z_{j_1}^{(\ell)}, \dots, Z_{j_k}^{(\ell)} \right] = \left[ X_k, \delta_{j_1}, \dots, \delta_{j_k} \right]$, and $\mathfrak{K}^{(\ell)} \cap \mathcal{C}_k = \{k\}$; (3) If $|\widehat{S}^{(\ell)} \cap \mathcal{C}_k| > 1$ for some $k$, and $\widehat{S}^{(\ell)} \cap \mathcal{C}_k = \{j_1, \dots j_k \}$ where $k \not\in \widehat{S}^{(\ell)} \cap \mathcal{C}_k$, then take $\left[ Z^{(\ell)}_{j_1}, Z^{(\ell)}_{j_2} \dots, Z_{j_k}^{(\ell)} \right] = \left[ X_{j_1}, \P_{X_{j_1}^\perp} X_{j_2}, \dots, \P_{\{ X_{j_1}, \dots, X_{j_{k-1}} \}^\perp} X_{j_k} \right]$, and $\mathfrak{K}^{(\ell)} \cap \mathcal{C}_k = \{ j_1 \}$. We define the ``quasi-proxies'' as $\mathfrak{V}^{(\ell)} = \widehat{S}^{(\ell)} \setminus \mathfrak{K}^{(\ell)}$.

Note that for each $k\in \mathrm{RF}(\widehat{S}^{(\ell)})$, the representative features and its ``quasi'' version should be very close. Define $q_k^{(\ell)} := \mathcal{C}_k \cap \mathfrak{K}^{(\ell)}$, then by Lemma \ref{lem:cluster-angle}, $\|\P_{X_k} - \P_{Z_{q_k^{(\ell)}}} \|_2 \leq \sqrt{  1 - \tr(\P_{X_k}  \P_{Z_{q_k^{(\ell)}}} )} \leq \sqrt{\frac{\eta_1^2}{1 + \eta_1^2}} + \mathcal{O}(\sqrt{\log p / n}) $.

\vspace{0.1in}
\underline{Step 1: find a lower bound for $\sigma_{|S|} (\P_{X_S} \P_{\rm avg}\P_{X_S} )$ for any $S \subseteq \widetilde{S} \in \mathcal{S}$}. 

Let $\Delta_S^{(\ell)} = \P_{X_{\widehat{S}^{(\ell)}}} - \sum_{k\in \mathrm{RF}(\widehat{S}^{(\ell)}) \cap \mathrm{RF}(S) } \P_{X_k} - 
\sum_{k\in \mathrm{RF}(\widehat{S}^{(\ell)}) \setminus \mathrm{RF}(S)} \P_{Z_{q_k}^{(\ell)}} - \sum_{j\in \mathfrak{V}^{(\ell)}} \P_{Z_j^{(\ell)}}$. We have $\|\Delta_S^{(\ell)} \|_2 \leq \left\| \P_{X_{\widehat{S}^{(\ell)}}} - \sum_{j \in \widehat{S}^{(\ell)} } \P_{Z_j^{(\ell)}} \right\|_2 + \left\| \sum_{k\in \mathrm{RF}(\widehat{S}^{(\ell)}) \cap \mathrm{RF}(S)} \P_{X_k} - \sum_{k\in \mathrm{RF}(\widehat{S}^{(\ell)}) \cap \mathrm{RF}(S)}  \P_{ Z_{q_k^{(\ell)}}}\right\|_2 
\leq s^\star \sqrt{\frac{\eta_1^2}{1 + \eta_1^2}} + \mathcal{O} (s_0 D^2\sqrt{\log p / n} ) $.
Moreover, let $\Delta = \P_{X_S} - \sum_{k\in \mathrm{RF}(S) } \P_{X_k} $, and $\|\Delta\|_2 \leq \| \P_{X_S} - \sum_{j\in S}\P_{X_j} \|_2 + \|\sum_{j\in S} \P_{X_j} - \sum_{k\in \mathrm{RF}(S) } \P_{X_k} \|_2 \leq s^\star \sqrt{\frac{ \eta_1^2 }{1 + \eta_1^2}} + \mathcal{O}(s^\star \sqrt{\log p / n}) $. Therefore,
\begin{align*}
\begin{split}
   &  \P_{X_S} \P_{\rm avg} \P_{X_S}  \\
={}& \left( \sum_{k\in \mathrm{RF}(S) } \P_{X_k} + \Delta \right) 
\frac{1}{B}\sum_{\ell=1}^B \left( \sum_{k\in \mathrm{RF}(S)} \P_{X_k} + \sum_{k\in \mathrm{RF}(\widehat{S}^{(\ell)}) \setminus \mathrm{RF}(S)} \P_{Z_{q_k^{(\ell)}}} + \sum_{j\in \mathfrak{V}^{(\ell)}} \P_{Z_j^{(\ell)}} + \Delta_S^{(\ell)} \right)
\left( \sum_{k\in \mathrm{RF}(S) } \P_{X_k} + \Delta \right)  \\
={}& \sum_{k\in \mathrm{RF}(S) } \P_{X_k} + \widetilde{\Delta},
\end{split}
\end{align*}
where $\| \widetilde{\Delta} \|_2 \leq \sum_{i=1}^{12} \mathfrak{T}_i(s_0, n, p) $. Specifically,

\noindent $
\mathfrak{T}_1 = \left\| \sum_{j,k,l\in \mathrm{RF}(S)^3 \setminus \{j = k= l\} } \P_{X_i} \P_{X_j} \P_{X_l} \right\|_2 \leq (s^\star)^3 \mathcal{O}(\sqrt{\log p / n})
$; 

\noindent $\mathfrak{T}_2 = \left\| \sum_{j,k \in \mathrm{RF}(S)} \P_{X_j} \frac{1}{B} \sum_{\ell=1}^B \left( \sum_{k\in \mathrm{RF}(\widehat{S}^{(\ell)}) \setminus \mathrm{RF}(S)} \P_{Z_{q_k}^{(\ell)}} + \sum_{j \in \mathfrak{V}^{(\ell)}} \P_{Z_j^{(\ell)}} \right) \P_{X_k} \right\|_2 \\
\leq s^\star s_0 \sqrt{\frac{\eta_1^2}{1 + \eta_1^2}} + \mathcal{O}((s^\star)^2 s_0 D \sqrt{  \log p / n }) $; 

\noindent $\mathfrak{T}_3 = \left\| \sum_{j,k \in \mathrm{RF}(S)} \P_{X_j} \P_{X_k} \Delta \right\|_2 \leq (s^\star)^2 \|\Delta\|_2 \leq (s^\star)^3 \sqrt{\frac{\eta_1^2}{1 + \eta_1^2}} + \mathcal{O}((s^\star)^3 \sqrt{\log p / n} ) $;

\noindent $\mathfrak{T}_4 = \left\| \sum_{j \in \mathrm{RF}(S)} \P_{X_j} \frac{1}{B} \sum_{\ell=1}^B \left( \sum_{k\in \mathrm{RF}(\widehat{S}^{(\ell)}) \setminus \mathrm{RF}(S)} \P_{Z_{q_k}^{(\ell)}} + \sum_{j \in \mathfrak{V}^{(\ell)}} \P_{Z_j^{(\ell)}} \right) \Delta \right\|_2 \leq  s^\star s_0 \|\Delta\|_2 \leq (s^\star)^2 s_0 \sqrt{\frac{\eta_1^2}{1 + \eta_1^2}} + \mathcal{O}((s^\star)^2 s_0 \sqrt{\log p / n} )$;

\noindent $\mathfrak{T}_5 = \left\| \sum_{j,k\in\mathrm{RF}(S)} \P_{X_j} \left(\frac{1}{B} \sum_{\ell=1}^B \Delta_S^{(\ell)} \right) \P_{X_k} \right\|_2 \leq (s^\star)^2 \frac{1}{B}\sum_{\ell=1}^B \|\Delta_S^{(\ell)}\|_2 \leq (s^\star)^3 \sqrt{\frac{\eta_1^2}{1 + \eta_1^2}} + \mathcal{O}( (s^\star)^2 s_0 D^2 \sqrt{\log p / n} )$;

\noindent $\mathfrak{T}_6 = \left\| \sum_{j\in\mathrm{RF}(S)} \P_{X_j} \left(\frac{1}{B} \sum_{\ell=1}^B \Delta_S^{(\ell)} \right) \Delta \right\|_2 \leq s^\star \left(\frac{1}{B} \sum_{\ell=1}^B \| \Delta_S^{(\ell)} \|_2 \right) \|\Delta\|_2 \leq (s^\star)^3 \frac{\eta_1^2}{1 + \eta_1^2} + \mathcal{O}( (s^\star)^2 s_0 D^2 \sqrt{\log p / n} )$;

\noindent $\mathfrak{T}_7 = \left\| \sum_{j,k\in\mathrm{RF}(S)} \Delta \P_{X_j} \P_{X_k} \right\|_2 \leq (s^\star)^2 \|\Delta\|_2 \leq (s^\star)^3 \sqrt{\frac{\eta_1^2}{1 + \eta_1^2}} + \mathcal{O}((s^\star)^3 \sqrt{\log p / n} ) $;

\noindent $\mathfrak{T}_8 = \left\|\sum_{j\in\mathrm{RF}(S)}  \Delta \P_{X_j} \Delta \right\|_2 \leq s^\star \|\Delta\|_2^2 \leq (s^\star)^3 \frac{\eta_1^2}{1 + \eta_1^2} + \mathcal{O}( (s^\star)^3 \sqrt{\log p / ns} )$;

\noindent $\mathfrak{T}_9 = \left\| \sum_{j\in\mathrm{RF}(S)}  \Delta \left( \sum_{k\in \mathrm{RF}(\widehat{S}^{(\ell)}) \setminus \mathrm{RF}(S)} \P_{Z_{q_k}^{(\ell)}} + \sum_{j \in \mathfrak{V}^{(\ell)}} \P_{Z_j^{(\ell)}} \right)  \P_{X_j} \right\|_2 \leq s^\star s_0  \|\Delta\|_2
\\
\leq (s^\star)^2 s_0 \sqrt{\frac{\eta_1^2}{1 + \eta_1^2}} + \mathcal{O}((s^\star)^2 s_0 \sqrt{\log p / n} )$;

\noindent $\mathfrak{T}_{10} = \left\|  \Delta \left( \sum_{k\in \mathrm{RF}(\widehat{S}^{(\ell)}) \setminus \mathrm{RF}(S)} \P_{Z_{q_k}^{(\ell)}} + \sum_{j \in \mathfrak{V}^{(\ell)}} \P_{Z_j^{(\ell)}} \right)  \Delta \right\|_2 \leq s_0 \|\Delta\|_2^2 
\leq s_0 (s^\star)^2 \frac{\eta_1^2}{1 + \eta_1 ^2} + \mathcal{O} ((s^\star)^2 s_0 \sqrt{\log p / n } )
$;

\noindent $\mathfrak{T}_{11} = \left\| \sum_{j\in\mathrm{RF}(S)} \Delta \left(\frac{1}{B}\sum_{\ell=1}^B \Delta_S^{(\ell)} \right)  \P_{X_k} \right\|_2 \leq s^\star \|\Delta\|_2 \left(\frac{1}{B} \sum_{\ell=1}^B \|\Delta_S^{(\ell)}\|_2 \right)
\leq (s^\star)^3 \frac{\eta_1^2}{1 + \eta_1^2} + \mathcal{O}( (s^\star)^2 s_0 D^2 \sqrt{\log p / n} )
$;

\noindent $\mathfrak{T}_{12} = \left\| \Delta \left(\frac{1}{B} \sum_{\ell=1}^B \Delta_S^{(\ell)} \right) \Delta \right\|_2 \leq \|\Delta\|_2^2 \left(\frac{1}{B} \sum_{\ell=1}^B \|\Delta_S^{(\ell)}\|_2 \right) 
\leq (s^\star)^3 (\frac{\eta_1^2}{1 + \eta_1^2})^{3/2} + \mathcal{O}( (s^\star)^2 s_0 D^2 \sqrt{\log p / n} ).
$
Combining them together, we have $\|\widetilde{\Delta}\|_2 \leq (s^\star s_0 + 2(s^\star)^2 s_0 + 3(s^\star)^3) \sqrt{\frac{\eta_1^2}{1 + \eta_1^2}} + ((s^\star)^2 s_0 + 3 (s^\star)^3 ) \frac{\eta_1^2}{1 + \eta_1^2} + (s^\star)^3 ( \frac{\eta_1^2}{1 + \eta_1^2} )^{3/2} + \mathcal{O}((s^\star)^2 s_0 \sqrt{\log p / n}) $. 

As a result, by Weyl's inequality, 
$
\sigma_{|S|} (\P_{X_S} \P_{\rm avg} \P_{X_S}) \geq \sigma_{|S|}( \sum_{k\in \mathrm{RF}(S) } \P_{X_k} ) - \|\widetilde{\Delta}\|_2 \geq 1 - (s^\star s_0 + 2(s^\star)^2 s_0 + 3(s^\star)^3) \sqrt{\frac{\eta_1^2}{1 + \eta_1^2}} - ((s^\star)^2 s_0 + 3 (s^\star)^3 ) \frac{\eta_1^2}{1 + \eta_1^2} - (s^\star)^3 ( \frac{\eta_1^2}{1 + \eta_1^2} )^{3/2} + \mathcal{O}((s^\star)^2 s_0 \sqrt{\log p / n}) =: L_3,
$
where the last inequality follows from Lemma \ref{lem:sigmamin-quasi-proj} and condition (iii).

\vspace{0.1in}
\underline{Step 2: find an upper bound for the stability of newly added direction} when adding an ``undesired feature'' to $S \subseteq \widetilde{S} \in \mathcal{S}$. The aim is to ensure that no extra features in a selected cluster by $S$, or features in a noise cluster can be added to $S$ with high probability.  

Define $r_j :=X_j - \P_{X_S} X_j$. By Lemma \ref{lem:U-set-property}, we know that any ``undesired'' features $X_j$, there must exist $u \in \mathcal{U}$ such that $r_j = u - \P_{X_S} u$ or $-r_j = u - \P_{X_S} u$. Now for any ``undesired'' feature $X_j$ and any subsample $\ell\in\{1, \dots, B\}$, consider
$$
\tr(\P_{X_{\widehat{S}^{(\ell)}}} \P_{r_j}) 
= \frac{ r_j^\top \P_{X_{\widehat{S}^{(\ell)}}} r_j }{r_j^\top r_j}
= \frac{ (u - \P_{X_S} u)^\top \P_{X_{\widehat{S}^{(\ell)}}} (u - \P_{X_S} u) }
{ (u - \P_{X_S} u)^\top (u - \P_{X_S} u) }.
$$
Define $\mathfrak{K}_S^{(\ell)} := \bigcup_{k\in\mathrm{RF}(S)} q_k^{(\ell)} $. Let $\nabla = \P_{X_{\mathfrak{K}_S^{(\ell)}} } - \P_{X_S}$, then $\|\nabla\|_F \leq \|\P_{X_S} - \sum_{j\in  S} \P_{X_j}\|_F + \|\P_{X_{\mathfrak{K}_S^{(\ell)}}} - \sum_{j\in \mathfrak{K}_S^{(\ell)}} \P_{X_j}\|_F + \| \sum_{j\in  S} \P_{X_j} - \sum_{k\in \mathfrak{K}_S^{(\ell)}} \P_{X_k} \|_F \leq 2s^\star \frac{ \sqrt{ \eta_1^2 (\eta_1^2 + 2) } }{ \eta_1^2 + 1 } + \mathcal{O}(s^\star \sqrt{\log p / n}) $. For the numerator, 
\begin{align*}
\begin{split}
    & (u - \P_{X_S} u)^\top \P_{X_{\widehat{S}^{(\ell)}}} (u - \P_{X_S} u) 
    = u^\top \P_{X_{\widehat{S}^{(\ell)}}} u - 2 u^\top \P_{X_S} \P_{X_{\widehat{S}^{(\ell)}}} u + u^\top \P_{X_S} \P_{X_{\widehat{S}^{(\ell)}}} \P_{X_S} u  \\
={}&   u^\top \P_{X_{\widehat{S}^{(\ell)}}} u - 2 u^\top (\P_{X_{\mathfrak{K}_S^{(\ell)}} } - \nabla ) \P_{X_{\widehat{S}^{(\ell)}}} u + u^\top (\P_{X_{\mathfrak{K}_S^{(\ell)}} } - \nabla ) \P_{X_{\widehat{S}^{(\ell)}}} (\P_{X_{\mathfrak{K}_S^{(\ell)}} } - \nabla ) u  \\
={}& u^\top \P_{X_{\widehat{S}^{(\ell)}}} u - u^\top \P_{X_{\mathfrak{K}_S^{(\ell)}} } u + 2 u^\top \nabla \P_{X_{\widehat{S}^{(\ell)}}} u - 2 u^\top \nabla \P_{X_{\mathfrak{K}_S^{(\ell)}} } u + u^\top \nabla \P_{X_{\widehat{S}^{(\ell)}}} \nabla u  \\
={}& u^\top \P_{X_{\widehat{S}^{(\ell)}}} u - u^\top \P_{X_S} u - u^\top\nabla u + 2 u^\top \nabla \P_{X_{\widehat{S}^{(\ell)}}} u - 2 u^\top \nabla \P_{X_{\mathfrak{K}_S^{(\ell)}} } u + u^\top \nabla \P_{X_{\widehat{S}^{(\ell)}}} \nabla u \\
\leq{}& u^\top \P_{X_{\widehat{S}^{(\ell)}}} u - u^\top \P_{X_S} u  + 5\|\nabla\|_F \|u\|_2^2 + \|\nabla\|^2_F \|u\|_2^2.
\end{split}
\end{align*}
Therefore,
\begin{align*}
\begin{split}
    \tr(\P_{X_{\widehat{S}^{(\ell)}}} \P_{r_j}) 
&\leq \frac{ \tr(\P_u \P_{X_{\widehat{S}^{(\ell)}}} ) - \tr( \P_u \P_{X_S} )  }{ 1 - \tr( \P_u \P_{X_S} ) } + \frac{ 5\|\nabla\|_F + \|\nabla\|_F^2 }{ 1 - \tr( \P_u \P_{X_S} )  } \\
&\leq \tr(\P_u \P_{X_{\widehat{S}^{(\ell)}}} )  + 10 \|\nabla\|_F + 2 \|\nabla\|_F^2 \\
&\leq \tr(\P_u \P_{X_{\widehat{S}^{(\ell)}}} ) + 20 s^\star \frac{ \sqrt{\eta_1^2 (\eta_1^2 + 2)} }{ \eta_1^2 + 1 } + 8(s^\star)^2 \frac{ \eta_1^2 (\eta_1^2 + 2) }{(1 + \eta_1^2)^2} + \mathcal{O}((s^\star)^2 \sqrt{\log p / n}),
\end{split}
\end{align*}
where the second inequality follows from the fact that $\tr(\P_u \P_{X_{\widehat{S}^{(\ell)}}} ) \leq 1$ and $\max_{u\in\mathcal{U}}\tr( \P_u \P_{X_S} ) \leq \frac{1}{2}$ by Lemma \ref{lem:U-set-property} and condition (ii). 

Finally, for the given $\alpha_0\in (0, \frac{1}{2})$, we have
\begin{align*}
\begin{split}
    &\mathbb{P}\left[\frac{1}{B} \sum_{\ell=1}^B \tr(\P_u \P_{X_{\widehat{S}^{(\ell)}}} )  \geq 1-\alpha_0 \right] 
    \leq \mathbb{P}\left[\frac{1}{B/2} \sum_{\ell=1}^{B/2} \prod_{i\in\{0,1\} } \tr(\P_u \P_{X_{\widehat{S}^{(2\ell-i)}}} )  \geq 1-2\alpha_0 \right] \\
    \leq{}& \frac{1}{1 - 2\alpha_0} \frac{1}{B/2} \sum_{\ell=1}^{B/2} \mathbb{E}\left[ \tr(\P_u \P_{X_{\widehat{S}^{(2\ell)}}}) \right] \mathbb{E}\left[ \tr(\P_u \P_{X_{\widehat{S}^{(2\ell-1)}}}) \right] 
    \leq \frac{\gamma_\mathcal{U}^2}{1 - 2\alpha_0},
\end{split}
\end{align*}
and hence $\mathbb{P}\left[ \max_{u \in \mathcal{U}} \frac{1}{B} \sum_{\ell=1}^B \tr(\P_u \P_{X_{\widehat{S}^{(\ell)}}} )  \geq 1-\alpha_0 \right] \leq |\mathcal{U}| \gamma_\mathcal{U}^2 / (1 - 2\alpha_0) = \left( s^\star \binom{D-1}{2} + p - s^\star \right) \gamma_\mathcal{U}^2 / (1 - 2\alpha_0)$.

In summary, with probability at least $1 - \left( s^\star \binom{D-1}{2} + p - s^\star \right) \gamma_\mathcal{U}^2 / (1 - 2\alpha_0)$, for any ``undesired'' feature $X_j$, we have
\begin{align*}
\begin{split}
    \tr(\P_{r_j} \P_{\rm avg}) 
    &\leq \frac{1}{B} \sum_{\ell=1}^B \max_{u\in \mathcal{U}} \tr(\P_{u} \P_{X_{\widehat{S}^{(\ell)}}} ) + 20 s^\star \frac{ \sqrt{\eta_1^2 (\eta_1^2 + 2)} }{ \eta_1^2 + 1 } + 8(s^\star)^2 \frac{ \eta_1^2 (\eta_1^2 + 2) }{(1 + \eta_1^2)^2} + \mathcal{O}((s^\star)^2 \sqrt{\log p / n})  \\
    &\leq 1 - \alpha_0 + 20 s^\star \frac{ \sqrt{2\eta_1^2 (\eta_1^2 + 2)} }{ \eta_1^2 + 1 } + 8 (s^\star)^2 \frac{\eta_1^2 (\eta_1^2 + 2) }{(1 + \eta_1^2)^2} + \mathcal{O}((s^\star)^2 \sqrt{\log p / n}) =:U_3.
\end{split}
\end{align*}

\underline{Step 3: Draw the conclusion}. 
When $U_3 < \alpha < L_3$, the only selection sets can be returned by our algorithm are those in $\{S: S \subseteq \widetilde{S} \in \mathcal{S}\}$. Note that $ \sigma_{|S|}(\P_{X_S} \P_{\rm avg} \P_{X_S} ) \leq \inf_{Z\in \spa(X_S)}\tr(\P_{Z} \P_{\rm avg})$, implying that for any $S \in\mathcal{S}$, both the ``new direction condition'' and the ``$\sigma_{\min}$-condition" will be satisfied and hence only $S$ rather than its subsets will be returned.

\end{proof}

\vspace{0.2in}
\subsubsection{Proof of Theorem \ref{prop:clust-consist-s0}}
\label{sec:sup-proofthm3}
The notation $\mathrm{RF}(S)$ used in this proof is defined at the beginning of Section \ref{sec:sup-proof-cluster}.

\begin{proof}
    We begin by giving a brief outline of the proof. Note that for any given selection set $S$, the objective function of $\ell_0$-penalized regression can be re-written as its prediction error $\mathrm{pe}(S)$:
    \begin{align*}
    \begin{split}
        \mathrm{pe}(S) 
    &= \left\| \P_{{X_S}^\perp} y \right\|_2^2 
    = \left\| \sum_{k\in\mathcal{K}} \beta_k^\star  \P_{{X_S}^\perp} X_k +  \P_{{X_S}^\perp} \epsilon \right\|_2^2 \\
    &= \sum_{k\in S^\star} \left| \beta^\star_k \right|^2 \left\| \P_{X_S^\perp}X_k \right\|_2^2 + \left\|\P_{X_S^\perp} \epsilon \right\|_2^2 + 2 \sum_{k\neq l \in S^\star} \beta_k^\star \beta_l^\star X_k^\top \P_{X_S^\perp} X_l + 2 \sum_{k\in S^\star} \beta^\star_k \epsilon^\top \P_{X_S^\perp} X_k.
    \end{split}
    \end{align*}
    Notice that $\mathrm{pe}(S)$ decreases as $S$ expands, so the solution should always select the maximum number of features, i.e. $s_0$ features. Accordingly, we develop the following proofs by comparing the objective function values for different selection sets $S$ with $|S| = s_0$, and find the selection set that minimizes the objective function.
    
    \vspace{0.1in}
    \textbf{Part (1). } Denote by $S^\star_+:= \mathrm{RF}(S) \cap S^\star$ the signal representative features that are captured by $S$, and denote by $S^\star_- := S^\star \setminus S^\star_+$ the signal representative features that are missed by $S$. 

    Note that multiple features within one cluster may be selected by $S$. We create a set of new basis vectors $\{Z_1, \dots, Z_{|S|}\}$ for $\spa(X_S)$ as follows: (1) If $|S \cap \mathcal{C}_k| = 1 $ for some $k$ and $j = S \cap \mathcal{C}_k$, then take $Z_j = X_j$; (2) If $|S \cap \mathcal{C}_k| > 1$ for some $k$ and $S \cap \mathcal{C}_k = \{ k, j_1, \dots j_k \}$, then take $\left[ Z_k, Z_{j_1}, \dots, Z_{j_k} \right] = \left[X_k, \delta_{j_1}, \dots, \delta_{j_k}\right] $; (3) If $|S \cap \mathcal{C}_k| > 1$ for some $k$, and $S \cap \mathcal{C}_k = \{j_1, \dots j_k \}$ where $k \not\in S \cap \mathcal{C}_k$, then take $\left[ Z_{j_1}, Z_{j_2} \dots, Z_{j_k} \right] = \left[ X_{j_1}, \P_{X_{j_1}^\perp} X_{j_2}, \dots, \P_{\{ X_{j_1}, \dots, X_{j_{k-1}} \}^\perp} X_{j_k} \right]$. Now let $\Delta = \P_{X_S} - \sum_{j\in S} \P_{Z_j}$. By Lemma \ref{lem:proj-decomp} and \ref{lem:between-cluster}, we have $\|\Delta\|_F \leq \left( 2 \sum_{j<l \in S} \tr(\P_{Z_j} \P_{Z_l}) \right)^{1/2} \leq  |S| \max_{j<l \in S} \|\P_{Z_j} \P_{Z_l}\|_2  \leq  \mathcal{O} (s_0 D^2\sqrt{\log p / n}  )$.
    
    \vspace{0.1in}
    \underline{Step 1: consider the scenario when $S$ misses no signal groups}, which is $S^\star = S^\star_+$. 
    
    For any $k\in S^\star$, there must be one $j_0\in S$ such that $\mathrm{RF}(j_0) =k$ and $Z_{j_0} = X_{j_0}$. Hence, $\left\| \P_{{X_S}^\perp} X_k \right\|_2^2 = X_k^\top \P_{{X_S}^\perp} X_k = \left\|\P_{X_{j_0}^\perp} X_k \right\|_2^2 - \sum_{j\in S\setminus\{j_0\}} \left\|\P_{Z_j} X_k \right\|_2^2 - X_k^\top \Delta X_k $. By Lemma \ref{lem:cluster-angle}, we know $\left\|\P_{X_{j_0}^\perp} X_k \right\|_2^2 = 1 - \tr( \P_{X_{j_0}} \P_{X_k} ) \leq \frac{\eta_1^2}{1 + \eta_1^2} + \mathcal{O} ( \sqrt{ \log p / n } )$. By Lemma \ref{lem:between-cluster}, $\left\| \P_{Z_j} X_k \right\|_2^2 = \tr(\P_{Z_j} \P_{X_k}) \leq \mathcal{O} (D^2 \log p /n  ) $ for any $j\in S\setminus \mathcal{C}_k$. 
    For any $j\in \mathcal{C}_k \setminus\{j_0\}$, $\left\| \P_{Z_j} X_k \right\|_2 = \left\| \P_{Z_j} \P_{X_k} \right\|_2 \leq \left\| \P_{Z_j} \P_{X_{j_0}} \right\|_2 + \left\| \P_{Z_j} ( \P_{X_k} -  \P_{X_{j_0}} ) \right\|_2 \leq \left\|\P_{X_{j_0}^\perp} X_k \right\|_2 $. Furthermore, $\left|X_k^\top \Delta X_k \right| \leq \|\Delta\|_F \leq  \mathcal{O} (s_0 D^2 \sqrt{ \log p / n }  ).
    $
    As a result, $\left\| \P_{{X_S}^\perp} X_k \right\|_2^2 \leq \frac{D \eta_1^2}{1 + \eta_1^2} + \mathcal{O} (s_0 D^2 \sqrt{ \log p / n }  ) $.

    For any $k\neq l \in S^\star$, we have $\left| X_k^\top \P_{X_S^\perp} X_l \right|_2 \leq \left| X_k^\top X_l \right| + \sum_{j\in S} \left|X_k^\top \P_{Z_j} X_l \right| + \left|X_k^\top \Delta X_l \right| $. Note that for any $j\in S$, either $j \notin \mathcal{C}_k$ or $j \notin \mathcal{C}_l$; without loss of generality, we assume that $j \notin \mathcal{C}_l$. By Lemma \ref{lem:between-cluster}, we have $\left|X_k^\top \P_{Z_j} X_l \right| \leq \left\| \P_{Z_j} X_l \right\|_2  \leq \mathcal{O} (D \sqrt{\log p / n}  ) $. Furthermore, $\left| X_k^\top \Delta X_l \right| \leq \|\Delta\|_F \leq \mathcal{O} (s_0 D^2 \sqrt{\log p / n}  )$. Therefore, $\left| X_k^\top \P_{X_S^\perp} X_l \right|_2 \leq \mathcal{O} (s_0 D^2 \sqrt{\log p / n}  )$.

    For any $k \in S^\star$, we have $\left| \epsilon^\top \P_{X_S^\perp} X_k \right| \leq \left|\epsilon^\top X_k \right| + \sum_{j\in S} \left|\epsilon^\top \P_{Z_j} X_k \right| + \left|\epsilon^\top \Delta X_k \right| $. By Lemma \ref{lem:between-cluster}, for any $j\in S$, we have $\left|\epsilon^\top \P_{Z_j} X_k \right| \leq \|\epsilon\|_2 \cdot \mathcal{O} (D \sqrt{\log p / n}  )$. Furthermore, $\left|\epsilon^\top \Delta X_k \right| \leq \|\epsilon\|_2 \|\Delta\|_F \leq \|\epsilon\|_2 \cdot  \mathcal{O} ( s_0 D^2 \sqrt{\log p / n}  )$. Hence, $\left|  \epsilon^\top \P_{X_S^\perp} X_k\right| \leq \|\epsilon\|_2 \cdot  \mathcal{O} ( s_0 D^2 \sqrt{\log p / n}  )$.

    Finally, note that $\left\| \P_{X_S^\perp} \epsilon \right\|_2^2 \leq \|\epsilon\|_2^2$. In summary, when $S \in \mathcal{S}$,
    $$
    \mathrm{pe}(S) \leq U_1:= \frac{D \eta_1^2}{1 + \eta_1^2} \|\beta^\star\|_2^2 + \|\epsilon\|_2^2 + \mathcal{O}\left( s_0 D^2 \sqrt{\log p / n} \right) \left( \|\beta^\star\|_1 + \|\epsilon\|_2 \right)^2 .
    $$

    \vspace{0.1in}
    \underline{Step 2: consider the scenario when $S$ misses $r$ signal groups}, which is $|S^\star_-| = r \geq 1$.

    Now for any $k \in S^\star_+$, $\left\| \P_{{X_S}^\perp} X_k \right\|_2^2 \geq 0$. For any $k\in S^\star_-$, on the other hand, $\left\| \P_{{X_S}^\perp} X_k \right\|_2^2 = X_k^\top \P_{{X_S}^\perp} X_k = 1 - \sum_{j\in S} \left\|\P_{Z_j} X_k \right\|_2^2 - X_k^\top \Delta X_k .
    $
    By Lemma \ref{lem:between-cluster}, we have
    $
    \left\|\P_{Z_j} X_k\right\|_2^2 = \tr(\P_{Z_j} \P_{X_k}) \leq \mathcal{O}\left( D^2 \log p / n \right)
    $ for any $j\in S$. Furthermore,
    $
    \left|X_k^\top \Delta X_k \right| \leq \|\Delta\|_F \leq   \mathcal{O} ( s_0 D^2 \sqrt{ \log p / n }  ).
    $
    As a result, $\left\| \P_{{X_S}^\perp} X_k \right\|_2^2 \geq 1 - \mathcal{O} (s_0 D^2 \sqrt{ \log p / n }  ) $ for $k\in S^\star_-$.

    Similarly as in step 1, we have $\left| X_k^\top \P_{X_S^\perp} X_l \right|_2 \leq \mathcal{O} (s_0 D^2 \sqrt{\log p / n}  )$, and $\left|  \epsilon^\top \P_{X_S^\perp} X_k\right| \leq \|\epsilon\|_2 \cdot \mathcal{O} ( s_0 D^2 \sqrt{\log p / n}  )$ for any $k \neq l \in S^\star$.

    Finally, $\left\| \P_{X_S^\perp} \epsilon \right\|_2^2 = \epsilon^\top \P_{X_S^\perp} \epsilon = \|\epsilon\|_2^2 - \sum_{j\in S} \|\P_{Z_j} \epsilon\|_2^2 - \epsilon^\top \Delta \epsilon $. By Lemma \ref{lem:between-cluster}, for any $j\in S$, we have $\left\| \P_{Z_j} \epsilon \right\|^2_2 = \|\epsilon\|_2^2 \cdot  \tr(\P_{Z_j} \P_\epsilon) \leq \|\epsilon\|^2_2 \cdot \mathcal{O} ( D^2 \log p / n  )$, and $\left|\epsilon^\top \Delta \epsilon \right|_2 \leq \|\epsilon\|_2^2 \|\Delta\|_F \leq \|\epsilon\|_2^2 \cdot \mathcal{O} ( s_0 D^2 \sqrt{\log p / n}  )$. Therefore, $\left\| \P_{X_S^\perp} \epsilon \right\|_2^2  \geq \|\epsilon\|_2^2 \left[1 - \mathcal{O}( s_0 D^2 \sqrt{\log p / n} ) \right] $. 

    In summary, when $|S^\star_-| = r \geq 1$,
    $$
    \mathrm{pe}(S) \geq L_1(r):= r \min_{j\in S^\star} \left|\beta_j^\star \right|^2 + \|\epsilon\|_2^2 + \mathcal{O}\left( s_0 D^2 \sqrt{\log p / n} \right) \left( \|\beta^\star\|_1 + \|\epsilon\|_2\right)^2.
    $$

    \vspace{0.1in}
    \underline{Step 3: Find a sufficient condition for not missing any signal groups}. Note that when
    $
    \min_{j\in S^\star} \left|\beta^\star_j\right|^2 > 
        \frac{D \eta_1^2}{1 + \eta_1^2}  \|\beta^\star\|_2^2 + \mathcal{O} ( s_0 D^2 \sqrt{\log p / n}  ) \left( \|\beta^\star\|_1 + \|\epsilon\|_2 \right)^2
    $, we have $L_1(r) > U_1$ for any $r \geq 1$, meaning that no signal groups should be missed.

    \vspace{0.1in}
    \noindent\textbf{Part (2). }
    When $s_0 = s^\star$, the only solutions that miss no signal groups are those selecting exactly one feature per signal group. In other words, $\supp(\widehat\beta) \in \mathcal{S}$.

    \vspace{0.1in}
    \noindent\textbf{Part (3). }
    \underline{Step 1: consider the scenario when $S \in \mathcal{S}\setminus \{S^\star \}$}. Let $\Delta = \P_{X_S} - \sum_{j\in S} \P_{X_j}$, and by Lemma \ref{lem:cluster-angle} and \ref{lem:proj-decomp}, we have $\|\Delta\|_F \leq \mathcal{O} (s^\star \sqrt{\log p / n}  )$.

    For any $k\in S^\star$, there must be one and only one $j_0\in S$ such that $\mathrm{RF}(j_0) =k$. Hence, $\left\| \P_{{X_S}^\perp} X_k \right\|_2^2 = X_k^\top \P_{{X_S}^\perp} X_k = \left\|\P_{X_{j_0}^\perp} X_k \right\|_2^2 - \sum_{j\in S\setminus\{j_0\}} \left\|\P_{X_j} X_k \right\|_2^2 - X_k^\top \Delta X_k $. By Lemma \ref{lem:eta1-gap}, we know $\left\|\P_{X_{j_0}^\perp} X_k \right\|_2^2 \geq \frac{\eta_1^2}{1 + \eta_1^2} + \mathcal{O} ( \sqrt{ \log p / n } )$. By Lemma \ref{lem:cluster-angle}, $\left\| \P_{X_j} X_k \right\|_2^2 = \tr(\P_{X_j} \P_{X_k}) \leq \mathcal{O} (\log p  / n ) $ for any $j\in S\setminus \{j_0\}$. Furthermore, $\left|X_k^\top \Delta X_k \right| \leq \|\Delta\|_F \leq  s^\star \mathcal{O} ( \sqrt{ \log p / n }  ).
    $
    As a result, $\left\| \P_{{X_S}^\perp} X_k \right\|_2^2 \geq \frac{\eta_1^2}{1 + \eta_1^2} - s^\star \mathcal{O} (\sqrt{ \log p / n }  ) $.

    For any $k\neq l \in S^\star$, we have $\left| X_k^\top \P_{X_S^\perp} X_l \right|_2 \leq \left| X_k^\top X_l \right| + \sum_{j\in S} \left|X_k^\top \P_{X_j} X_l \right| + \left|X_k^\top \Delta X_l \right| $. Note that for any $j\in S$, either $j \notin \mathcal{C}_k$ or $j \notin \mathcal{C}_l$; without loss of generality, we assume that $j \notin \mathcal{C}_l$. By Lemma \ref{lem:cluster-angle}, we have $\left|X_k^\top \P_{X_j} X_l \right| \leq \left\| \P_{X_j} X_l \right\|_2 = \tr(\P_{X_j} \P_{X_l})^{1/2} \leq \mathcal{O} (\sqrt{\log p / n}  ) $. Furthermore, $\left| X_k^\top \Delta X_l \right| \leq \|\Delta\|_F \leq s^\star \mathcal{O} (\sqrt{\log p / n} )$. Therefore, $\left| X_k^\top \P_{X_S^\perp} X_l \right|_2 \leq s^\star \mathcal{O}(\sqrt{\log p / n} )$.

    For any $k \in S^\star$, we have $\left| \epsilon^\top \P_{X_S^\perp} X_k \right| \leq \left|\epsilon^\top X_k \right| + \sum_{j\in S} \left|\epsilon^\top \P_{X_j} X_k \right| + \left|\epsilon^\top \Delta X_k \right| $. By Lemma \ref{lem:cluster-angle}, for any $j\in S$, we have $\left|\epsilon^\top \P_{X_j} X_k \right| \leq \|\epsilon\|_2 \mathcal{O} (\sqrt{\log p / n}  )$. Furthermore, $\left|\epsilon^\top \Delta X_k \right| \leq \|\epsilon\|_2 \|\Delta\|_F \leq \|\epsilon\|_2 \mathcal{O} ( s^\star \sqrt{\log p / n}  )$. Hence, $\left|  \epsilon^\top \P_{X_S^\perp} X_k\right| \leq \|\epsilon\|_2 \mathcal{O} ( s^\star \sqrt{\log p / n} )$.

    Finally, $\left\| \P_{X_S^\perp} \epsilon \right\|_2^2 = \epsilon^\top \P_{X_S^\perp} \epsilon = \|\epsilon\|_2^2 - \sum_{j\in S} \|\P_{X_j} \epsilon\|_2^2 - \epsilon^\top \Delta \epsilon $. By Lemma \ref{lem:cluster-angle}, for any $j\in S$, we have $\left\| \P_{X_j} \epsilon \right\|^2_2 = \|\epsilon\|_2^2 \cdot \tr(\P_{X_j} \P_\epsilon) \leq \|\epsilon\|^2_2 \mathcal{O} (\log p / n  )$, and $\left|\epsilon^\top \Delta \epsilon \right|_2 \leq \|\epsilon\|_2^2 \|\Delta\|_F \leq \|\epsilon\|_2^2 \mathcal{O} ( s^\star \sqrt{\log p / n}  )$. Therefore, $\left\| \P_{X_S^\perp} \epsilon \right\|_2^2  \geq \|\epsilon\|_2^2 \left[1 - \mathcal{O} ( s^\star \sqrt{\log p / n} ) \right] $. 

    In summary, when $S \in \mathcal{S} \setminus \{S^\star\}$, 
    $$
    \mathrm{pe}(S) \geq L_2:= \min_{j\in S^\star} \left|\beta_j^\star \right|^2 \frac{\eta_1^2}{1 + \eta_1^2} + \|\epsilon\|_2^2 - \mathcal{O}\left( s^\star \sqrt{\log p / n} \right) \left( \|\beta^\star\|_1 + \|\epsilon\|_2\right)^2.
    $$

    \vspace{0.1in}
    \underline{Step 2: Find a sufficient condition for preferring $S^\star$ over $\mathcal{S}\setminus \{S^\star\} $}.
    Note that $\mathrm{pe}(S^\star) = \left\| \P_{ (X_{S^{\star}})^\perp } \epsilon \right\|_2^2$. When 
    $
    \frac{\eta_1^2}{1 + \eta_1^2}  \|\beta^\star\|_2^2 
    > 
    \mathcal{O} ( s^\star \sqrt{\log p / n} ) \left( \|\beta^\star\|_1 + \|\epsilon\|_2 \right)^2,
    $
    we have $L_2 > \mathrm{pe}(S^\star)$, meaning that $S^\star$ is more preferred by $\widehat\beta$.

\end{proof}

\vspace{0.2in}
\subsubsection{Proof of Proposition \ref{prop:assum-inter} and Corollary \ref{cor:pfer}}
\label{sec:sup-proofpropcor}

\begin{proof}
    Let $\nabla = \mathbb{E} \left[ \tr\left( (\P_X - \sum_{j\in[p]} \P_{w_j}) \P_{X_{\widehat{S}^{(\ell)}}} \right) \right]$.
    By Assumption \ref{assum:better}, we have
    $$
    \frac{ \sum_{j\in W} \mathbb{E}\left[\tr(\P_{w_j} \P_{X_{\widehat{S}^{(\ell)}}} ) \right]  }{p - s^\star}
    \leq
    \frac{ \mathbb{E}|\widehat{S}^{(\ell)}| - \sum_{j\in W} \mathbb{E}\left[\tr(\P_{w_j} \P_{X_{\widehat{S}^{(\ell)}}} ) \right] - \nabla }{s^\star},
    $$
    which, combined with Assumption \ref{assum:exchange}, implies that
    $$
    (p-s^\star) l
    \leq \sum_{j\in W} \mathbb{E}\left[ \tr(\P_{w_j} \P_{X_{\widehat{S}^{(\ell)}}})  \right] \leq \frac{ p-s^\star }{p} (\mathbb{E}|\widehat{S}^{(\ell)}| - \nabla),
    $$
    and hence $l \leq \frac{ \mathbb{E}|\widehat{S}^{(\ell)}| - \nabla }{p} \leq \frac{ s_0 - \nabla }{p}$, where $|\nabla| \leq s_0 \|\P_{X} - \sum_{j\in[p]} \P_{w_j}\|_2 \leq \mathcal{O}(s_0 p \sqrt{\log p / n})$. As a result,
    \begin{align*}
    \begin{split}
        \sum_{j\in W} \sup_{T\in\mathscr{T}} \mathbb{E}\left[ \tr(\P_{w_j} \P_{X_{\widehat{S}(T)}}) \right] \leq \sum_{j\in W} (l + r_j) = \sum_{j\in W} \left[ \frac{s_0 - \nabla}{p} + r_j \right] 
        \leq s_0 + \sum_{j\in [p]} r_j + \mathcal{O}\left(s_0 p \sqrt{\log p / n} \right),
    \end{split}
    \end{align*}
    and
    \begin{align*}
    \begin{split}
     &   \sum_{j\in W} \sup_{T\in\mathscr{T}} \mathbb{E}\left[ \tr(\P_{w_j} \P_{X_{\widehat{S}(T)}}) \right]^2 \leq \sum_{j\in W} (l + r_j)^2 = \sum_{j\in W} \left[\frac{ s_0 - \nabla }{p} + r_j \right]^2 \\
    \leq{}& \sum_{j\in W} \left[ \frac{s_0^2}{p^2} + r_j^2 + \frac{2s_0}{p} r_j + \mathcal{O}\left(s_0^2 \sqrt{\log p / n} \right)  \right] 
    \leq \frac{s_0^2}{p} + \sum_{j\in[p]} \left[ r_j^2 + \frac{2s_0}p{ r_j} \right] + \mathcal{O}\left(s_0^2 p \sqrt{\log p / n} \right).
    \end{split}
    \end{align*}
    Plugging these terms into the result of Theorem \ref{thm:1}, the result of Corollary \ref{cor:pfer} follows.
    
\end{proof}

\vspace{0.2in}
\subsubsection{Proof of equation \texorpdfstring{{\eqref{eq:assump_cluster}}}{(15)} }
\label{sec:sup-proofeq}
\begin{proof}
    In the special case, the base procedure $\widehat{S}^{(\ell)}$ would include $S^\star$ and uniformly select $(s_0 - s^\star)$ other directions from $ \bigcup_{k \in S^\star} \{\delta_j: j \in \mathcal{V}_k\}  \cup \bigcup_{k\notin S^\star } \{ X_j : j\in \mathcal{C}_k\}$.

    For any $u =\delta_j \in \bigcup_{k\in S^\star} \{\delta_j: j \in \mathcal{V}_k \}$, we have $\tr(\P_u \P_{X_{\widehat{S}^{(\ell)}}}) = 1$ w.p. $\frac{s_0 - s^\star}{ p - s^\star }$, and $\tr(\P_u \P_{X_{\widehat{S}^{(\ell)}}}) = 0$ otherwise.

    For any $u = \delta_j - \delta_l \in \bigcup_{k\in S^\star} \{\delta_j - \delta_l: j\neq l \in \mathcal{V}_k\} $, if both $j,k \in \widehat{S}^{(\ell)}$ (w.p. $\frac{ \binom{2}{2} \binom{p-s^\star - 2}{s_0 - s^\star - 2} }{ \binom{p-s^\star}{s_0 - s^\star} } = \frac{(s_0 - s^\star) (s_0 - s^\star - 1)}{ (p-s^\star) (p-s^\star - 1) } $), we have $\tr(\P_u \P_{X_{\widehat{S}^{(\ell)}}}) = 1 $. If either $j \in \widehat{S}^{(\ell)}$ or $l \in \widehat{S}^{(\ell)}$ (w.p. $\frac{ \binom{2}{1} \binom{p-s^\star - 2}{s_0 - s^\star - 1} }{ \binom{p-s^\star}{s_0 - s^\star} } = \frac{2 (s_0 - s^\star) (p - s_0)}{ (p-s^\star) (p-s^\star - 1) } $), we have $\tr(\P_u \P_{X_{\widehat{S}^{(\ell)}}}) = \frac{\eta_1^2}{2 (1 + \eta_1^2)} $. If $j,k \notin \widehat{S}^{(\ell)}$, we have $\tr(\P_u \P_{X_{\widehat{S}^{(\ell)}}}) = 0$ otherwise.

    For any $u = X_j \in \bigcup_{k\in \mathcal{K} \cap S^\star} \{X_j: j \in \mathcal{C}_k\} $, if $\mathcal{C}_{\mathrm{RF}(j)} \cap \widehat{S}^{(\ell)} \neq \varnothing$ (w.p. $1 - \frac{ \binom{D}{0} \binom{p-s^\star - D}{s_0 - s^\star } }{ \binom{p-s^\star}{s_0 - s^\star} } =1 - \frac{ (p-s_0) (p-s_0 - 1) \cdots (p - s_0 -D + 1) }{ (p-s^\star) (p-s^\star - 1) \cdots (p - s^\star - D +1) }$), we have $\tr(\P_u \P_{X_{\widehat{S}^{(\ell)}}}) \leq 1 $. Otherwise, $\tr(\P_u \P_{\widehat{S}^{(\ell)}}) = 0$.
\end{proof}

\vspace{0.2in}
\subsubsection{Proof of equation \texorpdfstring{\eqref{eq:prediction-similarity}}{(16)} }
\label{sec:sup-proof-prediction}

\begin{proof}
Note that when $\epsilon \sim \mathcal{N}(0, \sigma^2 / n I_n)$, we have $\max_{j\in \mathcal{K}}\mathrm{Corr}(X_j, \epsilon) \leq \mathcal{O}(\sqrt{\log p / n})$, $\max_{j\in \mathcal{V}}\mathrm{Corr}(\delta_j, \epsilon) \leq \mathcal{O}(\sqrt{\log p / n})$, and $\|\epsilon\|_2 \leq \sigma  +\mathcal{O}(\sigma/\sqrt{n})$ with high probability. These results follow from the concentration inequalities: for any $t > 0$, $\mathbb{P}\left[ \max_{j\in \mathcal{K}}\mathrm{Corr}(X_j, \epsilon) \leq t, \max_{j\in \mathcal{V}}\mathrm{Corr}(\delta_j, \epsilon) \leq t  \right] \geq 1 - ap \exp\{-bnt^2\}$ and $\mathbb{P}[ \left| \|\epsilon\|_2 - \sigma \right| \leq t ] \geq 1 - 2 \exp (-ct^2) $ with some $a,b,c >0 $.

\vspace{0.1in}
Now since $ \mathfrak{S} \subseteq \mathcal{S}$, the goal amounts to finding an upper bound and lower bound of $\mathrm{pe}(S)$ for all $S\in\mathcal{S}$. Let $\Delta = \P_{X_S} - \sum_{j\in S} \P_{X_j}$, and by Lemma \ref{lem:cluster-angle} and \ref{lem:proj-decomp}, we have $\|\Delta\|_F \leq \mathcal{O}  (s^\star \sqrt{\log p / n}  )$.

\vspace{0.1in}
\underline{Step 1: find a lower bound for $\mathrm{pe}(S)$ for all $S\in\mathcal{S}$}. For any $k\in S^\star$, we have $\|\P_{X_S^\perp} X_k\|_2^2 \geq 0$.

For any $k\neq l \in S^\star$, we have $\left| X_k^\top \P_{X_S^\perp} X_l \right|_2 \leq \left| X_k^\top X_l \right| + \sum_{j\in S} \left|X_k^\top \P_{X_j} X_l \right| + \left|X_k^\top \Delta X_l \right| $. Note that for any $j\in S$, either $j \notin \mathcal{C}_k$ or $j \notin \mathcal{C}_l$; without loss of generality, we assume that $j \notin \mathcal{C}_l$. By Lemma \ref{lem:cluster-angle}, we have $\left|X_k^\top \P_{X_j} X_l \right| \leq \left\| \P_{X_j} X_l \right\|_2 = \tr(\P_{X_j} \P_{X_l})^{1/2} \leq \mathcal{O} (\sqrt{\log p / n}  ) $. Furthermore, $\left| X_k^\top \Delta X_l \right| \leq \|\Delta\|_F \leq s^\star \mathcal{O} (\sqrt{\log p / n} )$. Therefore, $\left| X_k^\top \P_{X_S^\perp} X_l \right|_2 \leq s^\star \mathcal{O} (\sqrt{\log p / n} )$.

For any $k \in S^\star$, we have $\left| \epsilon^\top \P_{X_S^\perp} X_k \right| \leq \left|\epsilon^\top X_k \right| + \sum_{j\in S} \left|\epsilon^\top \P_{X_j} X_k \right| + \left|\epsilon^\top \Delta X_k \right| $. By Lemma \ref{lem:cluster-angle}, for any $j\in S$, we have $\left|\epsilon^\top \P_{X_j} X_k \right| \leq \|\epsilon\|_2 \mathcal{O} (\sqrt{\log p / n} )$. Furthermore, $\left|\epsilon^\top \Delta X_k \right| \leq \|\epsilon\|_2 \|\Delta\|_F \leq \|\epsilon\|_2 \mathcal{O} ( s^\star \sqrt{\log p / n}  )$. Hence, $\left|  \epsilon^\top \P_{X_S^\perp} X_k\right| \leq \|\epsilon\|_2 \mathcal{O} ( s^\star \sqrt{\log p / n} )$.

Finally, $\left\| \P_{X_S^\perp} \epsilon \right\|_2^2 = \epsilon^\top \P_{X_S^\perp} \epsilon = \|\epsilon\|_2^2 - \sum_{j\in S} \|\P_{X_j} \epsilon\|_2^2 - \epsilon^\top \Delta \epsilon $. By Lemma \ref{lem:cluster-angle}, for any $j\in S$, we have $\left\| \P_{X_j} \epsilon \right\|^2_2 = \|\epsilon\|_2^2 \tr(\P_{X_j} \P_\epsilon) \leq \|\epsilon\|^2_2 \mathcal{O} (\log p / n )$, and $\left|\epsilon^\top \Delta \epsilon \right|_2 \leq \|\epsilon\|_2^2 \|\Delta\|_F \leq \|\epsilon\|_2^2 \mathcal{O} ( s^\star \sqrt{\log p / n} )$. Therefore, $\left\| \P_{X_S^\perp} \epsilon \right\|_2^2  \geq \|\epsilon\|_2^2 \left[1 - \mathcal{O} ( s^\star \sqrt{\log p / n}  ) \right] $. 

In summary, when $S \in \mathcal{S}$, 
$$
\mathrm{pe}(S) \geq L_3:=  \|\epsilon\|_2^2 - \mathcal{O}\left( s^\star \sqrt{\log p / n} \right) \left( \|\epsilon\|_2 + \|\beta^\star\|_1 \right)^2.
$$

\vspace{0.1in}
\underline{Step 2: find an upper bound for $\mathrm{pe}(S)$ for all $S\in\mathcal{S}$}. 

    For any $k\in S^\star$, there must be one and only one $j_0\in S$ such that $\mathrm{RF}(j_0) =k$. Hence, $\left\| \P_{{X_S}^\perp} X_k \right\|_2^2 = X_k^\top \P_{{X_S}^\perp} X_k = \left\|\P_{X_{j_0}^\perp} X_k \right\|_2^2 - \sum_{j\in S\setminus\{j_0\}} \left\|\P_{X_j} X_k \right\|_2^2 - X_k^\top \Delta X_k $. By Lemma \ref{lem:cluster-angle}, we know $\left\|\P_{X_{j_0}^\perp} X_k \right\|_2^2 \leq \frac{\eta_1^2}{1 + \eta_1^2} + \mathcal{O} ( \sqrt{ \log p / n } )$. Furthermore, $\left\| \P_{X_j} X_k \right\|_2^2 = \tr(\P_{X_j} \P_{X_k}) \leq \mathcal{O} ( \log p /n  ) $ for any $j\in S\setminus \{j_0\}$. Furthermore, $\left|X_k^\top \Delta X_k \right| \leq \|\Delta\|_F \leq  s^\star \mathcal{O} ( \sqrt{ \log p / n } ).
    $
    As a result, $\left\| \P_{{X_S}^\perp} X_k \right\|_2^2 \leq \frac{\eta_1^2}{1 + \eta_1^2} + s^\star \mathcal{O} (\sqrt{ \log p / n }  ) $.

    Similarly as the step 1, we have $\left| X_k^\top \P_{X_S^\perp} X_l \right|_2 \leq s^\star \mathcal{O} (\sqrt{\log p / n}  )$, and $\left|  \epsilon^\top \P_{X_S^\perp} X_k\right| \leq \|\epsilon\|_2 \mathcal{O} ( s^\star \sqrt{\log p / n} )$ for any $k\neq l\in S^\star$. Finally, $\|\P_{X_S^\perp} \epsilon\|_2^2 \leq \|\epsilon\|_2^2$.

    In summary, when $S \in \mathcal{S}$,
    $$
    \mathrm{pe}(S) \leq U_3:= \frac{\eta_1^2}{1 + \eta_1^2} \|\beta^\star\|_2^2 +  \|\epsilon\|_2^2 + \mathcal{O}\left( s^\star \sqrt{\log p / n} \right) \left( \|\beta^\star\|_1 +  \|\epsilon\|_2\right)^2.
    $$

    \underline{Step 3: Draw the conclusion}. The prediction error is upper bounded by $U_3 - L_3$: 
    $$
    \max_{S_1 \neq S_2 \in \mathfrak{S} } \left| \mathrm{pe}(S_1) - \mathrm{pe}(S_2) \right| \leq U_3 - L_3.
    $$

\end{proof}

\vspace{0.2in}
\subsection{The complex dependency setup}
\subsubsection{Preliminaries}
\begin{lemma}
    Suppose that 
    $$
    \max\left\{
    \max_{k\neq k' \in \mathcal{K}\cup \mathcal{I} }\left| X_k^\top     X_{k'} \right|, \  
    \max_{k\in\mathcal{K}\cup\mathcal{I} } \frac{ \left| X_k^\top \delta \right| }{ \|X_k\|_2 \cdot \|\delta\|_2 }
    \right\} \leq \eta_0,
    $$ 
    and $\eta_0 = \mathcal{O}(\sqrt{\log p / n})$. We further assume that $K^2 \eta_0 \rightarrow 0$ as $p,n \rightarrow \infty$.
    Then we have
    \begin{enumerate}[label = \normalfont(\arabic*)]
        \item For any $j\in \mathcal{K}$, we have
        $
        \tr(\P_\delta \P_{X_j+\delta}) \in \left[0, \frac{(\eta_0 + \eta_1)^2}{1  - 2\eta_1 \eta_0} \right].
        $ \\
        Additionally, $\tr(\P_\delta \P_{X_j+\delta}) \leq \mathcal{O}(\sqrt{\log p / n})$.

        \item For any $j\in \mathcal{K}$ and $k\in \mathcal{I}$, we have
        $
        \tr(\P_{X_k} \P_{X_j+\delta}) \in \left[0, \frac{\eta_0^2 (1 + \eta_1)^2}{1 - 2\eta_1 \eta_0} \right].
        $ \\
        Additionally, $\tr(\P_{X_k} \P_{X_j+\delta}) \leq \mathcal{O}(\log p / n )$.
    
        \item For any $S \subseteq \mathcal{K}$, $k\in \mathcal{K}$ but $k\notin S$, define $Z = \sum_{j\in S} X_j + \delta$.  \\
        Then $\tr(\P_{X_k} \P_{Z } ) \in \left[0, \ \frac{\eta_0^2 (|S| + \eta_1)^2}{ |S| + \eta_1^2 - |S|(|S|-1)\eta_0 - 2 |S|\eta_0\eta_1  } \right]$. 
        Additionally, $\tr(\P_{X_k} \P_Z) \leq \mathcal{O}(|S| \log p / n)$.

        \item For any $S \subseteq \mathcal{K}$, define $Z = \sum_{j\in S} X_j + \delta$. Then $\tr(\P_{\epsilon} \P_{Z } ) \in \left[0, \ \frac{\eta_0^2 (|S| + \eta_1)^2  }{ |S| + \eta_1^2 - |S|(|S|-1)\eta_0 - 2 |S|\eta_0\eta_1  } \right]$. \\
        Additionally, $\tr(\P_{\epsilon} \P_Z) \leq \mathcal{O}(|S| \log p / n) $.


        \item For any $S \subseteq \mathcal{K}$ and $\widetilde{S} \subseteq S$, define $Z = \sum_{j\in S} X_j + \delta$. Then 
        $$
        \left( \sum_{j\in \widetilde{S}} \beta_j^\star X_j^\top  \right) \P_{Z} \left( \sum_{j\in \widetilde{S}} \beta_j^\star X_j \right) \leq \frac{ (\sum_{j\in \widetilde{S}} \beta_j^\star)^2  + \|\beta^\star_{\widetilde{S}} \|_1^2 (t^2 \eta_0^2 + 2 t\eta_0) } {|S| + \eta_1^2 - |S|(|S| - 1) \eta_0 - 2 |S| \eta_1 \eta_0 } . 
        $$
        where $t = |S| + |\widetilde{S}| + \eta_1$.  \\
        Additionally, 
        $
        \left( \sum_{j\in \widetilde{S}} \beta_j^\star X_j^\top  \right) \P_{Z} \left( \sum_{j\in \widetilde{S}} \beta_j^\star X_j \right) \leq  \frac{ (\sum_{j\in \widetilde{S}} \beta_j^\star)^2  }{|S| + \eta_1^2} + \mathcal{O}( \sqrt{\log p / n} ) \cdot \|\beta^\star_{\widetilde{S}}\|_1^2 . 
        $

        
    \end{enumerate}
\end{lemma}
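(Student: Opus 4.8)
The plan is to reduce everything to the single‐vector case: since $Z = \sum_{j \in S} X_j + \delta$ is a specific (unnormalized) direction, $\mathcal{P}_Z = Z Z^\top / \|Z\|_2^2$, so for any vector $v$ we have $v^\top \mathcal{P}_Z v = (v^\top Z)^2 / \|Z\|_2^2$. Applying this with $v = \sum_{j \in \widetilde S} \beta_j^\star X_j$, I would first expand the numerator $(v^\top Z)^2$ and then lower-bound the denominator $\|Z\|_2^2$, both using the near-orthogonality bound $\eta_0 = \mathcal{O}(\sqrt{\log p / n})$ on the relevant inner products together with $\|X_j\|_2 = 1$ and $\|\delta\|_2 = \eta_1$.

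\textbf{Denominator.} First I would write $\|Z\|_2^2 = \sum_{j,k \in S} X_j^\top X_k + 2 \sum_{j \in S} X_j^\top \delta + \|\delta\|_2^2 = |S| + \eta_1^2 + \big(\sum_{j \neq k \in S} X_j^\top X_k\big) + 2 \sum_{j \in S} X_j^\top \delta$, and then bound the off-diagonal and cross terms: $|\sum_{j \neq k \in S} X_j^\top X_k| \le |S|(|S|-1)\eta_0$ and $|2\sum_{j \in S} X_j^\top \delta| \le 2|S| \eta_0 \eta_1$ (using $|X_j^\top \delta| \le \eta_0 \|\delta\|_2 = \eta_0 \eta_1$). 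This gives $\|Z\|_2^2 \ge |S| + \eta_1^2 - |S|(|S|-1)\eta_0 - 2|S|\eta_0\eta_1$, which is exactly the denominator appearing in the statement.

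\textbf{Numerator.} Here I would write $v^\top Z = \sum_{j \in \widetilde S} \beta_j^\star X_j^\top\big(\sum_{k \in S} X_k + \delta\big)$. Separating the ``diagonal'' contributions $\sum_{j \in \widetilde S} \beta_j^\star X_j^\top X_j = \sum_{j \in \widetilde S} \beta_j^\star$ (since $\widetilde S \subseteq S$, each $j$ appears in the inner sum) from the remaining terms $\sum_{j \in \widetilde S} \beta_j^\star X_j^\top\big(\sum_{k \in S, k \neq j} X_k + \delta\big)$, I would bound the latter in absolute value by $\|\beta^\star_{\widetilde S}\|_1 \cdot \max_{j \in \widetilde S}\big|X_j^\top(\sum_{k \in S, k\neq j} X_k + \delta)\big| \le \|\beta^\star_{\widetilde S}\|_1 \cdot \big((|S|-1)\eta_0 + \eta_0\eta_1\big)$. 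Hence $|v^\top Z| \le |\sum_{j\in\widetilde S}\beta_j^\star| + \|\beta^\star_{\widetilde S}\|_1 \cdot t'\eta_0$ for $t' = |S| - 1 + \eta_1 \le t = |S| + |\widetilde S| + \eta_1$. Squaring and using $(a+b)^2 \le a^2 + 2ab\cdot\tfrac{a}{|a|}\cdots$ — more carefully, $(|c| + d)^2 = c^2 + 2|c|d + d^2$ with $c = \sum_j \beta_j^\star$, $d = \|\beta^\star_{\widetilde S}\|_1 t'\eta_0 \le \|\beta^\star_{\widetilde S}\|_1 t \eta_0$, and bounding $2|c| \le 2\|\beta^\star_{\widetilde S}\|_1$ — yields $(v^\top Z)^2 \le (\sum_{j\in\widetilde S}\beta_j^\star)^2 + \|\beta^\star_{\widetilde S}\|_1^2(t^2\eta_0^2 + 2t\eta_0)$, matching the claimed numerator. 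Dividing by the denominator bound gives the first displayed inequality; the ``additionally'' statement then follows by substituting $\eta_0 = \mathcal{O}(\sqrt{\log p/n})$, noting $t = \mathcal{O}(|S| + |\widetilde S|)$ is absorbed into the $\mathcal{O}(\cdot)$ and that $1/(|S| + \eta_1^2 - \mathcal{O}(|S|^2\sqrt{\log p/n})) = 1/(|S|+\eta_1^2) + \mathcal{O}(\sqrt{\log p/n})$ under the assumption $K^2\eta_0 \to 0$.

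\textbf{Main obstacle.} The routine calculations are all elementary, so the only real care needed is bookkeeping: making sure the cross-term coefficient $t$ is chosen generously enough (I expect $t = |S| + |\widetilde S| + \eta_1$ is deliberately loose so that minor mismatches in the diagonal/off-diagonal split are covered) and verifying that the denominator stays bounded away from zero, which is where the hypothesis $K^2\eta_0 \to 0$ enters — it guarantees $|S|(|S|-1)\eta_0 = o(|S|)$ so the denominator is $\asymp |S| + \eta_1^2 > 0$ for large $n,p$. I would double-check that every inner product bound used ($|X_j^\top X_k|$, $|X_j^\top \delta|/(\|X_j\|\|\delta\|)$) is literally one of the two quantities hypothesized to be $\le \eta_0$, rescaling by $\|\delta\|_2 = \eta_1$ where the normalized correlation is what is bounded.
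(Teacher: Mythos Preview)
Your proposal is correct and follows essentially the same approach as the paper: both write $\mathcal{P}_Z = ZZ^\top/\|Z\|_2^2$, lower-bound the denominator $\|Z\|_2^2$ exactly as you do, and upper-bound the numerator $(v^\top Z)^2$ by isolating the diagonal contribution $\sum_{j\in\widetilde S}\beta_j^\star$, bounding the off-diagonal remainder by $\|\beta^\star_{\widetilde S}\|_1 \cdot t\eta_0$, and then squaring with $|\sum_j\beta_j^\star|\le\|\beta^\star_{\widetilde S}\|_1$ in the cross term. Your observation that $t'=|S|-1+\eta_1$ already suffices and that the paper's $t=|S|+|\widetilde S|+\eta_1$ is deliberately slack is accurate; the paper's expansion groups the $\widetilde S$ cross terms with an extra factor of $2$, which is unnecessary but harmless for the stated bound.
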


\begin{proof}
    \noindent\textbf{Part (1).} Let $\xi = \delta / \|\delta\|_2$, then
    $$
    \tr(\P_{\delta} \P_{X_j + \delta}) 
    = \frac{|\delta^\top (X_j+\delta)|^2}{\|\delta\|^2_2 \cdot \|X_j + \delta\|^2_2} 
    \leq \frac{( |\xi^\top X_{j} | \cdot \|\delta\|_2 + \eta_1^2)^2}{(1 - 2|X^\top_j \xi| \cdot \|\delta\|_2 ) \cdot \eta_1^2}
    \leq \frac{(\eta_0 + \eta_1)^2}{1  - 2\eta_1 \eta_0}.
    $$

    \noindent\textbf{Part (2).} Let $\xi = \delta / \|\delta\|_2$, then
    $$
    \tr(\P_{X_k} \P_{X_j + \delta}) 
    = \frac{|X_k^\top (X_j+\delta)|^2}{\|X_k\|^2_2 \cdot \|X_j + \delta\|^2_2} 
    = \frac{( |X_k^\top X_{j} |  + |X_k^\top \xi| \cdot \|\delta\|_2  )^2}{1 - 2|X^\top_j \xi| \cdot \|\delta\|_2 }
    \leq \frac{(1+ \eta_1)^2 \eta_0^2 }{1  - 2\eta_1 \eta_0}.
    $$

    \noindent\textbf{Part (3).} Let $\xi = \delta / \|\delta\|_2$, then
    \begin{align*}
    \begin{split}
        \tr(\P_{X_k} \P_{Z}) 
    &= \frac{|X_k^\top (\sum_{j\in S} X_j+\delta)|^2}{\|X_k\|^2_2 \cdot \|\sum_{j\in S} X_j+\delta\|^2_2} 
    \leq \frac{( \sum_{j\in S} |X_k^\top X_{j} |  + |X_k^\top \xi| \cdot \|\delta\|_2  )^2}{|S| + \eta_1^2 - 2 \sum_{j<k \in S} |X^\top_j X_k| - 2 \sum_{j\in S} |X_j^\top \xi| \cdot \|\delta\|_2 }\\
    &\leq \frac{\eta_0^2(|S|+ \eta_1)^2}{|S| + \eta_1^2 - |S|(|S| - 1) \eta_0 - 2 |S| \eta_0\eta_1}.
    \end{split}
    \end{align*}

    \noindent\textbf{Part (4).} Let $\xi = \delta / \|\delta\|_2$, and let $\varepsilon = \epsilon / \|\epsilon\|_2$. Then
    \begin{align*}
    \begin{split}
        \tr(\P_{\epsilon} \P_{Z}) 
    &= \frac{|\epsilon^\top (\sum_{j\in S} X_j+\delta)|^2}{\|\epsilon\|^2_2 \cdot \|\sum_{j\in S} X_j+\delta\|^2_2} 
    \leq \frac{( \sum_{j\in S} |\varepsilon^\top X_{j} | \cdot \|\epsilon\|_2  + |\varepsilon^\top \xi| \cdot \|\epsilon\|_2 \cdot \|\delta\|_2  )^2}{\|\epsilon\|_2^2 \cdot (|S| + \eta_1^2 - 2 \sum_{j<k \in S} |X^\top_j X_k| - 2 \sum_{j\in S} |X_j^\top \xi| \cdot \|\delta\|_2) }\\
    &\leq \frac{\eta_0^2(|S|+ \eta_1)^2}{|S| + \eta_1^2 - |S|(|S| - 1) \eta_0 - 2 |S| \eta_0\eta_1}.
    \end{split}
    \end{align*}

    \noindent\textbf{Part (5).} Let $\xi = \delta / \|\delta\|_2$. Then
    $
    \left( \sum_{j\in \widetilde{S}} \beta_j^\star X_j^\top  \right) \P_{X_{Z}} \left( \sum_{j\in \widetilde{S}} \beta_j^\star X_j \right) 
    = \frac{ \left[ (\sum_{j\in S} X_j+\delta)^\top (\sum_{j\in \widetilde{S}} \beta^\star_j X_j ) \right]^2 }{ \|\sum_{j\in S} X_j+\delta\|^2_2 }
    $.
    The numerator is upper bounded by $\left[ (\sum_{j\in S} X_j+\delta)^\top (\sum_{j\in \widetilde{S}} \beta^\star_j X_j ) \right]^2 \\
    = \left[ \sum_{j\in \widetilde{S}} \beta^\star_j  + \sum_{j\in \widetilde{S}}  \beta^\star_j X_j^\top(2 \sum_{k\in \widetilde{S}, k\neq j} X_k +  \sum_{k\in S\setminus\widetilde{S}} X_k + \delta)   \right]^2 \\
    \leq \left[ \sum_{j\in \widetilde{S}} \beta^\star_j \right]^2 +  \left[ \sum_{j\in \widetilde{S}} |\beta_j^\star|  \left( 2\sum_{k\in \widetilde{S}, k\neq j} |X_j^\top X_k| +  \sum_{k\in S\setminus \widetilde{S}} |X_j^\top X_k| + | X_j^\top \xi| \cdot \|\delta\|_2 \right) \right]^2 \\
    + 2\left[ \sum_{j\in \widetilde{S}} |\beta^\star_j| \right] \left[ \sum_{j\in \widetilde{S}} |\beta_j^\star|  \left(2 \sum_{k\in \widetilde{S}, k\neq j} |X_j^\top X_k| + \sum_{k\in S\setminus \widetilde{S}} |X_j^\top X_k| + | X_j^\top \xi| \cdot \|\delta\|_2 \right) \right] \\
    \leq \left[ \sum_{j\in \widetilde{S}} \beta^\star_j \right]^2 
    + \|\beta^\star_{\widetilde{S}} \|_1^2 (t^2 \eta_0^2 + 2t\eta_0),\\$
    where $t = |S| + |\widetilde{S}| + \eta_1$. The denominator is given by
    $ \|\sum_{j\in S} X_j + \delta\|_2^2 \geq 1 + \eta_1^2 - |S| (|S| - 1) \eta_0 - 2 |S|\eta_0\eta_1 $.

\end{proof}

\vspace{0.2in}
\begin{lemma}\label{lem:U-set-property2}
    Given any $S \in \mathcal{S}$ and the set of ``undesired'' directions $\mathcal{U}$ in Theorem \ref{prop:block-consist-subsample}:
    \begin{enumerate}[label = \normalfont(\arabic*)]
        \item Suppose that $N^\star_\mathcal{B} = \varnothing$. For any features $X_j$ where $j\in \mathcal{B}\setminus S \cup N^\star_{\mathcal{I}}$, denote $r_j:= X_j - \P_{X_S} X_j$. Then there must exist $u\in \mathcal{U}$ such that $r_j = u - \P_{X_S} u$ or $-r_j = u - \P_{X_S} u$.
        
        \item Suppose that $N^\star_\mathcal{B} \neq \varnothing$. For any features $X_j$ where $j \in N^\star_{\mathcal{B}} \cup \{c\} \cup N^\star_\mathcal{I} $, denote $r_j :=X_j - \P_{X_S} X_j$. Then there must exist $u \in \mathcal{U}$ such that $r_j = u - \P_{X_S} u$ or $-r_j = u - \P_{X_S} u$.

        \item The $\mathcal{U}$ set is far away from $\spa(X_S)$: $\max_{S\in\mathcal{S}, \ u\in \mathcal{U}} \tr(\P_u \P_{X_S}) \leq  \max\left\{ \frac{\eta_1^2}{1 + \eta_1^2},\ \frac{|S^\star_\mathcal{B}|}{K + \eta_1^2} \right\} + \mathcal{O}(s^\star (K+s^\star) \sqrt{\log p / n})  $.
    \end{enumerate}
\end{lemma}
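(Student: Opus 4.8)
The plan is to mirror the proof of Lemma~\ref{lem:U-set-property}, splitting into the two regimes $N^\star_\mathcal{B}=\varnothing$ and $N^\star_\mathcal{B}\neq\varnothing$ and using the elementary identity that for any vector $u$ and any $z\in\spa(X_S)$ one has $(u+z)-\P_{X_S}(u+z)=u-\P_{X_S}u$. Consequently, for Parts~(1) and~(2) it suffices, for each admissible index $j$, to write $X_j=\pm u+z$ with $u\in\mathcal{U}$ and $z\in\spa(X_S)$, and then to read off $\pm r_j=u-\P_{X_S}u$.

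For Part~(2), where $N^\star_\mathcal{B}\neq\varnothing$ and hence $\mathcal{S}=\mathcal{S}_1=\{S^\star\}$ with $S^\star=S^\star_\mathcal{B}\cup S^\star_\mathcal{I}$, every index $j\in N^\star_\mathcal{B}\cup\{c\}\cup N^\star_\mathcal{I}$ has $X_j$ itself lying in $\mathcal{U}=\{X_c\}\cup\bigcup_{N^\star_\mathcal{B}\cup N^\star_\mathcal{I}}\{X_j\}$, so one takes $u=X_j$, $z=0$. For Part~(1), where $N^\star_\mathcal{B}=\varnothing$ and $\mathcal{S}=\mathcal{S}_0$, each $S\in\mathcal{S}$ contains exactly $K$ of the $K+1$ features in $\mathcal{B}$. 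If $j\in N^\star_\mathcal{I}$ the claim is again immediate with $u=X_j$. Otherwise $j$ is the unique element of $\mathcal{B}\setminus S$: if $j=c$ then $\mathcal{K}\subseteq S$, so $X_c=\sum_{k\in\mathcal{K}}X_k+\delta$ with $\sum_{k\in\mathcal{K}}X_k\in\spa(X_S)$, giving $r_c=\delta-\P_{X_S}\delta$ with $u=\delta\in\mathcal{U}$; if $j\in\mathcal{K}$ then $c\in S$ and $X_j=X_c-\sum_{k\in\mathcal{K}\setminus\{j\}}X_k-\delta$ with the first two terms in $\spa(X_S)$, giving $-r_j=\delta-\P_{X_S}\delta$. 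This settles Parts~(1) and~(2).

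For Part~(3), I would decompose $\P_{X_S}=\sum_i\P_{Z_i}+\Delta$ using a nearly orthogonal basis $\{Z_i\}$ of $\spa(X_S)$: take $Z_i=X_i$ for the parents and individuals in $S$, and, when the child $c\in S$, replace the basis vector attached to the missing parent $j^\star$ by (the normalization of) $X_{j^\star}+\delta$, which spans the same $K$-dimensional subspace together with the remaining parents. Lemma~\ref{lem:proj-decomp} and the preliminary correlation lemma of the complex-dependency setup then bound $\|\Delta\|_F=\mathcal{O}\big(s^\star(K+s^\star)\sqrt{\log p/n}\big)$ --- the $K$ factor entering because $X_c$, and hence the substitute direction, is a sum of $K$ nearly orthogonal unit vectors and has squared norm $K+\eta_1^2+o(1)$ (here the hypothesis $K^2\sqrt{\log p/n}\to0$ is used). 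With this decomposition, $\tr(\P_u\P_{X_S})$ reduces to $\sum_i\tr(\P_u\P_{Z_i})$ plus $\mathcal{O}(\|\Delta\|_F)$ slack, and one checks the finitely many shapes of $u\in\mathcal{U}$: (i) for $u=X_j$ with $j\in N^\star_\mathcal{I}\cup N^\star_\mathcal{B}$, near-orthogonality of $X_j$ to every $Z_i$ gives $\tr(\P_u\P_{X_S})=\mathcal{O}(s^\star(K+s^\star)\sqrt{\log p/n})$; (ii) for $u=\delta$ (only in the regime $N^\star_\mathcal{B}=\varnothing$), the only non-negligible overlap is with the substitute direction $X_{j^\star}+\delta$, and the preliminary lemma yields $\tr(\P_\delta\P_{X_{j^\star}+\delta})\le\eta_1^2/(1+\eta_1^2)+\mathcal{O}(\sqrt{\log p/n})$; (iii) for $u=X_c$ (only in the regime $N^\star_\mathcal{B}\neq\varnothing$), writing $X_c=\sum_{k\in S^\star_\mathcal{B}}X_k+\big(\sum_{k\in N^\star_\mathcal{B}}X_k+\delta\big)$ with the first sum in $\spa(X_{S^\star})$ and the second nearly orthogonal to it, a direct computation gives $\tr(\P_{X_c}\P_{X_{S^\star}})\le|S^\star_\mathcal{B}|/(K+\eta_1^2)+\mathcal{O}(s^\star(K+s^\star)\sqrt{\log p/n})$. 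Taking the maximum over $u$ and over $S$ within each regime and combining the two regimes gives the stated bound, the $\max$ in the statement merely recording that the two leading terms arise under mutually exclusive hypotheses on $N^\star_\mathcal{B}$.

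The main obstacle I anticipate is the bookkeeping in Part~(3) around the child feature: because $\|X_c\|^2\approx K$ is large, both the construction of the substitute basis vector and the isolation of the leading terms $\eta_1^2/(1+\eta_1^2)$ and $|S^\star_\mathcal{B}|/(K+\eta_1^2)$ require carefully tracking how the $\mathcal{O}(\sqrt{\log p/n})$ pairwise correlations accumulate --- up to order $s^\star(K+s^\star)$ of them --- without swamping those leading terms, and the assumption $K^2\sqrt{\log p/n}\to0$ is exactly what makes this accounting close. Parts~(1)--(2), by contrast, amount to unpacking the definitions of $\mathcal{S}_0$ and $\mathcal{S}_1$ together with the block identity $X_c=\sum_{k\in\mathcal{K}}X_k+\delta$, and should be routine.
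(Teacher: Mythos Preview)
Your proposal is correct and follows essentially the same route as the paper: Parts~(1)--(2) are argued by the identity $(u+z)-\P_{X_S}(u+z)=u-\P_{X_S}u$ for $z\in\spa(X_S)$ together with the block relation $X_c=\sum_{k\in\mathcal{K}}X_k+\delta$, and Part~(3) by decomposing $\P_{X_S}=\sum_i\P_{Z_i}+\Delta$ with the substitute basis vector $X_{j^\star}+\delta$ when $c\in S$, then case-checking each $u\in\mathcal{U}$.

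One small point to clean up: your stated reason for the $K$ factor in $\|\Delta\|_F$ is off. The substitute direction $X_{j^\star}+\delta$ has squared norm $1+\eta_1^2+\mathcal{O}(\sqrt{\log p/n})$, not $K+\eta_1^2$; it is $X_c$ itself that has the large norm, and $X_c$ is \emph{not} one of the $Z_i$'s once you pass to the substitute basis. In fact the paper obtains the tighter bound $\|\Delta\|_F=\mathcal{O}(s^\star\sqrt{\log p/n})$ in both regimes. The factor $K$ in the final statement enters only through the case $u=X_c$ (regime $N^\star_\mathcal{B}\neq\varnothing$), where the direct computation of $\tr(\P_{X_c}\P_{X_{S^\star}})$ produces $\mathcal{O}(K^2)$ cross terms in the numerator against a denominator of size $K+\eta_1^2$; this is also where the hypothesis $K^2\sqrt{\log p/n}\to 0$ is actually used. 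Your looser bound on $\|\Delta\|_F$ is harmless for the conclusion, but you should relocate the source of the $K$-dependence accordingly.
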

\begin{proof}
    \textbf{Part (1).}
    Consider the following cases: (1) If $j\in \mathcal{B}$ and $c\notin S$, then we must have $S = \mathcal{K}$ and $j = c$. Thus $r_j = X_c - \P_{X_S} X_c = (\sum_{k\in \mathcal{K}} X_k + \delta) - \P_{X_S} (\sum_{k\in \mathcal{K}} X_k + \delta) = \delta - \P_{X_S} \delta$, meaning that $u = \delta \in \mathcal{U}$. (2) If $j \in \mathcal{B}$ and $c\in S$, then we must have $j\in \mathcal{K}$ and $X_j + \delta \in \spa(X_S)$. Thus $r_j = X_j - \P_{X_S} X_j = (X_j + \delta - \delta) - \P_{X_S} (X_j + \delta - \delta) = -\delta + \P_{X_S }\delta$, meaning that $u = \delta \in \mathcal{U}$. (3) If $j \in N^\star_{\mathcal{I}}$, then it is trivial that $u = X_j \in \mathcal{U}$.

    \vspace{0.1in}
    \noindent\textbf{Part (2).}
    It is trivial that $u = X_j \in \mathcal{U}$ for all $j\in N^\star_\mathcal{B} \cup \{C\} \cup N^\star_\mathcal{I}$.

    \vspace{0.1in}
    \noindent\textbf{Part (3).} Consider the following cases: (1) The scenario when $N^\star_\mathcal{B} = \varnothing$. If $u = \delta$, we have $\tr(\P_\delta \P_{X_S}) \leq \max_{j_0 \in \mathcal{K}} \tr(\P_\delta (\sum_{k\in S^\star \setminus \{j_0\} } \P_{X_k} + \P_{X_{j_0} +\delta} + \Delta)) \leq 
    \frac{\eta_1^2}{1 + \eta_1^2} + s^\star \mathcal{O}(\sqrt{\log p / n}) $, where $\Delta = \P_{ \{X_j\}_{j\in S^\star \setminus\{j_0\} } \cup \{ X_{j_0 + \delta} \} } - \sum_{k\in S^\star \setminus \{j_0\} } \P_{X_k} - \P_{X_{j_0} + \delta } $, and $\|\Delta\|_F \leq s^\star \mathcal{O}(\sqrt{\log p / n})$. On the other hand, if $u = X_j$ with $j \in N^\star_\mathcal{I}$, then we have $\tr(\P_{X_j} \P_{X_S}) \leq \max_{j_0 \in \mathcal{K}} \tr(\P_{X_j} (\sum_{k\in S^\star \setminus \{j_0\} } \P_{X_k} + \P_{X_{j_0} +\delta} + \Delta)) 
    \leq s^\star \mathcal{O}(\sqrt{\log p / n}) $.

    (2) The scenario when $N^\star_\mathcal{B} \neq \varnothing$. If $u = X_j$ where $j\in N^\star_\mathcal{B} \cup N^\star_\mathcal{I}$, we then have $\tr(\P_{X_j} \P_{X_S})  = \tr(\P_{X_j} (\sum_{k\in S^\star} X_k + \Delta) ) \leq \mathcal{O}(s^\star \sqrt{\log p / n}) $, where $\Delta = \P_{X_S} - \sum_{k\in S^\star} X_k$ with $\|\Delta\|_F \leq \mathcal{O}(s^\star \sqrt{\log p / n} )$. On the other hand, if $u = X_c$, then we have 
    $
    \tr(\P_{X_c} \P_{X_S}) = \frac{(\sum_{k\in\mathcal{K}} X_k + \delta)^\top \P_{X_S} (\sum_{k\in\mathcal{K}} X_k + \delta)}{ \| \sum_{k\in\mathcal{K}} X_k + \delta \|_2^2 } = \frac{|S^\star_\mathcal{B} |}{K + \eta_1^2} + \mathcal{O}(K^2 \sqrt{\log p / n})
    $.

    Therefore, 
    $\max_{S\in\mathcal{S}, \ u\in \mathcal{U}} \tr(\P_u \P_{X_S}) 
    \leq 
    \max\left\{ \frac{\eta_1^2}{1 + \eta_1^2},\ \frac{|S^\star_\mathcal{B}|}{K + \eta_1^2} \right\} + \mathcal{O}(K (K+s^\star ) \sqrt{\log p / n})$.
    
\end{proof}

\vspace{0.2in}
\subsubsection{Proof of Theorem \ref{prop:block-consist-s0}}
\label{sec:sup-proof-blockbase}
\begin{proof}
    The prediction error for $S$ is given by
    \begin{align*}
    \begin{split}
        \mathrm{pe}(S) 
    &= \left\| \P_{{X_S}^\perp} y \right\|_2^2 
    = \left\| \sum_{k\in\mathcal{K} \cup \mathcal{I}} \beta_k^\star  \P_{{X_S}^\perp} X_k +  \P_{{X_S}^\perp} \epsilon \right\|_2^2 \\
    &= \sum_{k\in S^\star\cap S^c} \left| \beta^\star_k \right|^2 \left\| \P_{X_S^\perp}X_k \right\|_2^2 + \left\|\P_{X_S^\perp} \epsilon \right\|_2^2 + 2 \sum_{k\neq l \in S^\star\cap S^c} \beta_k^\star \beta_l^\star X_k^\top \P_{X_S^\perp} X_l + 2 \sum_{k\in S^\star\cap S^c} \beta^\star_k \epsilon^\top \P_{X_S^\perp} X_k.
    \end{split}
    \end{align*}

    \noindent\textbf{Part (1a): the scenario when $c\not\in S$.} Let $\Delta = \P_{X_S} - \sum_{j\in S} \P_{X_j}$. By Lemma \ref{lem:proj-decomp}, we have $\|\Delta\|_F \leq \mathcal{O}(s_0 \sqrt{\log p / n} )$.

    \vspace{0.1in}
    \underline{Step 1: consider the scenario when $S$ misses $r$ signal features}. In other words, $|S^\star \cap S^c| = r$. For any $k \in S^\star \cap S^c$, $\|\P_{X_S^\perp} X_k\|_2^2 = X_k^\top \P_{X_S^\perp} X_k = 1 - \sum_{j\in S} \tr( \P_{X_j} \P_{X_k}) - X_k^\top \Delta X_k $, where $\tr(\P_{X_j} \P_{X_k}) \leq \mathcal{O}(\log p / n)$, and $|X_k^\top \Delta X_k| \leq \|\Delta\|_F \leq \mathcal{O}(s_0 \sqrt{\log p / n})$. As a result, $\|\P_{X_S^\perp} X_k\|_2^2 \geq 1 - \mathcal{O}(s_0 \sqrt{\log p / n})$.

    For any $k\neq l \in S^\star\cap S^c$, we have $\left| X_k^\top \P_{X_S^\perp} X_l \right| \leq \left| X_k^\top X_l \right| + \sum_{j\in S} \left|X_k^\top \P_{X_j} X_l \right| + \left|X_k^\top \Delta X_l \right| $, where $|X_k^\top X_l| \leq \mathcal{O}(\sqrt{\log p / n})$, $|X_k^\top \P_{X_j} X_l| \leq \mathcal{O}(\sqrt{\log p / n})$, and $|X_k^\top \Delta X_l| \leq \|\Delta\|_F \leq \mathcal{O}(s_0\sqrt{\log p / n})$. As a result, $\left| X_k^\top \P_{X_S^\perp} X_l \right| \leq \mathcal{O}(s_0 \sqrt{\log p / n})$. 

    For any $k \in S^\star$, we have $\left| \epsilon^\top \P_{X_S^\perp} X_k \right| \leq \left|\epsilon^\top X_k \right| + \sum_{j\in S} \left|\epsilon^\top \P_{X_j} X_k \right| + \left|\epsilon^\top \Delta X_k \right| $, where $|\epsilon^\top X_k| \leq \|\epsilon\|_2 \mathcal{O}(\sqrt{\log p / n})$, $|\epsilon^\top \P_{X_j} X_k| \leq \|\epsilon\|_2 \mathcal{O}(\sqrt{\log p / n})$, and $|\epsilon^\top \Delta X_k| \leq \|\epsilon\|_2 \|\Delta\|_F \leq \|\epsilon\|_2 \mathcal{O}(s_0\sqrt{\log p / n})$. As a result, $|\epsilon^\top \P_{X_S^\perp} X_k| \leq \|\epsilon\|_2 \mathcal{O}(s_0 \sqrt{\log p / n})$.

    Finally, $\| \P_{X_S^\perp} \epsilon \|_2^2 = \epsilon^\top \P_{X_S^\perp} \epsilon = \|\epsilon\|_2^2 - \sum_{j\in S} \|\P_{X_j} \epsilon\|_2^2 - \epsilon^\top \Delta \epsilon $, where $\|\P_{X_j} \epsilon\|_2^2 \leq \|\epsilon\|_2^2 \mathcal{O}(\log p / n)$, and $\epsilon^\top \Delta \epsilon \leq \|\epsilon\|_2^2 \|\Delta\|_F \leq \|\epsilon\|_2^2 \mathcal{O}(s_0\sqrt{\log p / n})$. As a result, $\| \P_{X_S^\perp} \epsilon \|_2^2 \geq \|\epsilon\|_2^2 \left[1 - \mathcal{O}(s_0 \sqrt{\log p / n}) \right] $.
    
    In summary, when $|S^\star \cap S^c| = r$, 
    $$
    \mathrm{pe}(S) \geq L_1(r):= r\min_{j\in S^\star} \left|\beta_j^\star \right|^2  + \|\epsilon\|_2^2 - \mathcal{O}\left( s_0 \sqrt{\log p / n} \right) \left( \|\beta^\star\|_1 + \|\epsilon\|_2\right)^2.
    $$

    \vspace{0.1in}
    \underline{Step 2: consider the scenario when $S$ misses no signals, and draw the conclusion}. Note that when $c\notin S$, missing no signals implies $S = S^\star$.  Note that $\mathrm{pe}(S^\star) = \|\P_{X_S^\perp} \epsilon\|_2^2 \leq \|\epsilon\|_2^2$. Therefore, under the beta-min condition \ref{eq:complex-beta-min1}, we have $ \mathrm{pe}(S^\star) \leq \|\epsilon\|_2^2 < \min_{r \geq 1} L_1(r) \leq \mathrm{pe}(S)$ for any $S$ such that $|S^\star \cap S^c| = r \geq 1$. Hence $S^\star$ is more preferred by $\widehat{\beta}$.

    \vspace{0.1in}
    \noindent\textbf{Part (1b): the scenario when $c\in S$ and $N_\mathcal{B}^\star = \varnothing$.}
    
    \underline{Step 1: consider the scenario when $S$ misses signal features in $\mathcal{B}$}. Note that when $|S_\mathcal{B}| = |S^\star_\mathcal{B}|$, all block signals are regarded as being ``selected'', since their subspaces are similar up to $\delta$. Suppose that $r_1 := |S^\star_\mathcal{B} \cap S^c| \geq 2$ signals in $\mathcal{B}$ and $r_2 := |S^\star_\mathcal{I} \cap S^c| \geq 0$ signals in $\mathcal{I}$ are missed. Then the prediction error is given by:
    \begin{align}\label{eq:1b-pred-erro}
    \begin{split}
        \mathrm{pe}(S) 
    &= \left\| \sum_{k\in S^\star \cap S^c } \beta^\star_k \P_{X_S^\perp} X_k + \P_{X_S^\perp} \epsilon  \right\|_2^2 
    = \left\| \sum_{k\in S^\star_\mathcal{B} \cap S^c} \P_{X_S^\perp} \beta_k^\star X_k \right\|_2^2
    + \left\| \sum_{k\in S^\star_\mathcal{I} \cap S^c} \P_{X_S^\perp} \beta_k^\star X_k   \right\|_2^2 + \left\| \P_{X_S^\perp} \epsilon \right\|_2^2  \\
    & \quad + 2\left( \sum_{k\in S^\star_\mathcal{B} \cap S^c} \beta^\star_k X_k^\top \right) \P_{X_S^\perp}  \left( \sum_{j\in S^\star_\mathcal{I} \cap S^c} \beta^\star_j X_j  \right) 
    + 2\epsilon^\top \P_{X_S^\perp} \sum_{k\in S^\star \cap S^c} \beta^\star_k X_k.
    \end{split}
    \end{align}

    Let $\Delta = \P_{X_S} - \sum_{j\in S \setminus\{c\} } \P_{X_j} + \P_{\sum_{j\in S^\star_\mathcal{B} \cap S^c} X_j + \delta}$, where $\|\Delta\|_F \leq (s_0+K) \mathcal{O}(\sqrt{\log p / n}) $.
    
    For the first term in (\ref{eq:1b-pred-erro}), $ \left\| \sum_{k\in S^\star_\mathcal{B} \cap S^c} \P_{X_S^\perp} \beta_k^\star X_k \right\|_2^2 = \left\| \sum_{k\in S^\star_\mathcal{B} \cap S^c} \beta_k^\star X_k \right\|_2^2 - \left\| \sum_{k\in S^\star_\mathcal{B} \cap S^c} \beta_k^\star \P_{X_S}  X_k \right\|_2^2$. Specifically, 
    $
    \left\| \sum_{k\in S^\star_\mathcal{B} \cap S^c} \beta_k^\star X_k \right\|_2^2 
    \geq \sum_{k\in S^\star_\mathcal{B} \cap S^c} |\beta^\star_k|^2 - 2 \sum_{j\neq k \in S^\star_\mathcal{B} \cap S^c } |\beta_j^\star\beta_k^\star| |X_j X_k|  \\
    \geq \sum_{k\in S^\star_\mathcal{B} \cap S^c}  |\beta_k^\star|^2 - \sum_{j\neq k \in S^\star_\mathcal{B} \cap S^c } |\beta_j^\star\beta_k^\star| \mathcal{O}(\sqrt{\log p / n}).
    $ 
    
    \noindent Moreover, 
    $
        \left\| \sum_{k\in S^\star_\mathcal{B} \cap S^c} \beta_k^\star \P_{X_S}  X_k \right\|_2^2 
    =  \left( \sum_{k\in S^\star_\mathcal{B} \cap S^c} \beta_k^\star  X_k \right)^\top \P_{X_S} \left( \sum_{k\in S^\star_\mathcal{B} \cap S^c} \beta_k^\star  X_k \right)  \\
    = \left( \sum_{k\in S^\star_\mathcal{B} \cap S^c} \beta_k^\star  X_k \right)^\top \left( \sum_{j\in S\setminus\{c\} } \P_{X_j} + \P_{\sum_{j\in S^\star_\mathcal{B} \cap S^c} X_j + \delta} + \Delta \right) \left( \sum_{k\in S^\star_\mathcal{B} \cap S^c} \beta_k^\star  X_k \right)  \\
    \leq \|\beta^\star_{S^\star_\mathcal{B} \cap S^c} \|_1^2 \cdot \mathcal{O} ( (s_0+K) \sqrt{\log p / n} ) + \frac{ \left( \sum_{k\in S^\star_\mathcal{B} \cap S^c } \beta_j^\star \right)^2 }{r_1 + \eta_1^2} .
    $
    
    \noindent As a result, $\left\| \sum_{k\in S^\star_\mathcal{B} \cap S^c} \P_{X_S^\perp} \beta_k^\star X_k \right\|_2^2  \\
    \geq \sum_{k\in S^\star_\mathcal{B} \cap S^c}  |\beta_k^\star|^2 - \sum_{j\neq k \in S^\star_\mathcal{B} \cap S^c } |\beta_j^\star\beta_k^\star| \mathcal{O}(\sqrt{\log p / n}) - \frac{ \left( \sum_{k\in S^\star_\mathcal{B} \cap S^c } \beta_j^\star \right)^2 }{r_1 + \eta_1^2} - \|\beta^\star_{S^\star_\mathcal{B} \cap S^c} \|_1^2 \cdot  \mathcal{O}((s_0 +K) \sqrt{\log p / n}) \\
    \geq \frac{ 1 }{ r_1 + \eta_1^2} \left(\sum_{j\neq k \in S^\star_\mathcal{B} \cap S^c} |\beta^\star_j - \beta^\star_k|^2
    + \eta_1^2 \|\beta^\star_{S^\star_\mathcal{B} \cap S^c}\|_2^2  \right)
    - \left( \sum_{j\neq k \in S^\star_\mathcal{B} \cap S^c } |\beta_j^\star\beta_k^\star| + \|\beta^\star_{S^\star_\mathcal{B} \cap S^c} \|_1^2  \right) \mathcal{O}((s_0 + K) \sqrt{\log p / n}). 
    $
    
    \vspace{0.1in}
    For the second term in (\ref{eq:1b-pred-erro}), $\left\| \sum_{k\in S^\star_\mathcal{I} \cap S^c} \P_{X_S^\perp} \beta_k^\star X_k  \right\|_2^2 = \left\| \sum_{k\in S^\star_\mathcal{I} \cap S^c} \beta^\star_k X_k \right\|_2^2 - \left\| \sum_{k\in S^\star_\mathcal{I} \cap S^c} \P_{X_S} \beta_k^\star X_k  \right\|_2^2 $.
    Specifically, $\left\| \sum_{k\in S^\star_\mathcal{I} \cap S^c} \beta^\star_k X_k \right\|_2^2 \geq \sum_{k\in S^\star_\mathcal{I} \cap S^c}  |\beta_k^\star|^2 - \sum_{j\neq k \in S^\star_\mathcal{I} \cap S^c } |\beta_j^\star\beta_k^\star| \mathcal{O}(\sqrt{\log p / n})$.

    \noindent Moreover, $\left\| \sum_{k\in S^\star_\mathcal{I} \cap S^c} \P_{X_S} \beta_k^\star X_k  \right\|_2^2 \\
    = \left( \sum_{k\in S^\star_\mathcal{I} \cap S^c} \beta_k^\star  X_k \right)^\top \left( \sum_{j\in S\setminus\{c\} } \P_{X_j} + \P_{\sum_{j\in S^\star_\mathcal{B} \cap S^c} X_j + \delta} + \Delta \right) \left( \sum_{k\in S^\star_\mathcal{I} \cap S^c} \beta_k^\star  X_k \right)  \\
    \leq \|\beta^\star_{S^\star_\mathcal{I} \cap S^c} \|_1^2 \cdot \mathcal{O} ( (s_0+K) \sqrt{\log p / n}  ) $.

    \noindent As a result, $\left\| \sum_{k\in S^\star_\mathcal{I} \cap S^c} \P_{X_S^\perp} \beta_k^\star X_k  \right\|_2^2 \\
    \geq \sum_{k\in S^\star_\mathcal{I} \cap S^c}  |\beta_k^\star|^2 - \left( \sum_{j\neq k \in S^\star_\mathcal{I} \cap S^c } |\beta_j^\star\beta_k^\star| + \| \beta^\star_{S^\star_\mathcal{I} \cap S^c} \|_1^2 \right) \mathcal{O}( (s_0 + K) \sqrt{\log p / n})$.

    \vspace{0.1in}
    For the third term in (\ref{eq:1b-pred-erro}), $\left\| \P_{X_S^\perp} \epsilon \right\|_2^2
    = \|\epsilon\|_2^2 - \left\| \P_{X_S} \epsilon \right\|_2^2
    $.

    \noindent Specifically,  $\left\| \P_{X_S} \epsilon \right\|_2^2 = \epsilon^\top \left( \sum_{j\in S\setminus\{c\} } \P_{X_j} + \P_{\sum_{j\in S^\star_\mathcal{B} \cap S^c} X_j + \delta} + \Delta \right) \epsilon \leq \|\epsilon\|_2^2 \cdot \mathcal{O} ( (s_0+K) \sqrt{\log p / n} )$.

    \noindent As a result,  $\left\| \P_{X_S^\perp} \epsilon \right\|_2^2 \geq \|\epsilon\|_2^2 - \|\epsilon\|_2^2 \cdot \mathcal{O} ( (s_0+K) \sqrt{\log p / n} ) $.

    \vspace{0.1in}
    For the fourth term in (\ref{eq:1b-pred-erro}), $\left( \sum_{k\in S^\star_\mathcal{B} \cap S^c} \beta^\star_k X_k^\top \right) \P_{X_S^\perp}  \left( \sum_{j\in S^\star_\mathcal{I} \cap S^c} \beta^\star_j X_j  \right) \\
    = \left( \sum_{k\in S^\star_\mathcal{B} \cap S^c} \beta^\star_k X_k^\top \right)  \left( \sum_{j\in S^\star_\mathcal{I} \cap S^c} \beta^\star_j X_j  \right)  - \left( \sum_{k\in S^\star_\mathcal{B} \cap S^c} \beta^\star_k X_k^\top \right) \P_{X_S}  \left( \sum_{j\in S^\star_\mathcal{I} \cap S^c} \beta^\star_j X_j  \right) 
    $. 

    \noindent Specifically, $\left( \sum_{k\in S^\star_\mathcal{B} \cap S^c} \beta^\star_k X_k^\top \right)  \left( \sum_{j\in S^\star_\mathcal{I} \cap S^c} \beta^\star_j X_j  \right) 
    \leq \sum_{k\in S^\star_\mathcal{B} \cap S^c} \sum_{j\in S^\star_\mathcal{I} \cap S^c} |\beta^\star_k \beta^\star_j| \mathcal{O}(\sqrt{\log p / n}).
    $

    \noindent Moreover, $\left( \sum_{k\in S^\star_\mathcal{B} \cap S^c} \beta^\star_k X_k^\top \right) \P_{X_S}  \left( \sum_{j\in S^\star_\mathcal{I} \cap S^c} \beta^\star_j X_j  \right)  \\
    = \left( \sum_{k\in S^\star_\mathcal{B} \cap S^c} \beta^\star_k X_k^\top \right) \left( \sum_{j\in S\setminus\{c\} } \P_{X_j} + \P_{\sum_{j\in S^\star_\mathcal{B} \cap S^c} X_j + \delta} + \Delta \right)  \left( \sum_{j\in S^\star_\mathcal{I} \cap S^c} \beta^\star_j X_j  \right)  \\
    \leq \sum_{k\in S^\star_\mathcal{B} \cap S^c} \sum_{j\in S^\star_\mathcal{I} \cap S^c} |\beta^\star_k \beta^\star_j| \mathcal{O}  ( (s_0 + K) \sqrt{\log p / n} )
    $.

    \noindent As a result, $\left( \sum_{k\in S^\star_\mathcal{B} \cap S^c} \beta^\star_k X_k^\top \right) \P_{X_S^\perp}  \left( \sum_{j\in S^\star_\mathcal{I} \cap S^c} \beta^\star_j X_j  \right) \leq \sum_{k\in S^\star_\mathcal{B} \cap S^c} \sum_{j\in S^\star_\mathcal{I} \cap S^c} |\beta^\star_k \beta^\star_j| \mathcal{O}  ( (s_0 + K) \sqrt{\log p / n} ) $.

    \vspace{0.1in}
    For the fifth term in (\ref{eq:1b-pred-erro}), $ \epsilon^\top \P_{X_S^\perp} \sum_{k\in S^\star \cap S^c} \beta^\star_k X_k  \\
    = \sum_{k\in S^\star \cap S^c} \beta_k^\star \epsilon^\top X_k - 
    \epsilon^\top \left( \sum_{j\in S\setminus\{c\} } \P_{X_j} + \P_{\sum_{j\in S^\star_\mathcal{B} \cap S^c} X_j + \delta} + \Delta \right) \left( \sum_{k\in S^\star \cap S^c} \beta^\star_k X_k \right)  \\
    \leq \|\epsilon\|_2 \cdot \|\beta^\star_{S^\star \cap S^c} \|_1 \mathcal{O} ( (s_0 + K) \sqrt{\log p / n} ). 
    $
    
    \vspace{0.1in}
    Therefore, 
    \begin{align*}
    \begin{split}
        \mathrm{pe}(S) 
    &\geq L_2(r_1, r_2)
    :=  \frac{ \binom{r_1}{2} \min_{j\neq k \in S^\star_\mathcal{B} } |\beta^\star_j - \beta^\star_k|^2 + r_1 \eta_1^2 \min_{j\in S^\star_\mathcal{B} } |\beta^\star_j|^2 }{r_1 + \eta_1^2}  
    + r_2 \min_{j\in S^\star_\mathcal{I}  } |\beta^\star_j |^2 + \|\epsilon\|_2^2 \\
    &\quad + \mathcal{O}\left( (s_0 +K) \sqrt{\log p / n} \right) \left( \|\beta^\star\|_1 + \|\epsilon\|_2 \right)^2.
    \end{split}
    \end{align*}

    \vspace{0.1in}
    \underline{Step 2: consider the scenario when $S$ misses signals in $\mathcal{I}$ but not $\mathcal{B}$}, which means that $|S_\mathcal{B}| = |S^\star_\mathcal{B}|$, but $r_2 = |S^\star_\mathcal{I} \cap S^c| \geq 1$. In this case, there exists at most one element in $S^\star_\mathcal{B} \cap S^c$ and let $\{j_0\} = S^\star_\mathcal{B} \cap S^c$. Then the prediction error is given by:
    \begin{align*}
    \begin{split}
        \mathrm{pe}(S) 
    &=  \left| \beta^\star_{j_0} \right|^2 \left\| \P_{X_S^\perp}X_{j_0} \right\|_2^2 
    + \left\| \sum_{k\in S^\star_\mathcal{I} \cap S^c} \P_{X_S^\perp} \beta_k^\star X_k   \right\|_2^2 + \left\| \P_{X_S^\perp} \epsilon \right\|_2^2  
    + \beta^\star_{j_0} X_{j_0}^\top \P_{X_S^\perp} \left( \sum_{j\in S^\star_\mathcal{I} \cap S^c} \beta^\star_j X_j \right) \\
    &\quad + 2 \sum_{k\in \{j_0\} \cup (S^\star_\mathcal{I} \cap S^c) } \beta^\star_{k} \epsilon^\top \P_{X_S^\perp} X_{k}.
    \end{split}
    \end{align*}

    Let $\Delta = \P_{X_S} - \sum_{j\in S} \P_{X_k}$ where $\|\Delta\|_F \leq \mathcal{O}( s_0 \sqrt{\log p / n})$.
    Note that $\left\| \P_{X_S^\perp}X_{j_0} \right\|_2^2  \geq 0$. Similarly as in Step 1, we have 
    $
    \left\| \sum_{k\in S^\star_\mathcal{I} \cap S^c} \P_{X_S^\perp} \beta_k^\star X_k  \right\|_2^2 
    \geq \sum_{k\in S^\star_\mathcal{I} \cap S^c}  |\beta_k^\star|^2 - \| \beta^\star_{S^\star_\mathcal{I} \cap S^c} \|_1^2 \mathcal{O}(s_0 \sqrt{\log p / n})$, and
    $
    \left\| \P_{X_S^\perp} \epsilon \right\|_2^2 \geq \|\epsilon\|_2^2 - \|\epsilon\|_2^2 \cdot \mathcal{O} ( s_0 \sqrt{\log p / n} ) $. 
    Moreover, we have
    $
    \beta^\star_{j_0} X_{j_0}^\top  \P_{X_S^\perp}  \left( \sum_{j\in S^\star_\mathcal{I} \cap S^c} \beta^\star_j X_j  \right) \leq \sum_{j\in S^\star_\mathcal{I} \cap S^c} |\beta^\star_{j_0} | |\beta^\star_j| \mathcal{O} ( s_0 \sqrt{\log p / n} ) $, and
    $
    \sum_{k\in \{j_0\} \cup (S^\star_\mathcal{I} \cap S^c) } \beta^\star_{k} \epsilon^\top \P_{X_S^\perp} X_{k}  
    \leq \|\epsilon\|_2 \cdot \|\beta^\star_{S^\star \cap S^c} \|_1 \mathcal{O} ( s_0 \sqrt{\log p / n} ). 
    $
    Therefore,
    $$
    \mathrm{pe}(S) \geq \widetilde{L}_2(r_2):= r_2 \min_{k\in S^\star_\mathcal{I} }  |\beta_k^\star|^2 + \|\epsilon\|_2^2 - \mathcal{O}\left( s_0 \sqrt{\log p / n} \right) (\|\beta^\star\|_1 + \|\epsilon\|_2)^2.
    $$

    \vspace{0.1in}
    \underline{Step 3: consider the scenario when $S$ misses no signals}, which means that $|S_\mathcal{B}| = |S^\star_\mathcal{B}|$ and $S_\mathcal{I} = S^\star_\mathcal{I}$. In this case, there exists at most one element in $S^\star_\mathcal{B} \cap S^c$ and let $\{j_0\} = S^\star_\mathcal{B} \cap S^c$. Then the prediction error is given by:
    \begin{align*}
    \begin{split}
        \mathrm{pe}(S) 
    &=  \left| \beta^\star_{j_0} \right|^2 \left\| \P_{X_S^\perp}X_{j_0} \right\|_2^2 + \left\|\P_{X_S^\perp} \epsilon \right\|_2^2 + 2 \beta^\star_{j_0} \epsilon^\top \P_{X_S^\perp} X_{j_0}.
    \end{split}
    \end{align*}

    Note that $\left\|\P_{X_S^\perp} X_{j_0}\right\|_2^2 = 1 - \left\| \P_{X_S} X_{j_0} \right\|_2^2 \leq 1 - \left\| \P_{X_{j_0} + \delta} X_{j_0}\right\|_2^2 \leq \frac{\eta_1^2}{1 + \eta_1^2} + \mathcal{O}(\sqrt{\log p / n})$. In addition, $\|\P_{X_S^\perp} \epsilon\|_2^2 \leq \|\epsilon\|_2^2$. Moreover, $\left| \epsilon^\top \P_{X_S^\perp }X_{j_0} \right| \leq \left|\epsilon^\top X_{j_0} \right| + \left| \epsilon^\top (\sum_{j\in S} \P_{X_j} + \Delta) X_{j_0} \right| \leq \|\epsilon\|_2 \mathcal{O}(s_0 \sqrt{\log p / n}) $, where $\Delta = \P_{X_S} - \sum_{j\in S} \P_{X_k}$ with $\|\Delta\|_F \leq \mathcal{O}( s_0 \sqrt{\log p / n})$.

    Therefore,
    $$
    \mathrm{pe}(S) \leq U_2 := \|\beta^\star \|_2^2 \frac{\eta_1^2}{1 + \eta_1^2} + \|\epsilon\|_2^2 + \mathcal{O}\left(s_0 \sqrt{\log p / n}  \right) \left( \|\beta^\star\|^2_2 + 
    \|\beta^\star\|_1 \|\epsilon\|_2  \right).
    $$

    \vspace{0.1in}
    \underline{Step 4: draw the conclusion}. When 
    \begin{align*}
    \begin{split}
     &   \min\left\{ \frac{\min_{j\neq k \in S^\star_\mathcal{B}} |\beta^\star_j - \beta^\star_k|^2 + 2  \eta_1^2 \min_{j\in S^\star_\mathcal{B} } |\beta^\star_j|^2}{2 + \eta_1^2},\ 
    \min_{j\in S^\star_\mathcal{I}} |\beta_j^\star|^2 \right\} \\
    > {}& \frac{\eta_1^2}{1 + \eta_1^2} \|\beta^\star\|_2^2 + \mathcal{O}\left( (s_0 + K) \sqrt{\log p / n} \right) (\|\beta^\star\|_1 + \|\epsilon\|_2)^2, 
    \end{split}
    \end{align*}
    we have $U_2 < L_2 (r_1, r_2)$ and $U_2 < \widetilde{L}_2(\widetilde{r}_2)$ for any $r_1 \geq 2$, $r_2 \geq 0$ and $ \widetilde{r}_2\geq 1$, meaning that no signal features should be missed.

    \vspace{0.1in}
    \noindent\textbf{Part (1c): the scenario when $c\in S$ and $N_\mathcal{B}^\star \neq \varnothing$.} 
    
    \underline{Step 1: consider the scenario when $S$ misses signal features in $\mathcal{B}$}. Note that when $|S_\mathcal{B}| = K$, all block signals are regarded as being ``selected''. Suppose that $r_1 := |S^\star_\mathcal{B} \cap S^c| \geq 1$ signals in $\mathcal{B}$ and $r_2 := |S^\star_\mathcal{I} \cap S^c| \geq 0$ signals in $\mathcal{I}$ are missed. Moreover, denote $r_3:= |N^\star_\mathcal{B} \cap S^c| \geq 0$. Additionally note that when $r_3 = 0$, we must have $r_1 \geq 2$. Then the prediction error is the same as (\ref{eq:1b-pred-erro}):

    \begin{align*}
    \begin{split}
        \mathrm{pe}(S) 
    &= \left\| \sum_{k\in S^\star \cap S^c } \beta^\star_k \P_{X_S^\perp} X_k + \P_{X_S^\perp} \epsilon  \right\|_2^2 
    = \left\| \sum_{k\in S^\star_\mathcal{B} \cap S^c} \P_{X_S^\perp} \beta_k^\star X_k \right\|_2^2
    + \left\| \sum_{k\in S^\star_\mathcal{I} \cap S^c} \P_{X_S^\perp} \beta_k^\star X_k   \right\|_2^2 + \left\| \P_{X_S^\perp} \epsilon \right\|_2^2  \\
    & \quad + \left( \sum_{k\in S^\star_\mathcal{B} \cap S^c} \beta^\star_k X_k^\top \right) \P_{X_S^\perp}  \left( \sum_{j\in S^\star_\mathcal{I} \cap S^c} \beta^\star_j X_j  \right) 
    + \epsilon^\top \P_{X_S^\perp} \sum_{k\in S^\star \cap S^c} \beta^\star_k X_k.
    \end{split}
    \end{align*}

    Let $\Delta = \P_{X_S} - \sum_{j\in S \setminus\{c\} } \P_{X_j} + \P_{\sum_{j\in \mathcal{B} \cap S^c} X_j + \delta}$, where $\|\Delta\|_F \leq (s_0+K) \mathcal{O}(\sqrt{\log p / n}) $.
    
    Similarly as the Step 1 in Part (1b), for the first term in (\ref{eq:1b-pred-erro}), 
    $\\
    \left\| \sum_{k\in S^\star_\mathcal{B} \cap S^c} \P_{X_S^\perp} \beta_k^\star X_k \right\|_2^2  
    \geq \frac{ 1 }{ r_1 + r_3 + \eta_1^2} \left(\sum_{j\neq k \in S^\star_\mathcal{B} \cap S^c} |\beta^\star_j - \beta^\star_k|^2
    + (\eta_1^2 + r_3) \|\beta^\star_{S^\star_\mathcal{B} \cap S^c}\|_2^2  \right) \\
    - \left( \sum_{j\neq k \in S^\star_\mathcal{B} \cap S^c } |\beta_j^\star\beta_k^\star| + \|\beta^\star_{S^\star_\mathcal{B} \cap S^c} \|_1^2  \right) \mathcal{O}((s_0 + K) \sqrt{\log p / n}). 
    $
    
    \vspace{0.1in}
    For the second term in (\ref{eq:1b-pred-erro}), 
    $\\
    \left\| \sum_{k\in S^\star_\mathcal{I} \cap S^c} \P_{X_S^\perp} \beta_k^\star X_k  \right\|_2^2 
    \geq \sum_{k\in S^\star_\mathcal{I} \cap S^c}  |\beta_k^\star|^2 - \left( \sum_{j\neq k \in S^\star_\mathcal{I} \cap S^c } |\beta_j^\star\beta_k^\star| + \| \beta^\star_{S^\star_\mathcal{I} \cap S^c} \|_1^2 \right) \mathcal{O}( (s_0 + K) \sqrt{\log p / n})$.

    \vspace{0.1in}
    For the third term in (\ref{eq:1b-pred-erro}), $\left\| \P_{X_S^\perp} \epsilon \right\|_2^2 \geq \|\epsilon\|_2^2 - \|\epsilon\|_2^2 \cdot \mathcal{O} ( (s_0+K) \sqrt{\log p / n} ) $.

    \vspace{0.1in}
    For the fourth term in (\ref{eq:1b-pred-erro}), 
    $\\
    \left( \sum_{k\in S^\star_\mathcal{B} \cap S^c} \beta^\star_k X_k^\top \right) \P_{X_S^\perp}  \left( \sum_{j\in S^\star_\mathcal{I} \cap S^c} \beta^\star_j X_j  \right) \leq \sum_{k\in S^\star_\mathcal{B} \cap S^c} \sum_{j\in S^\star_\mathcal{I} \cap S^c} |\beta^\star_k \beta^\star_j| \cdot \mathcal{O} ( (s_0 + K) \sqrt{\log p / n} ) $.

    \vspace{0.1in}
    For the fifth term in (\ref{eq:1b-pred-erro}), 
    $
    \epsilon^\top \P_{X_S^\perp} \sum_{k\in S^\star \cap S^c} \beta^\star_k X_k  
    \leq \|\epsilon\|_2 \cdot \|\beta^\star_{S^\star \cap S^c} \|_1 \cdot \mathcal{O}( (s_0 + B) \sqrt{\log p / n} ). 
    $
    
    \vspace{0.1in}
    Therefore, 
    \begin{align*}
    \begin{split}
        \mathrm{pe}(S) 
    &\geq L_3(r_1, r_2)
    :=  \frac{ \binom{r_1}{2} \min_{j\neq k \in S^\star_\mathcal{B} } |\beta^\star_j - \beta^\star_k|^2 + r_1 (\eta_1^2 + r_3) \min_{j\in S^\star_\mathcal{B} } |\beta^\star_j|^2 }{r_1 + r_3 + \eta_1^2}  
    + r_2 \min_{j\in S^\star_\mathcal{I}  } |\beta^\star_j |^2 + \|\epsilon\|_2^2 \\
    &\quad + \mathcal{O}\left( (s_0 +K) \sqrt{\log p / n} \right) \left( \|\beta^\star\|_1 + \|\epsilon\|_2 \right)^2.
    \end{split}
    \end{align*}

    \vspace{0.1in}
    \underline{Step 2: consider the scenario when $S$ misses signals in $\mathcal{I}$ but $\mathcal{B}$}, which means that either $S_\mathcal{B} = S^\star_\mathcal{B}$ or $|S_\mathcal{B}| = K$, but $r_2 = |S^\star_\mathcal{I} \cap S^c| \geq 1$. Similarly as in Step 2 of Part (1b), the prediction error is given by:
    $$
    \mathrm{pe}(S) \geq \widetilde{L}_3(r_2):= r_2 \min_{k\in S^\star_\mathcal{I} }  |\beta_k^\star|^2 + \|\epsilon\|_2^2 - \mathcal{O}\left( s_0 \sqrt{\log p / n} \right) (\|\beta^\star\|_1 + \|\epsilon\|_2)^2.
    $$

    \vspace{0.1in}
    \underline{Step 3: consider the scenario when $S$ misses no signals}, which means that either $S = S^\star$, or $|S_\mathcal{B} |= K$ and $S_\mathcal{I} = S^\star_\mathcal{I}$. Similarly as in Step 3 of Part (1b), the prediction error is given by:
    $$
    \mathrm{pe}(S) \leq U_3 := \|\beta^\star \|_2^2 \frac{\eta_1^2}{1 + \eta_1^2} + \|\epsilon\|_2^2 + \mathcal{O}\left(s_0 \sqrt{\log p / n}  \right) \left( \|\beta^\star\|^2_2 + 
    \|\beta^\star\|_1 \|\epsilon\|_2  \right).
    $$

    \vspace{0.1in}
    \underline{Step 4: draw the conclusion}. When 
    \begin{align*}
    \begin{split}
       & \min\left\{
    \frac{\min_{j\neq k \in S^\star_{\mathcal{B}}} |\beta^\star_j - \beta^\star_k|^2 + 2 \eta_1^2 \min_{j\in S^\star_\mathcal{B} } |\beta^\star_j|^2 }{K + \eta_1^2},\ 
    \frac{  (\eta_1^2 + 1) \min_{j\in S^\star_\mathcal{B} } |\beta^\star_j|^2}{K + \eta_1^2},\ 
    \min_{j\in S^\star_\mathcal{I}} |\beta_j^\star|^2 \right\} \\
    > {} & \frac{\eta_1^2}{1 + \eta_1^2} \|\beta^\star\|_2^2 + \mathcal{O}\left( (s_0 + K) \sqrt{\log p / n} \right) (\|\beta^\star\|_1 + \|\epsilon\|_2)^2, 
    \end{split}
    \end{align*}
    we have $U_3 < L_3 (r_1, r_2)$ and $U_3 < \widetilde{L}_2(\widetilde{r}_2)$ for any $r_1 \geq 1$, $r_2 \geq 0$ and $ \widetilde{r}_2\geq 1$, meaning that no signal features should be missed.

   \vspace{0.1in}
    \noindent\textbf{Part (1) Summary.} 
    Combining all results in Part (1a)--(1c) above, the condition under which that no signals are missed is given by:
    \begin{align*}
    \begin{split}
     &   \min\left\{ \frac{\min_{j\neq k \in S^\star_\mathcal{B}} |\beta^\star_j - \beta^\star_k|^2   + 2 \eta_1^2 \min_{j\in S^\star_\mathcal{B} } |\beta^\star_j|^2}
    { 1 + K + (1-K) \mathds{1}_{\{N^\star_\mathcal{B} = \varnothing \}}  + \eta_1^2},\ 
    \frac{  (\eta_1^2 + 1) \min_{j\in S^\star_\mathcal{B} } |\beta^\star_j|^2 \mathds{1}_{\{N^\star_\mathcal{B} \neq \varnothing\}} }{K + \eta_1^2},\ 
    \min_{j\in S^\star_\mathcal{I}} |\beta_j^\star|^2 \right\} \\
    >{}& \frac{\eta_1^2}{1 + \eta_1^2} \|\beta^\star\|_2^2 + \mathcal{O}\left( (s_0 + K)\sqrt{\log p / n} \right) (\|\beta^\star\|_1 + \|\epsilon\|_2)^2,
    \end{split}
    \end{align*}
    and additionally
    $$
    \min_{j\in S^\star_\mathcal{B}} |\beta^\star_j|^2 > \mathcal{O}\left( s_0 \sqrt{\log p / n} \right)\left( \|\beta^\star\|_1 + \|\epsilon\|_2 \right)^2 . $$

    \vspace{0.1in}
    \noindent\textbf{Part (2).} In the case of $s_0 = s^\star$, the upper bound for $\mathrm{pe}(S)$, when $S$ misses no signals and $N^\star_\mathcal{B} \neq \varnothing$, can be shaper than that in Part (1). In this case, we must have $S = S^\star$, and hence
    $
    \mathrm{pe}(S) = \|\P_{X_S^\perp } \epsilon \|_2^2 \leq \left\| \epsilon \right\|_2^2.
    $
    As a result, 
    the condition under which no signals are missed is given by:
    \begin{align*}
    \begin{split}
     &   \min\left\{ \frac{\min_{j\neq k \in S^\star_\mathcal{B}} |\beta^\star_j - \beta^\star_k|^2 \mathds{1}_{\{N^\star_\mathcal{B} = \varnothing \}}  + (1 + \mathds{1}_{\{N^\star_\mathcal{B} = \varnothing \}})  \eta_1^2 \min_{j\in S^\star_\mathcal{B} } |\beta^\star_j|^2}
    {1 + K + (1-K) \mathds{1}_{\{N^\star_\mathcal{B} = \varnothing \}}  + \eta_1^2},\ 
    \min_{j\in S^\star_\mathcal{I}} |\beta_j^\star|^2 \right\} \\
    >{}& \frac{\eta_1^2}{1 + \eta_1^2} \|\beta^\star\|_2^2 \cdot \mathds{1}_{\{ N^\star_\mathcal{B} = \varnothing \}} + \mathcal{O}\left( (s_0 + K)\sqrt{\log p / n} \right) (\|\beta^\star\|_1 + \|\epsilon\|_2)^2,
    \end{split}
    \end{align*}
    and additionally
    $$
    \min_{j\in S^\star_\mathcal{B}} |\beta^\star_j|^2 > \mathcal{O}\left( s_0 \sqrt{\log p / n} \right)\left( \|\beta^\star\|_1 + \|\epsilon\|_2 \right)^2 . $$

    Finally, we conclude that, when $s_0 = s^\star$, the only solutions that miss no signals are those $\supp(\widehat\beta) \in \mathcal{S}$.

\end{proof}

\vspace{0.2in}
\subsubsection{Proof of Theorem \ref{prop:block-consist-subsample}}
\label{sec:sup-proof-blockfsss}

\begin{proof}
By condition (i), we know that for all $\ell\in\{1, \dots, B\}$, we have $k\in \widehat{S}^{(\ell)}$ for all $k\in S^\star$, except possibly for one index $k_0 \in S^\star$, where $ X_{k_0} \notin \spa(X_{\widehat{S}^{(\ell)}} ) $ while $ X_{k_0} + \delta \in \spa(X_{\widehat{S}^{(\ell)} })$. 
Note that $\|\P_{X_k} - \P_{X_k + \delta} \|_2 \leq \sqrt{1 - \tr(\P_{X_k}   \P_{X_k + \delta} )} \leq \sqrt{\frac{\eta_1^2}{1 + \eta_1^2}} + \mathcal{O}(\sqrt{\log p / n})$.

Now consider a set of new basis vectors $\{Z_j^{(\ell)}\}_{j\in \widehat{S}^{(\ell)}}$ for $\spa(X_{\widehat{S}^{(\ell)} } )$ and construct the ``quasi-child'' $W^{(\ell)}$ as follows: 
(1) If $c \notin \widehat{S}^{(\ell)}$, then take $Z_j^{(\ell)} = X_j$ for all $j \in \widehat{S}^{(\ell)}$, and let $W^{(\ell)} = 0$; 
(2) If $c \in \widehat{S}^{(\ell)}$ and $S^\star \subseteq \widehat{S}^{(\ell)}$, then take $Z_j^{(\ell)} = X_j$ for all $j\in \widehat{S}^{(\ell)} \setminus\{c\}$, take $Z_c^{(\ell)} = X_c - \sum_{j\in \widehat{S}^{(\ell)} \cap \mathcal{B}} X_j + \delta $, and let $W^{(\ell)} = Z_c^{(\ell)}$; 
(3) If $c \in \widehat{S}^{(\ell)}$ and there exists one $k_0 \in S^\star$ such that $S^\star\setminus\{k_0\} \subseteq  \widehat{S}^{(\ell)}$ while $X_{k_0} + \delta \in \spa(X_{\widehat{S}^{(\ell)} })$, then take $Z_j^{(\ell)} = X_j$ for all $j\in \widehat{S}^{(\ell)} \setminus\{c\}$, $Z_c^{(\ell)} = X_{k_0} + \delta$, and let $W^{(\ell)} = X_{k_0}$.


\vspace{0.1in}
\underline{Step 1: find a lower bound for $\sigma_{|S|} (\P_{X_S} \P_{\rm avg}\P_{X_S} ) $ for any $S \in \mathcal{S}$}. 

Let $\Delta_S^{(\ell)} = \P_{X_{\widehat{S}^{(\ell)}}} - \sum_{k\in \widehat{S}^{(\ell)} \setminus\{c\}} \P_{Z_k} - \P_{W^{(\ell)}}$. We then have $\|\Delta_S^{(\ell)} \|_2 \leq \left\| \P_{X_{\widehat{S}^{(\ell)}}} - \sum_{j\in \widehat{S}^{(\ell)}} \P_{Z_j^{(\ell)}} \right\|_2 + \left\| \P_{Z^{(\ell)}_c} - \P_{W^{(\ell)}} \right\|_2 
\leq \sqrt{\frac{\eta_1^2}{1 + \eta_1^2}} + \mathcal{O} ((s_0 + B) \sqrt{\log p / n} ) $.

Moreover, let $\Delta = \P_{X_S} - \sum_{k\in S^\star } \P_{X_k} $, and $\|\Delta\|_2 \leq \| \P_{X_S} - \sum_{j\in S}\P_{X_j} \|_2 + \|\sum_{j\in S} \P_{X_j} - \sum_{k\in S^\star } \P_{X_k} \|_2 \leq  \sqrt{\frac{ \eta_1^2 }{1 + \eta_1^2}} + \mathcal{O}(s^\star \sqrt{\log p / n}) $. Therefore,
\begin{align*}
\begin{split}
   &  \P_{X_S} \P_{\rm avg} \P_{X_S}  \\
={}& \left( \sum_{k\in S^\star } \P_{X_k} + \Delta \right) 
\frac{1}{B}\sum_{\ell=1}^B \left( \sum_{k\in S^\star} \P_{X_k} + \sum_{k\in \widehat{S}^{(\ell)} \setminus\{c\} \setminus S^\star} \P_{Z_k} + \P_{W^{(\ell)}} \mathds{1}_{ W^{(\ell)} = Z^{(\ell)}_c } + \Delta_S^{(\ell)} \right)
\left( \sum_{k\in S^\star } \P_{X_k} + \Delta \right)  \\
={}& \sum_{k\in S^\star } \P_{X_k} + \widetilde{\Delta},
\end{split}
\end{align*}
where $\| \widetilde{\Delta} \|_2 \leq \sum_{i=1}^{12} \mathfrak{T}_i(s_0, n, p) $. Specifically,

\noindent $
\mathfrak{T}_1 = \left\| \sum_{j,k,l\in S^\star \setminus \{j = k= l\} } \P_{X_i} \P_{X_j} \P_{X_l} \right\|_2 \leq (s^\star)^3 \mathcal{O}(\sqrt{\log p / n})
$; 

\noindent $\mathfrak{T}_2 = \left\| \sum_{j,k \in S^\star} \P_{X_j} \frac{1}{B} \sum_{\ell=1}^B \left(  \sum_{k\in \widehat{S}^{(\ell)} \setminus\{c\} \setminus S^\star} \P_{Z_k} + \P_{W^{(\ell)}} \mathds{1}_{ W^{(\ell)} = Z^{(\ell)}_c } \right) \P_{X_k} \right\|_2 \leq  \mathcal{O}((s^\star)^2 (s_0+B) \sqrt{  \log p / n }) $; 

\noindent $\mathfrak{T}_3 = \left\| \sum_{j,k \in \mathrm{RF}(S)} \P_{X_j} \P_{X_k} \Delta \right\|_2 \leq (s^\star)^2 \|\Delta\|_2 \leq (s^\star)^2 \sqrt{\frac{\eta_1^2}{1 + \eta_1^2}} + \mathcal{O}((s^\star)^3 \sqrt{\log p / n} ) $;

\noindent $\mathfrak{T}_4 = \left\| \sum_{j \in S^\star} \P_{X_j} \frac{1}{B} \sum_{\ell=1}^B \left(  \sum_{k\in \widehat{S}^{(\ell)} \setminus\{c\} \setminus S^\star} \P_{Z_k} + \P_{W^{(\ell)}} \mathds{1}_{ W^{(\ell)} = Z^{(\ell)}_c } \right) \Delta \right\|_2 
\leq s^\star s_0 \sqrt{\frac{\eta_1^2}{1 + \eta_1^2}} + \mathcal{O}((s^\star)^2 s_0 \sqrt{\log p / n} )$;

\noindent $\mathfrak{T}_5 = \left\| \sum_{j,k\in S^\star} \P_{X_j} \left(\frac{1}{B} \sum_{\ell=1}^B \Delta_S^{(\ell)} \right) \P_{X_k} \right\|_2 \leq (s^\star)^2 \frac{1}{B}\sum_{\ell=1}^B \|\Delta_S^{(\ell)}\|_2 \leq (s^\star)^2 \sqrt{\frac{\eta_1^2}{1 + \eta_1^2}} + \mathcal{O}( (s^\star)^2 (s_0+B) \sqrt{\log p / n} )$;

\noindent $\mathfrak{T}_6 = \left\| \sum_{j\in S^\star} \P_{X_j} \left(\frac{1}{B} \sum_{\ell=1}^B \Delta_S^{(\ell)} \right) \Delta \right\|_2 \leq s^\star \left(\frac{1}{B} \sum_{\ell=1}^B \| \Delta_S^{(\ell)} \|_2 \right) \|\Delta\|_2 \leq s^\star \frac{\eta_1^2}{1 + \eta_1^2} + \mathcal{O}( (s^\star)^2 (s_0+B) \sqrt{\log p / n} )$;

\noindent $\mathfrak{T}_7 = \left\| \sum_{j,k\in S^\star} \Delta \P_{X_j} \P_{X_k} \right\|_2 \leq (s^\star)^2 \|\Delta\|_2 \leq (s^\star)^2 \sqrt{\frac{\eta_1^2}{1 + \eta_1^2}} +  \mathcal{O}((s^\star)^3 \sqrt{\log p / n} ) $;

\noindent $\mathfrak{T}_8 = \left\|\sum_{j\in S^\star}  \Delta \P_{X_j} \Delta \right\|_2 \leq s^\star \|\Delta\|_2^2 \leq s^\star \frac{\eta_1^2}{1 + \eta_1^2} + \mathcal{O}( (s^\star)^3 \sqrt{\log p / n} )$;

\noindent $\mathfrak{T}_9 = \left\| \sum_{j\in S^\star}  \Delta \left(  \sum_{k\in \widehat{S}^{(\ell)} \setminus\{c\} \setminus S^\star} \P_{Z_k} + \P_{W^{(\ell)}} \mathds{1}_{ W^{(\ell)} = Z^{(\ell)}_c } \right)  \P_{X_j} \right\|_2  \leq s^\star s_0  \|\Delta\|_2 
\leq s_0 s^\star \sqrt{\frac{\eta_1^2}{1 + \eta_1^2}} + \mathcal{O}( (s^\star)^2 s_0 \sqrt{\log p / n} )$;

\noindent $\mathfrak{T}_{10} = \left\|  \Delta \left(  \sum_{k\in \widehat{S}^{(\ell)} \setminus\{c\} \setminus S^\star} \P_{Z_k} + \P_{W^{(\ell)}} \mathds{1}_{ W^{(\ell)} = Z^{(\ell)}_c } \right)  \Delta \right\|_2 \leq s_0 \|\Delta\|_2^2 
\leq s_0  \frac{\eta_1^2}{1 + \eta_1 ^2} + \mathcal{O} ((s^\star)^2 s_0 \sqrt{\log p / n } )
$;

\noindent $\mathfrak{T}_{11} = \left\| \sum_{j\in S^\star} \Delta \left(\frac{1}{B}\sum_{\ell=1}^B \Delta_S^{(\ell)} \right)  \P_{X_k} \right\|_2 \leq s^\star \|\Delta\|_2 \left(\frac{1}{B} \sum_{\ell=1}^B \|\Delta_S^{(\ell)}\|_2 \right)
\leq s^\star \frac{\eta_1^2}{1 + \eta_1^2} + \mathcal{O}( (s^\star)^2 (s_0+B) \sqrt{\log p / n} )
$;

\noindent $\mathfrak{T}_{12} = \left\| \Delta \left(\frac{1}{B} \sum_{\ell=1}^B \Delta_S^{(\ell)} \right) \Delta \right\|_2 \leq \|\Delta\|_2^2 \left(\frac{1}{B} \sum_{\ell=1}^B \|\Delta_S^{(\ell)}\|_2 \right) 
\leq  (\frac{\eta_1^2}{1 + \eta_1^2})^{3/2} + \mathcal{O}((s^\star)^2 (s_0+B) \sqrt{\log p / n} ).
$

\noindent  Combining them together, we have $\|\widetilde{\Delta}\|_2 
\leq (3(s^\star)^2 + 2s^\star s_0)  \sqrt{\frac{\eta_1^2}{1 + \eta_1^2}} + (3 s^\star + s_0 ) \frac{\eta_1^2}{1 + \eta_1^2} +  ( \frac{\eta_1^2}{1 + \eta_1^2} )^{3/2} + \mathcal{O}((s^\star)^2 (s_0+B) \sqrt{\log p / n}) $. 

As a result, by Weyl's inequality, 
$
\sigma_{|S|} (\P_{X_S} \P_{\rm avg} \P_{X_S}) \geq \sigma_{|S|}( \sum_{k\in S^\star } \P_{X_k} ) - \|\widetilde{\Delta}\|_2 \geq 1 - (3(s^\star)^2 + 2s^\star s_0)  \sqrt{\frac{\eta_1^2}{1 + \eta_1^2}} - (3 s^\star + s_0 ) \frac{\eta_1^2}{1 + \eta_1^2} -  ( \frac{\eta_1^2}{1 + \eta_1^2} )^{3/2} + \mathcal{O}((s^\star)^2 (s_0+B) \sqrt{\log p / n})  =: L_4,
$
where the last inequality follows from Lemma \ref{lem:sigmamin-quasi-proj} and condition (ii).

\vspace{0.1in}
\underline{Step 2: find an upper bound for the stability of newly added direction} when adding an ``undesired feature'' to $S \in \mathcal{S}$. The aim is to ensure that (1) when $N^\star_\mathcal{B} = \varnothing$, no extra features in $\mathcal{B} \setminus S$ and $N^\star_\mathcal{I}$ can be added to $S$ with high probability; (2) when $N^\star_\mathcal{B} \neq \varnothing$, no extra features in $N^\star_\mathcal{B} \cup \{c\}$ and $N^\star_\mathcal{I}$ can be added to $S$ with high probability. 

Define $r_j :=X_j - \P_{X_S} X_j$. By Lemma \ref{lem:U-set-property2}, we know that any ``undesired'' features $X_j$, there must exist $u \in \mathcal{U}$ such that $r_j = u - \P_{X_S} u$ or $-r_j = u - \P_{X_S} u$. Now for any ``undesired'' feature $X_j$ and any subsample $\ell\in\{1, \dots, B\}$, consider
\begin{align}\label{eq:consist-stop-early}
    \tr(\P_{X_{\widehat{S}^{(\ell)}}} \P_{r_j}) 
= \frac{ r_j^\top \P_{X_{\widehat{S}^{(\ell)}}} r_j }{r_j^\top r_j}
= \frac{ (u - \P_{X_S} u)^\top \P_{X_{\widehat{S}^{(\ell)}}} (u - \P_{X_S} u) }
{ (u - \P_{X_S} u)^\top (u - \P_{X_S} u) }.
\end{align}
Define the space $\mathcal{F}^{(\ell)}_S$ as follows: (1) if $S^\star \subseteq \widehat{S}^{(\ell)}$, then let $\mathcal{F}^{(\ell)}_S = \spa(\{X_j^{(\ell)}\}_{j\in S^\star})$; (2) If there exists one $k_0\in S^\star$ such that $S^\star \setminus \{k_0\} \subseteq \widehat{S}^{(\ell)}$ while $X_{k_0} + \delta \in \spa(X_{\widehat{S}^{(\ell)}})$, then let $\mathcal{F}^{(\ell)}_S = \spa(\{ X_j^{(\ell)} \}_{j\in S^\star \setminus\{k_0\} } \cup \{ X_{k_0} + \delta \}) $. 

Let $\nabla = \P_{{\mathcal{F}_S^{(\ell)}} } - \P_{X_S}$, then 
$
\|\nabla\|_F 
\leq \|\P_{X_S} - \sum_{j\in S\setminus \{c\} } \P_{X_j} - \P_{X_{j_0+\delta} } \mathds{1}_{\{ X_{j_0} + \delta \in \spa(X_S) \}} \|_F \\
+  \|\P_{\mathcal{F}_S^{(\ell)}} - \sum_{j\in S^\star\setminus \{k_0\} } \P_{X_j} 
-  \P_{X_{k_0 + \delta}} \mathds{1}_{\{ X_{k_0} + \delta \in \spa(X_S) \}} \|_F \\
+ \| \sum_{j\in S\setminus \{c\} } \P_{X_j} + \P_{X_{j_0+\delta} } \mathds{1}_{\{ X_{j_0} + \delta \in \spa(X_S) \}} - \sum_{j\in S^\star\setminus \{k_0\} } \P_{X_j} 
-  \P_{X_{k_0 + \delta}} \mathds{1}_{\{ X_{k_0} + \delta \in \spa(X_S) \}}  \|_F \\
\leq \|\P_{X_{j_0 + \delta}} - \P_{X_{j_0}}\|_F 
+ \|\P_{X_{k_0 + \delta}} - \P_{X_{k_0}}\|_F + \mathcal{O}(s^\star \sqrt{\log p / n} ) \\
\leq 2\sqrt{\frac{\eta_1^2}{1 + \eta_1^2}} + \mathcal{O}(s^\star \sqrt{\log p / n} ) .
$

For the numerator of (\ref{eq:consist-stop-early}), 
\begin{align*}
\begin{split}
    & (u - \P_{X_S} u)^\top \P_{X_{\widehat{S}^{(\ell)}}} (u - \P_{X_S} u) 
    = u^\top \P_{X_{\widehat{S}^{(\ell)}}} u - 2 u^\top \P_{X_S} \P_{X_{\widehat{S}^{(\ell)}}} u + u^\top \P_{X_S} \P_{X_{\widehat{S}^{(\ell)}}} \P_{X_S} u  \\
={}&   u^\top \P_{X_{\widehat{S}^{(\ell)}}} u - 2 u^\top (\P_{ \mathcal{F}_S^{(\ell)}} - \nabla ) \P_{X_{\widehat{S}^{(\ell)}}} u + u^\top (\P_{ \mathcal{F}_S^{(\ell)} } - \nabla ) \P_{X_{\widehat{S}^{(\ell)}}} (\P_{ \mathcal{F}_S^{(\ell)} } - \nabla ) u  \\
={}& u^\top \P_{X_{\widehat{S}^{(\ell)}}} u - u^\top \P_{\mathcal{F}_S^{(\ell)}} u + 2 u^\top \nabla \P_{X_{\widehat{S}^{(\ell)}} } u - 2 u^\top \nabla \P_{ \mathcal{F}_S^{(\ell)} } u + u^\top \nabla \P_{X_{\widehat{S}^{(\ell)}} } \nabla u  \\
={}& u^\top \P_{X_{\widehat{S}^{(\ell)}}} u - u^\top \P_{X_S} u - u^\top\nabla u + 2 u^\top \nabla \P_{X_{\widehat{S}^{(\ell)}} } u - 2 u^\top \nabla \P_{ \mathcal{F}_S^{(\ell)} } u + u^\top \nabla \P_{X_{\widehat{S}^{(\ell)}} } \nabla u \\
\leq{}& u^\top \P_{X_{\widehat{S}^{(\ell)}} } u - u^\top \P_{X_S} u  + 5\|\nabla\|_F \|u\|_2^2 + \|\nabla\|^2_F \|u\|_2^2.
\end{split}
\end{align*}
Therefore,
\begin{align*}
\begin{split}
    \tr(\P_{X_{\widehat{S}^{(\ell)}}} \P_{r_j}) 
&\leq \frac{ \tr(\P_u \P_{X_{\widehat{S}^{(\ell)}}} ) - \tr( \P_u \P_{X_S} )  }{ 1 - \tr( \P_u \P_{X_S} ) } + \frac{ 5\|\nabla\|_F + \|\nabla\|_F^2 }{ 1 - \tr( \P_u \P_{X_S} )  } \\
&\leq \tr(\P_u \P_{X_{\widehat{S}^{(\ell)}}} )  + 10 \|\nabla\|_F + 2 \|\nabla\|_F^2 \\
&\leq \tr(\P_u \P_{X_{\widehat{S}^{(\ell)}}} ) + 20 \sqrt{\frac{\eta_1^2}{1 + \eta_1^2}} +   \frac{ 8\eta_1^2  }{ 1 + \eta_1^2 } + \mathcal{O} \left( s^\star \sqrt{\log p / n} \right),
\end{split}
\end{align*}
where the second inequality follows from the fact that $\tr(\P_u \P_{X_{\widehat{S}^{(\ell)}}} ) \leq 1$ and $\max_{u\in\mathcal{U}}\tr( \P_u \P_{X_S} ) \leq \frac{1}{2}$ by Lemma \ref{lem:U-set-property2} and condition (ii). 

Finally, for the given $\alpha_0\in (0, \frac{1}{2})$, we have
\begin{align*}
\begin{split}
    &\mathbb{P}\left[\frac{1}{B} \sum_{\ell=1}^B \tr(\P_u \P_{X_{\widehat{S}^{(\ell)}}} )  \geq 1-\alpha_0 \right] 
    \leq \mathbb{P}\left[\frac{1}{B/2} \sum_{\ell=1}^{B/2} \prod_{i\in\{0,1\} } \tr(\P_u \P_{X_{\widehat{S}^{(2\ell-i)}}} )  \geq 1-2\alpha_0 \right] \\
    \leq{}& \frac{1}{1 - 2\alpha_0} \frac{1}{B/2} \sum_{\ell=1}^{B/2} \mathbb{E}\left[ \tr(\P_u \P_{X_{\widehat{S}^{(2\ell))}}}) \right] \mathbb{E}\left[ \tr(\P_u \P_{X_{\widehat{S}^{(2\ell-1)}}}) \right] 
    \leq \frac{\gamma^2}{1 - 2\alpha_0},
\end{split}
\end{align*}
and hence $\mathbb{P}\left[ \max_{u \in \mathcal{U}} \frac{1}{B} \sum_{\ell=1}^B \tr(\P_u \P_{X_{\widehat{S}^{(\ell)}}} )  \geq 1-\alpha_0 \right] \leq |\mathcal{U}| \gamma^2 / (1 - 2\alpha_0) = \left( p - s^\star \right) \gamma^2 / (1 - 2\alpha_0)$.
In summary, with probability at least $1 - \left( p - s^\star \right) \gamma^2 / (1 - 2\alpha_0)$, for any ``undesired'' feature $X_j$, we have
\begin{align*}
\begin{split}
    \tr(\P_{r_j} \P_{\rm avg}) 
    &\leq \frac{1}{B} \sum_{\ell=1}^B \max_{u\in \mathcal{U}} \tr(\P_{u} \P_{X_{\widehat{S}^{(\ell)}}} ) + 20 \sqrt{\frac{\eta_1^2}{1 + \eta_1^2}} +   \frac{ 8\eta_1^2  }{ 1 + \eta_1^2 } + \mathcal{O} \left( s^\star \sqrt{\log p / n} \right)  \\
    &\leq 1 - \alpha_0 + 20 \sqrt{\frac{\eta_1^2}{1 + \eta_1^2}} +   \frac{ 8\eta_1^2  }{ 1 + \eta_1^2 } + \mathcal{O} \left( s^\star \sqrt{\log p / n} \right) =:U_4.
\end{split}
\end{align*}

\vspace{0.1in}
\underline{Step 3: Draw the conclusion}. 
When $U_4 < \alpha < L_4$, the only selection sets can be returned by our algorithm are those in $\{S: S  \in \mathcal{S}\}$. Note that $ \sigma_{|S|}(\P_{X_S} \P_{\rm avg} \P_{X_S} ) \leq \inf_{Z\in \spa(X_S)}\tr(\P_{Z} \P_{\rm avg})$, implying that for any $S \in\mathcal{S}$, both the ``new direction condition'' and the ``$\sigma_{\min}$-condition" will be satisfied and hence only $S$ rather than its subsets will be returned.

\end{proof}

\vspace{0.2in}
\subsubsection{Proof of Corollary \ref{cor:pfer2}}
\label{sec:sup-proof-blockcor}

\begin{proof}
    Let $S$ be any FSSS selection set. When $S \in \mathcal{S}$, we have
    \begin{align*}
    \begin{split}
        \tr(\P_{X_S} \P_{ X_{ S^{\star}}^\perp })
    &= s^\star - \tr(\P_{X_S} \P_{ X_{S^\star} }) \\
    &= s^\star - \sum_{j\in S\setminus\{c\} } \tr(\P_{X_j} \P_{X_S^\star} ) - \tr(\P_{X_{j_0} + \delta} \P_{ X_{S^\star} }) \mathds{1}_{\{ X_{j_0} + \delta \in \spa(X_S) \}} - \tr(\Delta \P_{X_{S^\star} }) \\
    &\leq 1 - \tr(\P_{X_{j_0} + \delta} \P_{X_{S^\star}} ) + s^\star \|\Delta\|_2 
    \leq  \frac{\eta_1^2}{1 + \eta_1^2} + \mathcal{O}\left((s^\star)^2 \sqrt{\log p / n} \right).
    \end{split}
    \end{align*}
    Otherwise, $\tr(\P_{X_S} \P_{ X_{ S^{\star\perp} } }) \leq s_0$. The upper bound for $\mathbb{E} \left[ \tr(\P_{X_S} \P_{ X_{ S^{\star}}^\perp }) \right]$ hence follows.
\end{proof}

\vspace{0.2in}
\subsubsection{Proof of Remark \ref{rem:theory-block}}
\label{sec:sup-proof-blockrem}

\begin{proof}
    \underline{Scenario 1: $N^\star_\mathcal{B} \neq \varnothing$}.
    In the special case, the base procedure $\widehat{S}^{(\ell)}$ would include $S^\star$ and uniformly select $(s_0 - s^\star)$ other directions from $ \{X_c\} \cup \bigcup_{j \in N^\star_\mathcal{B}} \{X_j\}  \cup \bigcup_{j \in N^\star_\mathcal{I}} \{X_j\}$.

    For $u =X_c$, if $c \in \widehat{S}^{(\ell)}$ (w.p. $\frac{s_0 - s^\star}{ p - s^\star }$), we have $\tr(\P_u \P_{X_{\widehat{S}^{(\ell)}}}) = 1$. If $c \notin \widehat{S}^{(\ell)}$, then $\widehat{S}^{(\ell)}$ must select $m$ features in $N_\mathcal{B}^\star$ and $s_0 - s^\star$ features in $N_\mathcal{I}^\star$ for $m\in\{0, \dots, |N^\star_\mathcal{B}|\}$; this happens w.p. $\frac{ \binom{|N^\star_\mathcal{B}|}{m} \binom{1}{0} \binom{p - s^\star - N^\star_\mathcal{B} - 1}{s_0 - s^\star - m} }{ \binom{p - s^\star}{ s_0 - s^\star } }$, and in this case $ \tr(\P_u \P_{X_{\widehat{S}^{(\ell)}}}) = \frac{|S^\star_\mathcal{B}| + m}{K + \eta_1^2}$.

    For any $u = X_j \in \bigcup_{j \in  N^\star_\mathcal{B}} \{X_j\} $, if $j \notin \widehat{S}^{(\ell)}$ and $c \notin \widehat{S}^{(\ell)}$, we have $\tr(\P_u \P_{X_{\widehat{S}^{(\ell)}}}) = 0 $. If $j \notin \widehat{S}^{(\ell)}$ but $c \in \widehat{S}^{(\ell)}$, then $\widehat{S}^{(\ell)}$ must select $m$ features in $N^\star_{\mathcal{B}} \setminus \{j\}$ and $s_0 - s^\star - m$ features in $N^\star_\mathcal{I}$ for $m\in\{ 0,\dots, |N^\star_\mathcal{B}| - 1 \}$; this happens w.p. $\frac{ \binom{|N^\star_\mathcal{B}| - 1}{m} \binom{1}{0} \binom{1}{1} \binom{p-s^\star - N^\star_\mathcal{B} - 1}{ s_0 - s^\star - m }  }{ \binom{p - s^\star}{s_0 - s^\star} }$, and in this case $\tr(\P_u \P_{X_{\widehat{S}^{(\ell)}}}) = \frac{1}{ |N^\star_\mathcal{B}| - m + \eta_1^2 } $.
    
    For any $u = X_j \in \bigcup_{j\in N^\star_\mathcal{I}} \{X_j\} $, if $j\in \widehat{S}^{(\ell)}$ (w.p. $\frac{s_0 - s^\star}{p - s^\star}$), we have $\tr(\P_u \P_{X_{\widehat{S}^{(\ell)}}}) = 1 $; and $\tr(\P_u \P_{X_{\widehat{S}^{(\ell)}}}) = 0$ otherwise. 

    \vspace{0.1in}
    \underline{Scenario 2: $N^\star_\mathcal{B} = \varnothing$}.
    In the special case, the base procedure would include $S^\star$ and uniformly select $(s_0 - s^\star)$ other directions from $\{\delta\} \cup \bigcup_{j\in N^\star_\mathcal{I}} \{X_j\}$. Therefore, for any $u\in\mathcal{U}$, we have $\tr(\P_u \P_{X_{\widehat{S}^{(\ell)}}}) = 1$ w.p. $\frac{s_0 - s^\star}{p - s^\star}$, and $\tr(\P_u \P_{X_{\widehat{S}^{(\ell)}}}) = 0$ otherwise.
    
\end{proof}

\end{document}